\documentclass[11pt]{amsart}

\usepackage{template}

\begin{document}
\title{Near-Optimal Generalized Private Testing}

\author{Anamay Chaturvedi\textsuperscript{1}}
\author{Monika Henzinger\textsuperscript{1}}
\author{Jalaj Upadhyay\textsuperscript{2}}

\thanks{\textsuperscript{1} Institute of Science and Technology Austria (ISTA), Klosterneuburg, Austria.  \\\phantom{\textsuperscript{1} }
\hspace{1em}\texttt{\{\href{mailto:anamay.chaturvedi@ist.ac.at}{anamay.chaturvedi}, \href{mailto:monika.henzinger@ist.ac.at}{monika.henzinger}\}@ist.ac.at.}}
\thanks{\textsuperscript{2} Rutgers University, USA. \href{mailto:jalaj.upadhyay@rutgers.edu}{\texttt{jalaj.upadhyay@rutgers.edu}}.}

\begin{abstract}
One of the most well-studied problems in differential privacy (DP) is private threshold testing, in which there is a sequence of real values $(v_t)_{t\geq1}$, and the goal is to identify the first value which exceeds a given threshold $\tau$. The \emph{sparse vector technique} is a privacy accounting technique that addresses this problem under the promise that the values have bounded Lipschitz sensitivity, and it is ubiquitous in the DP literature. In this paper, we introduce new algorithms, lower bounds and applications for the generalized version of this problem.

The generalized private testing problem was introduced by Liu and Talwar (STOC 2019). There is a private dataset $X$, and a sequence of black-box $\varepsilon_t$-DP mechanisms $\mathcal{M}_t$ taking values in $\{+1,-1\}$. The goal is to identify the first mechanism whose \emph{success probability} $p_t = \Pr[\mathcal{M}_t(X) = +1]$ exceeds a given \emph{acceptance threshold} $p^*\in(0,1)$, while achieving an $\varepsilon$-DP guarantee. In other words, the Lipschitz requirement on real-valued inputs in the standard setting is replaced by a stability guarantee in terms of DP itself. The accuracy of a solution for this problem can be characterized by the gap between the acceptance threshold $p^*$, and a corresponding \emph{rejection} threshold $\bar{p}$, such that with \allowbreak probability $1-\beta$ for all $t\geq1$, if $p_t \leq \bar{p}$, then the mechanism $\mathcal{M}_t$ is rejected, and if $p_t \ge p^*$, it is accepted. 

In this paper, we introduce a new DP mechanism called the Generalized Thresholding Mechanism (GTM). For any $\varepsilon>0$ and any sequence of \emph{approximate} $(\varepsilon_t,\delta_t)$-DP input mechanisms $\mathcal{M}_t$, the GTM achieves a pure $\varepsilon$-DP guarantee. For any $\theta>0$, $\gamma\in(1,2]$, and $\beta\in(0,1)$, it achieves a rejection threshold of $\max(p^*/\gamma\Lambda_t, 1 - \gamma\Lambda_t(1-p^*)) - \delta_t/\varepsilon_t$ for $\Lambda_t = (5t\ln^3(t+2))^{(2+\theta) \varepsilon_t/\varepsilon} (4/\beta)^{(3 + \theta + 2/\theta )\varepsilon_t /\varepsilon}$. With probability $1-\beta$, the number of evaluations of $\mathcal{M}_t$ is at most $O\!\left((\ln(t/\beta)/(\gamma-1)^2) \cdot \max\!\left(\Lambda_t/p^*,\; (1-p^*)^{-1}\right)\right)$ for all $t\geq 1$. We also prove lower bounds which show near-optimality of our accuracy and sample complexity guarantees.

We illustrate the power of the GTM by giving a \emph{black-box} reduction for private optimization from the continual observation (CO) setting to the batch setting. This gives us the first algorithms for a large class of maximization problems in the CO setting. Our mechanism permits an adaptive choice of acceptance thresholds $(p^*_t)_{t\geq1}$  (replacing a uniform choice of $p^*$ with an adaptive choice of $p^*_t$). This functionality addresses in part a challenge mentioned in prior work on using generalized private testing for hyperparameter optimization (Papernot and Steinke (ICLR 2022)).  
\end{abstract}
\pagenumbering{roman}
\maketitle
\clearpage
\pagenumbering{arabic}

\section{Introduction}

Two fundamental problems in the study of differential privacy (DP) are {\em private selection} and {\em private testing}. The private selection problem asks the analyst to pick the best item from a given ground set based on a {\em score function}. The private testing problem asks the analyst to pick the first item from a given stream whose score exceeds a given target score. These problems are well understood when the function is known to have bounded Lipschitz sensitivity, and their algorithmic solutions, the \textit{exponential mechanism} \cite{mcsherry2007mechanism} and the \textit{above threshold mechanism} \allowbreak\cite{DBLP:conf/stoc/DworkNRRV09}, are some of the most widely used mechanisms in the design of DP mechanisms.

A fundamental limitation to what can be achieved for private selection and testing for worst-case inputs is the requirement that the score functions have known bounded global Lipschitz sensitivity over the dataset space. In seminal work, Liu and Talwar \cite{DBLP:conf/stoc/0001T19} consider a generalized formulation of these problems: a sequence of $\eps_1$-DP mechanisms, $(\calM_t)_{1\leq t\leq T}$, is given as input (in black-box manner), such that each mechanism's outputs are items in the ground set along with associated scores (in selection), or a score for the mechanism itself (in testing). The analyst generates an output which is an item in the selection problem and accept/reject in the testing problem. This generalization relaxes the Lipschitz requirement on the score function to distributional stability for a randomized mechanism in the sense of DP itself. Liu and Talwar proposed mechanisms that evaluate every mechanism $\calM_t$ multiple times and output a value computed from the aggregate of evaluations, but incur privacy loss that is \emph{invariant} in the number of evaluations. This approach offers significantly better privacy-utility trade-offs than using composition theorems \cite{dwork2014algorithmic}, especially for tasks like private hyperparameter optimization. 

Mechanisms for generalized private selection have been studied and applied extensively in the literature. Following the work of Liu and Talwar \cite{DBLP:conf/stoc/0001T19}, Papernot and Steinke \cite{papernot2021hyperparameter}, and Cohen, Lyu, Nelson, Sarl{\'o}s, and Stemmer \cite{cohen2023generalized} gave improved mechanisms that for $\eps_1$-DP inputs, $(\calM_t)_{1\leq t \leq T}$, achieved privacy loss of $(2+\theta)\eps_1$ for any fixed $\theta>0$. A lower bound argument in \cite[Appendix D]{DBLP:conf/stoc/0001T19} indicates that under a reasonable notion of accuracy, this is the best privacy guarantee achievable.

In this paper, we study generalized private testing. Formally, we are given a private dataset $X$ drawn from a dataset universe $\calX$ equipped with an arbitrary symmetric notion of adjacency, a target success probability $\pstar \in (0,1)$, and a public sequence of black-box $\varepsilon_t$-DP mechanisms $(\calM_t)_{t\geq1}$ with range $\{ -1,+1\}$ such that $p_t=\Pr[\calM_t(X)=+1]$.  
The goal of generalized private testing is that the mechanism should correctly halt and output the first mechanism $\calM_t$ for which $p_t \geq \pstar$, while achieving an $\eps$-DP guarantee. The accuracy of a mechanism for generalized private testing is described in terms of a \textit{rejection threshold} $\bar{p}< \pstar$, such that, with high probability for all $t\geq 1$, if $p_t \leq \bar{p}$ then the mechanism does not halt, and if for some $t\geq 1$, $p_t \geq \pstar$, it halts. For $p_t \in (\bar{p},\pstar)$, the mechanism may halt or continue arbitrarily. Ideally, we would like $\bar p$ to be as close to $p^*$ as possible.  \cite{DBLP:conf/stoc/0001T19} gave a $(\eps,\delta)$-DP mechanism that achieves $\bar{p} \approx \pstar (\beta^2/T)^{12\eps_1/\eps}$, where $\eps$ and $\delta$ are the privacy loss parameters of the testing mechanism, $\beta$ is a given failure probability, and $T$ is the number of input mechanisms.

\cite{cohen2023generalized} showed how to unify the generalized private selection and testing problems into one framework by introducing \emph{correlated random dropping}.
Ghazi, Kamath, Knop, Kumar, Manurangsi and Zhang \cite{ghaziprivate}
gave similar algorithms for generalized private selection and testing but with new accuracy guarantees for selection in the {\em ex-post setting}~\cite{DBLP:journals/jpc/WuRLWN19}. In contrast with \cite{DBLP:conf/stoc/0001T19}, both of these works provide a pure DP mechanism when the input mechanisms are also pure DP, and an approximate DP mechanism for approximate DP input mechanisms. However, while the testing mechanism in \cite{DBLP:conf/stoc/0001T19} can achieve any output privacy loss parameter $\varepsilon$ regardless of the input privacy loss parameter $\varepsilon_1$, \cite{cohen2023generalized, ghaziprivate} require that $\varepsilon > 2\varepsilon_1$ and do not state an accuracy guarantee along the lines of \cite{DBLP:conf/stoc/0001T19} for generalized testing. 

We show in \Cref{sec:cohen2023generalized} that approaches based on correlated random dropping \emph{cannot} achieve a rejection threshold better than $\bar{p} \approx \pstar (\beta/T)$; this is asymptotically \emph{worse} than \cite{DBLP:conf/stoc/0001T19}. To be more precise, regardless of the choice of privacy parameters, $p^*/\bar p$ increases with  $T/\beta$. In contrast, for the mechanism of~\cite{DBLP:conf/stoc/0001T19}, $p^*/\bar p$ can be reduced to $(T/\beta)^c$ for any arbitrarily small constant $c$.

To summarize, existing mechanisms for  generalized private testing have 
\begin{enumerate}
    \item either a large constant factor coefficient in the exponent of $\beta/T$ in $\bar p$ (i.e, $\bar{p} \approx \pstar (\beta^2/T)^{12\eps_1/\eps}$), restrictive privacy analyses (i.e., achieving only approximate privacy and only for pure DP inputs), and a large number of evaluations of each mechanism $\calM_t$
    ~\cite{DBLP:conf/stoc/0001T19}; or 
    \item tight privacy guarantees, but with an accuracy that scales poorly with $T/\beta$, regardless of the choice of $\eps$~\citep{cohen2023generalized}.
\end{enumerate}

Further, while mechanisms for generalized private selection are used extensively, mechanisms for generalized private testing have not seen many applications. In this paper, we address these gaps by introducing new algorithms, lower bounds, and applications for generalized private testing.

The rest of this section is organized as follows. We give a formal problem statement in \Cref{sec:problem} followed by our contributions in \Cref{sec:contribution}, applications in \Cref{sec:applications}, a high-level technical overview of our proof in \Cref{sec:technical}, and practical implications of our mechanism in \Cref{sec:discussion}.

\subsection{Problem description}
\label{sec:problem}
Let $\calX$ denote the dataset universe, equipped with a fixed symmetric adjacency relation; we call $X, X' \in \calX$ \emph{neighboring} if they are adjacent.
All results hold for any choice of adjacency relation. We consider a sequence of binary mechanisms $\calM_t : \calX \to \{-1,+1\}$ for $t = 1, 2, \ldots$, where each $\calM_t$ satisfies $(\eps_t,\delta_t)$-differential privacy, i.e. for all neighboring $X, X'$ and all $v \in \{-1,+1\}$,
\[ \Pr[\calM_t(X) = v] \leq e^{\eps_t} \Pr[\calM_t(X') = v] + \delta_t.\]

We also consider a sequence of datasets $(X_t)_{t \geq 1}$ and privacy parameters $(\eps_t,\delta_t)_{t \geq 1}$, which, along with the mechanisms $(\calM_t)_{t\geq1}$,  may be chosen adaptively based on prior public outputs, subject only to the constraint that $\calM_t$ is $(\eps_t,\delta_t)$-DP.
We write $p_t := \Pr[\calM_t(X_t) = +1]$ for the success probability at step $t$. 

A \emph{direction} $s_t \in \{-1,+1\}$ and a \emph{target threshold} $\pstar_t \in (0,1)$
are associated with each step.
The \emph{target regime} at step $t$ is
$\{p_t : s_t\, p_t \geq s_t\, \pstar_t\}$. In other words, when $s_t = +1$, then we want a tester to halt when $p_t \geq \pstar_t$, and when $s_t = -1$, we want the tester to halt when $p_t \leq \pstar_t$. 

\begin{defn}[Generalized private tester]\label{def:tester}
Fix a global privacy budget $\eps > 0$.
A \emph{generalized private tester} is a randomized procedure that, at each step $t$, interacts with the dataset $X_t$ only through i.i.d.\ samples from $\calM_t(X_t)$.
It draws a (potentially random) number of independent evaluations, observes their outcomes, and outputs $a_t \in \{-s_t, s_t\}$, where $a_t = s_t$ denotes \emph{halting} and $a_t = -s_t$ denotes \emph{continuation}.
The \emph{transcript} is $(a_1, \ldots, a_T)$ where $T$ is the halting step ($a_T = s_T$, $a_t = -s_t$ for $t < T$), or the infinite sequence $(-s_1, -s_2, \ldots)$ if the tester never halts.
The tester must be $\eps$-DP, i.e. for every $t \geq 1$, every pair of dataset sequences $(X_j)_{j \geq 1}$, $(X'_j)_{j \geq 1}$ with $X_j, X'_j$ neighboring for all $j$,
and every measurable $\calE \subseteq \{-1,+1\}^t$,
\[
  \Pr\!\big[(a_1, \cdots, a_t) \in \calE\big]
  \;\leq\; e^{\eps}\,\Pr\!\big[(a_1', \cdots, a_t') \in \calE\big].
\]
\end{defn}

\begin{defn}[Accuracy and sample complexity]\label{def:rejection}
    We say that a generalized private tester when run with a failure probability of $\beta$ and target thresholds $(\pstar_t)_{t\geq1}$ achieves \emph{rejection thresholds} $(\pbar_t)_{t\geq1}$ if with probability $1-\beta$ for all $t\geq1$, $s_t p_t\leq s_t \pbar_t \Rightarrow a_t = -s_t$ (no false positives), and $s_t p_t\geq s_t \pstar_t \Rightarrow a_t = s_t$ (no false negatives). The \emph{sample complexity} of a generalized private tester is the number of times it evaluates $\calM_t$ on the dataset $X_t$.
\end{defn}

This problem setting is a mild reformulation of the version introduced by \cite{DBLP:conf/stoc/0001T19}. In prior work, the notion of $\beta$-global accuracy was considered only by \cite{DBLP:conf/stoc/0001T19}, and all work implicitly set $s_t = +1$ for all $t\geq1$, i.e. the \emph{Above Threshold} formulation. 

\subsection{Our contributions.} \label{sec:contribution}

Our main contribution is a new mechanism for generalized private testing.

\begin{thm}[Upper Bound, Informal version of \Cref{thm:GlobalUtility} and \Cref{cor:approxDP}]
\label{thm:intro_upper_bound}
    Given a sequence of target thresholds $(\pstar_t)_{t\geq1}$, a privacy parameter $\eps>0$, a tuning parameter $\theta>0$, a failure probability $\beta$, a ratio parameter $\gamma\in (1,2]$, and a stream of mechanisms and dataset pairs $(\calM_t,X_t)$ where $\calM_t$ is $(\eps_t,\delta_t)$-DP, there is an $\eps$-DP mechanism called the Generalized Thresholding Mechanism (\GTM, \Cref{alg:GTM}).
    For all $t\geq1$, define $\Lambda_t = (5t \ln^3 (t+2))^{(2+\theta)\eps_t/\eps}\,(4/\beta)^{(3 + \theta + 2/\theta)\eps_t/\eps}$ and the {\em rejection thresholds}
        \[ \bar{p}_t \;:=\; \max\!\left(\frac{\pstar_t}{\gamma\,\Lambda_t},\;\; 1 - \gamma\,\Lambda_t\,(1-\pstar_t)\right). \]
    Then, with probability $1-\beta$ over the randomness of the \GTM, for all $t\geq1$, the following statements hold:
    \begin{enumerate}
        \item If $p_t \geq \pstar_t + \delta_t/\eps_t$, then the mechanism  halts.
        \item If $p_t \leq \bar{p}_t - \delta_t/\eps_t$, then the mechanism  continues.
        \item The number of evaluations of the mechanism $\calM_t$ at step $t$ is
          $N_t$ drawn from a Poisson distribution $\Po(\lambda_t)$ for
          $\lambda_t \leq \tilde{A}_t \cdot w_t$, where
          \[
            \tilde{A}_t \;:=\; O\!\left(\frac{\ln(t/\beta)}{(\gamma-1)^2}
            \cdot \max\!\left(\frac{(4/\beta)^{(1+2/\theta)\eps_t/\eps}}{\pstar_t},\;
            \frac{1}{1-\pstar_t}\right)\right)
          \]
          and $w_t := e^{\eps_t\eta_{2,t}}$. $w_t$ is distributed as a Pareto distribution $\Par(\sigma_t)$ with
          $\sigma_t := \eps/(\eps_t(\theta+2))$, independently across steps.
            \end{enumerate}
\end{thm}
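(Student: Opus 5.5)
This statement is an informal version of results proved later, so the plan is to design the \GTM{} and then verify privacy, accuracy, and sample complexity in turn.

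\textbf{Design.} At each step $t$ the mechanism draws a Pareto weight $w_t=e^{\eps_t\eta_{2,t}}\sim\Par(\sigma_t)$, where $\sigma_t=\eps/(\eps_t(\theta+2))$. It then draws $N_t\sim\Po(\lambda_t)$ with $\lambda_t=A_t w_t$ and evaluates $\calM_t$ exactly $N_t$ times. Poissonization makes the counts of $+1$ and $-1$ outcomes independent Poisson variables with means $\lambda_t p_t$ and $\lambda_t(1-p_t)$. For $s_t=+1$, the mechanism halts iff the $+1$ count reaches a threshold $k_t$. This threshold is set from $\pstar_t$ plus a noisy threshold term $\eta_{1}$, drawn once and shared across steps as in AboveThreshold. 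The case $s_t=-1$ is symmetric, using the $-1$ count, which gives the $1-\gamma\Lambda_t(1-\pstar_t)$ branch.

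\textbf{Privacy.} I will first treat pure $\eps_t$-DP inputs. Changing $X_t$ to a neighbour multiplies $\lambda_t p_t$ and $\lambda_t(1-p_t)$ by at most $e^{\pm\eps_t}$. This is the same as shifting $\log\lambda_t$, that is $\eps_t\eta_{2,t}$, by at most $\eps_t$. Since $\eta_{2,t}$ is exponential with rate $\sigma_t$, such a shift costs privacy $\sigma_t\eps_t=\eps/(2+\theta)$ in density ratio on each affected step.

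The AboveThreshold-style argument then applies. On the continuing steps, couple the Pareto weights downward by a factor $e^{\eps_t}$; on the halting step, couple them upward. Together with the threshold noise, this gives a total loss of $\eps$. The factor $2+\theta$ splits as $1$ for the continuing steps, $1$ for the halting step, and $\theta$ for the threshold slack.

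For $(\eps_t,\delta_t)$ inputs, I would write $p_t$ as lying within $\delta_t/\eps_t$ of the success probability of a pure-DP surrogate. Concretely, $\calM_t$ is dominated by a mixture, and the total variation shift of $\delta_t$ can be absorbed into the accuracy thresholds. This is why the thresholds move by $\delta_t/\eps_t$ while the privacy guarantee stays pure.

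\textbf{Accuracy.} Take a union bound over $t$ with failure budget $\beta_t\propto\beta/(t\ln^2(t+2))$. By Pareto tails, with probability $1-\beta_t$ we have $w_t\le(1/\beta_t)^{1/\sigma_t}=(t\ln^2/\beta)^{(2+\theta)\eps_t/\eps}$. The $\ln^3$ factor and the extra powers of $4/\beta$ come from bounding the threshold noise: it sits in $\Lambda_t$ with the exponent $(3+\theta+2/\theta)\eps_t/\eps$.

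Conditioned on these events, a Chernoff bound for Poisson variables gives the following. If $\lambda_t p_t\ge A_t\pstar_t$ and $A_t\pstar_t\gtrsim\ln(t/\beta)/(\gamma-1)^2$, then the count concentrates within a factor $\sqrt\gamma$ of its mean. Hence $p_t\ge\pstar_t$ forces a halt, while $p_t\le\pstar_t/(\gamma\Lambda_t)$ forces continuation even at the largest plausible weight $w_t\le\Lambda_t$. This choice of $A_t$ gives exactly the stated $\tilde A_t$.

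\textbf{Sample complexity.} Part 3 is immediate from the construction: $N_t\sim\Po(A_t w_t)$ and $w_t\sim\Par(\sigma_t)$, drawn independently across steps.

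\textbf{Main obstacle.} The hardest step is the privacy coupling: showing that one shared threshold noise together with per-step Pareto weights costs only $(2+\theta)$ times the per-step loss, while accuracy degrades only through $\Lambda_t$. I would first try to prove this by reparametrising all randomness through $\log\lambda_t$. The halting probability is monotone in $\lambda_t$, so the classical AboveThreshold monotonicity proof should carry over.
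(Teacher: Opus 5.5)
Your privacy argument has a genuine gap, and it comes from placing the shared noise $\eta_1$ in the count threshold $k_t$ while the Poisson mean is only $\lambda_t=A_t w_t$.

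Changing $X_t$ to a neighbour multiplies $\lambda_t p_t$ by up to $e^{\pm\eps_t}$. That moves the count $K_t$ by roughly $(e^{\eps_t}-1)\lambda_t p_t$. This amount depends on $A_t$, $p_t$ and the unbounded $w_t$, so no fixed shift of an additive threshold noise can absorb it at bounded density-ratio cost.

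That leaves the coupling you describe, which rescales each continuing step's own weight $w_t$ by $e^{\eps_t}$. The $w_t$ are independent, so each such shift is paid separately, at $e^{\sigma_t\eps_t}=e^{\eps/(2+\theta)}$ per step. After $t-1$ continuations the loss is $(t-1)\eps/(2+\theta)$, which is unbounded. The accounting ``$1$ for the continuing steps'' therefore does not exist. In addition, a downward shift of one-sided exponential noise goes the wrong way: the change of variables puts mass below $0$, where the density vanishes, so even one such step is not controlled by a constant factor.

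The missing idea is to put both noises into the exponent of the mean and keep a fixed integer count threshold $c_t$, i.e.\ $\lambda_t=\rho_t\exp\bigl(\eps_t s_t(\mu_1-\eta_1+\eta_{2,t})\bigr)$.
\begin{itemize}
\item \emph{Shift inequalities.} We have $\lambda_t(z)p_t\ge\lambda_t(z-1)p'_t$, and the Poisson CDF is monotone in its mean. Together these give, for the continuation probability $g_t$, the pointwise bounds $g_t(z)\le g'_t(z-s_t)$ and $1-g_t(z)\le 1-g'_t(z+s_t)$.
\item \emph{Accounting.} One unit shift of the shared $\eta_1$ covers all continuing steps at once, at cost $1/\tau_1$. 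Shifting only the halting step's $\eta_{2,t}$ by $2$ (undoing the shared shift, then one more) costs $2/\tau_2$. With $1/\tau_1=\theta\eps/(\theta+2)$ and $2/\tau_2=2\eps/(\theta+2)$ the total is $\eps$. So the correct split is $\theta$ for the shared noise and $2$ for the halting step.
\item \emph{Signs.} Writing $-\eta_1$ and $+\eta_{2,t}$ makes both shifts go in the direction that one-sided exponential noise allows. The factor $s_t$ is what makes the below-threshold branch work; that branch is not automatically symmetric in your construction.
\end{itemize}

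This design also explains Part 3. On $\{\eta_1\le\mu_1\}$ with $\mu_1=\tau_1\ln(4/\beta)$, one has $\lambda_t\le\rho_t e^{\eps_t\mu_1}w_t$. The factor $e^{\eps_t\mu_1}=(4/\beta)^{(1+2/\theta)\eps_t/\eps}$ is exactly the extra term in $\tilde A_t$. The $\ln^3(t+2)$ in $\Lambda_t$ comes from the union-bound weights $\beta_t=\omega_t\beta$, not from threshold noise.

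Finally, for $(\eps_t,\delta_t)$ inputs, closeness of $p_t$ to some pure surrogate does not make the output pure. If $p'_t=0<p_t\le\delta_t$, any rule that halts only after observing a $+1$ has infinite privacy loss. The algorithm must actually pass every evaluation through randomized response with crossover $\phi_t=\delta_t/(e^{\eps_t}-1+2\delta_t)$. That makes the sampled mechanism pure $\eps_t$-DP and moves $p_t$ by at most $\delta_t/\eps_t$.
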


We see that for a generic value of $\pstar_t$, \Cref{thm:intro_upper_bound} achieves the rejection threshold 
    \[ \bar{p}_t \geq \pstar_t\, \cdot\, \tilde{\Theta} (t^{-(2 + \theta)\eps_t/\eps} \beta^{(3 + \theta + 2/\theta)\eps_t/\eps}).\]  
Our sample complexity is randomized, but in \Cref{lem:sample_complexity} we show that with probability $1-\beta$, for all $t\geq1$, $N_t \leq \frac{\ln(t/\beta)}{(\gamma-1)^2}\max\left(\frac{\Lambda_t}{\pstar_t},\frac{1}{1-\pstar_t}\right)$. Further, in the setting where $\eps_t = \eps_1 < \eps/(2+\theta)$ for all $t\geq1$, the \emph{amortized} sample complexity obeys the bound: 
\[ \Pr\left[ \frac{1}{T}\sum_{t=1}^T N_t \geq \tilde{O}\left( \frac{\eps}{\eps-(2+\theta)\eps_1}\cdot \max\left(\frac{(4/\beta)^{(1 + 2/\theta)\eps_1/\eps}}{\pstar},\frac{1}{1-\pstar}\right)\right)\right] < \beta. \] 

\noindent \textbf{Comparison with \cite{DBLP:conf/stoc/0001T19}.} The exponent of $t$ can be driven down to $(2+\theta)\eps_t/\eps$ for any $\theta>0$ at the cost of a reciprocally larger exponent of $\beta$, but the coefficients of both $t$ and $\beta$ are simultaneously smaller than those achieved in \cite{DBLP:conf/stoc/0001T19}, for which $\Lambda_t = (T/\beta^2)^{12\eps_t/\eps}$. The ratio between the thresholds is large but finite for any choice of $\eps$, and on studying the $\max$ expression for $\bar{p}_t$ in \Cref{thm:intro_upper_bound} we see that the gap between the thresholds $|\pstar - \bar{p}_t|$ reduces to $0$ as the success threshold $\pstar$ approaches $0$ or $1$; this is the case for both our work and \cite{DBLP:conf/stoc/0001T19}, but more rapidly in our case since our value of $\Lambda_t$ is smaller. Apart from the improvement in the exponent, the more significant points of comparison between our works are the privacy guarantees and sample complexities. Our mechanism guarantees pure DP regardless of whether $(\calM_t)_{t\geq 1}$ are pure or approx-DP, while \cite{DBLP:conf/stoc/0001T19} requires $(\calM_t)_{t\geq 1}$ to be pure DP and their mechanism achieves only approximate DP guarantees. 
\cite{DBLP:conf/stoc/0001T19} have a  \emph{per-round} sample complexity of $\approx \tfrac{(T/\beta)^{12\eps_1/\eps} \ln (T/\delta)}{\min\{\pstar,1-\pstar\}} $. 
As described above, our sample complexity obeys a high probability bound with a smaller exponent, and when $\eps>(2+\theta)\eps_t$, the amortized sample complexity scales only \emph{logarithmically} with the time-step index, and incurs a $(4/\beta)^{(1 + 2/\theta)\eps_1/\eps}$ dependence on the failure probability $\beta$.

\noindent \textbf{Comparison with \cite{cohen2023generalized}.} \cite{cohen2023generalized} do not state their accuracy bound along the lines of \cite{DBLP:conf/stoc/0001T19} (and the one used in this work). Therefore, to make a fair comparison, in \Cref{prop:CohenAccuracy}, we derive an accuracy guarantee for their mechanism. 
It has a similar multiplicative relationship between $\pstar$ and $\bar{p}$ but with two caveats: (1) the rejection threshold is at best $\bar{p}_t = \beta \pstar/t$, regardless of the choice of $\eps_t$ and $\eps$; and (2) the gap between the thresholds vanishes when $\pstar$ approaches $0$, but not when it approaches $1$. 
Further, they require that $\eps > 2\eps_1$. 
Although our analysis does not preclude a more refined analysis, we prove a lower bound (\Cref{cor:CohenAccuracyLimit}) showing that the factor of $\beta/t$ is unavoidable for their algorithm. 
The work of \cite{ghaziprivate} uses similar algorithmic techniques for generalized private testing and selection, and when applied to this problem, suffers the same $\beta/t$ slack (see \Cref{sec:ghaziprivate}).

For ease of comparison, we tabulate accuracy and sample complexity bounds in \Cref{tab:comparison}. We compare the rejection thresholds described in terms of the \emph{noise penalty} (which captures the impact of privatizing perturbations), the rejection threshold achieved, and the sample complexity. For a fair comparison we mention only our high-probability sample complexity bound above, but as mentioned before, our amortized scaling is significantly better.

\begin{table}[t]
\centering
\renewcommand{\arraystretch}{1.8}
\caption{Comparison of accuracy and sample complexity for generalized private testing. Expressions suppress lower-order terms. For our results, $\theta > 0$ and $\gamma \in (1,2]$ are free parameters. For \cite{DBLP:conf/stoc/0001T19}, $\eps_0 \in (0,1)$ is an auxiliary parameter.  ${}^\dagger$\cite{DBLP:conf/stoc/0001T19}'s guarantee is conditional on halting at step $t$ and does not guarantee halting at $p_t = \pstar$, but at a value somewhat larger than $\pstar$.}
\resizebox{\textwidth}{!}{%
\begin{tabular}{@{}lccc@{}}
\toprule
& \textbf{Our results} & \textbf{Cohen et al.~\cite{cohen2023generalized}} & \textbf{Liu--Talwar~\cite{DBLP:conf/stoc/0001T19}} \\
& (\Cref{thm:GlobalUtility}) & (\Cref{prop:CohenAccuracy}) & (\Cref{thm:LiuTalwar}) \\
\midrule
\textbf{Privacy Guarantee} 
  & pure $\eps$-DP 
  & pure $\eps$-DP 
  & $(\eps,\delta)$-DP \\
\textbf{Requirement}
  & $\eps > 0$
  & $\eps > 2\eps_1$
  & $\eps > 0$, finite $T$ \\
\midrule
Noise penalty, $\Lambda_t$
  & $\displaystyle \left(\frac{1}{\beta}\right)^{\!(1+2/\theta)\eps_1/\eps}\!\!\left(\frac{t}{\beta}\right)^{\!(2+\theta)\eps_1/\eps}$
  & $\displaystyle \frac{t}{\beta^{1+\eps_1/(\eps-2\eps_1)}}$
  & $\displaystyle \left(\frac{t}{\beta^2}\right)^{\!12(\eps_1+\eps_0)/\eps}{}^\dagger$ \\[10pt]
Rejection threshold, $\bar{p}_t$
  & $\displaystyle \max\!\left(\frac{\pstar_t}{\gamma\Lambda_t},\; 1 - \gamma\Lambda_t(1\!-\!\pstar_t)\right)$
  & $\displaystyle \frac{\pstar}{\Lambda_t}$
  & $\displaystyle \frac{\pstar}{\Lambda_t(1\!-\!\pstar) + \pstar}{}^\dagger$ \\[10pt]
\midrule
Sample complexity, $N_t$
  & $\displaystyle \frac{\ln(t/\beta)}{(\gamma-1)^2}\max\left(\frac{\Lambda_t}{\pstar_t},\frac{1}{1-\pstar_t}\right)$
  & $\displaystyle \frac{\ln(1/\beta)}{\pstar\,\beta^{\eps_1/(\eps-2\eps_1)}}$
  & $\displaystyle \frac{(T/\beta)^{12(\eps_1+\eps_0)/\eps}\,\ln(T/\delta)}{\eps_0^2\,\min\{\pstar,1-\pstar\}}$ \\[10pt]
\bottomrule
\end{tabular}
}
\label{tab:comparison}
\end{table}

We complement our upper bound on the generalized private testing with the following lower bound on the accuracy of any generalized private tester.
\begin{thm}
[Lower Bound, Informal version of \Cref{cor:GTM_LB} and \Cref{thm:sample_complexity_LB}]
\label{thm:intro_lower_bound}
    Given a target threshold $\pstar$, a stream length $T$, an input privacy parameter $\varepsilon_1>0$, output privacy parameters $\varepsilon>0$ and $\delta\in [0,1)$ and a failure probability $\beta$, define $\delta' := \delta/(e^{\eps} - 1)$, and suppose $\beta+\delta'<1/2$. Define $\overline{\Lambda}_T = e^{-2\eps_1} \cdot \left(T/(4(\beta + T\delta')(\beta + \delta'))\right)^{\!\eps_1/\eps}$. Suppose $\calA$ is any generalized private tester that achieves the rejection threshold $\bar{p}$. Then:
    \begin{enumerate}
        \item If $\pstar, \bar{p} \leq 1/2$, then $\bar{p}\leq \pstar/\overline{\Lambda}_T $.
        \item If $\pstar, \bar{p} \geq 1/2$, then $\bar{p} \leq 1-  \overline{\Lambda}_T (1-\pstar)$.
        \item If $\bar{p} < 1/2 < \pstar$, then $\bar{p} \leq 1/(4(1-\pstar)\overline{\Lambda}_T)$.
    \end{enumerate}
    Further, we show that any generalized private tester for a given target threshold $\pstar<1/2$ with failure probability at most $\beta$ must satisfy:
        \[ \Ex[N_t] = \Omega(1/(\pstar \cdot \beta^{\eps_1/\eps})). \]
\end{thm}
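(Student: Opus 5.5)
We will prove the accuracy lower bound by a packing/hybrid argument over a stream of $T$ hard instances, and the sample complexity bound by a separate two-point argument. The construction uses mechanisms $\calM_t$ that are pure $\eps_1$-DP randomized-response style coins. Fix a chain of datasets $X^{(0)},X^{(1)},\dots$ with $X^{(k)}$ adjacent to $X^{(k+1)}$. Let each mechanism's success probability change by a factor of at most $e^{\eps_1}$ per adjacency step, so that at distance $k$ the probability ratio is about $e^{k\eps_1}$. Concretely, for each $j\in[T]$ define a stream $S_j$. In $S_j$ every mechanism has $p_t\le\bar p$ except at $t=j$, where $p_j\ge\pstar$. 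We choose the mechanisms so that on a common base dataset all streams look identical with $p_t\le \bar p$. At distance $k$ from the base, mechanism $j$'s probability is moved from $\bar p$ up to $\pstar$. This is possible for an $\eps_1$-DP coin iff $\pstar/\bar p\le e^{k\eps_1}$ (case 1), or $(1-\bar p)/(1-\pstar)\le e^{k\eps_1}$ (case 2), or via the mixed route through $1/2$ (case 3, which gives the product form $\bar p\cdot 4(1-\pstar)$). So we take $k=\lceil \ln(\cdot)/\eps_1\rceil$; the ceiling costs the factor $e^{-2\eps_1}$ in $\overline\Lambda_T$.

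The accuracy argument then runs as follows.

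\textbf{Step 1 (group privacy).} For an $(\eps,\delta)$-DP tester (whose transcript is DP as in \Cref{def:tester}), group privacy over distance $k$ gives, for any event $E$,
\[
\Pr_{X^{(k)}}[E]\le e^{k\eps}\Pr_{X^{(0)}}[E]+\delta\frac{e^{k\eps}-1}{e^\eps-1}\le e^{k\eps}\bigl(\Pr_{X^{(0)}}[E]+\delta'\bigr).
\]

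\textbf{Step 2 (accuracy on the base).} On the base dataset all $p_t\le\bar p$. Accuracy then forces the tester to reject everything with probability $\ge 1-\beta$. So the halting events $E_j=\{\text{halt at } t=j\}$ are disjoint and satisfy $\sum_j\Pr_{X^{(0)}}[E_j]\le\beta$. Hence some $j$ has $\Pr_{X^{(0)}}[E_j]\le\beta/T$.

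\textbf{Step 3 (accuracy on the shifted instance).} In the instance at distance $k$ for that $j$, accuracy requires halting at $j$ with probability $\ge 1-\beta\ge 1/2$. Combining with Step 1 gives $1/2\le e^{k\eps}(\beta/T+\delta')$.

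The refined form $T/(4(\beta+T\delta')(\beta+\delta'))$ should come from splitting the distance into two legs. One leg handles the prefix before $j$: we must not halt early, which costs $\beta+\delta'$. The other leg handles the averaging over $T$ positions, which costs $(\beta+T\delta')/T$ after summing the $\delta'$ terms over $T$ events. Solving for $k$ gives $e^{k\eps_1}\gtrsim\overline\Lambda_T$. Translating $k$ back into the three ratio conditions yields items (1)–(3).

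We expect the main obstacle to be exactly this bookkeeping: making the $\delta'$ terms combine into $(\beta+T\delta')(\beta+\delta')$ rather than something cruder, and handling the case 3 crossing of $1/2$ cleanly. For that crossing, our first attempt will be to compose two coin shifts: $\bar p\to1/2$ at cost $\ln(1/(2\bar p))$, and $1/2\to\pstar$ at cost $\ln(1/(2(1-\pstar)))$.

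For the sample complexity bound, we use two adjacent-at-distance-$k$ datasets with $p_t=\pstar$ versus $p_t=\pstar e^{-k\eps_1}$, with $e^{k\eps}\approx1/\beta$ so that privacy still permits correct distinguishing. If the tester draws $N_t$ samples and all are $-1$, its output distribution is essentially the same under both hypotheses. Under $p_t=\pstar$, however, it must halt with probability $\ge1-\beta$. Hence $\Pr[\text{no }+1\text{ seen}]\lesssim\beta$, which requires $\Ex[N_t]\cdot p\gtrsim 1$ at the lower probability $p=\pstar\beta^{\eps_1/\eps}$. This gives $\Ex[N_t]=\Omega(1/(\pstar\beta^{\eps_1/\eps}))$. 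The all-$(-1)$ sample sequence is the coupling we will formalize, via Jensen on $(1-p)^{N_t}$.
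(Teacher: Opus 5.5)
\textbf{The accuracy bound.} Your skeleton matches the paper's reduction: independent coordinates, averaging over the disjoint halting events to find $j$ with $\Pr_{X^{(0)}}[E_j]\le\beta/T$, then moving only coordinate $j$. But Steps 1--3 as written only give $e^{k\eps}\ge T/(2(\beta+T\delta'))$, which is the first factor of $\overline{\Lambda}_T$ and nothing more. The second factor $1/(2(\beta+\delta'))$ does not come from ``not halting early on the prefix''. It comes from the target end of the chain, where the \emph{complement} event $\neg E_j$ has probability at most $\beta$; this has to be propagated \emph{backwards} along the chain.

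The missing idea is a meet-in-the-middle argument. Set $u_m=\Pr_{X^{(m)}}[E_j]+\delta'$ and $v_m=\Pr_{X^{(m)}}[\neg E_j]+\delta'$; each changes by a factor of at most $e^{\eps}$ per adjacency step, in either direction. Take the crossing index $m$ with $u_m\le 1/2\le u_{m+1}$. It exists because $u_0\le\beta/T+\delta'<1/2<1-\beta+\delta'\le u_k$, and it gives $v_m\ge 1/2$ since $u_m+v_m=1+2\delta'$. Then
\[
\tfrac12\le e^{(m+1)\eps}\Bigl(\tfrac{\beta}{T}+\delta'\Bigr),\qquad \tfrac12\le e^{(k-m)\eps}(\beta+\delta'),
\]
and multiplying gives $e^{(k+1)\eps}\ge T/(4(\beta+T\delta')(\beta+\delta'))$. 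Your single group-privacy step uses only $\Pr_{X^{(k)}}[E_j]\ge 1-\beta\ge 1/2$, and so discards exactly this information. This also fixes your accounting of $e^{-2\eps_1}$: one factor $e^{-\eps_1}$ comes from the ceiling on $k$, and the other from the $+1$ at the crossing index.

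\textbf{The sample-complexity bound.} This sketch is also missing its key step. Group privacy over distance $k$ with $e^{k\eps}\le 1/(2(\beta+\delta'))$ does not ``permit distinguishing''. Propagating the non-halting probability, which is at most $\beta+\delta'$ at $p_t=\pstar$, shows that it \emph{forces} the tester to still halt with probability at least $1/2$ at $p_k=\pstar e^{-k\eps_1}$. You cannot assert $\Pr[K_t=0]=O(\beta)$ at $p_k$. The same assertion at $p_t=\pstar$ does not produce the $\beta^{-\eps_1/\eps}$ factor at all.

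To turn ``halts with probability at least $1/2$'' into ``sees a $+1$ with constant probability'', you need that halting with $K_t=0$ is rare at every $p$. Conditioned on the internal randomness $R$, the all-$(-1)$ path has a fixed length $n_\bot(R)$ and a fixed output $a_\bot(R)$. Hence $\Pr[a_t=s_t,K_t=0\mid p]=\Ex_R[(1-p)^{n_\bot(R)}\mathbf{1}[a_\bot(R)=s_t]]$ is non-increasing in $p$. At $p_k$ it is therefore at most its value at some $p'\le\bar p$, and there the Type~I guarantee bounds it by $\beta^{\tI}$.

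Coupling the all-$(-1)$ path between $\pstar$ and $p_k$, as you propose, cannot give this, because neither instance forces halting to be rare. The comparison has to be with a rejection-region instance. With it you get $\Pr[K_t\ge 1\mid p_k]\ge 1/2-\beta^{\tI}\ge 1/4$. Then Wald's identity $\Ex[K_t]=p_k\Ex[N_t]$ gives $\Ex[N_t]\ge 1/(4p_k)$. Use Wald rather than Jensen on $(1-p)^{N_t}$: Wald remains valid when $N_t$ is an adaptive stopping time, whereas Jensen presumes $N_t$ is independent of the outcomes.
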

When $\pstar$ approaches $0$ or $1$, the functional form of our upper and lower bounds is identical: (i) $\bar{p}_t = \pstar_t/(\gamma\Lambda_t)$ when $\pstar$ is close to $0$ compared with $\bar{p} \leq \pstar_t/\overline{\Lambda}_T$ in the lower bound and (ii) $\bar{p}_t = 1 - \gamma\Lambda_t(1-\pstar_t)$ compared with $1 - \overline{\Lambda}_T(1-\pstar_t)$ in the lower bound. In these domains, the comparison reduces to the gap between $\gamma \Lambda_t$ and $\overline{\Lambda}_T$; we note that $\gamma\leq 2$ so this factor has limited impact. The lower bound for pure DP inputs and approximate DP outputs still requires that $\overline{\Lambda}_t \approx (T/((\beta+T\delta)(\beta+\delta)))^{\eps_1/\eps} \approx (T/\beta^2)^{\eps_1/\eps}$ as $\delta \ll \beta$. If we have $\eps_t = \eps_1$, our upper bound achieves a noisy penalty term $\Lambda_t = t^{(2 + \theta)\eps_t/\eps} \beta^{-(3 + \theta + 2/\theta)\eps_t/\eps}$. The gap between the upper and lower bounds is thus reduced to small constants in the exponents of $t^{\eps_t/\eps}$ and $\beta^{\eps_t/\eps}$ in all privacy settings. We can reduce the gap in the exponent of $t$ to a near-$2$ factor at the cost of a larger exponent on $\beta^{\eps_t/\eps}$; conversely, when we minimize the exponent of $\beta^{\eps_t/\eps}$, then we have that $\Lambda_t \approx \beta^{-(3+2\sqrt{2})\eps_t/\eps} t^{(2+\sqrt{2})\eps_t/\eps}$.

Our upper and lower bounds show that solving generalized private testing necessitates a loss in accuracy captured by a $\Lambda_T$ factor, that scales essentially as $(T/\beta)^{c\eps_t/\eps}$, for some constant $c$, and an average sample complexity that scales as $(1/\pstar)\beta^{c'\eps_1/\eps}$ for some constant $c'$. The lower bounds show that it is impossible to get rid of these polynomial in $t$ and $\beta$ factors, but that the impact of this factor in terms of the absolute gap between the acceptance and rejection thresholds $\pstar$ and $\bar{p}$ vanishes as $\pstar$ approaches $0$ or $1$.

Our upper bound suggests another way of mitigating the $\Lambda_t$-factor in the gap between $\pstar$ and $\bar{p}$, which is to let the ratio between the privacy parameters $\eps/\eps_t$ scale as $\ln(t/\beta)$. We formalize this as an \emph{ex-post} guarantee.

\begin{defn}[Ex-post privacy \cite{DBLP:journals/jpc/WuRLWN19}]
    Given a randomized algorithm $\calA : \calX \to \calY$, define the ex-post privacy loss of $\calA$ on $o\in\calY$ to be a function $\tilde{\eps}:\calY\to\R_{\geq0}$ if for all neighboring $X,X' \in \calX$, $\Pr[\calA(X) = o] \leq e^{\tilde{\eps}(o)} \Pr[\calA(X') = o]$.
\end{defn}

\begin{thm}
[Ex-post setting, Informal statement of \Cref{thm:ExPost}]
\label{thm:intro_ex_post}
    In the setting of \Cref{thm:intro_upper_bound}, define $\Lambda_0 = e^2 \left(4/\beta\right)^{\eps_1/\eps}$. For all $t\geq1$, \Cref{alg:GTM} suffers ex-post privacy loss at most 
    \[ \tilde{\eps}((a_1,\dots,a_t)) = \begin{cases} \eps + \eps_t \ln (t/\beta)&\mbox{ if } a_t = s_t \\ \eps &\mbox{ if }a_t = -s_t. \end{cases}\]
    The accuracy and high-probability sample complexity guarantees are identical to that of \Cref{thm:intro_upper_bound}, with $\Lambda_t$ replaced by $\Lambda_0$.
\end{thm}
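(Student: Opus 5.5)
The plan is to reuse the privacy proof behind \Cref{thm:intro_upper_bound} (the formal \Cref{thm:GlobalUtility}) and change only how the privacy loss of the \emph{halting} step is charged. I will assume \GTM\ has the structure suggested by the statement of \Cref{thm:intro_upper_bound}. There is a shared threshold noise, call it $\eta_1$, and a fresh per-step noise $\eta_{2,t}$ with weight $w_t=e^{\eps_t\eta_{2,t}}$. The tester draws $N_t\sim\Po(\lambda_t)$ evaluations with $\lambda_t\propto w_t$ and compares the noisy count of $+1$ outcomes (in direction $s_t$) to the noisy threshold.

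In the pure-DP analysis the total loss splits into three pieces:
\begin{enumerate}
\item a charge for shifting $\eta_1$, which pays for all continuation steps simultaneously (the sparse-vector style argument);
\item a per-step charge for the halting step, paid by shifting $\eta_{2,t}$ by $O(1)$;
\item the calibration of the per-step noise scale, which must grow with $t$ (through a $t\ln^3(t+2)$ type weighting) so that a union bound over all potential halting steps fits within $\eps$.
\end{enumerate}
Item 3 is where the $t$-dependence in $\Lambda_t$ comes from. In the ex-post version I will remove this $t$-dependent calibration. The per-step noise gets a fixed scale, tuned so that item 1 alone costs $\eps$, and I will not try to pay for the halting step in advance.

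The key steps, in order, are as follows.

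First, I will re-derive the continuation-step bound. For any prefix of continuations $(-s_1,\dots,-s_t)$, I will show that the ratio of probabilities under neighboring dataset sequences is at most $e^{\eps}$. As in the main proof, this uses the fact that a Poisson-thinned count of $+1$ outcomes from an $\eps_t$-DP mechanism is a $\Po(\lambda_t p_t)$ variable, and that changing $p_t$ by a factor $e^{\pm\eps_t}$ is equivalent to shifting $\ln\lambda_t$, i.e.\ $\eta_{2,t}$, by $\pm1$. This is absorbed by shifting $\eta_1$ once. This gives the second case of $\tilde\eps$.

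Second, I will bound the halting event $a_t=s_t$. I will couple the noise so that the only additional cost over the prefix is the density ratio of $\eta_{2,t}$ under the required shift. With the fixed-scale per-step noise, this ratio is bounded by $e^{\eps_t\ln(t/\beta)}$. This is the term that the $t$-dependent calibration in \GTM\ previously absorbed into $\eps$. The intended reading is that a halting output is ``rare'', with rejection of each step controlled at level $\beta/t$. Adding this to the $\eps$ from the prefix gives $\eps+\eps_t\ln(t/\beta)$.

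Third, for accuracy and sample complexity, I will rerun the concentration arguments from \Cref{thm:GlobalUtility} and \Cref{lem:sample_complexity} with the new noise scale. The threshold-noise tail now contributes only a $(4/\beta)^{\eps_1/\eps}$ factor. The Pareto tail of $w_t$ with the fixed shape contributes a constant, which I will bound by $e^2$, while the $\ln(t/\beta)$ union-bound factor stays in the Poisson concentration term. Together these give $\Lambda_0=e^2(4/\beta)^{\eps_1/\eps}$ in place of $\Lambda_t$, with the rest of the argument unchanged.

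I expect the second step to be the main obstacle. The ex-post bound must hold pointwise for the output $(a_1,\dots,a_t)$, not with high probability, so I cannot simply truncate $\eta_{2,t}$. I must check that, after the shift of $\eta_1$, the residual ratio contributed by the halting step is uniformly at most $(t/\beta)^{\eps_t}$ over all noise values consistent with halting. My first attempt is to compare, for each fixed $\eta_1$, the probabilities of halting under $p_t$ and $e^{\eps_t}p_t$ through the Poisson likelihood ratio. I will then bound the worst case of that ratio by the tail calibration of $\eta_{2,t}$ at level $\beta/t$, which is what produces the $\ln(t/\beta)$ factor.
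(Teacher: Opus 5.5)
\textbf{There is a genuine gap: the parameter change you propose is the reverse of what is needed.} With it, neither the halting cost nor $\Lambda_0$ comes out as you claim.

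Your item~3 misplaces the source of the $t$-dependence in $\Lambda_t$. In \Cref{thm:GlobalUtility} the per-step scale is already fixed, $\tau_{2,t}=\tau_2$. Privacy also needs no union bound over halting steps. \Cref{lem:GeneralizedPrivacy} bounds each halt-at-$t$ transcript by $e^{1/\tau_1+2/\tau_{2,t}}$ through an exact density shift of $\eta_1$ by $1$ and of $\eta_{2,t}$ by $2$.

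The factor $(4/\beta_t)^{(2+\theta)\eps_t/\eps}$ comes from accuracy instead. \Cref{lem:NoiseControl} needs $\eta_{2,t}\le\tau_{2,t}\ln(4/\beta_t)$ simultaneously for all $t$. With a fixed scale, the contribution of $\eta_{2,t}$ to $\Lambda_t$ is therefore $\exp(\eps_t\tau_2\ln(4/\beta_t))$, which is polynomial in $t/\beta$. This breaks your step~3: $w_t$ with a fixed Pareto shape $\sigma$ has uniform high-probability bound $(4/\beta_t)^{1/\sigma}$, not $e^2$. It also contradicts your step~2: with fixed $\tau_2$, the density ratio from the halting shift is the constant $e^{2/\tau_2}$, not $e^{\eps_t\ln(t/\beta)}$.

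\textbf{The missing idea is to let the per-step scale shrink with $t$.} Keep $1/\tau_1=\eps$ and $\mu_1=\tau_1\ln(4/\beta)$, and set $\tau_{2,t}=2/(\eps_t\ln(4/\beta_t))$. This is $C_H=1$ in \Cref{thm:ExPost}.

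Privacy then follows immediately from \Cref{lem:GeneralizedPrivacy}. All-continuation prefixes cost $1/\tau_1=\eps$, and halt-at-$t$ transcripts cost $1/\tau_1+2/\tau_{2,t}=\eps+\eps_t\ln(4/\beta_t)$. Since $\ln(4/\beta_t)=\ln(t/\beta)+O(\ln\ln(t+2))$, this matches the informal statement.

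These are pointwise, per-transcript bounds, so the ``main obstacle'' you identify never arises. Your proposed fix for it would not work anyway: bounding a worst-case Poisson likelihood ratio by tail calibration at level $\beta/t$ yields only a high-probability, $\delta$-type statement, not a pure ex-post bound.

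On the accuracy side, the same choice gives exact cancellation, $\eps_t\tau_{2,t}\ln(4/\beta_t)=2$. \Cref{lem:NoiseControl} then yields $\Lambda_t=(4/\beta)^{\eps_t/\eps}e^{2}$. The rest of the \Cref{thm:GlobalUtility} argument (\Cref{lem:ParameterSelection}, the flipping rule, and the Poisson tail for $N_t$) goes through unchanged.

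The trade-off is exactly the one you wanted. The halting noise is small, so its privacy cost grows like $\eps_t\ln(t/\beta)$, but its contribution to $\Lambda$ is a constant.
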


\subsection{Applications} 
\label{sec:applications}

We have seen that in \Cref{alg:GTM}, one can reduce this loss in accuracy $\Lambda_t$ to any arbitrarily small constant $C_{\Lambda}>1$ in two ways: \textbf{(i)} by allowing the output privacy loss to scale logarithmically with $t/\beta$, i.e. setting $\eps = \eps_t \ln (t/\beta)$, or \textbf{(ii)} by forcing the input privacy guarantees to become more restrictive, setting $\eps_t = \eps \cdot 1/\ln (t/\beta)$. 

The former guarantee serves as a way to solve the testing problem with a constant gap between the rejection threshold and incurring privacy loss scaling logarithmically with the length of the stream. In comparison, a fixed choice of $\eps$ and desired constant rejection threshold of $\bar{p}$ would eventually cause false positives to occur at a constant rate; the privacy loss incurred by each false positive must be accounted for via basic or advanced composition, leading to a net privacy loss of $\approx \sqrt{t}\eps_t$ for long streams with an approximate DP guarantee. In contrast, by committing to a logarithmically scaling privacy loss, we are guaranteed to not incur any false positives with high probability and solve the same testing problem with a pure DP guarantee with privacy loss $= \eps_1 \ln t$. 
The benefits of the latter guarantee, i.e. when $\eps_t = \Theta( \eps/\ln (t/\beta))$, are best illustrated by our next contribution that we discuss now.

\medskip \noindent \textbf{Reduction from Optimization in the Continual Observation setting to the Batch setting.} Over the past few years, the \emph{continual observation} (CO) model of differential privacy~\cite{Dwork-continual} has received a lot of interest. Unlike the standard {\em batch setting}, where the entire dataset $X$ is known {\em a priori}, in the CO setting, the data is constantly modified by a stream of updates, with the dataset after the $t$-th update denoted by $X_t$. In the CO setting a solution $Y_t$ must be output after every update, and the privacy requirement is that the entire output stream $(Y_t)_{t\geq1}$ is differentially private with respect to sequence of datasets seen in the input $(X_t)_{t\geq1}$.

Dwork, Naor, Pitassi, and Rothblum in their seminal work~\cite{Dwork-continual} posed a question whether the existence of a DP algorithm for a problem in the batch setting implies the existence of a DP algorithm in the continual observation setting. 
They~\cite[Section 6]{Dwork-continual} proposed an approach in which a mechanism in the CO setting maintains and repeatedly outputs a fixed solution. It  modifies the output only when it no longer serves as a good solution for the updated dataset. Detecting when the prior solution is no longer of adequate quality is a \textit{threshold monitoring problem}: at each time step,  recomputation is triggered when the test passes and indicates that the function has changed significantly. By trading off some accuracy on average, over the length of the entire stream, one can achieve superior trade-offs between accuracy and privacy.

In the standard private testing setting, the {\em sparse vector technique} (SVT) is used to solve threshold monitoring efficiently when monitored function has bounded Lipschitz sensitivity. Unfortunately, the quality of the solutions tested by the CO scheme described above admits no tractable Lipschitz bound in general, and none of these privacy accounting techniques apply. This limits the scope of the general technique introduced by \cite{Dwork-continual}. We give a brief overview of how our mechanism for generalized testing allows us to extend this reduction for a broad class of problems, and we refer the readers to \Cref{sec:batchtocontinual} for more detail.

When a mechanism for an optimization problem exists in the batch setting, we can circumvent the bounded Lipschitz requirement of \cite{Dwork-continual} as follows. Rather than attempting to directly monitor the optimal utility $\OPT_t$ at time $t$, we use a black-box batch DP algorithm $A(\cdot)$. At each time step, we give the \GTM query access to $A(X_t)$ to test whether the quality of solutions generated by $A(X_t)$ is significantly better compared to the quality of the previous most recently generated output.

More precisely, let $f:\calX\times\calY\to \R$ be a maximization problem, which for a given dataset $X$ asks for $\argmax_{Y\in\calY} f(X,Y)$; with this notation $\OPT_t = \max_{Y\in \calY} f(X_t,Y)$. Let $A$ be an algorithm that is $\eps_1$-DP for any $\eps_1>0$, and such that for $Y\gets A(X)$, with probability $1-\beta_A$, $f(X,Y) \geq \alpha \OPT_t - E_A (\eps_1)$ for some multiplicative approximation $\alpha \in (0,1)$ and additive error $E_A(\eps_1) \geq 0$. Our reduction gives an algorithm in the continual observation setting that for any $\eps<1$, with probability $1-O(\beta)$, for all $t\geq 1$, generates a sequence of outputs $(Y_t)_{t\geq1}$ such that 
    \[ f(X_t,Y_t) \geq \frac{\alpha \OPT_t}{1+\kappa} - E_A\paren{\frac{\kappa \eps}{\ln^2(t/\beta)}} - O\left(\frac{\ln^3 (t/\beta)}{\kappa\eps} \left(1 + \ln\frac{1}{1-\beta_A} \right)\right).\]
Further, the total number of evaluations of $A$ at time-step $t$ is only $O(\ln (t/\beta))$.

The fact that we only need the batch algorithms to be differentially private, and not fulfill some more restrictive Lipschitz bound constraint, gives us significantly more flexibility in using algorithms that apply in the batch setting for constructing new ones in the CO setting.

Another key feature of our generalized threshold mechanism that makes this reduction effective is that, when we use the batch DP algorithm with $\eps_t < \eps / \ln(t/\beta)$ (we set $\eps_t < \eps / \ln^2(t/\beta)$ for technical reasons), the ex-post accuracy guarantee ensures that the noise penalty $\Lambda_0$ is bounded by a universal constant independent of $t$, so the per-step accuracy of each threshold test does not degrade over time. This implies that we can invoke the batch algorithm at each step with a constant failure probability $\beta_A$, rather than one that decays with the stream length. 
This time-independent failure probability requirement even allows us to adapt batch algorithms that provide only in-expectation accuracy guarantees, rather than high-probability bounds on accuracy. As an in-expectation accuracy bound of the form $\mathbb{E}[f(X, A(X))] \geq \alpha \cdot \text{OPT}(X) - E_A$ implies a bounded constant failure probability (see \Cref{lem:exp-to-prob}), such algorithms can be plugged directly into the reduction. In other words, there is no accumulation of the failure probabilities of the invocations of the batch algorithm.

Our reduction applies to any bounded-sensitivity, data-monotone maximization problem, and yields the first DP algorithms under continual observation for several fundamental problems, including submodular maximization under cardinality and matroid constraints, weighted densest subgraph, and Max-Cut. Crucially, the invocations of the batch algorithm in the $t$-th time-step occur with multiplicative privacy parameter set to equal $\eps_t = \eps/\ln^2 (t/\beta)$; this is in stark contrast with the $\sqrt{t}$ scaling needed via a direct application of advanced composition. In Table~\ref{tab:comparison_2_intro}, we cover some examples to illustrate the benefits of these flexibilities, and direct the reader to \Cref{sec:co-applications} for a more complete treatment.

\renewcommand{\arraystretch}{1.4}
\begin{table}[t]
\footnotesize
\caption{Application of \Cref{thm:batch-to-co} to batch DP submodular maximization algorithms. $\OPT$: batch optimum; $\OPT_t$: CO optimum at time $t$; $k$: cardinality or matroid rank constraint; $\kappa,\eta > 0$: user-defined parameters; $\beta$: failure probability. Monotone/non-monotone refers to argument-monotonicity; our framework additionally requires data-monotonicity. For in-expectation batch guarantees, \Cref{cor:batch-to-co-exp} is applied with small constant $c > 0$.}
\label{tab:comparison_2_intro}
\centering
    \begin{tabular}{@{}L{3.2cm} L{4.8cm} L{5.2cm}@{}}
    \toprule
    \textbf{Work, Setting} & \textbf{Batch Guarantee} & \textbf{CO Guarantee (Ours)} \\
    \midrule
    \cite{mitrovic2017differentially}\newline
    Monotone,
        $(\eps,\delta)$-DP, $k$-Card. 
        & $(1 {-} \tfrac{1}{e})\OPT - \tilde{O}\bigl(\tfrac{k^{3/2}}{\eps}\ln m \ln \tfrac{1}{\delta} \bigr)$ 
        & $\tfrac{1 - 1/e}{1+\kappa}\,\OPT_t - \tilde{O}\bigl(\tfrac{k^{3/2}}{\kappa \eps} \ln^3 t \ln m \ln \tfrac{1}{\delta} \bigr)$ \\ 
    \addlinespace
    \cite{DBLP:conf/icml/RafieyY20}\newline
    Monotone,
        $(\eps,\delta)$-DP, $k$-Card.
        & $(1 {-} \tfrac{1}{e})\OPT - \tilde{O}\bigl(\tfrac{k^{7}}{\eps^3} m \ln m\bigr)$ 
        & $\tfrac{1 - 1/e}{1+\kappa}\,\OPT_t - \tilde{O}\bigl(\tfrac{k^{7}}{\kappa \eps^3} m \ln m \ln^6 t\bigr)$ \\
    \addlinespace
    \cite{DBLP:conf/aaai/ChaturvediNZ21}\newline
    Decomp.\ non-mon.\ $(\eps,\delta)$-DP, $k$-Card.
        & $(\tfrac{1}{e} {-} \eta) \OPT - \tilde{O}\bigl(\tfrac{k}{\eta\eps}\ln m \ln\tfrac{1}{\delta}\bigr)$
        & $\tfrac{1/e - \eta}{1+\kappa}\,\OPT_t - \tilde{O}\bigl(\tfrac{k}{\eta\kappa \eps}\ln^3 \tfrac{t}{\beta}\ln m \ln\tfrac{1}{\delta}\bigr)$
        \\
    Decomp.\ non-mon.\ $(\eps,\delta)$-DP, $k$-Matroid
        & $(\tfrac{1}{e} {-} \eta) \OPT - \tilde{O}\bigl(\tfrac{k}{\eta\eps}\ln m \ln\tfrac{1}{\delta}\bigr)$
        & $\tfrac{1/e - \eta}{1+\kappa}\,\OPT_t - \tilde{O}\bigl(\tfrac{k}{\eta\kappa \eps}\ln^3 \tfrac{t}{\beta}\ln m \ln\tfrac{1}{\delta}\bigr)$
        \\
    \addlinespace
    \cite{ghazi2024individualized}\newline
    Decomp.\ monotone $\eps$-DP, $k$-Card., $\eta {\in} (0,1)$  
        &  $(1{-}\tfrac{1}{e}{-}\eta)\OPT - O\bigl(\tfrac{k\ln m/\beta}{\eps}\bigr)$ 
        & $\tfrac{1-1/e-\eta}{1+\kappa}\,\OPT_t - \tilde{O}\bigl(\tfrac{k}{\kappa \eps} \ln^3 \tfrac{t}{\beta} \ln \tfrac{m}{\beta}\bigr)$\\
    Decomp.\ monotone $\eps$-DP, $k$-Matroid, $\eta {\in} (0,1)$  
        &  $(1{-}\tfrac{1}{e}{-}\eta)\OPT - O\bigl(\tfrac{k\ln m/\eta \beta}{\eps}\bigr)$ 
        & $\tfrac{1-1/e-\eta}{1+\kappa}\,\OPT_t - \tilde{O}\bigl(\tfrac{k}{\kappa \eps} \ln^3 \tfrac{t}{\beta} \ln \tfrac{m}{\eta \beta}\bigr)$ \\
    \bottomrule
    \end{tabular}
\end{table}

\subsection{Technical Overview of the Generalized Thresholding Mechanism}
\label{sec:technical}
The starting point of our approach is that any mechanism for generalized private testing needs to evaluate $\calM_t(X_t)$ a (potentially randomized) $N_t$ number of times to gain information about $p_t$, where $p_t = \Pr[\calM_t(X_t)=+1]$.
The number $K_t$ of $+1$ outputs of these evaluations should be a sufficient statistic to reason about whether $\calM_t(X_t)$ should be accepted or rejected. The problem then reduces to defining (i) how to pick $N_t$; and (ii) for what values of $K_t$ should one accept $\calM_t(X_t)$, while ensuring a tight privacy-accuracy trade-off. 

\medskip \noindent \textbf{Achieving Privacy, Step 1: Poissonized sampling.} 
Our first key technical insight is that if we pick $N_t \sim \Po(\lambda_t)$ for mean parameter $\lambda_t$, then $K_t \sim \Po(\lambda_t p_t )$ as a  consequence of \textit{Poisson thinning}.   

\medskip \noindent \textbf{Achieving Privacy, Step 2: Randomizing $\lambda_t$.} 
Next, we perturb $\lambda_t$ just enough to obfuscate $p_t$ and $e^{-\eps_t}p_t$, which in turn allows us to use a coupling argument for the privacy proof similar to the proof of the SVT. 
To perturb $\lambda_t$, let it be a function of a random variable $Z$. Then $K_t \sim \Po(\lambda_t(z) p_t )$, where $z$ is the value assumed by $Z$, and the algorithm decides to stop if $K_t > c_t$ for a suitably chosen threshold $c_t$.
For all $t\geq1$, $p_t'=\Pr[\calM_t(X_t')]$ where $X_t'$ is some dataset neighboring $X_t$.
As we are guaranteed that $p_t \ge e^{-\eps_t}p'_t$, we pick $\lambda_t(z)$ such that $\lambda_t(z) p_t  \ge \lambda_t(z-1) p'_t$.  
Specifically we choose $\lambda_t(z) = \rho_t e^{\eps_tz}$, where $\rho_t$ is a parameter that is needed for the accuracy bound and discussed further below.
This is useful as the CDF of the Poisson random variable $K_t$ is non-increasing in $\lambda_t(z) p_t $.
Thus, for any $z$, $F(c_t, \lambda_t(z) p_t) \le F(c_t, \lambda_t(z-1)p'_t)$, where $F(c, \ell):=\Pr_{A \sim \Po(\ell)}[A \le c]$ is the probability that a Poisson random variable with parameter $\ell$ returns a value less than $c.$

Let $a_t$ ($a'_t$, resp.) denote the output of the mechanism at time $t$ with input $X_t$ ($X'_t$, resp.). We show here the privacy argument for only the first time step, the detailed proof is given in Section~\ref{subsec:privacy}.
The probability that the mechanism when started on $X_t$ does not stop in the first time step is  $$\Pr(a_1=-1) = \int_{-\infty}^\infty F(c,\lambda(z)p_t) f_Z(z) dz \le  \int_{-\infty}^\infty F(c,\lambda(z-1)p'_t) f_Z(z) dz.$$
Shifting the variable to $u=z-1$ and assuming that $f_Z(u+1) \le e^{\eps}f_Z(u)$ upper bounds the previous expression by $$\int_{-\infty}^\infty F(c,\lambda(u)p'_t) f_Z(u+1) du \le
e^{\eps} \int_{-\infty}^\infty F(c,\lambda(u)p'_t) f_Z(u) du = e^{\eps} \Pr(a'_1 = -1)$$

To guarantee $f_Z(u+1) \le e^{\eps}f_Z(u)$, we sample a random variable $\eta_1$ using an exponential distribution with scale $1/\eps$ and could set $Z = - \eta_1$.
However, for the privacy proof of the halting case and for the accuracy analysis, we need to choose $Z$ in a more refined way, without destroying the crucial property of $f_Z$:
Setting $\lambda_t$ in this manner allows us to bound the privacy loss by $\eps$ for an arbitrarily long sequence of failures, i.e., $a_1 = \dots = a_t = -1$; however, we need to introduce an appropriately scaled noise at every input, $\eta_{2,t}$, to bound the privacy loss to account for $a_t = +1$ at some indeterminate time-step $t$ (similar to the coupling argument in the standard private testing setting). We sample $\eta_{2,t}$  using an exponential distribution with scale $2/\eps$, i.e.,  $\eta_{2,t} \sim \Exp(2/\eps)$ and for the coupling argument to work we could set $Z = - \eta_1 + \eta_{2,t}.$
However, for the purposes of the accuracy analysis, it will be convenient to introduce a constant offset $\mu_1$ to have $Z$ be positive with high probability. Formally, let $\mu_1 = (1/\eps) \ln (3/\beta)$ be a high-probability bound on the magnitude of $\eta_1$. Then we define 
$ Z_t := \mu_1 - \eta_1 + \eta_{2,t}.$  
Using the tail bound on exponential distribution, we can show that $Z_t\geq 0$ for all $t\geq 1$ with probability $1-\beta/3$ {over $\eta_1 \sim \Exp(1/\eps)$}. 

For point (ii) mentioned above, we test $K_t \geq c_t$ for an appropriately chosen $c_t$ (see \cref{eq:choosing_c_t}). Since threshold tests $K_t \geq c_t$ are monotonic in the value of $K_t$, by post-processing, the outcomes of the tests will also be $\eps$-DP.

\medskip \noindent \textbf{Achieving Tight Accuracy, Step 1: Choosing Base Rate, $\rho_t$.}
Recall that we define $\lambda_t$ to equal $\rho_t \exp(\eps_t Z_t)$ with $\rho_t>0$ and $Z_t := \mu_1 - \eta_1 + \eta_{2,t}$ and that there are so far no further restrictions on $\rho_t$.
By the choice of $\mu_1$ and since $\eta_{2,t}\geq0$ unconditionally, we have that with probability $1-\beta/3$ over the draw of $\eta_1$, we have $\lambda_t \geq \rho_t$. 
For the accuracy analysis we also need an upper bound on $\eta_{2,t}$ for all $t$. 
We define a sequence of failure probabilities $\beta_t = \Theta(\beta/(t \ln^3 t))$ such that $\sum_{t\geq 1}\beta_t = \beta/6$. 
Thus, with probability at least $1-\beta/2$, it holds that for all $t$, $\eta_{2,t} \leq (2/\eps) \ln (1/\beta_t)$ and  $\rho_t \le \lambda_t \leq \Lambda_t \rho_t$ 
with $\Lambda_t = \exp((\eps_t/\eps) (\ln(3/\beta) + \ln (1/\beta_t)))$.   

We now reason about how to conduct threshold tests where $N_t \sim \Po(\lambda_t)$ is the number of such tests and  $K_t \sim \Po(\lambda_t p_t )$ is the number of $+1$ outcomes.
At a high level, 
the distribution of $K_t$ has left and right tails roughly $\lambda_t p_t \pm \sqrt{6\lambda_t p_t  \ln (1/\beta_t)}$ with probability at least $1-2\beta_t$; both grow monotonically with $\lambda_t p_t $. 
We set $\rho_t := \rho_t^\dagger/\pstar_t$ for appropriately chosen $\rho_t^\dagger$.
Thus with probability at least $1-\beta/2 - \beta/3 = 1-5\beta/6$ for all $t$ it holds that
\begin{align*}
    K_t &\geq \rho_t^\dagger - \sqrt{6 \rho_t^\dagger \ln (1/\beta_t)}~~\textrm{ for }p_t \geq \pstar_t \quad \textrm{ and} \quad %\\
    K_t &\leq \frac{\rho_t^\dagger \Lambda_t \bar{p_t}}{\pstar_t} + \sqrt{\frac{6 \rho_t^\dagger \Lambda_t \bar{p_t}}{\pstar_t}  \ln (1/\beta_t)} ~~\textrm{ for }p_t \le \bar{p_t}.
\end{align*}

At a high level, we need to set $\overline{p_t}$ so as to compensate for the magnitude $\Lambda_t$ of the privatizing noise $\exp(\eps_t Z_t)$.
Setting $\overline{p_t} = \pstar_t/(\gamma\Lambda_t)$, for some fixed constant $\gamma>1$,  allows us to offset the $\Lambda_t$ factor and gives the following bound when $p_t \le \bar{p_t}$:
    \[ K_t \leq \frac{\rho_t^\dagger}{\gamma} + \sqrt{\frac{6 \rho_t^\dagger }{\gamma}} \ln (1/\beta_t). \]
To summarize, we require as threshold for our test a value of $c_t$ such that
\begin{align}
\label{eq:choosing_c_t}
    \rho_t^\dagger/{\gamma} + \sqrt{6 (\rho_t^\dagger/{\gamma})  \ln \paren{1/ \beta_t}} \, \leq\quad  c_t \quad \leq \, \rho_t^\dagger - \sqrt{6\rho_t^\dagger  \ln \paren{1/ \beta_t}}.
\end{align}

Note that, as $\rho_t^\dagger$ increases, the leading term of both bounds dominates, and for any $\gamma>1$, these bounds become simultaneously satisfiable. Working out the details we get that there is a value $\rho_t^\dagger = \Theta\big(\frac{\ln (1/\beta_t)}{(\gamma-1)^2}\big)$ that suffices. Combining  everything together, we arrive at the expression
\begin{align}
    \label{eq:choosing_rho_t}
    \rho_t = \Theta\left(\frac{\Lambda_t \ln (1/\beta_t)}{\pstar (\gamma-1)^2}\right) = \Theta\left(\frac{\Lambda_t \ln (t/\beta)}{\pstar (\gamma-1)^2}\right).
\end{align}

\medskip \noindent \textbf{The Flipped variant.} In the above discussion, we have generalized private testing with the rejection threshold $\overline{p_t} = \pstar/\gamma\Lambda_t$. When $\pstar$ approaches $0$, the gap $\pstar - \overline{p_t}$ in absolute terms vanishes, even though $\pstar/\overline{p_t}$ remains $\gamma \Lambda_t$. This is desirable because it means there are few values for $p_t$ where we do not have a guarantee on the performance of the mechanism. 
However, when $\pstar$ is close to $1$, then $\pstar - \overline{p_t}$ is large, which is undesirable for the same reason. This suggests the following modification when $\pstar$ is close to $1$: instead of running the mechanism as is, we \emph{flip} each test by running a threshold test on the evaluations of $-\calM_t(X_t)$ with the target threshold $1-\pstar$. Since $\Pr[\calM(X_t) = -1] = 1-p_t$, this is the same test as $1 - p_t < 1 - \pstar$ iff $ p_t > \pstar$. Now, this is the \emph{Below Threshold} variant of the Thresholding Mechanism, i.e., the mechanism halts when $K_t \le c_t.$

Recall that the noise added to ensure privacy is one-sided with different scale (i.e., $\eta_1\sim \Exp(1/\eps)$ and $\eta_{2,t}\sim \Exp(2/\eps)$). Therefore, to account for the difference in the problem (i.e., Below threshold vs Above threshold), we also need to flip the noise variable: let $s_t$ denote whether we want the mechanism to halt on acceptance or on rejection, then we define $Z_t = s_t (\mu_1 - \eta_1 + \eta_{2,t}).$

Clearly, this is equivalent to the description above when $s_t = +1$. Subtle adjustments  are required, but otherwise the mechanism is identical, with the caveat that it must halt when it rejects some stream element. On implementing this modification, the analyst is free to pick whichever variant offers us less error, and, thus, we are able to show that the rejection threshold equals
\[ \bar{p_t} \;:=\; \max\!\left\{\frac{\pstar_t}{\gamma\,\Lambda_t},\;\; 1 - \gamma\,\Lambda_t\,(1-\pstar_t)\right\}. \]

Furthermore, since the variable $Z_t$ is negative with high probability for $s_t = -1$, this implies that the Poisson mean parameter is deflated by up to a factor of $\Lambda_t$, instead of being inflated. In other words, $\rho_t = \Theta \left(\tfrac{\ln(t/\beta)}{\pstar (\gamma-1)^2}\right)$ leading (with a failure probability of at most $\beta/6$) to a high probability sample complexity bound of
\[ N_t =  \frac{\ln(t/\beta)}{(\gamma-1)^2}\max\left(\frac{\Lambda_t}{\pstar_t},\frac{1}{1-\pstar_t}\right).\]

The first term of the $\max$ expression occurs on the direct variant, and the latter occurs on the flipped variant. There is \emph{no inflation} in the sample complexity via the noise penalty term $\Lambda_t$ in the flipped variant.

\medskip \noindent \textbf{Purification via randomized response.} Thus far we have only reasoned about pure DP inputs. To deal with mechanisms $\calM_t(X_t)$ that only fulfill a weaker $(\eps_t,\delta_t)$-DP guarantee, we appeal to a black box modification based on randomized response. Each evaluation of $\calM_t(X_t)$ is flipped with probability $\phi = \delta_t/(e^{\eps_t} - 1 + 2\delta_t) \leq \delta_t/\eps_t$. We denote this ``$\phi-$smoothed'' mechanism by $\widetilde{\calM}_t$, and its success probability by $\widetilde{p}_t$. It is not hard to see that $|p_t - \widetilde{p}_t| \leq \delta_t/\eps_t$; in other words, the perturbation to the success probabilities is vanishingly small.  

The advantage of this $\phi$-smoothed mechanism is that it is \emph{pure} $\eps_t$-DP. The key idea here is that if a mechanism with co-domain $\{+1,-1\}$ is $(\eps_t,\delta_t)$-DP, then in some sense the worst privacy loss occurs when for  $v\in \{+1,-1\}$, $\Pr[\calM_t(X_t) = v] = \delta_t$, but $\Pr[\calM_t(X'_t) = v] = 0$. Clearly such a mechanism cannot be $\eps$-DP for any value of $\eps$. However, if we increase the probability of both outputs by $\delta_t$, then the ratio of these two outputs changes from being undefined to $2$, and the effective privacy loss as measured by the multiplicative privacy parameter changes from $\infty$ to $\ln 2$. Refining this outline, we get the stated claim.

We can now run \Cref{alg:GTM} on $\widetilde{\calM}_t(X_t)$, and the bound on $|p_t - \widetilde{p}_t|$ implies that essentially the same acceptance and rejection thresholds hold, perturbed by at most a value of $\delta_t/\eps_t$. This technique of using randomized response as post-processing to purify approximate-DP mechanisms with finite co-domains is attributed to folklore by \cite{DBLP:journals/corr/abs-2211-11189}. The term purification was coined by \cite{DBLP:journals/corr/abs-2503-21071} to describe the generation of pure DP outputs via randomized post-processing, and they give further generalizations and applications of this approach.

\subsection{Discussion on Private Testing vs Selection} \label{sec:discussion}
A major motivation behind the introduction of generalized private selection and testing in \cite{DBLP:conf/stoc/0001T19} was private training of machine learning models. In this section, we discuss potential benefits of mechanisms like \GTM for generalized private testing in this context.

\medskip \noindent \textbf{Implication I: Hyperparameter Optimization.}
Hyperparameter optimization (HPO) in private learning is the process of selecting parameters (like noise level, clipping norm, batch size, and learning rate) that balance model accuracy with privacy guarantees. Since finding these parameters itself can leak information, it must be done via privacy-preserving methods. Solving HPO was one of the motivations for generalized private selection and testing in \cite{DBLP:conf/stoc/0001T19}. 

While subsequent works~\cite{papernot2021hyperparameter,ghaziprivate} used generalized private selection instead of testing for this problem, they mention that algorithms for generalized private testing have potential advantages over selection, but point out that if the acceptance threshold were set too high, one could evaluate large sets of hyperparameter choices without identifying a useful candidate, and conversely if we set it too low, then we would get poor quality outputs (see \cite[Appendix C]{papernot2021hyperparameter}).

\Cref{alg:GTM} addresses this challenge in part by allowing the analyst to set the acceptance threshold $\pstar_t$ adaptively in every step. This suggests a way to deal with the challenge mentioned by \cite{papernot2021hyperparameter}: initialize $\pstar_t$ with a restrictive value at the beginning of the stream and make it gradually more permissive as $t$ increases. Further, in the ex-post setting, if we commit to an ex-post privacy loss guarantee that scales logarithmically in the index of hyperparameter value output by the \GTM, then we are guaranteed a rejection threshold $\bar{p}_t = \pstar_t/\gamma\Lambda_0$ in the notation of \cref{thm:intro_ex_post} and \cref{thm:intro_upper_bound}, where $1/\gamma\Lambda_0$ does not depend on the number of hyperparameter values tested. Further, the privacy loss incurred is pure and constant if no hyperparameter value is accepted even when testing on approximate DP mechanisms $\calM_t$.

\medskip \noindent
\textbf{Implication II: Reproducibility in Private Machine Learning.} 
\cite{ghaziprivate} suggested the following approach to use generalized private selection in practical private training: instead of trying to find the best hyperparameter directly, one privately trains models using all possible choices of hyperparameters and \emph{selects} the output model that has the best generalization error (on a given set of test data). They prove that with sufficiently many repetitions per hyperparameter, one is guaranteed to output a model that is competitive with the best model generated (see \Cref{sec:ghaziprivate} for a complete discussion). However, this does not imply that the selection identified the best set of hyperparameters. Typically, there are a large number of hyperparameters with a small probability of generating a good trained model, and a few set of hyperparameters with a high probability of generating a good model. For {\em reproducibility} in learning (and considering intrinsic stochasticity of all learning algorithms), one would ideally like to use the hyperparameters of the latter type.

The guarantee of \cite{ghaziprivate} (\cref{thm:GhaziAccuracy}) cannot circumvent the hardness of generalized private testing and identifying a good model; we formalize this via an idealized toy example in \Cref{cor:GhaziAccuracyLimit}. We find that regardless of the choice of privacy parameters, when we measure the quality of their private selection algorithm for \emph{testing}, their approach incurs a $\approx \beta/T$ loss in accuracy over a set of $T$ hyperparameter values. At a high level, although correlated random dropping works well for generalized private selection, it seems to always incur a $T/\beta$ factor in the ratio between acceptance and rejection thresholds when used for generalized private testing.
\section{Preliminaries}

We collect some common notational choices here for reference throughout this paper.

\begin{defn}[Notation]\label{def:Notation}
    We make the following definitions of convenience.
    \begin{enumerate}
        \item We define \[
            \omega_t \;=\; \frac{1}{\zeta \cdot t \cdot \ln^3(t+2)}, \qquad \zeta \;=\; \sum_{j=1}^{\infty} \frac{1}{j\,\ln^3(j+2)}.
            \]
        With this definition, $\sum_{t=1}^\infty \omega_t = 1$. One can check that $\zeta\in(1,5/4)$, and that therefore $1/\omega_t \in (t\ln^3 (t+2),2t\ln^3 (t+2))$. \item We define $ \beta_t \;:=\; \omega_t \beta$,
        whence we have that $\sum_{t=1}^\infty \beta_t = \beta$, and $\beta_t \geq \frac{\beta}{\zeta t \ln^3 (t+2)}$.
        These expressions will be used heavily in all failure probability accounting.
    \end{enumerate}
\end{defn}

We recall the definitions and some properties of the distributions used in the sequel.

\begin{defn}[Distributions]\label{def:Distributions}
We use the following standard distributions.
\begin{enumerate}
\item \textbf{Bernoulli.} For $p \in [0,1]$, a random variable $B \sim \Ber(p)$ takes values in $\{0,1\}$ with $\Pr[B = 1] = p$ and $\Pr[B = 0] = 1 - p$.

\item \textbf{Exponential.} For $\tau > 0$, a random variable $\eta \sim \Exp(\tau)$ has support $[0,\infty)$ and density $f(z) = \frac{1}{\tau}\,e^{-z/\tau}$ for $z \ge 0$.

\item \textbf{Laplace.} For $\tau > 0$, a random variable $\nu \sim \Lap(\tau)$ has support $\R$ and density $f(z) = \frac{1}{2\tau}e^{-|z|/(\tau)}$ for $z\in\R$.

\item \textbf{Poisson.} For $\lambda \ge 0$, a random variable $M \sim \Po(\lambda)$ takes values in $\mathbb{Z}_{\ge 0}$ with mass function $\Pr[M = k] = e^{-\lambda}\,\frac{\lambda^k}{k!}$ for $k \in \mathbb{Z}_{\ge 0}$.

\item \textbf{Pareto.} For $\sigma > 0$, a random variable $w \sim \Par(\sigma)$
has support $[1,\infty)$ with $\Pr[w > u] = u^{-\sigma}$ for $u \geq 1$
and density $f(u) = \sigma\, u^{-\sigma-1}$ for $u \geq 1$.
\end{enumerate}
\end{defn}

\begin{defn}[Differential Privacy fundamentals]
    \begin{enumerate}
        \item We define the \emph{sensitivity} of a function $f:\calX\to\R$ by $\Delta f := \max_{d(X,X')=1} |f(X) - f(X')|$, where $d(X,X')=1$ is shorthand for $X$ and $X'$ are neighboring.
        \item Given a sensitivity-$1$ query $f:\calX\to\R$, the Laplace mechanism (\cite{DMNS}) releases $f(X) + \nu$ for $\nu\sim\Lap(1/\eps)$. The Laplace mechanism is $\eps$-DP.
    \end{enumerate}
\end{defn}

\paragraph{Poisson mean thresholds.}
For $x \ge 0$ and $\alpha \in (0,1)$, define the \emph{left} and \emph{right mean thresholds}
\begin{align}
\lamL{x}{\alpha} &:= x + \ln(1/\alpha) + \sqrt{2x\ln(1/\alpha) + \ln^2(1/\alpha)}, \label{eq:lamL_def} \\
\lamR{x}{\alpha} &:= x + 1 - \sqrt{2(x+1)\ln(1/\alpha)}. \label{eq:lamR_def}
\end{align}
These thresholds are derived from Poisson Chernoff bounds (\Cref{lem:PoissonChernoffBounds}) and provide sufficient conditions for tail control: if $M \sim \Po(\lambda)$, then $\lambda \ge \lamL{x}{\alpha}$ guarantees $\Pr[M \le x] \le \alpha$, and $\lambda \le \lamR{x}{\alpha}$ guarantees $\Pr[M \ge x+1] \le \alpha$ (under a mild validity condition that $x$ not be too small). The thresholds may not be tight with respect to the exact Poisson tail probabilities (see \Cref{lem:PoissonFacts}, item~3). 

We define the mean thresholds for real valued $x$ as opposed to just integer valued so as to be able to exploit the algebraic inversion property. See \Cref{lem:PoissonFacts}, item~4.

\begin{lem}[Poisson and exponential facts]\label{lem:PoissonFacts}
The following properties are used throughout.
\begin{enumerate}
\item \textbf{Additivity.}
If $M_1 \sim \Po(\lambda_1)$ and $M_2 \sim \Po(\lambda_2)$ are independent, then $M_1 + M_2 \sim \Po(\lambda_1 + \lambda_2)$.

\item \textbf{Thinning.}
Let $M \sim \Po(\lambda)$, and conditional on $M$, let $B_1, \dots, B_M$ be i.i.d.\ $\Ber(p)$ random variables, independent of $M$. Then $K := \sum_{i=1}^{M} B_i \sim \Po(\lambda p)$.

\item \textbf{Tail bounds.}
\label{item:tail_bound_Poisson}
Let $M \sim \Po(\lambda)$ and fix $\alpha \in (0,1)$ and $c \in \mathbb{Z}_{\ge 0}$.
\begin{enumerate}
\item \emph{(Left tail.)} If $\lambda \ge \lamL{c}{\alpha}$, then $\Pr[M \le c] \le \alpha$.
\label{item:tail_bound_Poisson_left}

\item \emph{(Right tail.)} If $c + 1 \ge 2\ln(1/\alpha)$ and $\lambda \le \lamR{c}{\alpha}$, then $\Pr[M \ge c + 1] \le \alpha$.
\label{item:tail_bound_Poisson_right}

\end{enumerate}

\item \textbf{Algebraic inversion.}
Fix $\alpha \in (0,1)$ and let $L = \ln(1/\alpha)$. For $\lambda \ge 0$ and $x \ge 0$ with $x + 1 > 2L$,
\[
\lambda \le \lamR{x}{\alpha} \quad\iff\quad x + 1 \ge \lamL{\lambda}{\alpha}.
\]

\item \textbf{Exponential density shift.}
Let $\eta \sim \Exp(\tau)$ with density $f(z) = \frac{1}{\tau}\,e^{-z/\tau}$ for $z \ge 0$. For any $z \ge 0$ and $\Delta \in \mathbb{R}$ with $z + \Delta \ge 0$,
\[
f(z + \Delta) = e^{-\Delta/\tau}\,f(z).
\]
\end{enumerate}
\end{lem}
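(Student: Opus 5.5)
Items 1, 2 and 5 are standard and need only short verifications; the real content is in item 3 (tail bounds) and item 4 (algebraic inversion).

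For \emph{additivity}, I would use probability generating functions: $\Ex[s^{M_i}] = e^{\lambda_i(s-1)}$, and independence makes these multiply to $e^{(\lambda_1+\lambda_2)(s-1)}$. For \emph{thinning}, I would condition on $M$. This gives $\Ex[s^K\mid M] = (1-p+ps)^M$, and averaging over $M\sim\Po(\lambda)$ yields $\Ex[s^K] = e^{\lambda(1-p+ps-1)} = e^{\lambda p(s-1)}$, which is the generating function of $\Po(\lambda p)$. The \emph{exponential density shift} is immediate from $f(z+\Delta) = \tfrac1\tau e^{-(z+\Delta)/\tau} = e^{-\Delta/\tau} f(z)$; the condition $z+\Delta\ge0$ keeps us inside the support.

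For the \emph{tail bounds}, I would use the Poisson Chernoff bounds in their relative-entropy form. For $x\le\lambda$, $\Pr[M\le x]\le \exp(-\lambda h(x/\lambda))$, and for $x\ge\lambda$, $\Pr[M\ge x]\le\exp(-\lambda h(x/\lambda))$, where $h(u)=u\ln u-u+1$. I would then pass to the quadratic lower bounds $h(1-\delta)\ge \delta^2/2$ for the lower tail and $h(1+\delta)\ge \delta^2/(2(1+\delta))$, equivalently $\lambda h(x/\lambda)\ge (x-\lambda)^2/(2x)$ when $x\ge\lambda$, for the upper tail.

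\emph{Left tail.} The lower-tail bound gives $\Pr[M\le c]\le \exp(-(\lambda-c)^2/(2\lambda))$, so it suffices that $(\lambda-c)^2\ge 2\lambda L$ with $\lambda\ge c$, where $L=\ln(1/\alpha)$. Writing $\lambda=c+y$, this becomes $y^2-2Ly-2cL\ge0$, whose positive root is $y = L+\sqrt{L^2+2cL}$. That is exactly $\lamL{c}{\alpha}-c$, and monotonicity in $\lambda$ finishes this case.

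\emph{Right tail.} With $x=c+1\ge\lambda$, the upper-tail bound gives $\Pr[M\ge c+1]\le\exp(-(c+1-\lambda)^2/(2(c+1)))$. This is at most $\alpha$ once $c+1-\lambda\ge\sqrt{2(c+1)L}$, which is $\lambda\le\lamR{c}{\alpha}$. The hypothesis $c+1\ge 2L$ ensures $\lamR{c}{\alpha}\ge0$, so the condition is meaningful.

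For \emph{algebraic inversion}, set $u=x+1$. The condition $\lambda\le u-\sqrt{2uL}$ is equivalent, by the substitution $v=\sqrt u$, to $\sqrt{u}\ge \sqrt{L/2}+\sqrt{L/2+\lambda}$, since $v$ must exceed the larger root of $v^2-\sqrt{2L}\,v-\lambda=0$. The hypothesis $u>2L$ guarantees $\sqrt u\ge\sqrt{L/2}$, so we are on the increasing branch of $v\mapsto v^2-\sqrt{2L}v$ and the equivalence is genuine. Squaring the right-hand side gives $L/2+L/2+\lambda+2\sqrt{(L/2)(L/2+\lambda)} = \lambda+L+\sqrt{L^2+2\lambda L} = \lamL{\lambda}{\alpha}$. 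Hence the condition is exactly $u\ge\lamL{\lambda}{\alpha}$.

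The main obstacle is getting the right-tail constant correct. The usual Bernstein-type bound has denominator $2(\lambda+\delta/3)$ rather than $2x$, so I have to make sure the chosen form of $h$ actually yields the $2(c+1)$ denominator. I would verify $h(1+\delta)\ge\delta^2/(2(1+\delta))$ directly: take $g(\delta)=(1+\delta)\ln(1+\delta)-\delta-\delta^2/(2(1+\delta))$, note $g(0)=g'(0)=0$, and check $g''\ge0$.
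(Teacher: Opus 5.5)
Your proposal is correct and follows essentially the same route as the paper. Both arguments reduce the tail items to the Chernoff forms $\exp(-(\lambda-c)^2/(2\lambda))$ and $\exp(-(c+1-\lambda)^2/(2(c+1)))$, the latter via the same second-derivative check the paper does in its appendix, and then solve the resulting quadratic inequalities. The differences are cosmetic: you give PGF proofs of additivity and thinning where the paper cites them, and for the inversion you solve a quadratic in $\sqrt{x+1}$ rather than squaring $y-\lambda\ge\sqrt{2yL}$ into a quadratic in $y$.
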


\begin{proof}
\textbf{Items 1--2.} Standard; see, e.g., \cite{DBLP:books/daglib/0012859}.

\medskip
\noindent\textbf{Item 3.}
Write $L = \ln(1/\alpha)$.

\emph{Left tail.}
We apply \Cref{lem:PoissonChernoffBounds} (lower tail) with shift $u = \lambda - c$. 
The assumption $\lambda \ge \lamL{c}{\alpha} \ge c$ ensures $u \ge 0$, so the bound gives
\[
\Pr[M \le c] = \Pr[M \le \lambda - u] \le \exp\!\left(-\frac{(\lambda - c)^2}{2\lambda}\right).
\]
We require this to be at most $\alpha$, i.e., $(\lambda - c)^2 \ge 2\lambda L$.
Expanding the left side and rearranging by powers of $\lambda$ yields the quadratic inequality
\[
\lambda^2 - 2(c + L)\lambda + c^2 \ge 0.
\]
The roots of the corresponding equation are
\[
\lambda = c + L \pm \sqrt{(c + L)^2 - c^2} = c + L \pm \sqrt{2cL + L^2}.
\]
Since $u \ge 0$ requires $\lambda \ge c$, we need $\lambda$ at or above the larger root.
Recognizing the larger root as $\lamL{c}{\alpha}$, the bound $\Pr[M \le c] \le \alpha$ holds for all $\lambda \ge \lamL{c}{\alpha}$.

\emph{Right tail.}
We apply \Cref{lem:PoissonChernoffBounds} (upper tail) with shift $u = c + 1 - \lambda$ for $u \ge 0$. 
Note that the denominator in the Chernoff exponent becomes $2(\lambda + u) = 2(c + 1)$, giving
\[
\Pr[M \ge c + 1] = \Pr[M \ge \lambda + u] \le \exp\!\left(-\frac{(c + 1 - \lambda)^2}{2(c + 1)}\right).
\]
We require this to be at most $\alpha$, i.e., $(c + 1 - \lambda)^2 \ge 2(c + 1)L$ as $L = \ln(1/\alpha)$.
We first show that this implies $c + 1 \ge 2L$.
The Chernoff upper tail bound requires the shift $u = c + 1 - \lambda$ to be nonneg\-ative, so we must have $\lambda \le c + 1$.
If $c + 1 < 2L$, then for every $\lambda \in [0,\, c + 1]$ we have $(c + 1 - \lambda)^2 \le (c+1)^2 < 2(c+1)L$, so the required inequality $(c + 1 - \lambda)^2 \ge 2(c+1)L$ cannot hold.
The condition $c + 1 \ge 2L$ is therefore necessary.

Under this condition, we take the square root of both sides.
Since the shift $u = c + 1 - \lambda$ must be nonnegative,
the left side $c + 1 - \lambda$ is nonnegative, and we obtain
\[
c + 1 - \lambda \ge \sqrt{2(c+1)L}.
\]
Isolating $\lambda$ yields $\lambda \le c + 1 - \sqrt{2(c+1)L} = \lamR{c}{\alpha}$.
Note that this automatically enforces $\lambda \le c + 1$, so the nonnegative shift requirement $u \ge 0$ is satisfied.

\medskip
\noindent\textbf{Item 4.}
Let $L = \ln(1/\alpha)$ and set $y = x + 1$, so the validity condition is $y > 2L$. Starting from $\lambda \le \lamR{x}{\alpha}$:
\[
\lambda \le y - \sqrt{2yL}
\quad\iff\quad
y - \lambda \ge \sqrt{2yL}.
\]
Since $\sqrt{2yL} \ge 0$, this implies $y \ge \lambda$. So squaring $y - \lambda \ge \sqrt{2yL}$ gives $(y - \lambda)^2 \ge 2yL$, i.e., $y^2 - 2(\lambda + L)y + \lambda^2 \ge 0$. Because $y \ge \lambda$, the variable $y$ lies at or above the larger root:
\[
y \ge \lambda + L + \sqrt{2\lambda L + L^2} = \lamL{\lambda}{\alpha}.
\]
Substituting $y = x + 1$ yields $x + 1 \ge \lamL{\lambda}{\alpha}$. All steps are reversible (the implication $y \ge \lambda$ is enforced by $y - \lambda \ge \sqrt{2yL} \ge 0$), establishing the equivalence.

\medskip
\noindent\textbf{Item 5.}
Direct computation: $f(z + \Delta) = \frac{1}{\tau}\,e^{-(z+\Delta)/\tau} = e^{-\Delta/\tau} \cdot \frac{1}{\tau}\,e^{-z/\tau} = e^{-\Delta/\tau}\,f(z)$.
\end{proof}

\begin{lem}[Poisson overhead control]\label{lem:PoissonOverhead}
Fix error parameters $\alpha, \beta \in (0,1)$ and a target ratio $\gamma \in (1,2]$.
Let $A = \ln(1/\alpha)$, $B=\ln(1/\beta)$ and $c = \max(\lfloor 2\max (A,B)\rfloor,\lceil r^2 \rceil)$, where
\[ r = \frac{\sqrt{2A} + \gamma\sqrt{2B} + \sqrt{(\sqrt{2A}+\gamma\sqrt{2B})^2 + 8(\gamma-1)A}}{2(\gamma-1)}.\]
Then, the following statements hold:
\begin{enumerate}
\item $\lamL{c}{\alpha} \leq \gamma\,\lamR{c}{\beta}$,
\item $c + 1 > 2\max\!\big(\ln(1/\alpha),\;\ln(1/\beta)\big)$,
\item $c = O\!\big((\ln(1/\alpha) + \ln(1/\beta))/(\gamma-1)^2\big)$.
\end{enumerate}
\end{lem}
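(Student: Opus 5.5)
Items~2 and~3 are bookkeeping, so I will dispose of them first and then treat item~1, which is the real content.

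For item~2, note that $c \ge \lfloor 2\max(A,B)\rfloor > 2\max(A,B) - 1$. Hence $c+1 > 2\max(A,B)$.

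For item~3, I will bound $r$ directly. Each summand of the numerator is at most $O(\sqrt{A}+\sqrt{B})$, using $\gamma\le 2$ and $8(\gamma-1)A \le 8A$. Therefore $r = O\big((\sqrt{A}+\sqrt{B})/(\gamma-1)\big)$ and $r^2 = O\big((A+B)/(\gamma-1)^2\big)$. The other term $2\max(A,B)$ is $O(A+B)$, and $(\gamma-1)^{-2}\ge 1$, so the maximum is $O\big((A+B)/(\gamma-1)^2\big)$. The ceiling adds at most $1$, which I absorb, assuming (as is implicit) that $A+B$ is bounded below by a constant.

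For item~1, write $y=c+1$. Substituting the definitions, the claim becomes
\[ c + A + \sqrt{2cA + A^2} \;\le\; \gamma\big(y - \sqrt{2yB}\big). \]
Since $c<y$, it suffices to bound the left side above by $y + A + \sqrt{2yA + A^2}$. Next I use $\sqrt{2yA+A^2}\le \sqrt{2yA} + A$, which reduces the claim to
\[ (\gamma-1)y - \big(\sqrt{2A}+\gamma\sqrt{2B}\big)\sqrt{y} - 2A \;\ge\; 0. \]
This is a quadratic in $s=\sqrt{y}$ with positive leading coefficient $\gamma-1$ and negative constant term. Its larger root is exactly
\[ \frac{\sqrt{2A}+\gamma\sqrt{2B}+\sqrt{(\sqrt{2A}+\gamma\sqrt{2B})^2+8(\gamma-1)A}}{2(\gamma-1)} = r. \]
So the inequality holds whenever $\sqrt{y}\ge r$, i.e. $y \ge r^2$. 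This follows from $c\ge \lceil r^2\rceil \ge r^2$, which gives $y = c+1 > r^2$.

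I expect the main obstacle to be choosing the slack in this reduction so that the resulting quadratic matches the given formula for $r$ exactly. The delicate point is the constant term. Bounding $\sqrt{2cA+A^2} \le \sqrt{2yA}+A$ produces $2A$, and this matches the $8(\gamma-1)A = 4(\gamma-1)\cdot 2A$ under the square root. If a different split appears to be needed, I would redo the expansion with $c$ in place of $y$ on the left and check the discriminant against the given $r$ before proceeding.
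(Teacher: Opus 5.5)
Your proof is correct and follows the paper's argument essentially step for step. The paper sets $m=\sqrt{c+1}$ (your $\sqrt{y}$) and uses $c<m^2$ together with $\sqrt{2m^2A+A^2}\le m\sqrt{2A}+A$ to reduce item~1 to the same quadratic $(\gamma-1)m^2-(\sqrt{2A}+\gamma\sqrt{2B})m-2A\ge 0$ with larger root $r$, and it handles items~2 and~3 the same way you do. Your caveat on item~3 is a fair point that the paper's proof passes over silently: the $+1$ from $\lceil r^2\rceil$ needs $A+B$ bounded below, or an additive $O(1)$ in the bound.
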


\begin{proof}
Setting $m = \sqrt{c+1}$, the right threshold expands as
\[ \lamR{c}{\beta} = c + 1 - \sqrt{2(c+1)B} = m^2 - m\sqrt{2B}.\]
For the left threshold, bounding $c < m^2$ and applying
$\sqrt{2cA + A^2} < \sqrt{2m^2 A + A^2} \leq m\sqrt{2A} + A$ gives
\[\lamL{c}{\alpha} = c + A + \sqrt{2cA + A^2} < m^2 + 2A + m\sqrt{2A}.\]
The condition $\lamL{c}{\alpha} \leq \gamma\,\lamR{c}{\beta}$ is therefore implied by
\[
  m^2 + 2A + m\sqrt{2A} \;\leq\; \gamma\!\left(m^2 - m\sqrt{2B}\right),
\]
which rearranges to
\begin{equation}\label{eq:overhead_quadratic}
  (\gamma - 1)\,m^2
  - \bigl(\sqrt{2A} + \gamma\sqrt{2B}\bigr)\,m
  - 2A \;\geq\; 0.
\end{equation}
Since $\gamma > 1$, this quadratic inequality holds for all $m$ at or above its larger positive root
\[
  r \;:=\;
  \frac{\sqrt{2A} + \gamma\sqrt{2B}
    + \sqrt{\bigl(\sqrt{2A} + \gamma\sqrt{2B}\bigr)^2 + 8(\gamma-1)A}}
  {2(\gamma-1)}.
\]
Define
\[
  c \;:=\; \max\!\Big(\big\lfloor 2\max(A,B) \big\rfloor,\;\big\lceil r^2 \big\rceil\Big).
\]

\medskip
\noindent\emph{Item~1.}
Since $c \geq \lceil r^2 \rceil \geq r^2$, we have $m = \sqrt{c+1} > r$,
so \eqref{eq:overhead_quadratic} holds and the gap condition follows.

\medskip
\noindent\emph{Item~2.}
Since $\lfloor y \rfloor + 1 > y$ for all $y$, we have
$c + 1 \geq \lfloor 2\max(A,B) \rfloor + 1 > 2\max(A,B)$.

\medskip
\noindent\emph{Item~3.}
Write $a = \sqrt{2A} + \gamma\sqrt{2B}$ for the linear coefficient in \eqref{eq:overhead_quadratic}.
Using $\sqrt{a^2 + b} \leq a + \sqrt{b}$ for $a \geq 0$, the numerator of $r$ satisfies
\[
  a + \sqrt{a^2 + 8(\gamma-1)A}
  \;\leq\; 2a + \sqrt{8(\gamma-1)A}.
\]
Since $\gamma \leq 2$, we have $a \leq \sqrt{2A} + 2\sqrt{2B}$ and
$\sqrt{8(\gamma-1)A} \leq \sqrt{8A}$, so the numerator is
$O(\sqrt{A} + \sqrt{B})$. Therefore
$r = O((\sqrt{A} + \sqrt{B})/(\gamma-1))$ and
$r^2 = O((A + B)/(\gamma-1)^2)$.
Since $\lfloor 2\max(A,B) \rfloor \leq 2(A + B)$,
the conclusion $c = O((A + B)/(\gamma-1)^2)$ follows.
\end{proof}

\begin{lem}[Pareto facts]\label{lem:ParetoFacts}
The following properties are used throughout.
\begin{enumerate}
\item \textbf{Exponential to Pareto transformation.}
If\, $\eta \sim \Exp(\tau)$, then $e^{\eta/(\sigma\tau)} \sim \Par(\sigma)$.
 
\item \textbf{Moments.}
Let $w \sim \Par(\sigma)$. For any $k > 0$,
$\Ex[w^{-k}] = \sigma/(\sigma + k)$.
For any $0 < k < \sigma$,
$\Ex[w^k] = \sigma/(\sigma - k)$.
If $k \geq \sigma$, then $\Ex[w^k] = \infty$.
 
\item \textbf{Truncated moments.}
Let $w \sim \Par(\sigma)$ with $\sigma > 1$, and define $\tilde{w} := \min(w, W)$ for $W \geq 1$. Then:
\begin{enumerate}
\item $\displaystyle\Ex[\tilde{w}] = \frac{\sigma}{\sigma - 1} - \frac{W^{1-\sigma}}{\sigma - 1}
\leq \frac{\sigma}{\sigma-1}$.
 
\item $\displaystyle\Ex[\tilde{w}^2] =
  \begin{cases}
    \displaystyle\frac{2\,W^{2-\sigma} - \sigma}{2-\sigma}
    & \text{if } \sigma \neq 2, \\[8pt]
    2\ln W + 1 & \text{if } \sigma = 2.
  \end{cases}$
 
In particular: when $1 < \sigma < 2$, $\Ex[\tilde{w}^2] \leq \frac{2}{2-\sigma}\,W^{2-\sigma}$;
when $\sigma = 2$, $\Ex[\tilde{w}^2] = 2\ln W + 1$;
when $\sigma > 2$, $\Ex[\tilde{w}^2] \leq \frac{\sigma}{\sigma - 2}$.
\end{enumerate}
\end{enumerate}
\end{lem}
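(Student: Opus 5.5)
All three items of \Cref{lem:ParetoFacts} reduce to computing with the explicit Pareto tail $\Pr[w>u]=u^{-\sigma}$ and density $f(u)=\sigma u^{-\sigma-1}$ on $[1,\infty)$. I will prove them in order, using each as a direct calculation.

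\textbf{Item 1.} Compute the survival function directly. For $u\ge 1$, write $\ln u=s\ge 0$. Then
\[
\Pr[e^{\eta/(\sigma\tau)}>u]=\Pr[\eta>\sigma\tau s]=e^{-\sigma s}=u^{-\sigma}.
\]
This is exactly the $\Par(\sigma)$ tail. The support is $[1,\infty)$ because $\eta\ge 0$.

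\textbf{Item 2.} Integrate against the density:
\[
\Ex[w^{\pm k}]=\int_1^\infty \sigma u^{\pm k-\sigma-1}\,du .
\]
\begin{itemize}
\item For $w^{-k}$ the exponent is $-k-\sigma-1<-1$, so the integral converges and equals $\sigma/(\sigma+k)$.
\item For $w^{k}$ with $k<\sigma$, the integral equals $\sigma/(\sigma-k)$.
\item For $k\ge\sigma$ the integrand decays like $u^{-1}$ or slower, so the integral diverges.
\end{itemize}

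\textbf{Item 3.} Write $\tilde w=\min(w,W)$ and split the expectation into the part below $W$ and the atom at $W$, which has mass $\Pr[w>W]=W^{-\sigma}$. For $j\in\{1,2\}$,
\[
\Ex[\tilde w^j]=\int_1^W \sigma u^{j-\sigma-1}\,du + W^{j}\cdot W^{-\sigma}.
\]
\begin{itemize}
\item \emph{Case $j=1$.} The integral is $\frac{\sigma}{\sigma-1}(1-W^{1-\sigma})$. Adding $W^{1-\sigma}$ gives $\frac{\sigma}{\sigma-1}-\frac{W^{1-\sigma}}{\sigma-1}$. This is at most $\frac{\sigma}{\sigma-1}$ since $\sigma>1$.
\item \emph{Case $j=2$, $\sigma\ne 2$.} The integral is $\frac{\sigma}{2-\sigma}(W^{2-\sigma}-1)$. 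Adding $W^{2-\sigma}$ gives $\frac{2W^{2-\sigma}-\sigma}{2-\sigma}$.
\item \emph{Case $j=2$, $\sigma=2$.} The integral is $2\ln W$, and the atom contributes $1$, giving $2\ln W+1$.
\end{itemize}

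For the stated bounds in the case $j=2$:
\begin{itemize}
\item If $1<\sigma<2$, the denominator $2-\sigma$ is positive, so dropping $-\sigma$ from the numerator gives $\Ex[\tilde w^2]\le\frac{2}{2-\sigma}W^{2-\sigma}$.
\item If $\sigma>2$, rewrite the expression as $\frac{\sigma-2W^{2-\sigma}}{\sigma-2}$. Since $W^{2-\sigma}\ge 0$, this is at most $\frac{\sigma}{\sigma-2}$.
\end{itemize}

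The only step needing care is this last sign handling for $\sigma>2$; the rest is routine integration.
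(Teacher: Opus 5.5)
Your proposal is correct and follows the paper's proof essentially line for line. It uses the survival-function computation for Item~1 and direct integration against the density $\sigma u^{-\sigma-1}$ for Item~2. For the truncated moments it uses the same split into $\int_1^W$ plus the atom $W^j\cdot W^{-\sigma}$, and the same rewriting $\frac{\sigma-2W^{2-\sigma}}{\sigma-2}$ for the $\sigma>2$ bound.
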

 
\begin{proof}
\textbf{Item 1.}
For $u \geq 1$, $\Pr[e^{\eta/(\sigma\tau)} > u] = \Pr[\eta > \sigma\tau\ln u] = e^{-\sigma\ln u} = u^{-\sigma}$.
 
\medskip
\noindent\textbf{Item 2.}
By direct integration,
$\Ex[w^k] = \sigma \int_1^{\infty} u^{k-\sigma-1}\,\mathsf{d}u = \sigma/(\sigma-k)$
when $k < \sigma$, and the integral diverges when $k \geq \sigma$.
Similarly,
$\Ex[w^{-k}] = \sigma \int_1^{\infty} u^{-k-\sigma-1}\,\mathsf{d}u = \sigma/(\sigma+k)$
for all $k > 0$.
 
\medskip
\noindent\textbf{Item 3(a).}
$\Ex[\tilde{w}] = \int_1^{W} u \cdot \sigma u^{-\sigma-1}\,\mathsf{d}u + W \cdot W^{-\sigma}$.
The integral equals $\frac{\sigma}{1-\sigma}(W^{1-\sigma} - 1)$ for $\sigma > 1$.
Adding $W^{1-\sigma}$ and simplifying gives
$\Ex[\tilde{w}] = \frac{\sigma}{\sigma-1} - \frac{W^{1-\sigma}}{\sigma-1}$.
Since $W \geq 1$ and $\sigma > 1$, the subtracted term is nonnegative.
 
\medskip
\noindent\textbf{Item 3(b).}
$\Ex[\tilde{w}^2] = \sigma \int_1^{W} u^{1-\sigma}\,\mathsf{d}u + W^{2-\sigma}$.
For $\sigma \neq 2$, the integral equals $\frac{\sigma}{2-\sigma}(W^{2-\sigma} - 1)$,
and adding $W^{2-\sigma}$ gives $\frac{2\,W^{2-\sigma} - \sigma}{2-\sigma}$.
For $\sigma = 2$, the integral equals $2\ln W$, and adding $1$ gives $2\ln W + 1$.
 
For the bounds: when $1 < \sigma < 2$, $2 - \sigma > 0$,
so $\frac{2W^{2-\sigma} - \sigma}{2-\sigma} \leq \frac{2W^{2-\sigma}}{2-\sigma}$.
When $\sigma > 2$, rewrite $\frac{2W^{2-\sigma} - \sigma}{2-\sigma}
= \frac{\sigma - 2W^{2-\sigma}}{\sigma - 2} \leq \frac{\sigma}{\sigma-2}$
since $W^{2-\sigma} \geq 0$.
\end{proof}

We will find it convenient to reason about the accuracy of generalized private testers in terms of their \emph{per-step type I and type II} errors.

\begin{defn}[Accuracy]\label{def:accuracy}
    Consider a generalized private tester that for some given failure probability $\beta$ and target thresholds $(\pstar_t)_{t\geq1}$ achieves rejection thresholds $(\pbar_t)_{t\geq1}$.
    A \emph{Type~I error} (false halt) at step $t$ is the event that the tester halts and $s_t\, p_t \leq s_t\, \bar{p}_t$. 
    A \emph{Type~II error} (missed halt) at step $t$ is the event that the tester continues and $s_t\, p_t \geq s_t\, \pstar_t$. 
    Then we define the following notions of accuracy.
    \begin{enumerate}
        \item \emph{Global accuracy.}
          The tester is $\beta$-globally accurate if,
          with probability at least $1-\beta$,
          no Type~I or Type~II error occurs at any step.
        \item \emph{Per-step accuracy.}
          The tester has $(\beta_t^{\tI}, \beta_t^{\tII})$ per-step accuracy
          at step $t$ if the Type~I and Type~II error probabilities
          at step $t$ are at most $\beta_t^{\tI}$ and $\beta_t^{\tII}$, respectively.
    \end{enumerate}
    For the given choice of thresholds, the generalized private tester considered in this definition is $\beta$-globally accurate.
\end{defn}
\section{The Generalized Thresholding Mechanism}\label{sec:algorithm}

In this section, we introduce the Generalized Thresholding Mechanism (\Cref{alg:GTM}). The algorithm monitors a sequence of mechanisms $\calM_t : \calX \to \{-1,+1\}$, each evaluated on a dataset $X_t$ with an input privacy promise $\eps_t \geq 0$, and halts when it believes that the success probability $p_t := \Pr[\calM_t(X_t) = +1]$ lies in the target regime.

The sequence is provided by an \emph{adaptive analyst}. 
At the beginning of the stream, the analyst fixes (a) a value $\tau_1$ that determines the scale parameter of a privatizing noise value $\eta_1 \sim \Exp(\tau_1)$ that is shared across the stream, and (b) a \emph{calibration parameter} $\mu_1$ which will be a high probability upper bound on the magnitude of $\eta_1$. \emph{At each step $t$}, after observing the prior outputs $a_1, \dots, a_{t-1}$, the analyst selects (c) the mechanism $\calM_t$, (d) dataset $X_t$, (e) privacy parameter $\eps_t$, subject only to the constraint that $\calM_t$ is $\eps_t$-differentially private,
and (f) a noise scale $\tau_{2,t}$. 
The mechanism will choose
various per-step tuning parameters, namely (1) a  threshold $c_t$, (2) a base rate $\rho_t$, and (3) a direction $s_t$. The privacy guarantee (\Cref{lem:GeneralizedPrivacy}) holds against all such adaptive choices; it is parameterized by $\tau_1$  and by the per-step noise scales $\tau_{2,t}$ (which directly determines the privacy cost at halting time, regardless of the values of $\eps_1,\dots,\eps_t$). 
However, different values of $s_t$, $\rho_t$, and $c_t$ yield qualitatively different accuracy guarantees in terms of the rejection threshold $\bar{p}$ and we give below values that minimize accuracy loss.

\begin{algorithm}[t]
\caption{Generalized Thresholding Mechanism}
\label{alg:GTM}
\SetAlgoLined
\SetKwProg{Proc}{procedure}{:}{}

\Proc{\textnormal{GTM.Init}$(\tau_1,\, \mu_1)$}{
  \KwIn{Shared noise scale $\tau_1 > 0$; calibration offset $\mu_1 > 0$.}
  Draw $\eta_1 \sim \Exp(\tau_1)$\;
  Store $\eta_1$ and $\mu_1$\;
}

\BlankLine

\Proc{\textnormal{GTM.Step}$(\calM_t,\, X_t,\, \eps_t,\, \tau_{2,t},\, c_t,\, \rho_t,\, s_t)$}{
  \KwIn{privacy parameter $\eps_t \geq 0$;
    Mechanism $\calM_t : \calX \to \{-1,+1\}$ ($\eps_t$-DP);
    dataset $X_t \in \calX$;
    per-step noise scale $\tau_{2,t} > 0$;
    threshold $c_t \in \mathbb{Z}_{\geq 0}$;
    base rate $\rho_t > 0$;
    direction $s_t \in \{-1,+1\}$.}
  \KwOut{$a_t \in \{-s_t,\, s_t\}$, where $a_t = s_t$ denotes halting
    and $a_t = -s_t$ denotes continuation.}
  \BlankLine
  Draw $\eta_{2,t} \sim \Exp(\tau_{2,t})$\;
  $Z_t \gets s_t\bigl(\mu_1 - \eta_1 + \eta_{2,t}\bigr)$\;
  $\lambda_t \gets \rho_t \cdot \exp\!\left(\eps_t\, Z_t\right)$\;
  Draw $N_t \sim \Po(\lambda_t)$\;
  $K_t \gets 0$\;
  \For{$i = 1, \ldots, N_t$}{
    Draw $y_{t,i} \sim \calM_t(X_t)$\;
    \lIf{$y_{t,i} = +1$}{$K_t \gets K_t + 1$}
  }
  \lIf{$s_t \cdot (K_t - c_t - \tfrac{1}{2}) > 0$}{\textbf{Output} $a_t \gets s_t$  and \textbf{halt}}
  \lElse{\textbf{Output} $a_t \gets -s_t$}
}
\end{algorithm}

The algorithm uses two sources of randomness for privacy. A shared noise value $\eta_1 \sim \Exp(\tau_1)$ is drawn once at initialization and reused at every step. At each step $t$, a fresh noise value $\eta_{2,t} \sim \Exp(\tau_{2,t})$ is drawn independently. They are combined with the calibration offset $\mu_1 > 0$  and the direction $s_t \in \{-1,+1\}$ as follows to form the noise term
    \[ Z_t = s_t \cdot (\mu_1 - \eta_1 + \eta_{2,t}). \]
\sloppy The noise $Z_t$ controls the number of mechanism evaluations at step $t$. The algorithm makes $N_t$ i.i.d. evaluations of  $\calM_t(X_t)$, where
\[ N_t \sim \Po(\rho_t \exp(\eps_t Z_t)).\]  
The exponential scaling $\exp(\eps_t Z_t)$ scales the perturbation with the input privacy parameter $\eps_t$; when $\eps_t = 0$, the mechanism already satisfies $0$-DP and no perturbation is needed, so  $N_t$ reduces to the base rate $\rho_t$. 

The algorithm counts the number $K_t$ of $+1$ outcomes among the $N_t$-many evaluations of $\calM_t(X_t)$. The direction parameter $s_t$ determines two modes of operation. 
\begin{enumerate}
    \item When $s_t = +1$ (\emph{above threshold}), the algorithm halts if $K_t \geq c_t + 1$, indicating that $p_t$ is large. 
    \item When $s_t = -1$ (\emph{below threshold}), the algorithm halts if $K_t \leq c_t$, indicating that $p_t$ is small.
\end{enumerate} 
Conditioned on $\eta_1 \leq \mu_1$, the signed quantity $s_t Z_t = \mu_1 - \eta_1 + \eta_{2,t}$ is nonnegative. Therefore, $Z_t$ is positive when $s_t = +1$ and negative when $s_t = -1$. This means that above threshold ($s_t = +1$) never deflates $N_t$ below $\rho_t$, while the below threshold ($s_t = -1$) never inflates it above $\rho_t$.

\begin{defn}[Execution variables and continuation probability]\label{def:ExecutionVariables}
Fix a step $t \geq 1$. Let the parameters up to step $t$ be as specified in \Cref{alg:GTM}. Write $p_t := \Pr[\calM_t(X_t) = +1]$.

\Cref{alg:GTM} induces the following random variables at step $t$:
\begin{itemize}
\item \emph{Aggregated noise and conditional Poisson mean.} Define
\[
Z_t := s_t(\mu_1 - \eta_1 + \eta_{2,t}), \qquad \lambda_t(z) := \rho_t \exp(\eps_t\, z) \quad \text{for } z \in \mathbb{R},
\]
and write $\lambda_t := \lambda_t(Z_t) = \rho_t \exp(\eps_t Z_t)$ for the \emph{realized Poisson mean}. The algorithm draws $N_t \sim \Po(\lambda_t)$ evaluations of $\calM_t(X_t)$.

\item \emph{Count and thinning.} Let $K_t := \sum_{i=1}^{N_t} \mathbf{1}\{y_{t,i} = +1\}$ denote the number of $+1$ outcomes. By Poisson thinning (\Cref{lem:PoissonFacts}), conditional on $\lambda_t$,
\[
K_t \;\sim\; \Po(\lambda_t\, p_t).
\]

\item \emph{Continuation probability.} Define $g_t(z)$ to be the probability that step $t$ produces the non-halting output $a_t = -s_t$, given $Z_t = z$:
\[
g_t(z) := \Pr\!\big[a_t = -s_t \;\big|\; Z_t = z\big].
\]
For a neighboring dataset $X'_t$, write $p'_t := \Pr[\calM_t(X'_t) = +1]$, and let $g'_t(z)$ denote the corresponding continuation probability when $\calM_t$ is evaluated on $X'_t$.
\end{itemize}
\end{defn}

\subsection{Privacy analysis}\label{subsec:privacy}

\begin{lem}[Poisson representation and pointwise shifts]\label{lem:PointwiseShifts}
Fix a step $t \geq 1$ and neighboring datasets $X_t, X'_t$.
Let $g_t(\cdot)$ and $g'_t(\cdot)$ be the continuation probabilities from \Cref{def:ExecutionVariables}, and define the Poisson cumulative distribution function $F(c;\lambda) := \Pr[M \leq c]$ where $M \sim \Po(\lambda)$.
Then for every $z \in \mathbb{R}$, the following hold:

\begin{enumerate}
\item \emph{(Poisson representation.)}
  \[
  g_t(z) =
  \begin{cases}
    F\!\big(c_t;\,\lambda_t(z)\,p_t\big), & \text{if } s_t = +1,\\[1ex]
    1 - F\!\big(c_t;\,\lambda_t(z)\,p_t\big), & \text{if } s_t = -1,
  \end{cases}
  \]
  and an identical representation holds for $g'_t(z)$ with $p'_t$ in place of $p_t$.

\item \emph{(Shift identities for the conditional mean.)}
  \[
  \lambda_t(z - s_t) = e^{-s_t \eps_t}\,\lambda_t(z)
  \qquad\text{and}\qquad
  \lambda_t(z + s_t) = e^{s_t \eps_t}\,\lambda_t(z).
  \]

\item  \label{item:shift_inequalities}\emph{(Shift inequalities for continuation probability.)}
  \[
  g_t(z) \leq g'_t(z - s_t)
  \qquad\text{and}\qquad
  1 - g_t(z) \leq 1 - g'_t(z + s_t).
  \]
\end{enumerate}
\end{lem}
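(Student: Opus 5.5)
Item~1 follows by conditioning on $Z_t = z$. Then $\lambda_t = \lambda_t(z)$ is deterministic and $N_t \sim \Po(\lambda_t(z))$. The evaluations $y_{t,i}$ are i.i.d.\ and independent of the noise, so Poisson thinning (\Cref{lem:PoissonFacts}, item~2) gives $K_t \mid \{Z_t = z\} \sim \Po(\lambda_t(z)p_t)$. We then read off the halting rule. For $s_t=+1$ the tester continues iff $K_t - c_t - \tfrac12 \le 0$, i.e.\ iff $K_t \le c_t$ since $K_t$ is an integer, so $g_t(z) = F(c_t;\lambda_t(z)p_t)$. For $s_t=-1$ the tester continues iff $-(K_t-c_t-\tfrac12)\le 0$, i.e.\ iff $K_t \ge c_t+1$, so $g_t(z) = 1-F(c_t;\lambda_t(z)p_t)$. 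The same argument with $p'_t$ handles $g'_t$.

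Item~2 is immediate from $\lambda_t(z) = \rho_t e^{\eps_t z}$: we have $\lambda_t(z\mp s_t) = \rho_t e^{\eps_t z} e^{\mp s_t\eps_t}$.

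Item~3 is the substantive step. The key fact is that $\ell \mapsto F(c;\ell)$ is non-increasing in $\ell$. One way to see this is to couple $\Po(\ell_1) \le \Po(\ell_2)$ via additivity (\Cref{lem:PoissonFacts}, item~1). Alternatively, note $\frac{d}{d\ell}F(c;\ell) = -e^{-\ell}\ell^c/c! \le 0$. By pure $\eps_t$-DP of $\calM_t$ applied to both outputs $\pm1$, we get $e^{-\eps_t}p'_t \le p_t \le e^{\eps_t}p'_t$. We then split into cases on $s_t$.

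\textbf{Case $s_t=+1$.} We need $F(c_t;\lambda_t(z)p_t) \le F(c_t;\lambda_t(z-1)p'_t)$. By monotonicity it suffices that $\lambda_t(z)p_t \ge \lambda_t(z-1)p'_t = e^{-\eps_t}\lambda_t(z)p'_t$, which is exactly $p_t\ge e^{-\eps_t}p'_t$. For the second inequality, $1-g_t(z)\le 1-g'_t(z+1)$ is equivalent to $F(c_t;\lambda_t(z)p_t)\ge F(c_t;e^{\eps_t}\lambda_t(z)p'_t)$. This follows from $p_t\le e^{\eps_t}p'_t$.

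\textbf{Case $s_t=-1$.} Here $g = 1-F$. The first inequality $1-F(c_t;\lambda_t(z)p_t)\le 1-F(c_t;\lambda_t(z+1)p'_t)$ requires $\lambda_t(z)p_t \le e^{\eps_t}\lambda_t(z)p'_t$, i.e.\ $p_t\le e^{\eps_t}p'_t$. The second inequality reduces symmetrically to $p_t \ge e^{-\eps_t}p'_t$.

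The main obstacle is only bookkeeping of signs. The direction of monotonicity of $F$ flips together with $g$ between $F$ and $1-F$, and the shift direction $z\mp s_t$ flips too; these two flips cancel, so in every case the reduction lands on one of the two DP inequalities. I will also note the degenerate case $\eps_t = 0$, where both shifts are the identity and $p_t = p'_t$, so the inequalities hold with equality.
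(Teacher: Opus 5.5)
Your proof is correct and matches the paper's argument step for step: Poisson thinning gives the representation, the exponential form of $\lambda_t(\cdot)$ gives the shift identities, and monotonicity of $\lambda\mapsto F(c;\lambda)$ combined with the two-sided bound on $p_t/p'_t$ handles each of the four sub-cases. One small correction: the bound $e^{-\eps_t}p'_t \le p_t \le e^{\eps_t}p'_t$ comes from applying DP to the $+1$ output in both directions of the symmetric adjacency relation, not from the $-1$ output. As the paper's remark on asymmetric use of input privacy notes, the $-1$ output is never needed here.
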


\begin{proof}
Fix $z \in \mathbb{R}$.

\medskip
\noindent\textbf{Part (1).}
Condition on the event $\{Z_t = z\}$. Under this conditioning, the algorithm draws $N_t \sim \Po(\lambda_t(z))$ evaluations of $\calM_t(X_t)$, and counts $K_t = \sum_{i=1}^{N_t} \mathbf{1}\{y_{t,i} = +1\}$. By Poisson thinning (\Cref{lem:PoissonFacts}),
\[
K_t \sim \Po\!\big(\lambda_t(z)\,p_t\big).
\]
The step outputs $-s_t$ exactly when $s_t \cdot (K_t - c_t - \tfrac{1}{2}) \leq 0$.
For $s_t = +1$, this condition is $K_t \leq c_t$ as $c_t$ is an integer, yielding $g_t(z) = F(c_t;\,\lambda_t(z)\,p_t)$.
For $s_t = -1$, this condition is $K_t \geq c_t + 1$, yielding $g_t(z) = 1 - F(c_t;\,\lambda_t(z)\,p_t)$.
The derivation for $g'_t(z)$ is identical with $p'_t$ replacing $p_t$.

\medskip
\noindent\textbf{Part (2).}
This follows directly from the definition of $\lambda_t(\cdot)$:
\[
\lambda_t(z \pm s_t)
= \rho_t \exp\!\big(\eps_t(z \pm s_t)\big)
= e^{\pm s_t \eps_t}\,\lambda_t(z).
\]

\medskip
\noindent\textbf{Part (3).}
We establish the shift inequalities by considering the two cases of $s_t$.

\smallskip
\noindent\emph{Case $s_t = +1$:}
By part~(1), $g_t(z) = F\!\big(c_t;\,\lambda_t(z)\,p_t\big)$ and $g'_t(z-1) = F\!\big(c_t;\,\lambda_t(z-1)\,p'_t\big)$.
By $\eps_t$-differential privacy, $p_t \geq e^{-\eps_t} p'_t$. By part~(2), $\lambda_t(z) = e^{\eps_t}\lambda_t(z-1)$. Multiplying these yields $\lambda_t(z)\,p_t \geq \lambda_t(z-1)\,p'_t$.
Because the Poisson CDF $\lambda \mapsto F(c;\lambda)$ is non-increasing in $\lambda$, it follows that
\[
g_t(z) = F\!\big(c_t;\,\lambda_t(z)\,p_t\big) \leq F\!\big(c_t;\,\lambda_t(z-1)\,p'_t\big) = g'_t(z-1).
\]

For the second inequality, we evaluate the halting probability $1 - g_t(z) = \Pr[\Po(\lambda_t(z)\,p_t) \geq c_t + 1]$. By differential privacy, $p_t \leq e^{\eps_t} p'_t$, and by part~(2), $\lambda_t(z+1) = e^{\eps_t}\lambda_t(z)$, yielding $\lambda_t(z)\,p_t \leq \lambda_t(z+1)\,p'_t$. Since $1 - F(c;\lambda)$ is non-decreasing in $\lambda$, we obtain
\[
1 - g_t(z) \leq 1 - g'_t(z+1).
\]

\smallskip
\noindent\emph{Case $s_t = -1$:}
By part~(1), $g_t(z) = 1 - F\!\big(c_t;\,\lambda_t(z)\,p_t\big)$ and $g'_t(z+1) = 1 - F\!\big(c_t;\,\lambda_t(z+1)\,p'_t\big)$.
By differential privacy, $p_t \leq e^{\eps_t} p'_t$, and by part~(2), $\lambda_t(z+1) = e^{\eps_t}\lambda_t(z)$, yielding $\lambda_t(z)\,p_t \leq \lambda_t(z+1)\,p'_t$.
Because $F(c_t;\lambda)$ is non-increasing in $\lambda$, the complement $1 - F(c_t;\lambda)$ is non-decreasing. Thus
\[
g_t(z) \leq g'_t(z+1) = g'_t(z - s_t).
\]

For the second inequality, note that $1 - g_t(z) = F(c_t;\,\lambda_t(z)\,p_t)$ and $1 - g'_t(z-1) = F(c_t;\,\lambda_t(z-1)\,p'_t)$.
Using $p_t \geq e^{-\eps_t} p'_t$ and $\lambda_t(z) = e^{\eps_t}\lambda_t(z-1)$, we find $\lambda_t(z)\,p_t \geq \lambda_t(z-1)\,p'_t$.
Applying the non-increasing property of $F(c_t;\lambda)$ gives
\[
1 - g_t(z) \leq 1 - g'_t(z-1) = 1 - g'_t(z + s_t).
\]
\end{proof}

\begin{remark}[Asymmetric use of input privacy]\label{rem:asymmetric-privacy}
The shift inequalities in Part~(3) depend on the input privacy guarantee only through bounds on the ratio $p_t/p'_t$ of the $+1$ output probabilities. In all four sub-cases, the Poisson thinning mean is $\lambda_t(z)\,p_t$, and the comparison is with $\lambda_t(z \pm 1)\,p'_t$; the ratio $(1-p_t)/(1-p'_t)$ of the $-1$ output probabilities never appears. 
When the algorithm operates on the negated mechanism (flipped variant, $s_t = -1$), the role of $p_t$ and $1 - p_t$ is exchanged, and the relevant ratio becomes $(1-p_t)/(1-p'_t)$ in terms of the original mechanism.
\end{remark}

\begin{lem}[Privacy of the generalized thresholding mechanism]\label{lem:GeneralizedPrivacy}
Let $(X_t)_{t\ge 1}$ and $(X'_t)_{t\ge 1}$ be dataset sequences with $X_t$ neighboring $X'_t$ for every $t$.
Run \Cref{alg:GTM} on $(X_t)_{t\ge 1}$ and on $(X'_t)_{t\ge 1}$ with the same sequences  $(\eps_t)$, $(c_t)$, $(\rho_t)$, $(\tau_{2,t})$, and $(s_t)$. Let $(a_t)_{t\ge 1}$ and $(a'_t)_{t\ge 1}$ denote the resulting output sequences.

Then for every $t\ge 1$,
\begin{align}
\Pr\!\big[a_{[t]}=(-s_1,\dots,-s_t)\big]
&\le
e^{1/\tau_1}\,\Pr\!\big[a'_{[t]}=(-s_1,\dots,-s_t)\big], \label{eq:privacy_continue}\\[6pt]
\Pr\!\big[a_{[t]}=(-s_1,\dots,-s_{t-1},s_t)\big]
&\le
e^{1/\tau_1+2/\tau_{2,t}}\,\Pr\!\big[a'_{[t]}=(-s_1,\dots,-s_{t-1},s_t)\big]. \label{eq:privacy_halt}
\end{align}
In particular, if $\tau_{2,t} = \tau_2$ for all $t$, the algorithm satisfies pure $\eps$-differential privacy with $\eps = 1/\tau_1 + 2/\tau_2$.
\end{lem}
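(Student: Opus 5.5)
Conditioning on $\eta_1$ turns the steps into independent events, and we then handle the shared noise $\eta_1$ and the fresh noise $\eta_{2,t}$ with two separate changes of variables. Fix the adaptive choices; since the parameters for step $j$ depend only on $a_{[j-1]}$, and on the events in question these equal $(-s_1,\dots,-s_{j-1})$, all parameters are deterministic. Given $\eta_1 = e$, the steps are independent. Step $j$ continues with probability
\[
G_j(e) := \Ex_{\eta_{2,j}}\bigl[g_j\bigl(s_j(\mu_1 - e + \eta_{2,j})\bigr)\bigr],
\]
and halts with probability
\[
H_t(e) := \Ex_{\eta_{2,t}}\bigl[1 - g_t\bigl(s_t(\mu_1 - e + \eta_{2,t})\bigr)\bigr].
\]
Hence the continuation event has probability $\int \prod_{j\le t} G_j(e)\, f_{\tau_1}(e)\,de$. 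The halting event has probability $\int \prod_{j<t} G_j(e)\, H_t(e)\, f_{\tau_1}(e)\,de$.

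\textbf{Continuation bound \eqref{eq:privacy_continue}.} By \Cref{lem:PointwiseShifts}(3), $g_j(z) \le g'_j(z - s_j)$. Put $z = s_j(\mu_1 - e + \eta_{2,j})$. Then $z - s_j = s_j(\mu_1 - (e+1) + \eta_{2,j})$, so $G_j(e) \le G'_j(e+1)$ simultaneously for every $j$, and the shift is the same $+1$ in $e$ whatever the sign of $s_j$. Thus the continuation probability is at most $\int \prod_j G'_j(e+1) f_{\tau_1}(e)\,de$. Substitute $u = e+1$ and use \Cref{lem:PoissonFacts}(5): $f_{\tau_1}(u-1) = e^{1/\tau_1} f_{\tau_1}(u)$ for $u \ge 1$. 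On $u\in[0,1)$ the original density vanishes, so we only discard nonnegative mass and get $\le e^{1/\tau_1}\int \prod_j G'_j(u) f_{\tau_1}(u)\,du$.

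\textbf{Halting bound \eqref{eq:privacy_halt}.} Here the continuation factors shift $e \mapsto e+1$ as before, but the halting factor needs the opposite shift. \Cref{lem:PointwiseShifts}(3) gives $1-g_t(z) \le 1-g'_t(z+s_t)$, i.e.\ $H_t(e) \le \Ex\bigl[1-g'_t\bigl(s_t(\mu_1 - e + \eta_{2,t} + 1)\bigr)\bigr]$. After the substitution $u = e+1$, that argument is $\mu_1 - u + (\eta_{2,t}+2)$. So I shift $\eta_{2,t}$ by $2$: write $v = \eta_{2,t} + 2$ and use $f_{\tau_{2,t}}(v-2) = e^{2/\tau_{2,t}} f_{\tau_{2,t}}(v)$ for $v\ge 2$, again dropping the nonnegative mass on $v\in[0,2)$. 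This gives $H_t(u-1) \le e^{2/\tau_{2,t}} H'_t(u)$. Combining this with $G_j(u-1) \le G'_j(u)$ and the $\eta_1$ density shift yields the factor $e^{1/\tau_1 + 2/\tau_{2,t}}$.

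\textbf{Main obstacle.} The delicate part is the bookkeeping of signs, and checking that \Cref{lem:PointwiseShifts} really delivers the $e$-shift in the same direction for both values of $s_j$, since $Z_j$ carries the factor $s_j$. One must also make sure the halting shift of $+2$ in $\eta_{2,t}$ is applied after the $\eta_1$ shift, not before. The support boundaries need a check too: the inequality $\int_{\ge a} \le \int_{\ge 0}$ requires nonnegative integrands, which holds because all factors are probabilities.

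\textbf{Final claim.} Every transcript is either all-continue up to $t$ or halts exactly at $t$, so the two bounds cover all events. With $\tau_{2,t}=\tau_2$ for all $t$, this gives $\eps = 1/\tau_1 + 2/\tau_2$.
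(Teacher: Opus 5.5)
Your proposal is correct and follows essentially the same argument as the paper's proof. It uses the same chain-rule factorization along the all-continue prefix, the shift inequalities of \Cref{lem:PointwiseShifts}, the substitutions $u = x+1$ for $\eta_1$ and $v = y+2$ for the halting step's $\eta_{2,t}$, the exponential density shift, and then enlarges the integration domain of a nonnegative integrand. The only cosmetic difference is that you first average out each $\eta_{2,j}$ into $G_j$ and $H_t$, whereas the paper keeps the expectation over $\eta_{2,[t]}$ outside and performs the two substitutions simultaneously; the order you worry about in your ``main obstacle'' paragraph is therefore immaterial.
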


\begin{proof}
Write $f_1$ for the pdf of $\eta_1 \sim \Exp(\tau_1)$ and $f_{2,t}$ for the density of $\eta_{2,t} \sim \Exp(\tau_{2,t})$. By the exponential density shift (\Cref{lem:PoissonFacts}, item~5),
\begin{equation}\label{eq:exp_density_shift}
f_1(x+1) = e^{-1/\tau_1}\,f_1(x) \quad\text{for } x \geq 0,
\qquad
f_{2,t}(y+2) = e^{-2/\tau_{2,t}}\,f_{2,t}(y) \quad\text{for } y \geq 0,
\end{equation}
or equivalently, $f_1(u-1) = e^{1/\tau_1}\,f_1(u)$ for $u \geq 1$ and $f_{2,t}(v-2) = e^{2/\tau_{2,t}}\,f_{2,t}(v)$ for $v \geq 2$.

\smallskip
\noindent\textbf{Conditional factorization.}
Conditional on $\eta_1$ and $\eta_{2,1}, \ldots, \eta_{2,t}$, each $Z_i$ is deterministic, and the Poisson draw and mechanism evaluations at distinct time steps are mutually independent. By the chain rule,
\begin{align*}
\Pr\!\big[a_{[t]} = (-s_1, \dots, -s_t) \;\big|\; \eta_1, \eta_{2,[t]}\big]
&= \prod_{i=1}^{t} \Pr\!\big[a_i = -s_i \;\big|\; a_{[i-1]} = (-s_1, \ldots, -s_{i-1}),\, \eta_1, \eta_{2,[t]}\big].
\end{align*}
The prefix $(-s_1, \ldots, -s_{i-1})$ determines the mechanism, dataset, and parameters that the analyst selects at step $i$. Given these, the conditional probability $\Pr[a_i = -s_i \mid a_{[i-1]} = (-s_1, \ldots, -s_{i-1}),\, \eta_1, \eta_{2,[t]}]$ depends only on $Z_i$, where the probability is taken over the random variables  $N_i$ and $y_{i,j}$, and equals $g_i(Z_i)$ by \Cref{def:ExecutionVariables}. Therefore,
\begin{equation}\label{eq:transcript_factorization}
\Pr\!\big[a_{[t]} = (-s_1, \dots, -s_t) \;\big|\; \eta_1, \eta_{2,[t]}\big]
= \prod_{i=1}^{t} g_i(Z_i).
\end{equation}
Similarly, the halt-at-$t$ probability factors as
\begin{equation}\label{eq:halt_factorization}
\Pr\!\big[a_{[t]} = (-s_1, \dots, -s_{t-1}, s_t) \;\big|\; \eta_1, \eta_{2,[t]}\big]
= \left(\prod_{i=1}^{t-1} g_i(Z_i)\right)(1 - g_t(Z_t)).
\end{equation}
The shift inequality $g_i(z) \leq g'_i(z - s_i)$ from \Cref{lem:PointwiseShifts} holds for every $\eps_i$-DP mechanism, regardless of how the analyst selected it.

\smallskip
\noindent\textbf{All-continuation transcript.}
We bound the probability that the algorithm continues through all $t$ steps. Integrating over $\eta_1$ and taking the expectation over the independent per-step noises $\eta_{2,[t]} := (\eta_{2,1}, \dots, \eta_{2,t})$:
\begin{equation}\label{eq:all_cont}
\Pr\!\big[a_{[t]}=(-s_1,\dots,-s_t)\big]
=
\int_0^\infty
\Ex_{\eta_{2,[t]}}\!\left[ \underbrace{
  \prod_{i=1}^t g_i\!\big(s_i(\mu_1 - x + \eta_{2,i})\big)}_{S}
\right]
f_1(x)\,\mathsf{d}x.
\end{equation}

By \Cref{lem:PointwiseShifts}~(item \ref{item:shift_inequalities}), $g_i(z) \leq g'_i(z - s_i)$ for all $z \in \mathbb{R}$. Evaluating at $z = s_i(\mu_1 - x + \eta_{2,i})$ gives $z - s_i = s_i(\mu_1 - (x+1) + \eta_{2,i})$, so the term in the integrand in \eqref{eq:all_cont} is bounded above by
\[
S \leq \prod_{i=1}^t g'_i\!\big(s_i(\mu_1 - (x+1) + \eta_{2,i})\big) .
\]
That is, 
\[
\Pr\!\big[a_{[t]}=(-s_1,\dots,-s_t)\big] \leq \int_0^\infty
\Ex_{\eta_{2,[t]}}\!\left[
  \prod_{i=1}^t g'_i\!\big(s_i(\mu_1 - (x+1) + \eta_{2,i})\big)
\right]
f_1(x)\,\mathsf{d}x
\]

Let $u = x + 1$. Then the integration domain $\{x \geq 0\}$ maps to $\{u \geq 1\}$, and we obtain
\[
\Pr\!\big[a_{[t]}=(-s_1,\dots,-s_t)\big]
\leq
\int_1^\infty
\Ex_{\eta_{2,[t]}}\!\left[
  \prod_{i=1}^t g'_i\!\big(s_i(\mu_1 - u + \eta_{2,i})\big)
\right]
f_1(u-1)\,\mathsf{d}u.
\]

In the region $\{u \geq 1\}$, the density shift \eqref{eq:exp_density_shift} gives $f_1(u-1) = e^{1/\tau_1}\,f_1(u)$. Since the integrand is non-negative, enlarging the domain from $\{u \geq 1\}$ back to $\{u \geq 0\}$ only increases the value of integration, i.e., 
\begin{align*}
    \Pr\!\big[a_{[t]}=(-s_1,\dots,-s_t)\big]
    &\leq e^{1/\tau_1}
    \int_0^\infty
    \Ex_{\eta_{2,[t]}}\!\left[
  \prod_{i=1}^t g'_i\!\big(s_i(\mu_1 - u + \eta_{2,i})\big)
\right]
f_1(u)\,\mathsf{d}u \\
    &=
e^{1/\tau_1}\,\Pr\!\big[a'_{[t]}=(-s_1,\dots,-s_t)\big].
\end{align*}

\smallskip
\noindent\textbf{Halt-at-$t$ transcript.}
We bound the probability that the algorithm continues through steps $1, \dots, t-1$ and halts at step $t$. Writing $x$ for the integration variable of $\eta_1$ and $y$ for that of $\eta_{2,t}$, and taking the expectation over the remaining per-step noises $\eta_{2,[t-1]} := (\eta_{2,1}, \dots, \eta_{2,t-1})$:
\begin{align}
&\Pr\!\big[a_{[t]}=(-s_1,\dots,-s_{t-1},s_t)\big] \notag\\
&\quad=
\Ex_{\eta_{2,[t-1]}}\!\left[
  \int_0^\infty\!\int_0^\infty
    \left(\prod_{i=1}^{t-1} g_i\!\big(s_i(\mu_1 - x + \eta_{2,i})\big)\right)
    \bigl(1-g_t\!\big(s_t(\mu_1 - x + y)\big)\bigr)\,
    f_1(x)\,f_{2,t}(y)\,
    \mathsf{d}y\,\mathsf{d}x
\right]. \label{eq:halt_transcript}
\end{align}

The proof idea remains the same as above except that we have to treat the halting step differently. In particular, as in the previous case, we first apply the shift inequalities from \Cref{lem:PointwiseShifts} (item~\ref{item:shift_inequalities}): 
 for each continuation step $i \leq t-1$, we use $g_i(z) \leq g'_i(z-s_i)$, and for the halting step, $1-g_t(z) \leq 1-g'_t(z+s_t)$. Evaluating the shifted arguments:
\begin{align*}
s_i(\mu_1 - x + \eta_{2,i}) - s_i &= s_i\!\big(\mu_1 - (x+1) + \eta_{2,i}\big), \\
s_t(\mu_1 - x + y) + s_t &= s_t(\mu_1 - x + y + 1).
\end{align*}
We perform a simultaneous change of variables: $u = x + 1$ and $v = y + 2$, where the variable $v$ deals with the halting step. Then $x = u - 1$ and $y = v - 2$, and the arguments become
\begin{align*}
s_i\!\big(\mu_1 - (x+1) + \eta_{2,i}\big) &= s_i(\mu_1 - u + \eta_{2,i}), \\
s_t(\mu_1 - x + y + 1) &= s_t\!\big(\mu_1 - (u-1) + (v-2) + 1\big) = s_t(\mu_1 - u + v).
\end{align*}

The integration domain $\{x \geq 0,\, y \geq 0\}$ maps to $\{u \geq 1,\, v \geq 2\}$. Substituting into \eqref{eq:halt_transcript} and applying the shift inequalities:
\begin{align*}
&\Pr\!\big[a_{[t]}=(-s_1,\dots,-s_{t-1},s_t)\big]\\
&\quad\leq
\Ex_{\eta_{2,[t-1]}}\!\left[
  \int_1^\infty\!\int_2^\infty
    \left(\prod_{i=1}^{t-1} g'_i\!\big(s_i(\mu_1 - u + \eta_{2,i})\big)\right)
    \bigl(1-g'_t\!\big(s_t(\mu_1 - u + v)\big)\bigr)\,
    f_1(u-1)\,f_{2,t}(v-2)\,
    \mathsf{d}v\,\mathsf{d}u
\right].
\end{align*}

In the restricted region $\{u \geq 1,\, v \geq 2\}$, we apply the density shifts \eqref{eq:exp_density_shift}: $f_1(u-1) = e^{1/\tau_1}\,f_1(u)$ and $f_{2,t}(v-2) = e^{2/\tau_{2,t}}\,f_{2,t}(v)$. Enlarging the integration region back to $\{u \geq 0,\, v \geq 0\}$ yields
\begin{align*}
&\Pr\!\big[a_{[t]}=(-s_1,\dots,-s_{t-1},s_t)\big]\\
&\quad\leq
e^{1/\tau_1+2/\tau_{2,t}}
\Ex_{\eta_{2,[t-1]}}\!\left[
  \int_0^\infty\!\int_0^\infty
    \left(\prod_{i=1}^{t-1} g'_i\!\big(s_i(\mu_1 - u + \eta_{2,i})\big)\right)
    \bigl(1-g'_t\!\big(s_t(\mu_1 - u + v)\big)\bigr)\,
    f_1(u)\,f_{2,t}(v)\,
    \mathsf{d}v\,\mathsf{d}u
\right]\\
&\quad=
e^{1/\tau_1+2/\tau_{2,t}}\,\Pr\!\big[a'_{[t]}=(-s_1,\dots,-s_{t-1},s_t)\big].
\end{align*}

\smallskip
\noindent\textbf{Overall privacy.}
When $\tau_{2,t} = \tau_2$ for all $t$, the multiplicative factor in both equations \eqref{eq:privacy_continue} and \eqref{eq:privacy_halt} is at most $e^{1/\tau_1 + 2/\tau_2}$. Every reachable transcript of the algorithm is either an all-continuation prefix or a halt-at-$t$ transcript for some $t \geq 1$. Summing the per-transcript bounds over all $t$ yields $\Pr[\text{output} \in F] \leq e^{1/\tau_1 + 2/\tau_2}\,\Pr[\text{output}' \in F]$ for every measurable event $F$, establishing pure $\eps$-differential privacy with $\eps = 1/\tau_1 + 2/\tau_2$.
\end{proof}

\begin{remark}[Sample complexity]\label{rem:SampleComplexity}
Run \Cref{alg:GTM} for $t$ steps. At each step $i \leq t$, the algorithm draws $N_i \sim \Po(\lambda_i)$ evaluations of $\calM_i(X_i)$, where $\lambda_i = \rho_i \exp(\eps_i Z_i)$. The total number of evaluations through step $t$ is $N_{\leq t} := \sum_{i=1}^t N_i$.

Conditional on the shared noise $\eta_1$ and the per-step noises $\eta_{2,1}, \dots, \eta_{2,t}$, each $Z_i = s_i(\mu_1 - \eta_1 + \eta_{2,i})$ is deterministic, so the counts $N_1, \dots, N_t$ are conditionally independent with $N_i \mid \eta_1, \eta_{2,i} \sim \Po(\lambda_i)$. By Poisson additivity (\Cref{lem:PoissonFacts}),
\[
N_{\leq t} \mid \eta_1, \eta_{2,[t]} \;\sim\; \Po\!\left(\sum_{i=1}^t \lambda_i\right).
\]
Since $Z_i$ depends on $\eta_1$ and $\eta_{2,i}$ but not on $\eta_{2,j}$ for $j \neq i$, the conditional total mean $\sum_{i=1}^t \lambda_i$ depends on $\eta_1$ through every term, and on each $\eta_{2,i}$ only through the $i$-th term.
\end{remark}
\section{Accuracy Guarantees}\label{sec:accuracy}

The privacy guarantee of the \GTM holds for adaptive choices of threshold parameter $c_t$ and base rate $\rho_t$. We now show how to set these parameters to achieve concrete accuracy guarantees. In \Cref{thm:GlobalUtility}, we derive a uniform high probability accuracy guarantee, and in \Cref{thm:ExPost} we derive a similar accuracy guarantee, but under an ex-post privacy guarantee. Both of these results follow from a technical lemma, \Cref{lem:ParameterSelection}.

To motivate \Cref{lem:ParameterSelection}, consider a step with $s_t = +1$ (above threshold). If $p_t \geq \pstar_t$, we want the algorithm to halt. 
The algorithm generates a count $K_t \sim \Po(\lambda_t\, p_t)$ and halts if $K_t > c_t$. Since $\lambda_t = \rho_t \exp(\eps_t Z_t)$, there is inherent uncertainty in the value of $\lambda_t$ due to the privatizing noise $Z_t = s_t(\mu_1 - \eta_1 + \eta_{2,t})$. 
We call the event $\calG$ the event that certain high-probability bounds hold on the value of $\eta_1$ and $\eta_{2,t}$, see \Cref{lem:NoiseControl}.

Now, conditioned on $\calG$, there is a value $\Lambda_t>1$ such that $\lambda_t \in [\rho_t,\rho_t\Lambda_t]$.
To ensure that the probability of a Type~II error (missed halt) is at most $\beta^{\tII}$, we need that even under a worst-case draw of $\lambda_t$, the thinned Poisson mean $\lambda_t\, \pstar_t$ is large enough so that $\Pr[K_t \leq c_t]$ is small; by the left tail bound of Poisson distribution (i.e., item~\ref{item:tail_bound_Poisson_left} in \Cref{lem:PoissonFacts}), this requires $\lambda_t\, \pstar_t \geq \lamL{c_t}{\beta^{\tII}}$. If $p_t\geq\pstar_t$, then conditioned on $\calG$, since $\lambda_t \geq \rho_t$, it suffices that $\rho_t \geq \lamL{c_t}{\beta^{\tII}} / \pstar_t$ for $K_t >c_t$ with probability $1-\beta^{\tII}$. In particular, combining $p_t \geq \pstar_t$ and $\lambda_t \geq \rho_t$, if $\rho_t \geq \lamL{c_t}{\beta^{\tII}} / \pstar_t$, then we get the inequality:
\[
\lambda_t\,\geq\, \rho_t \,\geq\, \frac{\lamL{c_t}{\beta^{\tII}}}{ \pstar_t} \,\geq\, \frac{\lamL{c_t}{\beta^{\tII}}}{ p_t}.
\]

Conversely, to ensure that the probability of a Type~I error (i.e., false halt) is at most $\beta^{\tI}$ when $p_t \leq \bar{p}_t$ for some $\bar{p_t} < \pstar_t$, we need that for any value of $\lambda_t \in [\rho_t,\rho_t\Lambda_t]$, the mean $\lambda_t\, \bar{p}_t $ is small enough that $\Pr[K_t > c_t]$ is small. By item~\ref{item:tail_bound_Poisson_right} in  \Cref{lem:PoissonFacts}, this requires $\lambda_t\, \bar{p}_t \leq \lamR{c_t}{\beta^{\tI}}$. Again, conditioned on $\calG$, this implies that for continuation, it suffices that 
    \[ \bar{p}_t \leq \frac{\lamR{c_t}{\beta^{\tI}}}{ \rho_t\, \Lambda_t}. \]

We now have two requirements: the halting condition $\rho_t \geq \lamL{c_t}{\beta^{\tII}}/\pstar_t$, and the continuation condition $\bar{p}_t \leq \lamR{c_t}{\beta^{\tI}}/(\rho_t \Lambda_t)$. Since larger $\rho_t$ means higher sample complexity (the Poisson mean $\lambda_t = \rho_t \exp(\eps_t Z_t)$ scales linearly with $\rho_t$), we set $\rho_t$ as small as possible, namely $\rho_t = \lamL{c_t}{\beta^{\tII}}/\pstar_t$. Substituting into the continuation condition, the rejection threshold below which the test is guaranteed to continue becomes
\[
  \bar{p}_t = \frac{\lamR{c_t}{\beta^{\tI}}}{\rho_t \Lambda_t} = \frac{\lamR{c_t}{\beta^{\tI}}}{\lamL{c_t}{\beta^{\tII}}} \cdot \frac{\pstar_t}{\Lambda_t}.
\]
This factorization isolates the two sources of accuracy loss. The factor $\pstar_t/\Lambda_t$ is the cost of privacy: it reflects the worst-case multiplicative perturbation of $\lambda_t$ by the noise $\exp(\eps_t Z_t)$, and we show in \Cref{sec:lower_bounds} that it is unavoidable up to constant factors in the exponent. The ratio $\lamR{c_t}{\beta^{\tI}}/\lamL{c_t}{\beta^{\tII}} \leq 1$ is the cost of finite-sample Poisson testing: both thresholds grow linearly in $c_t$, but their gap grows sublinearly, so the ratio approaches $1$ as $c_t \to \infty$. \Cref{lem:ParameterSelection} makes this precise: for any target ratio $\gamma \in (1,2]$, it constructs a value of $c_t$ such that $\lamL{c_t}{\beta^{\tII}} \leq \gamma \cdot \lamR{c_t}{\beta^{\tI}}$, giving $\bar{p}_t \geq \pstar_t/(\gamma \Lambda_t)$. The required base rate scales as $\rho_t = O((\ln(1/\beta^{\tI}) + \ln(1/\beta^{\tII}))/((\gamma-1)^2 \pstar_t))$.

The case $s_t = -1$ (below threshold) is analogous, with $\lambda_t \in [\rho_t/\Lambda_t,\rho_t]$ (conditioned on high probability bounds on $\eta_1$ and $\eta_{2,t}$), and with the roles of the left and right tails exchanged: halting uses the right tail (with parameter $\beta^{\tII}$) and continuation uses the left tail (with parameter $\beta^{\tI}$).

In this description, we have described the above threshold variant of \Cref{alg:GTM}, where for every step $t$, $s_t = +1$; the below threshold analysis is entirely analogous. However, there is another consequence of this functionality.

The choice of direction parameter $s_t$ gives us two ways of running the algorithm to solve the standard above threshold formulation of the generalized private testing problem, where we are required to halt for $p_t \geq \pstar_t$.
\begin{enumerate}
    \item \emph{Direct variant:} We set $s_t = +1$ for all $t\geq1$, and run the mechanism as is
    \item \emph{Flipped variant:} We set $s_t = -1$ for all $t\geq1$, and feed the \GTM the `flipped' input mechanisms $-\calM_t$ with success probabilities $q_t = 1 - p_t$, a flipped target success threshold $q^*_t = 1-\pstar_t$.
\end{enumerate} 
For the second formulation, we see that the \GTM halts when $s_t = -1$, indicating that $q_t \leq q^*_t \Leftrightarrow p_t \geq \pstar_t$, so clearly both variants solve the same problem. 

We find that the flipped variant offers better accuracy when $\pstar_t$ is close to $1$. From the description above, we see that the direct variant with $s_t = +1$ for all $t\geq1$ achieves a rejection threshold of $\bar{p}_t = \pstar_t/(\gamma\,\Lambda_t)$, for some $\gamma,\Lambda_t>1$. It follows that $\pstar_t - \bar{p}_t = \pstar_t(1 - 1/\gamma\Lambda_t)$, i.e the gap between the thresholds is large when $\pstar_t$ approaches $1$, but small when $\pstar_t$ approaches $0$. Similarly, for the flipped variant, we get that $\bar{q}_t = q^*_t/\gamma\Lambda_t$, and the gap between the thresholds is large when $q^*_t$ approaches $1$ and small when $q^*_t$ approaches $0$. However, since $q^*_t = 1-\pstar_t$, this implies that the flipped variant has a smaller threshold gap when the given value of $\pstar_t$ approaches $1$. By choosing the variant adaptively that maximizes our accuracy, we can achieve a tight accuracy for both regimes, i.e. when $\pstar_t$ approaches $0$ \emph{and} when $\pstar_t$ approaches $1$.

\begin{lem}[Bounding $\lambda_t$]\label{lem:NoiseControl}
Run Algorithm~\ref{alg:GTM} with parameters $\tau_1 > 0$ and $\mu_1 > 0$, per-step noise scales $\tau_{2,t} > 0$, and per-step privacy parameters $\eps_t \geq 0$. Fix a constant $c_\calG \leq 1$, and define the \emph{good noise event}
\[
\calG \;:=\; \bigl\{\eta_1 \leq \mu_1\bigr\}
\;\cap\; \bigcap_{t \geq 1}
\bigl\{\eta_{2,t} \leq \tau_{2,t}\ln(1/c_\calG \beta_t)\bigr\}.
\]
Then the following hold.
\begin{enumerate}
\item \emph{Failure probability.}
$\Pr[\calG^c] \leq e^{-\mu_1/\tau_1} + c_\calG \sum_{t=1}^{\infty} \beta_t$.

\item \emph{Noise range conditioned on $\calG$.}
\label{item:noiseConditionedonG}
Define $\Lambda_t := \exp\!\bigl(\eps_t(\mu_1 + \tau_{2,t}\ln(1/(c_\calG \beta_t)))\bigr)$.
Conditioned on $\calG$, the signed noise satisfies $s_t Z_t \in [0,\;\mu_1 + \tau_{2,t}\ln(1/(c_\calG \beta_t))]$, and the Poisson mean satisfies
\begin{align*}
s_t = +1 &\implies \lambda_t \in [\rho_t,\;\rho_t\Lambda_t], \\
s_t = -1 &\implies \lambda_t \in [\rho_t/\Lambda_t,\;\rho_t].
\end{align*}

\item \emph{Halting range on $\{\eta_1 \leq \mu_1\}$.} \label{item:haltingRange}
Conditioned on the event $\{\eta_1 \leq \mu_1\}$ alone (and not on all of $\calG$), $s_t Z_t \geq 0$ for all $t$, and therefore $\lambda_t \geq \rho_t$ when $s_t = +1$ and $\lambda_t \leq \rho_t$ when $s_t = -1$.
\end{enumerate}
\end{lem}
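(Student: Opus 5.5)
The plan is to prove the three items in order: item~1 by a union bound with exponential tails, and items~2 and~3 by direct algebra on $Z_t$ and $\lambda_t$, using that the exponential noise is nonnegative.

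\emph{Item~1.} By the union bound,
\[
\Pr[\calG^c] \leq \Pr[\eta_1 > \mu_1] + \sum_{t\geq1}\Pr\bigl[\eta_{2,t} > \tau_{2,t}\ln(1/(c_\calG\beta_t))\bigr].
\]
For $\eta\sim\Exp(\tau)$ we have $\Pr[\eta > x] = e^{-x/\tau}$. This gives $\Pr[\eta_1>\mu_1] = e^{-\mu_1/\tau_1}$ and $\Pr[\eta_{2,t} > \tau_{2,t}\ln(1/(c_\calG\beta_t))] = c_\calG\beta_t$. The threshold is nonnegative, since $c_\calG\leq1$ and $\beta_t<1$ imply $c_\calG\beta_t<1$. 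Summing these bounds yields the claim.

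\emph{Item~2.} One subtlety is adaptivity: $\tau_{2,t}$ and $\eps_t$ may depend on earlier outputs. I handle this by noting that, conditional on the history, $\eta_{2,t}$ is still $\Exp(\tau_{2,t})$. The tail bound therefore holds conditionally, and the union bound goes through after taking expectations over the history.

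On $\calG$ we have $\eta_1\leq\mu_1$, and $\eta_{2,t}\geq0$ always. Hence $s_tZ_t = \mu_1-\eta_1+\eta_{2,t}\geq 0$. For the upper end, $\eta_1\geq 0$ and the bound on $\eta_{2,t}$ from $\calG$ give
\[
s_tZ_t \leq \mu_1 + \tau_{2,t}\ln(1/(c_\calG\beta_t)).
\]
If $s_t=+1$, then $Z_t$ lies in $[0,\mu_1+\tau_{2,t}\ln(1/(c_\calG\beta_t))]$. Since $\eps_t\geq0$, the map $z\mapsto\rho_te^{\eps_tz}$ is monotone non-decreasing, so $\lambda_t\in[\rho_t,\rho_t\Lambda_t]$. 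If $s_t=-1$, then $Z_t$ lies in the negated interval, and the same monotonicity gives $\lambda_t\in[\rho_t/\Lambda_t,\rho_t]$.

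\emph{Item~3.} Here only $\eta_1\leq\mu_1$ is assumed. Together with $\eta_{2,t}\geq0$ this still gives $s_tZ_t\geq0$ for every $t$. Monotonicity of $\lambda_t(\cdot)$ then yields $\lambda_t\geq\rho_t$ when $s_t=+1$ and $\lambda_t\leq\rho_t$ when $s_t=-1$.

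There is no real obstacle in this argument. The only care points are the conditioning on the adaptive history in item~1 and the degenerate case $\eps_t=0$, where $\Lambda_t=1$ and all the statements hold trivially.
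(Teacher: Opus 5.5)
Your proof is correct and follows the paper's argument: exponential tail bounds plus a union bound for item~1, and the identity $s_tZ_t = \mu_1 - \eta_1 + \eta_{2,t}$ with nonnegativity of the exponential noise and monotonicity of $z \mapsto \rho_t e^{\eps_t z}$ for items~2 and~3. Your remark that the tail bound for $\eta_{2,t}$ still holds conditionally when $\tau_{2,t}$ is chosen adaptively is a clarification the paper leaves implicit. It belongs with item~1 rather than item~2, but this does not affect correctness.
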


\begin{proof}
By the tail bound of the exponential distribution, $\Pr[\eta_1 > \mu_1] = e^{-\mu_1/\tau_1}$ and $\Pr[\eta_{2,t} > \tau_{2,t}\ln(1/c_\calG \beta_t)] = c_\calG \beta_t$ for each $t$. A union bound gives item~1.

For Items~2 and~3, recall that $Z_t = s_t(\mu_1 - \eta_1 + \eta_{2,t})$ and $s_t \in \{-1,1\}$. Therefore, $s_t Z_t = \mu_1 - \eta_1 + \eta_{2,t}$. When $\eta_1 \leq \mu_1$, the lower bound $s_t Z_t \geq \eta_{2,t} \geq 0$ holds since $\eta_{2,t}$ has nonneg\-ative support and the rest of Item 3 holds by the definition of $\lambda_t$. On $\calG$, the upper bound $s_t Z_t \leq \mu_1 + \tau_{2,t}\ln(1/(c_\calG \beta_t))$ holds by the per-step noise constraint. Exponentiating and recalling $\lambda_t = \rho_t\exp(\eps_t Z_t)$ gives the stated ranges.
\end{proof}

\begin{lem}[Parameter selection]\label{lem:ParameterSelection}
Fix a target threshold $\pstar_t \in (0,1)$, per-step error parameters $\beta^{\tI}, \beta^{\tII} \in (0,1)$, a parameter $\Lambda \geq 1$, and a target ratio $\gamma \in (1, 2]$. Let
    \[p\in (0,1), \quad \lambda > 0, \quad\mbox{and}\quad K \sim \Po(\lambda\,p).\]
For $\alpha,\beta\in(0,1)$, let $c_{\alpha,\beta}$ be defined as in \cref{lem:PoissonOverhead}, i.e. for  $A = \ln(1/\alpha)$ and $B = \ln(1/\beta)$,  $c_{\alpha,\beta} = \max(\lfloor 2\max (A,B)\rfloor,\lceil r^2 \rceil)$, where
\[ r = \frac{\sqrt{2A} + \gamma\sqrt{2B} + \sqrt{(\sqrt{2A}+\gamma\sqrt{2B})^2 + 8(\gamma-1)A}}{2(\gamma-1)}.\]

\begin{enumerate}
\item \textbf{Above threshold} {\em ($s = +1$)}. 
\label{item:aboveThreshold}
Let $(\alpha,\beta) = (\beta^{\tII},\beta^{\tI})$, and let $c = c_{\alpha,\beta}$. Let $\rho = \lamL{c}{\beta^{\tII}}/\pstar_t$. Then the following statements hold:  
\begin{enumerate}
\item \emph{Halting:} If $\lambda \geq \rho$ and $p \geq \pstar_t$, then $\Pr[K \leq c] \leq \beta^{\tII}$.
\item \emph{Continuation:} If $\lambda \leq \rho\Lambda$ and $p \leq \pstar_t/(\gamma\Lambda)$, then $\Pr[K \geq c+1] \leq \beta^{\tI}$.
\end{enumerate}

\item \textbf{Below threshold} {\em ($s = -1$)}.
\label{item:belowThreshold}
Let $(\alpha,\beta) = (\beta^{\tI},\beta^{\tII})$ and let $c = c_{\alpha,\beta}$. Set $\rho = \lamR{c}{\beta^{\tII}}/\pstar_t$.
\begin{enumerate}
\item \emph{Halting:} If $\lambda \leq \rho$ and $p \leq \pstar_t$, then $\Pr[K \geq c+1] \leq \beta^{\tII}$.
\item \emph{Continuation:} If $\lambda \geq \rho/\Lambda$ and $p \geq \gamma\Lambda\pstar_t$, then $\Pr[K \leq c] \leq \beta^{\tI}$.
\end{enumerate}
\end{enumerate}
Further, in both cases (i.e., $s\in\{+1,-1\}$), we have the base rate upper bound $\rho = O\!\big(\tfrac{\ln(1/\beta^{\tI}) + \ln(1/\beta^{\tII})}{ (\gamma-1)^2\,\pstar_t}\big)$, and the base rate lower bound that if\, $\beta^{\tI}, \beta^{\tII} \leq 1/e$, then $\rho \geq \tfrac{1}{\pstar_t}\max\!\big\{\ln(1/\beta^{\tI}),\, \ln(1/\beta^{\tII})\big\}$.
\end{lem}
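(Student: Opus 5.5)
The plan is to reduce each of the four tail statements to a single application of the Poisson tail bounds (\Cref{lem:PoissonFacts}, item~\ref{item:tail_bound_Poisson}). The comparison between the left and right mean thresholds is supplied by \Cref{lem:PoissonOverhead}, applied with the pair $(\alpha,\beta)$ as specified in each case. In every case the argument has the same shape. By monotonicity we bound the mean $\lambda p$ of $K$ on the appropriate side, and identify the resulting bound with either $\lamL{c}{\cdot}$ or $\lamR{c}{\cdot}$. The validity condition $c+1\ge 2\ln(1/\cdot)$ needed for the right tail comes from item~2 of \Cref{lem:PoissonOverhead}.

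\emph{Above threshold.} Take $(\alpha,\beta)=(\beta^{\tII},\beta^{\tI})$ and $\rho=\lamL{c}{\beta^{\tII}}/\pstar_t$.
\begin{itemize}
\item Halting: if $\lambda\ge\rho$ and $p\ge\pstar_t$, then $\lambda p\ge \rho\pstar_t=\lamL{c}{\beta^{\tII}}$, and the left tail bound gives $\Pr[K\le c]\le\beta^{\tII}$.
\item Continuation: if $\lambda\le\rho\Lambda$ and $p\le\pstar_t/(\gamma\Lambda)$, then $\lambda p\le \lamL{c}{\beta^{\tII}}/\gamma\le \lamR{c}{\beta^{\tI}}$ by item~1 of \Cref{lem:PoissonOverhead}. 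Item~2 gives $c+1> 2\ln(1/\beta^{\tI})$, so the right tail bound yields $\Pr[K\ge c+1]\le\beta^{\tI}$.
\end{itemize}

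\emph{Below threshold.} Take $(\alpha,\beta)=(\beta^{\tI},\beta^{\tII})$ and $\rho=\lamR{c}{\beta^{\tII}}/\pstar_t$.
\begin{itemize}
\item Halting: if $\lambda\le\rho$ and $p\le\pstar_t$, then $\lambda p\le\lamR{c}{\beta^{\tII}}$. With item~2 supplying $c+1>2\ln(1/\beta^{\tII})$, the right tail bound gives $\Pr[K\ge c+1]\le\beta^{\tII}$.
\item Continuation: if $\lambda\ge\rho/\Lambda$ and $p\ge\gamma\Lambda\pstar_t$, then $\lambda p\ge\gamma\lamR{c}{\beta^{\tII}}\ge\lamL{c}{\beta^{\tI}}$ by item~1. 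The left tail bound then gives $\Pr[K\le c]\le\beta^{\tI}$.
\end{itemize}

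\emph{Base rate upper bound.} Write $A,B$ for the two logarithms. In both directions $\rho\pstar_t\le\lamL{c}{\alpha}$, since $\lamR{c}{\cdot}\le c+1$ handles the below case. We also have
\[
\lamL{c}{\alpha}\le c+2A+\sqrt{2cA}=O(c+A).
\]
Item~3 of \Cref{lem:PoissonOverhead} gives $c=O((A+B)/(\gamma-1)^2)$, and $A\le A+B$ is absorbed, which yields the stated $O(\cdot)$ bound.

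\emph{Base rate lower bound.} This is the only step requiring care, mostly in the below-threshold case. Assume $\beta^{\tI},\beta^{\tII}\le 1/e$, so $A,B\ge1$.
\begin{itemize}
\item Above case: $\rho\pstar_t=\lamL{c}{\alpha}\ge c+A$. Since $c\ge\lfloor 2\max(A,B)\rfloor\ge 2\max(A,B)-1\ge\max(A,B)$ when $\max(A,B)\ge1$, this suffices.
\item Below case: I would use item~1 again to get $\lamR{c}{\beta^{\tII}}\ge\lamL{c}{\beta^{\tI}}/\gamma\ge(c+A)/2$. Then I lower bound $c$ via $c\ge r^2$. From the formula for $r$, the square-root term is at least $\sqrt{2A}+\gamma\sqrt{2B}$, so
\[
r\ge\frac{\sqrt{2A}+\gamma\sqrt{2B}}{\gamma-1}.
\]
Since $\gamma/(\gamma-1)\ge2$ and $1/(\gamma-1)\ge1$ for $\gamma\le2$, this gives $r^2\ge\max(2A,8B)$. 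Hence $(c+A)/2\ge\max(A,B)$.
\end{itemize}
The main thing to check is that these crude inequalities on $r$ indeed hold for all $\gamma\in(1,2]$. A fallback is to bound $\lamR{c}{\beta^{\tII}}$ directly: write $y=c+1$, so $\lamR{c}{\beta^{\tII}}=y-\sqrt{2yB}$, and use $y\ge 8B$ to get $y-\sqrt{2yB}\ge y/2$.
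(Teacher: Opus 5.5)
Your proposal is correct. The four tail statements and the base-rate upper bound follow the paper's proof essentially verbatim: you compare $\lambda p$ monotonically with $\lamL{c}{\cdot}$ or $\lamR{c}{\cdot}$, take the gap from item~1 of \Cref{lem:PoissonOverhead}, and take the right-tail validity condition from item~2.

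There is one small slip in the upper bound. In the below-threshold case, $\rho\pstar_t=\lamR{c}{\beta^{\tII}}\le c+1$ can exceed $\lamL{c}{\alpha}$ when $\beta^{\tI},\beta^{\tII}$ are close to $1$; for example, $A=B=0.01$, $\gamma=2$ gives $c=1$, $\lamR{1}{\beta^{\tII}}=1.8>\lamL{1}{\beta^{\tI}}\approx1.15$. This is harmless, because $c\ge\lceil r^2\rceil\ge 1$ gives $c+1\le 2c$, so your $O(c+A)$ bound still holds.

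The genuine divergence is the base-rate lower bound in the below-threshold case.

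\textbf{The paper's route.} It uses only the packaged conclusions of \Cref{lem:PoissonOverhead}:
\begin{itemize}
\item $\gamma\rho\pstar_t\ge\lamL{c}{\beta^{\tI}}\ge c+2\ln(1/\beta^{\tI})$;
\item $c\ge 2\max(\cdot)-1$, from item~2 and integrality;
\item $\min\{\ln(1/\beta^{\tI}),\ln(1/\beta^{\tII})\}\ge1$, needed to absorb the resulting $-\tfrac12$.
\end{itemize}

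\textbf{Your route.} You open up the definition of $c$ and use $c\ge r^2\ge(\sqrt{2A}+2\sqrt{2B})^2\ge 2A+8B$. Your crude inequalities $\gamma/(\gamma-1)\ge 2$ and $1/(\gamma-1)\ge 1$ are valid for all $\gamma\in(1,2]$. This yields the stronger bound $\rho\pstar_t\ge(3A+8B)/2$ and never uses the hypothesis $\beta^{\tI},\beta^{\tII}\le 1/e$.

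\textbf{What each buys.} Applying the same $c\ge r^2$ bound in the above-threshold case gives $\lamL{c}{\beta^{\tII}}\ge c+2A\ge 4A+8B$. That would remove the $1/e$ hypothesis from the lower-bound claim altogether. The paper's route, in exchange, depends only on the interface (items~1--2) of \Cref{lem:PoissonOverhead} and not on the explicit formula for $r$.

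Your fallback is also correct: $y=c+1\ge 8B$ implies $y-\sqrt{2yB}\ge y/2$.
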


\begin{proof}
We prove each of the parts separately.

\medskip\noindent\textbf{Above threshold.}
Since $c$ is defined via \Cref{lem:PoissonOverhead} with $(\alpha,\beta) = (\beta^{\tII},\beta^{\tI})$, it follows from \Cref{lem:PoissonOverhead} that $\lamL{c}{\beta^{\tII}} \leq \gamma\,\lamR{c}{\beta^{\tI}}$ and $c + 1 > 2\max(L^{\tI}, L^{\tII})$, where $L^{\tI} = \ln(1/\beta^{\tI})$ and $L^{\tII} = \ln(1/\beta^{\tII})$. Since we set $\rho = \lamL{c}{\beta^{\tII}}/\pstar_t$, rearranging, 
\[\rho\,\pstar_t = \lamL{c}{\beta^{\tII}}.\] 

\emph{Halting.}
If $p \geq \pstar_t$ and $\lambda \geq \rho$, then $\lambda\,p \geq \rho\,\pstar_t = \lamL{c}{\beta^{\tII}}$. By \Cref{lem:PoissonFacts}(3a), $\Pr[K \leq c] \leq \beta^{\tII}$.

\emph{Continuation.}
If $p \leq \pstar_t / (\gamma \Lambda)$, and $\lambda \leq \rho\Lambda$, then 
\[\lambda\,p \leq \rho\,\pstar_t/\gamma = \lamL{c}{\beta^{\tII}}/\gamma \leq \lamR{c}{\beta^{\tI}}.\] 
The hypothesis $c + 1 > 2\ln(1/\beta^{\tI})$ ensures the domain condition of \Cref{lem:PoissonFacts}(3b), giving $\Pr[K \geq c+1] \leq \beta^{\tI}$.

\medskip
\noindent\textbf{Below threshold.}
Since $c$ is defined via \Cref{lem:PoissonOverhead} with $(\alpha,\beta) = (\beta^{\tI},\beta^{\tII})$, it follows from \cref{lem:PoissonOverhead} that $\lamL{c}{\beta^{\tI}} \leq \gamma\,\lamR{c}{\beta^{\tII}}$ and $c + 1 > 2\max(L^{\tI}, L^{\tII})$ where $L^{\tI} = \ln(1/\beta^{\tI})$ and $L^{\tII} = \ln(1/\beta^{\tII})$. Since we set $\rho = \lamR{c}{\beta^{\tII}}/\pstar_t$, rearranging, 
\[\rho\,\pstar_t = \lamR{c}{\beta^{\tII}}.\] 

\emph{Halting.}
If $p \leq \pstar_t$ and $\lambda \leq \rho$, then $\lambda\,p \leq \rho\,\pstar_t = \lamR{c}{\beta^{\tII}}$. Since $c + 1 > 2\ln(1/\beta^{\tII})$ the domain condition of \Cref{lem:PoissonFacts}(3b) is fulfilled, and so $\Pr[K \geq c+1] \leq \beta^{\tII}$.

\emph{Continuation.}
If $p \geq \gamma\Lambda \pstar_t$ 
and $\lambda \geq \rho/\Lambda$, then 
    \[\lambda\,p \geq \gamma\,\rho\,\pstar_t = \gamma \lamR{c}{\beta^{\tII}} \geq \lamL{c}{\beta^{\tI}}.\]
By item~\ref{item:tail_bound_Poisson_left} in \Cref{lem:PoissonFacts}, $\Pr[K \leq c] \leq \beta^{\tI}$.

\medskip
\noindent\textbf{Base rate bounds}

\emph{Upper bound.}
In the above-threshold setting, $\rho = \lamL{c}{\beta^{\tII}}/\pstar_t$ and $\lamL{c}{\beta^{\tII}} = O(c + \sqrt{c\,L^{\tII}} + L^{\tII})$. In the below-threshold setting, $\rho = \lamR{c}{\beta^{\tII}}/\pstar_t$ and $\lamR{c}{\beta^{\tII}} \leq c + 1$. In both cases, $c = O((L^{\tI} + L^{\tII})/(\gamma-1)^2)$ by \Cref{lem:PoissonOverhead}(3), and dividing by $\pstar_t$ gives the stated bounds.

\emph{Lower bound.}
Write $L^{\tI} = \ln(1/\beta^{\tI}) \geq 1$ and
$L^{\tII} = \ln(1/\beta^{\tII}) \geq 1$.
From~\eqref{eq:lamL_def},
$\lamL{c}{\alpha} \geq c + 2\ln(1/\alpha)$
for all $c \geq 0$ and $\alpha \in (0,1)$,
since $\sqrt{2c\ln(1/\alpha) + \ln^2(1/\alpha)} \geq \ln(1/\alpha)$.
From \Cref{lem:PoissonOverhead}(2) and integrality of $c$,
$c \geq 2\max(L^{\tI}, L^{\tII}) - 1$.

In the above threshold setting, $\rho\,\pstar_t = \lamL{c}{\beta^{\tII}} \geq c + 2L^{\tII}
\geq (2L^{\tI} - 1) + 2L^{\tII}$.
Since $L^{\tI} \geq 1$, this gives
$\rho\,\pstar_t \geq L^{\tI} + 2L^{\tII} \geq \max(L^{\tI}, L^{\tII})$.
 
In the below threshold setting, by \Cref{lem:PoissonOverhead}(1),
$\gamma\,\rho\,\pstar_t = \gamma\,\lamR{c}{\beta^{\tII}}
\geq \lamL{c}{\beta^{\tI}} \geq c + 2L^{\tI}
\geq (2L^{\tII} - 1) + 2L^{\tI}$.
Dividing by $\gamma \leq 2$ gives
$\rho\,\pstar_t \geq L^{\tI} + L^{\tII} - \tfrac{1}{2}
\geq \max(L^{\tI}, L^{\tII})$,
since $\min(L^{\tI}, L^{\tII}) \geq 1 > \tfrac{1}{2}$.
\end{proof}

\subsection{Global accuracy guarantee}

\Cref{thm:GlobalUtility} instantiates \Cref{lem:GeneralizedPrivacy}, \Cref{lem:NoiseControl}, and \Cref{lem:ParameterSelection} with specific choices of $\gamma$, $\beta^{\tI}$, $\beta^{\tII}$, and $\Lambda_t$ determined by the privacy budget and failure probability allocation, and establishes a worst-case accuracy bound for \Cref{alg:GTM}.

\begin{theorem}[Global utility]\label{thm:GlobalUtility}
Fix $\varepsilon >0$, $\theta >0$, a gap parameter $\gamma \in (1,2]$, and a failure probability $\beta \in (0,1)$. Define
\[
  \Lambda_t \;:=\; \left(\frac{4}{\beta}\right)^{(\eps_t/\eps)(1 + 2/\theta)} \left(\frac{4}{\beta_t}\right)^{(\eps_t/\eps)(2+\theta)}.
\]
For any choice of $\pstar_t$, $s_t$, $\rho_t$ and $c_t$, \Cref{alg:GTM} is $\varepsilon$-differentially private. Further, there exists a choice of parameter settings for $s_t$, $\rho_t$ and $c_t$ such that, with probability at least $1 - \beta$, the following hold simultaneously for all $t \geq 1$:

\begin{enumerate}
\item \textbf{Halting.}
  At every step $t$ where $p_t \geq \pstar_t$, the algorithm halts.

\item \textbf{Continuation.}
  The algorithm continues past every step $t$ satisfying $p_t \leq \bar{p}_t$, where the \emph{rejection threshold}
  \begin{align*}
    \bar{p}_t \;:=\; \max\!\left(\frac{\pstar_t}{\gamma\,\Lambda_t},\;\; 1 - \gamma\,\Lambda_t\,(1-\pstar_t)\right).
  \end{align*}

\item \textbf{Sample complexity.}
  The number of evaluations of the mechanism $\calM_t$ at step $t$ is
  $N_t \sim \Po(\lambda_t)$ for
  $\lambda_t \leq \tilde{A}_t \cdot w_t$, where
  \[
    \tilde{A}_t \;:=\; \frac{C_A \cdot \ln(t/\beta)}{(\gamma-1)^2}
    \cdot \max\!\left(\frac{(4/\beta)^{(1+2/\theta)\eps_t/\eps}}{\pstar_t},\;
    \frac{1}{1-\pstar_t}\right)
  \]
  for some universal constant $C_A$ and $w_t := e^{\eps_t\eta_{2,t}} \sim \Par(\sigma_t)$ with
  $\sigma_t := \eps/(\eps_t(\theta+2))$, independently across steps.
\end{enumerate}
\end{theorem}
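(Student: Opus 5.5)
The plan is to instantiate the three earlier ingredients (\Cref{lem:GeneralizedPrivacy}, \Cref{lem:NoiseControl}, \Cref{lem:ParameterSelection}) with one concrete choice of noise scales and failure budget, then close with a union bound. I will use $\tau_{2,t} = \tau_2$ for all $t$.

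\textbf{Step 1: noise scales and privacy.} Set
\[
\tau_2 := \frac{2+\theta}{\eps}, \qquad \tau_1 := \frac{2+\theta}{\eps\theta}, \qquad \mu_1 := \tau_1\ln(4/\beta) = \frac{(1+2/\theta)\ln(4/\beta)}{\eps}.
\]
Then $1/\tau_1 + 2/\tau_2 = \eps\theta/(2+\theta) + 2\eps/(2+\theta) = \eps$. So \Cref{lem:GeneralizedPrivacy} gives $\eps$-DP for every choice of $s_t,\rho_t,c_t$.

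\textbf{Step 2: the good noise event.} Take $c_\calG = 1/4$ in \Cref{lem:NoiseControl}. Its failure probability is then at most $e^{-\mu_1/\tau_1} + \tfrac14\sum_t\beta_t = \beta/4+\beta/4$. The noise-penalty factor of \Cref{lem:NoiseControl} becomes
\[
\exp\!\big(\eps_t\mu_1 + \eps_t\tau_2\ln(4/\beta_t)\big) = (4/\beta)^{(1+2/\theta)\eps_t/\eps}\,(4/\beta_t)^{(2+\theta)\eps_t/\eps},
\]
which is exactly the $\Lambda_t$ in the statement.

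\textbf{Step 3: per-step parameters and accuracy.} At each step, call \Cref{lem:ParameterSelection} with $\Lambda=\Lambda_t$ and $\beta^{\tI}_t=\beta^{\tII}_t=\beta_t/4$. The per-step test failures then sum to at most $\beta/2$, so the total failure probability is at most $\beta$. The direction $s_t$ is chosen to maximize the rejection threshold.
\begin{itemize}
\item If $\pstar_t/(\gamma\Lambda_t) \ge 1-\gamma\Lambda_t(1-\pstar_t)$, use the direct variant: $s_t=+1$ with the above-threshold $c_t,\rho_t$. On $\calG$, item~2 of \Cref{lem:NoiseControl} gives $\lambda_t\in[\rho_t,\rho_t\Lambda_t]$. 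Part~\ref{item:aboveThreshold} then gives halting when $p_t\ge\pstar_t$ and continuation when $p_t\le \pstar_t/(\gamma\Lambda_t)$, each except with probability $\beta_t/4$.
\item Otherwise, use the flipped variant: $s_t=-1$, run on $-\calM_t$ (still $\eps_t$-DP), with $q_t=1-p_t$ and $q^*_t=1-\pstar_t$, and the below-threshold parameters. Part~\ref{item:belowThreshold} gives halting when $q_t\le q^*_t$, i.e.\ $p_t\ge\pstar_t$. It gives continuation when $q_t\ge\gamma\Lambda_t q^*_t$, i.e.\ $p_t\le 1-\gamma\Lambda_t(1-\pstar_t)$.
\end{itemize}
Here "continuation" is taken conditionally on having reached step $t$. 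The product/factorization structure does not matter for accuracy, because the bounds hold for every realization of the history on $\calG$. A union bound over $\calG^c$ and all per-step test failures completes items 1 and 2.

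\textbf{Step 4: sample complexity.} Write $w_t=e^{\eps_t\eta_{2,t}}$. Since $\eta_{2,t}\sim\Exp(\tau_2)$, \Cref{lem:ParetoFacts}(1) with $\sigma\tau_2 = 1/\eps_t$ gives $w_t\sim\Par(\sigma_t)$ with $\sigma_t = 1/(\eps_t\tau_2)=\eps/(\eps_t(2+\theta))$. These are independent across $t$.
\begin{itemize}
\item Direct variant: $\lambda_t = \rho_t e^{\eps_t(\mu_1-\eta_1)}w_t \le \rho_t (4/\beta)^{(1+2/\theta)\eps_t/\eps} w_t$, using $\eta_1\ge0$.
\item Flipped variant: on $\{\eta_1\le\mu_1\}$, $\lambda_t = \rho_t e^{-\eps_t(\mu_1-\eta_1+\eta_{2,t})} \le \rho_t\le\rho_t w_t$. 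This is the event already charged in Step 2.
\end{itemize}
By \Cref{lem:ParameterSelection}, $\rho_t = O(\ln(1/\beta_t)/((\gamma-1)^2\pstar_t))$ in the direct case and $O(\ln(1/\beta_t)/((\gamma-1)^2(1-\pstar_t)))$ in the flipped case. Since $1/\beta_t\le 2t\ln^3(t+2)/\beta$, we have $\ln(4/\beta_t)=O(\ln(t/\beta))$, so $\lambda_t\le\tilde A_t w_t$.

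\textbf{Main obstacle.} The risk is bookkeeping rather than any single estimate. Three things must be checked carefully:
\begin{itemize}
\item the flipped-variant bound uses only $\{\eta_1\le\mu_1\}$, so it must be covered by the same $1-\beta$ event;
\item the adaptively chosen $s_t$ is a deterministic function of public $\pstar_t,\eps_t$, so privacy is unaffected;
\item in the flipped case $\Lambda_t$ enters only the threshold and not the sample count, which is why $\tilde A_t$ has no $\beta_t$-dependent penalty on the $1/(1-\pstar_t)$ branch.
\end{itemize}
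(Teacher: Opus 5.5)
Your proposal is correct and matches the paper's proof essentially step for step: the same noise scales $1/\tau_1=\theta\eps/(\theta+2)$, $\tau_{2,t}=\tau_2=(2+\theta)/\eps$, $\mu_1=\tau_1\ln(4/\beta)$, the same $c_\calG=1/4$ in \Cref{lem:NoiseControl}, per-step error budgets $\beta_t/4$ in \Cref{lem:ParameterSelection}, the same flipping rule (up to tie-breaking), and the same Pareto decomposition $\lambda_t=\rho_t e^{\eps_t(\mu_1-\eta_1)}w_t$ for the sample-complexity bound. The closing union bound $\beta/2+\beta/4+\beta/4\le\beta$ is also identical to the paper's.
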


\begin{proof}[Proof of \Cref{thm:GlobalUtility}]
We run Algorithm~\ref{alg:GTM} with the following parameter setting.
For $\theta>0$, define 
\[
  \frac{1}{\tau_1} = \frac{\theta\eps}{\theta+2}, \qquad
  \frac{2}{\tau_2} = \frac{2\eps}{\theta + 2}, \qquad \forall t \ \ \tau_{2,t} = \tau_2 \qquad
  \mu_1 = \tau_1 \ln\!\bigl(\tfrac{4}{\beta}\bigr).
\]

\medskip
\noindent \textbf{Privacy.}
Since $1/\tau_1 + 2/\tau_2 = \eps$ and $\tau_{2,t} = \tau_2$ for all $t$, \Cref{lem:GeneralizedPrivacy} gives pure $\eps$-differential privacy.

\medskip
\noindent \textbf{Bounding $\lambda_t$.}
Applying \Cref{lem:NoiseControl} with $c_\calG = 1/4$, since $e^{-\mu_1/\tau_1} = \beta/4$ and $c_\calG \sum_t \beta_t = \beta/4$, we can write
\[
  \Pr[\calG^c] \leq \frac{\beta}{4} + \frac{\beta}{4} = \frac{\beta}{2}.
\]
From the definition of $\Lambda_t$ in \cref{lem:NoiseControl} and using that $\mu_1 = \tau_1 \ln(4/\beta)$, we can write
    \[ \Lambda_t = \exp\!\bigl(\eps_t(\mu_1 + \tau_{2,t}\ln(4/\beta_t))\bigr) = \left(\frac{4}{\beta}\right)^{\tau_1 \eps_t} \left(\frac{4}{\beta_t}\right)^{\tau_{2,t} \eps_t} \]
From the definition of $\tau_1$ and $\tau_{2,t}$, we have that
    \[ \tau_1 \eps_t = \frac{\eps_t}{\eps} \left(\frac{2}{\theta} + 1\right) \quad \tau_{2,t} \eps_t =  \frac{\eps_t}{\eps} \left( 2 + \theta \right)\quad \Rightarrow \Lambda_t = \left(\frac{4}{\beta}\right)^{(\eps_t/\eps)(1 + 2/\theta)} \left(\frac{4}{\beta_t}\right)^{(\eps_t/\eps)(2+\theta)} .\]
Now, \Cref{lem:NoiseControl} shows that conditioning on $\calG$, $\lambda_t \in [\rho_t,\,\rho_t\Lambda_t]$ when $s_t = +1$ and $\lambda_t \in [\rho_t/\Lambda_t,\,\rho_t]$ when $s_t = -1$. When $\eta_1 \leq \mu_1$, then  $\lambda_t \geq \rho_t$ when $s_t = +1$ and $\lambda_t \leq \rho_t$ when $s_t = -1$.

\medskip \noindent \textbf{Accuracy.}
Conditioned on $\calG$, we know that for all time-steps $t$, for any choice of $\rho_t$, if $s_t = +1$ then $\lambda_t \in [\rho_t,\rho_t\,\Lambda_t]$, and if $s_t = -1$ then $\lambda_t \in [\rho_t/\Lambda_t, \rho_t]$.

If $s_t = +1$: applying item \ref{item:aboveThreshold} of \Cref{lem:ParameterSelection} with target $\pstar_t$, error parameters $(\beta_t/4, \beta_t/4)$, gap $\gamma$, and parameter $\Lambda = \Lambda_t$ gives us that there exists an integer $c_t$ such that for $\rho_t = \lamL{c_t}{\beta_t/4} /\pstar_t$, we have
\begin{enumerate}
    \item Halting: if $\lambda_t \geq \rho_t$ and $p_t \geq \pstar_t$, then $\Pr[K_t \leq c_t] \leq \beta_t/4$.
    \item Continuation: if $\lambda_t \leq \rho_t\Lambda_t$ and $p_t \leq \pstar_t/(\gamma\Lambda_t)$, then $\Pr[K_t \geq c_t + 1] \leq \beta_t/4$.
    \item Base rate upper bound $\rho_t = O(\ln (1/\beta_t)/((\gamma-1)^2\pstar_t))$, and lower bound $\rho_t \geq (1/\pstar_t) \ln(4/\beta_t)$ .
\end{enumerate}
  
Similarly, if $s_t = -1$: item~\ref{item:belowThreshold} of \Cref{lem:ParameterSelection} applied to the
negated mechanism with target $1-\pstar_t$, error parameters
$(\beta_t/4, \beta_t/4)$, gap $\gamma$, parameter $\Lambda = \Lambda_t$
gives us that there exists an integer $c_t$ such that for
$\rho_t = \lamR{c_t}{\beta_t/4}/(1-\pstar_t)$, we have
\begin{enumerate}
    \item Halting: if $\lambda_t \leq \rho_t$ and $1 - p_t \leq 1 - \pstar_t \Leftrightarrow p_t \geq \pstar_t$, then $\Pr[K_t \geq c_t+1] \leq \beta_t/4$.
    \item Continuation: if $\lambda_t \geq \rho_t/\Lambda_t$ and
      $1 - p_t \geq \gamma \Lambda_t (1 - \pstar_t) \Leftrightarrow p_t \leq 1 - \gamma\Lambda_t(1-\pstar_t)$, then
      $\Pr[K_t \leq c_t] \leq \beta_t/4$.
    \item Base rate:
      $\rho_t = O(\ln(1/\beta_t)/((\gamma-1)^2(1-\pstar_t)))$, and lower bound $\rho_t \geq (1/(1-\pstar_t)) \ln(4/\beta_t)$.
\end{enumerate}

\medskip
\noindent\textbf{Flipping rule.}
The direct execution guarantees continuation at $\calM_t(X_t)$ for $p_t \leq \pstar_t/(\gamma\Lambda_t)$
and the flipped execution for $p_t \leq 1 - \gamma\Lambda_t(1-\pstar_t)$.
The analyst may select whichever threshold is larger for a better test.
Writing $x = \gamma\Lambda_t \geq 1$,
\[
  \frac{\pstar_t}{x} - \bigl(1 - x(1-\pstar_t)\bigr)
  \;=\; \frac{(1-\pstar_t)(x-1)(x - \pstar_t/(1-\pstar_t))}{x}.
\]
Since $x \geq 1$, the sign is determined by
$x - \pstar_t/(1-\pstar_t)$. It follows that the direct path has a higher rejection threshold when
$\gamma\Lambda_t \geq \pstar_t/(1-\pstar_t)$, equivalently
$\gamma\Lambda_t(1-\pstar_t) \geq \pstar_t$, and the flipped path
is better otherwise.

Therefore, to achieve the stated accuracy bound, the analyst can therefore set 
\[
s_t = \begin{cases}
    +1 & \text{where}~\gamma\Lambda_t(1-\pstar_t) > \pstar_t \\
    -1 & \text{otherwise} ,
\end{cases}
\]
achieving continuation at
$p_t \leq \max(\pstar_t/(\gamma\Lambda_t),\;
1 - \gamma\Lambda_t(1-\pstar_t)) = \bar{p}_t$.

\medskip
\noindent \textbf{Sample complexity.}
Condition on $\calG_0 := \{\eta_1 \leq \mu_1\} \supset \calG$,
which holds with probability at least $1-\beta/4$.
On $\calG_0$, $\mu_1 - \eta_1 \geq 0$ and
$e^{\eps_t(\mu_1-\eta_1)} \leq e^{\eps_t\mu_1} = (4/\beta)^{(1+2/\theta)\eps_t/\eps}$.
Write $w_t := e^{\eps_t\eta_{2,t}} \sim \Par(\sigma_t)$ with
$\sigma_t = 1/(\eps_t\tau_2) = \eps/(\eps_t(\theta+2))$,
independently across steps by \Cref{lem:ParetoFacts}~(1).
 
We show that $\lambda_t \leq \tilde{A}_t \cdot w_t$ on $\calG_0$
for all $t$, where $\tilde{A}_t$ is defined below.
 
On the direct path ($s_t = +1$):
$\lambda_t = \rho_t\,e^{\eps_t(\mu_1-\eta_1)}\,w_t
\leq \rho_t\,e^{\eps_t\mu_1}\,w_t$.
By the base rate upper bound (\Cref{lem:ParameterSelection}),
$\rho_t\,e^{\eps_t\mu_1}
= O(e^{\eps_t\mu_1}\ln(t/\beta)/(\pstar_t(\gamma-1)^2))$.
 
On the flipped path ($s_t = -1$):
on $\calG_0$, $\lambda_t = \rho_t\,e^{-\eps_t(\mu_1-\eta_1+\eta_{2,t})} \leq \rho_t$
since $\mu_1 - \eta_1 \geq 0$ and $\eta_{2,t} \geq 0$.
By the base rate upper bound,
$\rho_t = O(\ln(t/\beta)/((1-\pstar_t)(\gamma-1)^2))$.
Since $w_t \geq 1$, we have $\lambda_t \leq \rho_t \leq \rho_t\,w_t$.
 
Define
\[
  \tilde{A}_t \;:=\; C_A \cdot \frac{\ln(t/\beta)}{(\gamma-1)^2}
  \cdot \max\!\left(\frac{e^{\eps_t\mu_1}}{\pstar_t},\;
  \frac{1}{1-\pstar_t}\right).
\]
On the direct path, $\rho_t\,e^{\eps_t\mu_1}
= O(\frac{e^{\eps_t\mu_1}\ln(t/\beta)}{\pstar_t(\gamma-1)^2})$, and on the flipped path, $\rho_t
= O(\frac{\ln(t/\beta)}{(1-\pstar_t)(\gamma-1)^2})$. Let $C'$ be the constant for which both of these big-Oh expressions are explicit upper bounds. It follows that for $C_A = \max(\gamma\, C',2)$, $\lambda_t \leq \tilde{A}_t\,w_t$.
 
Substituting $e^{\eps_t\mu_1} = (4/\beta)^{(1+2/\theta)\eps_t/\eps}$
gives the stated expression for $\tilde{A}_t$; since this bound holds conditioned on $\calG_0$, it holds on $\calG$.

\medskip
\noindent \textbf{Failure probability accounting.}
We apply a union bound across all points of failure - the probability of the event $\calG$ not holding, and the probability of a type I or type II error at any time-step.
\[
  \underbrace{\Pr[\calG^c]}_{\leq\,\beta/2}
  \;+\; \underbrace{\sum_{t=1}^{\infty} \Pr[\text{Type I Error at }t \mid \calG]}_{\leq\,\beta/4}
  \;+\; \underbrace{\sum_{t=1}^{\infty}
    \Pr[\text{Type II Error at } t \mid \calG]}_{\leq\,\beta/4}
  \;\leq\; \beta.
\]
The bound on the first term comes from the application of \cref{lem:NoiseControl} at the beginning of this proof, and the bounds on the second and third terms come from the application of \cref{lem:ParameterSelection} with type I and type II failure probabilities $\beta_t/4$.
\end{proof}

\begin{lem}[Sample complexity]\label{lem:sample_complexity}
Under the hypotheses and parameter settings of \Cref{thm:GlobalUtility},
assume $\beta \leq 1/e$, and
let $N_{\leq T} := \sum_{t=1}^T N_t$ denote the total number of evaluations through step $T$.
The following hold.

\begin{enumerate}
\item \textbf{Per-step bound.}
Conditioned on $\calG$, for all $t \geq 1$, the Poisson mean satisfies
$\lambda_t \leq \bar\lambda_t$, where
\[
  \bar\lambda_t
  \;:=\; \max\!\left(
  \frac{C_A\,\ln(t/\beta)}{(\gamma-1)^2}
  \cdot \max\!\left(\frac{\Lambda_t}{\pstar_t},\;
  \frac{1}{1-\pstar_t}\right),\;\;
  \ln(4/\beta_t)\right).
\]
In particular, $\Pr[N_t > e\,\bar\lambda_t \mid \calG] \leq \beta_t/4$.

\item \textbf{Amortized bound.}
Assume additionally that $\eps_t = \eps_1$ and $\pstar_t = \pstar$ are constant across steps,
that $\sigma := \eps/(\eps_1(\theta+2)) > 1$,
and that $T \geq 2\ln(16/\beta)$.
Define $\Lambda_0 := (4/\beta)^{(1+2/\theta)\eps_1/\eps}$ and
\[
  R \;:=\; \frac{C_A\,\ln(T/\beta)}{(\gamma-1)^2}
  \cdot \max\!\left(\frac{\Lambda_0}{\pstar},\;
  \frac{1}{1-\pstar}\right).
\]
Then with probability at least $1 - \beta/2$,
\[
  \frac{N_{\leq T}}{T}
  \;\leq\;
    \frac{2\sigma}{\sigma-1}\,R
    \;+\; O\!\left(R \cdot T^{1/\sigma - 1}\,\beta^{-1/\sigma}\right).
\]
\end{enumerate}
\end{lem}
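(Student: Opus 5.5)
For item~1, I will reuse the case analysis from the proof of \Cref{thm:GlobalUtility}, now conditioning on the full event $\calG$ rather than only on $\calG_0$.

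On the direct path ($s_t=+1$), \Cref{lem:NoiseControl} gives $\lambda_t\le\rho_t\Lambda_t$. Combining this with the base rate upper bound of \Cref{lem:ParameterSelection} and $\ln(4/\beta_t)=O(\ln(t/\beta))$ (from \Cref{def:Notation}) yields $\lambda_t\le C_A\ln(t/\beta)\Lambda_t/((\gamma-1)^2\pstar_t)$.

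On the flipped path ($s_t=-1$), $\lambda_t\le\rho_t=O(\ln(t/\beta)/((\gamma-1)^2(1-\pstar_t)))$. Hence $\lambda_t\le\bar\lambda_t$ in both cases.

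For the Poisson tail, note that $\bar\lambda_t\ge\ln(4/\beta_t)$ by construction. Since $\Pr[\Po(\lambda)>e\bar\lambda]$ is monotone in $\lambda$, it suffices to bound it at $\lambda=\bar\lambda_t$. The Chernoff bound $\Pr[M\ge e\lambda]\le e^{-\lambda}$ then gives at most $e^{-\ln(4/\beta_t)}=\beta_t/4$. This should follow from \Cref{lem:PoissonChernoffBounds}, or directly from the moment generating function with $s=1$: $\Pr[M\ge a]\le e^{\lambda(e-1)-a}$, and with $a=e\lambda$ this is $e^{-\lambda}$.

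For item~2, I will condition on $\calG_0=\{\eta_1\le\mu_1\}$, which has probability at least $1-\beta/4$. The argument in the sample complexity part of \Cref{thm:GlobalUtility} gives $\lambda_t\le R\,w_t$ for all $t\le T$, because $\ln(t/\beta)\le\ln(T/\beta)$ and $e^{\eps_1\mu_1}=\Lambda_0$. Conditional on the noises, $N_{\le T}\sim\Po(\sum_t\lambda_t)$ by \Cref{rem:SampleComplexity}. It therefore suffices to show two things:
\begin{enumerate}
\item $\sum_{t\le T}w_t$ is small with probability $1-\beta/8$;
\item a Poisson with that mean stays within a constant factor of its mean with probability $1-\beta/8$.
\end{enumerate}

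For the Pareto sum, I will truncate at $W:=(8T/\beta)^{1/\sigma}$. A union bound gives $\Pr[\exists t\le T: w_t>W]\le T W^{-\sigma}=\beta/8$. On the complement, $w_t=\tilde w_t$.

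Write $S=\sum_{t\le T}\tilde w_t$. By \Cref{lem:ParetoFacts}(3a), $\Ex[S]\le T\sigma/(\sigma-1)$. For concentration I would use Bernstein's inequality with the range bound $W$ and variance $T\Ex[\tilde w^2]$ from \Cref{lem:ParetoFacts}(3b). This gives, with probability $1-\beta/16$,
\[
S\le T\frac{\sigma}{\sigma-1}+O\!\big(\sqrt{T\Ex[\tilde w^2]\ln(1/\beta)}+W\ln(1/\beta)\big).
\]
The term $W\ln(1/\beta)$ is $O(T^{1/\sigma}\beta^{-1/\sigma}\ln(1/\beta))$. After dividing by $T$, this is the claimed $T^{1/\sigma-1}\beta^{-1/\sigma}$ correction, up to the logarithm, which I expect can be absorbed or handled by a sharper choice.

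The variance term is the main obstacle. In the regime $1<\sigma<2$, $\Ex[\tilde w^2]\lesssim W^{2-\sigma}$, so $\sqrt{T W^{2-\sigma}}$ must be compared against $T$ and $W$. By AM--GM, $\sqrt{T W^{2-\sigma}}\le \tfrac12(T W^{1-\sigma}+W)$, and $TW^{1-\sigma}\le T$, so the term is absorbed into the leading term and the correction. If that absorption is too lossy, the fallback is to replace Bernstein by Markov's inequality on $S-\Ex S$ after truncating. Either way, this step is where I would allow the factor $2$ in $2\sigma/(\sigma-1)$ to provide slack.

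Finally, for the Poisson step, the mean is $\Lambda:=R S\ge RT\ge T\ge 2\ln(16/\beta)$. The upper Chernoff bound $\Pr[\Po(\Lambda)\ge 2\Lambda]\le e^{-\Lambda/3}$ is too weak at this scale, so I will instead use the form $\Pr[\Po(\Lambda)\ge\Lambda+u]\le e^{-u^2/(2(\Lambda+u))}$ with $u$ chosen so that the total fits within the factor $2$ together with the $O(\cdot)$ slack. The hypothesis $T\ge2\ln(16/\beta)$ is what makes this deviation affordable. A union bound over the three failures ($\beta/4+\beta/8+\beta/8$) gives $1-\beta/2$.
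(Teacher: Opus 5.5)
Item~1 is fine and matches the paper. One small correction: the bound $\Pr[M\ge e\lambda]\le e^{-\lambda}$ is \Cref{lem:PoissonChernoffMultiplicative} at $c=e\lambda$. The additive form in \Cref{lem:PoissonChernoffBounds} only gives about $e^{-0.54\lambda}$, but your direct MGF computation is correct.

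Item~2 has the same skeleton as the paper: condition on $\{\eta_1\le\mu_1\}$, bound $\lambda_t\le R\,w_t$, truncate, concentrate, then apply Poisson concentration. However, your concentration step for the truncated Pareto sum does not give the stated rate. Bernstein's range term is of order $W\ln(1/\beta)\asymp (T/\beta)^{1/\sigma}\ln(1/\beta)$, whereas the target deviation of $S$ is $T\cdot T^{1/\sigma-1}\beta^{-1/\sigma}=(T/\beta)^{1/\sigma}$. The extra $\ln(1/\beta)$ is not a constant, so it cannot be hidden in the $O(\cdot)$. It also cannot be absorbed into the leading term: for $T$ up to about $(8/\beta)^{1/(\sigma-1)}\ln^{\sigma/(\sigma-1)}(1/\beta)$ one has $W\ln(1/\beta)\gtrsim T$, and the factor $2$ is not free (see below). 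Your fallback, Markov on $S-\Ex S$, is invalid because that variable is not nonnegative. Markov on $(S-\Ex S)^2$, i.e.\ Chebyshev, is what the paper does, and it closes the gap for $1<\sigma<2$. With $\Var(\tilde w)\le \Ex[\tilde w^2]\le \frac{2}{2-\sigma}W^{2-\sigma}$ and failure level $\beta/16$, the deviation is $u=O_\sigma\big(\sqrt{T W^{2-\sigma}/\beta}\big)$. The factor $\sqrt{T/\beta}$ combines with $W^{1-\sigma/2}=(cT/\beta)^{1/\sigma-1/2}$ to give exactly $(T/\beta)^{1/\sigma}$, with no logarithm. The deviation of this heavy-tailed sum is driven by a single large summand, so the polynomial tail of Chebyshev is sharp here and the exponential-type Bernstein bound is lossy.

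The Poisson step also needs repair. You found $\Pr[\Po(\Lambda)\ge 2\Lambda]\le e^{-\Lambda/3}$ too weak because you only used $R\ge 1$. In fact $R\ge C_A\ge 2$, since $C_A\ge2$, $(\gamma-1)^2\le 1$, $\ln(T/\beta)\ge 1$ for $\beta\le 1/e$, and the $\max$ is at least $1$. So, conditional on the noise, the dominating mean satisfies $RS\ge 2T\ge 4\ln(16/\beta)$, and the simple bound already gives failure at most $\beta/16$. This is exactly where the paper uses $C_A\ge 2$ and $T\ge 2\ln(16/\beta)$. The replacement you propose cannot help: for $u\le\Lambda$ one has $e^{-u^2/(2(\Lambda+u))}\ge e^{-\Lambda/4}$, which is weaker than $e^{-\Lambda/3}$. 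At the edge of the hypotheses ($R$ near $2$, $T$ near $2\ln(16/\beta)$, $\sigma$ large), the Poisson step consumes most of the factor $2$. The Pareto fluctuation must therefore sit in the correction term, which is another reason it has to be log-free. Your failure budget $\beta/4+\beta/8+\beta/16+\beta/8$ also exceeds $\beta/2$ and needs rescaling.

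Finally, you only treat $1<\sigma<2$. For $\sigma\ge 2$, Chebyshev only gives $u\asymp\sqrt{T/\beta}$ (times $\sqrt{\ln W}$ at $\sigma=2$), which is larger than $(T/\beta)^{1/\sigma}$. So the paper's remark that this case is ``strictly smaller'' does not hold, and that regime needs a separate argument in either approach. Your idea of sharing the factor-$2$ slack between the Pareto and Poisson fluctuations when $T$ is large is a natural ingredient for it.
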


\begin{proof}
Throughout, $C_A = \max(\gamma\,C', 2)$ is the constant defined in the proof of \Cref{thm:GlobalUtility}, where $C'$ is the base rate upper bound constant from \Cref{lem:ParameterSelection}. In particular, $C_A \geq \gamma\,C'$ and $C_A \geq 2$.

\medskip\noindent\textbf{Item 1.}
We show $\lambda_t \leq \frac{C_A\,\ln(t/\beta)}{(\gamma-1)^2}\max(\Lambda_t/\pstar_t,\; 1/(1-\pstar_t))$ on $\calG$, which implies $\lambda_t \leq \bar\lambda_t$ since $\bar\lambda_t$ is defined as the maximum of this expression and $\ln(4/\beta_t)$.

From the proof of \Cref{thm:GlobalUtility},
conditioned on $\calG$,
$\lambda_t \leq \rho_t\Lambda_t$ when $s_t = +1$
and $\lambda_t \leq \rho_t$ when $s_t = -1$.
By the base rate upper bound from \Cref{lem:ParameterSelection},
$\rho_t \leq C'\ln(t/\beta)/((\gamma-1)^2\pstar_t)$ when $s_t = +1$
and $\rho_t \leq C'\ln(t/\beta)/((\gamma-1)^2(1-\pstar_t))$ when $s_t = -1$.

\emph{Case $s_t = +1$.}
The bound on $\lambda_t$ is:
\[
  \lambda_t \;\leq\; \rho_t\Lambda_t
  \;\leq\; \frac{C'\,\Lambda_t\,\ln(t/\beta)}{\pstar_t\,(\gamma-1)^2}.
\]
The flipping rule selects $s_t = +1$ only when $\gamma\Lambda_t(1-\pstar_t) > \pstar_t$,
which rearranges to $1/(1-\pstar_t) < \gamma\Lambda_t/\pstar_t$.
Therefore the max is controlled by the first term:
\[
  \max\!\left(\frac{\Lambda_t}{\pstar_t},\; \frac{1}{1-\pstar_t}\right)
  \;\leq\; \gamma \cdot \frac{\Lambda_t}{\pstar_t}.
\]
Since $C_A \geq \gamma\,C'$, we conclude
$\lambda_t \leq \frac{C'\,\Lambda_t\,\ln(t/\beta)}{\pstar_t(\gamma-1)^2}
\leq \frac{C_A\,\ln(t/\beta)}{(\gamma-1)^2} \cdot \frac{\Lambda_t}{\pstar_t}
\leq \frac{C_A\,\ln(t/\beta)}{(\gamma-1)^2}\max(\Lambda_t/\pstar_t,\; 1/(1-\pstar_t))$.

\emph{Case $s_t = -1$.}
The bound on $\lambda_t$ is:
\[
  \lambda_t \;\leq\; \rho_t
  \;\leq\; \frac{C'\,\ln(t/\beta)}{(1-\pstar_t)\,(\gamma-1)^2}.
\]
The flipping rule selects $s_t = -1$ when $\gamma\Lambda_t(1-\pstar_t) \leq \pstar_t$,
which rearranges to $\Lambda_t/\pstar_t \leq 1/(\gamma(1-\pstar_t)) \leq 1/(1-\pstar_t)$.
Therefore the max equals the second term:
\[
  \max\!\left(\frac{\Lambda_t}{\pstar_t},\; \frac{1}{1-\pstar_t}\right)
  \;=\; \frac{1}{1-\pstar_t}.
\]
Since $C_A \geq C'$, we conclude
$\lambda_t \leq \frac{C'\,\ln(t/\beta)}{(1-\pstar_t)(\gamma-1)^2}
\leq \frac{C_A\,\ln(t/\beta)}{(\gamma-1)^2} \cdot \frac{1}{1-\pstar_t}
= \frac{C_A\,\ln(t/\beta)}{(\gamma-1)^2}\max(\Lambda_t/\pstar_t,\; 1/(1-\pstar_t))$.

\emph{Tail bound.}
Since $\bar\lambda_t \geq \ln(4/\beta_t)$ by definition,
and $N_t \sim \Po(\lambda_t)$ with $\lambda_t \leq \bar\lambda_t$ on $\calG$,
by stochastic dominance $N_t \preceq \Po(\bar\lambda_t)$ on $\calG$.
By \Cref{lem:PoissonChernoffMultiplicative} (upper tail),
$\Pr[\Po(\bar\lambda_t) > e\,\bar\lambda_t] \leq e^{-\bar\lambda_t}$.
Since $\bar\lambda_t \geq \ln(4/\beta_t)$,
$e^{-\bar\lambda_t} \leq \beta_t/4$.
Therefore $\Pr[N_t > e\,\bar\lambda_t \mid \calG] \leq \beta_t/4$.

\medskip\noindent\textbf{Item 2.}
The proof proceeds in four stages: we use the Pareto decomposition from \Cref{thm:GlobalUtility} to write $\sum_t \lambda_t$ as a weighted sum of i.i.d.\ Pareto random variables, then apply truncation, Chebyshev, and Poisson concentration.

\emph{Setup.}
By Item~3 of \Cref{thm:GlobalUtility},
on the event $\calG_0 = \{\eta_1 \leq \mu_1\}$
(which holds with probability at least $1-\beta/4$),
we have $\lambda_t \leq \tilde{A}_t\,w_t$ for all $t$,
where $w_t \sim \Par(\sigma_t)$ independently across steps.
Since $\eps_t = \eps_1$ is constant,
$\sigma_t = \eps/(\eps_1(\theta+2)) = \sigma$ for all $t$,
so the $w_t$ are i.i.d.\ $\Par(\sigma)$.
Since $\pstar_t = \pstar$ is constant and
$(4/\beta)^{(1+2/\theta)\eps_t/\eps} = (4/\beta)^{(1+2/\theta)\eps_1/\eps} = \Lambda_0$
for all $t$, the expression for $\tilde{A}_t$ from \Cref{thm:GlobalUtility} specializes to
\[
  \tilde{A}_t \;=\; \frac{C_A\,\ln(t/\beta)}{(\gamma-1)^2}
  \cdot \max\!\left(\frac{\Lambda_0}{\pstar},\;
  \frac{1}{1-\pstar}\right).
\]
We bound $\tilde{A}_t$ from above and below.
First, since $\ln(t/\beta) \leq \ln(T/\beta)$ for $t \leq T$,
\[
  \tilde{A}_t \;\leq\; \frac{C_A\,\ln(T/\beta)}{(\gamma-1)^2}
  \cdot \max\!\left(\frac{\Lambda_0}{\pstar},\;
  \frac{1}{1-\pstar}\right) \;=\; R.
\]
Second, since $C_A \geq 2$,
$(\gamma-1)^2 \leq 1$ (as $\gamma \in (1,2]$),
$\max(\Lambda_0/\pstar,\; 1/(1-\pstar)) \geq 1$
(as $\Lambda_0 \geq 1$ and $\pstar \leq 1$),
and $\ln(t/\beta) \geq \ln(1/\beta) \geq 1$ (as $\beta \leq 1/e$),
we have
\[
  \tilde{A}_t \;\geq\; C_A \cdot 1 \cdot 1 \cdot 1 \;\geq\; 2.
\]
Define $B := \sum_{t=1}^T \tilde{A}_t$ and $A_{\max} := \max_{t \leq T} \tilde{A}_t$.
By the two properties above, $A_{\max} \leq R$ and $B \leq T\,R$.

\emph{Truncation.}
Define $W := (16T/\beta)^{1/\sigma}$ and $\tilde{w}_t := \min(w_t, W)$.
Since $\Pr[w_t > W] = W^{-\sigma} = \beta/(16T)$ for each $t$,
a union bound gives
$\Pr[\exists\,t \leq T : w_t > W] \leq T \cdot \beta/(16T) = \beta/16$.
On the complementary event, $w_t = \tilde{w}_t$ for all $t \leq T$, so
$\sum_{t=1}^T \lambda_t \leq \sum_{t=1}^T \tilde{A}_t\,w_t = \sum_{t=1}^T \tilde{A}_t\,\tilde{w}_t$.

\emph{Chebyshev.}
Since the $\tilde{w}_t$ are i.i.d.\ and the $\tilde{A}_t$ are deterministic on $\calG_0$:
\begin{align*}
  \Ex\!\left[\sum_{t=1}^T \tilde{A}_t\,\tilde{w}_t\right]
  &\;=\; \sum_{t=1}^T \tilde{A}_t\,\Ex[\tilde{w}_t]
  \;\leq\; \frac{\sigma}{\sigma-1}\,B,
\end{align*}
where we used $\Ex[\tilde{w}_t] \leq \sigma/(\sigma-1)$ from \Cref{lem:ParetoFacts}~(3a).
For the variance, since the $\tilde{w}_t$ are independent:
\begin{align*}
  \Var\!\left(\sum_{t=1}^T \tilde{A}_t\,\tilde{w}_t\right)
  &\;=\; \sum_{t=1}^T \tilde{A}_t^2\,\Var(\tilde{w}_t)
  \;\leq\; \Var(\tilde{w}) \cdot A_{\max} \cdot \sum_{t=1}^T \tilde{A}_t
  \;=\; \Var(\tilde{w}) \cdot A_{\max} \cdot B,
\end{align*}
where $\Var(\tilde{w}) := \Var(\tilde{w}_1)$ is bounded by \Cref{lem:ParetoFacts}~(3b).
By Chebyshev's inequality,
\[
  \Pr\!\left[\sum_{t=1}^T \tilde{A}_t\,\tilde{w}_t
  > \frac{\sigma}{\sigma-1}\,B + u\right]
  \;\leq\; \frac{\Var(\tilde{w}) \cdot A_{\max} \cdot B}{u^2}.
\]
Setting the right-hand side equal to $\beta/16$ and solving:
$u := 4\sqrt{\Var(\tilde{w}) \cdot A_{\max} \cdot B / \beta}$.
Since $A_{\max} \leq R$ and $B \leq T\,R$,
$u^2 \leq 16\,\Var(\tilde{w})\,T\,R^2/\beta$.
By \Cref{lem:ParetoFacts}~(3b),
$\Var(\tilde{w}) \leq \frac{2}{2-\sigma}W^{2-\sigma}$ when $1 < \sigma < 2$,
$\Var(\tilde{w}) \leq 2\ln W + 1$ when $\sigma = 2$,
and $\Var(\tilde{w}) \leq \sigma/(\sigma-2)$ when $\sigma > 2$.
Substituting $W = (16T/\beta)^{1/\sigma}$ and dividing by $T$:
$u/T = O(R \cdot T^{1/\sigma - 1}\,\beta^{-1/\sigma})$
when $1 < \sigma < 2$ (and strictly smaller when $\sigma \geq 2$).

\emph{Poisson concentration.}
Recall $B := \sum_{t=1}^T \tilde{A}_t$ and define $M := \frac{\sigma}{\sigma-1}B + u$.
On the event that $\sum_{t=1}^T \lambda_t \leq M$
(which holds with probability at least $1 - \beta/16 - \beta/16$
by the truncation and Chebyshev bounds),
the total sample count
$N_{\leq T} = \sum_{t=1}^T N_t \sim \Po(\sum_{t=1}^T \lambda_t) \preceq \Po(M)$
by Poisson additivity and stochastic dominance.
Since $T \geq 2\ln(16/\beta)$,
we have $M/4 \geq T/2 \geq \ln(16/\beta)$, so
$e^{-M/4} \leq e^{-T/2} \leq \beta/16$.
Therefore $\Pr[\Po(M) > 2M] \leq e^{-M/4} \leq \beta/16$,
where the first inequality is by \Cref{lem:PoissonChernoffBounds} (upper tail) with $u = M$.
On the complementary event:
\[
  \frac{N_{\leq T}}{T} \;\leq\; \frac{2M}{T}
  \;=\; \frac{2\sigma}{\sigma-1} \cdot \frac{B}{T} + \frac{2u}{T}
  \;\leq\; \frac{2\sigma}{\sigma-1}\,R
  + O\!\left(R \cdot T^{1/\sigma-1}\,\beta^{-1/\sigma}\right).
\]

\emph{Failure probability.}
The total failure probability is the sum of four terms:
$\Pr[\calG_0^c] \leq \beta/4$,
$\Pr[\text{truncation fails}] \leq \beta/16$,
$\Pr[\text{Chebyshev fails}] \leq \beta/16$, and
$\Pr[\Po(M) > 2M] \leq \beta/16$.
The total is at most $\beta/4 + 3\beta/16 < \beta/2$.
\end{proof}

\subsection{\GTM with Ex-post privacy guarantee}

The drawback of the choice of parameters $\tau_1$, $\tau_{2,t}$, and $\mu_1$ in \Cref{thm:GlobalUtility} is the worst-case polynomial decay in accuracy forced by the $\Lambda_t$  factor, which equals essentially $(t/\beta)^{\Theta(\eps_t/\eps)}$ (see~\Cref{thm:GlobalUtility}). In this subsection, we present a different choice of parameters leading to different tradeoffs.
More specifically, \Cref{thm:ExPost} below circumvents this poor scaling with the length of the stream by allowing the ratio of the output privacy parameter $\eps$ to the input privacy parameter $\eps_t$ to scale as $\ln (t/\beta)$. With this scaling, the $\Lambda_t$ slack factor is bounded from above by a constant.
We describe \Cref{thm:ExPost} as an ex-post privacy loss guarantee, but it can be achieved in two ways: either (a) the input privacy parameter is forced to decay logarithmically or (b) the output privacy parameter is allowed to grow logarithmically. Both these settings have natural applications. 

The setting in (a) is most reasonable for our application in reducing optimization under continual observation to the batch setting, as described later in Section~\ref{sec:batchtocontinual}, and gives us a standard uniform privacy loss bound. 

The  setting in (b) is the standard setting for ex-post privacy loss guarantees, and is well suited for potential applications in hyperparameter optimization. It allows \Cref{alg:GTM} to achieve a privacy loss bound that scales only logarithmically with the length of the stream. Further, if the whole stream is processed without halting at any step, then the privacy loss is constant instead of growing with the input length.

\begin{theorem}[Ex-post privacy guarantee]\label{thm:ExPost}
Fix a gap parameter $\gamma \in (1,2]$, a failure probability $\beta \in (0,1)$, privacy parameters $\eps_C$ and a constant $C_H > 0$. Define 
\[
  \Lambda_t \;:=\; \left(\frac{4}{\beta}\right)^{\!\eps_t/\eps_C} \cdot e^{2/C_H}.
\]
Define the rejection threshold
\[
  \bar{p}_t \;:=\; \max\!\left(\frac{\pstar_t}{\gamma\Lambda_t},\;\; 1 - \gamma\Lambda_t(1-\pstar_t)\right).
\]

There exists a parameter setting for Algorithm~\ref{alg:GTM} for which we achieve an \emph{ex-post} privacy guarantee with privacy loss function \[\tilde{\eps}((a_1,\dots,a_t)) = \begin{cases} \eps_C + C_H \eps_t \ln (4/\beta_t) &\mbox{ if }a_t = s_t \\ \eps_C &\mbox{ if }a_t = -s_t \end{cases}.\] 
Further, with probability $1-\beta$, for all $t\geq1$,
\begin{enumerate}
\item \textbf{Halting.}
At every step $t$ where $p_t \geq \pstar_t$, the algorithm halts.

\item \textbf{Continuation.}
The algorithm continues past every step $t$ satisfying $p_t \leq \bar{p}_t$.

\item \textbf{Sample complexity.}
The number of evaluations of the mechanism $\calM_t$ at each step satisfies
\[
  N_t \;\leq\;
  \begin{cases}
    \displaystyle O\!\left(\frac{\Lambda_t\,\ln(t/\beta)}{ \pstar_t\,(\gamma-1)^2}\right)
    & \text{if } \gamma\Lambda_t(1-\pstar_t) > \pstar_t, \\[6pt]
    \displaystyle O\!\left(\frac{\ln(t/\beta)}{(1-\pstar_t)\,(\gamma-1)^2}\right)
    & \text{if } \gamma\Lambda_t(1-\pstar_t) \leq \pstar_t.
  \end{cases}
\]
\end{enumerate}
\end{theorem}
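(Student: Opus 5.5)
Instantiate \Cref{alg:GTM} with step-dependent noise scales: $1/\tau_1 = \eps_C$ and $\mu_1 = \tau_1\ln(4/\beta)$, and a per-step scale $\tau_{2,t}$ with $2/\tau_{2,t} = C_H\eps_t\ln(4/\beta_t)$, i.e. $\tau_{2,t} = 2/(C_H\eps_t\ln(4/\beta_t))$. The direction, threshold and base rate ($s_t,c_t,\rho_t$) are chosen exactly as in the proof of \Cref{thm:GlobalUtility}: the flipping rule compares $\gamma\Lambda_t(1-\pstar_t)$ with $\pstar_t$, $c_t$ comes from \Cref{lem:ParameterSelection} with error parameters $(\beta_t/4,\beta_t/4)$, and $\rho_t = \lamL{c_t}{\beta_t/4}/\pstar_t$ or $\lamR{c_t}{\beta_t/4}/(1-\pstar_t)$.

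\textbf{Privacy.} We apply \Cref{lem:GeneralizedPrivacy} directly, since it already holds with per-step scales $\tau_{2,t}$.
\begin{itemize}
\item For an all-continuation prefix, \eqref{eq:privacy_continue} gives loss $1/\tau_1 = \eps_C$.
\item For a transcript halting at $t$, \eqref{eq:privacy_halt} gives $1/\tau_1 + 2/\tau_{2,t} = \eps_C + C_H\eps_t\ln(4/\beta_t)$.
\end{itemize}
These are the per-outcome ratios, which is exactly the ex-post loss function claimed. One point needs checking: the lemma's argument for a transcript ending at $t$ only uses the shift on $\eta_{2,t}$ at the halting step, so a step-dependent $\tau_{2,t}$ causes no issue.

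\textbf{Noise control and accuracy.} We apply \Cref{lem:NoiseControl} with $c_\calG = 1/4$. Then $\Pr[\calG^c] \le e^{-\mu_1/\tau_1} + \beta/4 = \beta/2$. On $\calG$ the multiplicative range of $\lambda_t$ is
\[
\exp\bigl(\eps_t\mu_1 + \eps_t\tau_{2,t}\ln(4/\beta_t)\bigr) = (4/\beta)^{\eps_t/\eps_C}\, e^{2/C_H},
\]
because $\eps_t\tau_{2,t}\ln(4/\beta_t) = 2/C_H$. This matches the stated $\Lambda_t$, and the cancellation of the $\ln(4/\beta_t)$ factor is the whole point of the construction. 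The halting and continuation claims then follow verbatim from \Cref{lem:ParameterSelection} items 1 and 2, in both the direct and flipped variants. The flipping-rule algebra from \Cref{thm:GlobalUtility} yields the rejection threshold $\bar p_t = \max(\cdot,\cdot)$. A union bound over Type I and Type II errors, each at most $\sum_t\beta_t/4 = \beta/4$, brings the total failure probability to $\beta$.

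\textbf{Sample complexity.} On $\calG$ we have $\lambda_t \le \rho_t\Lambda_t$ in the direct case and $\lambda_t \le \rho_t$ in the flipped case. The base-rate bound of \Cref{lem:ParameterSelection} gives $\rho_t = O(\ln(t/\beta)/((\gamma-1)^2\pstar_t))$, respectively with $1-\pstar_t$ in place of $\pstar_t$. Turning this into a bound on $N_t$ itself requires a high-probability Poisson tail, as in \Cref{lem:sample_complexity}(1): pad $\lambda_t$ up to at least $\ln(4/\beta_t)$, which is harmless since the lower bound on $\rho_t$ already exceeds it, and use $\Pr[N_t > e\bar\lambda_t] \le \beta_t/4$. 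This uses additional failure probability, so the budget must be rebalanced, e.g. by taking $c_\calG$ and the per-step errors to be $\beta_t/8$, which only changes constants. I expect this rebalancing to be the main bookkeeping obstacle, because the statement fixes $\Lambda_t$ with the constant $4$, so the splits must be chosen to keep that form (for instance by allotting $\beta/4$ to $\eta_1$ as stated and the remaining $3\beta/4$ among the $\eta_{2,t}$ tails, the errors and the sample tails). The rest is direct reuse of earlier lemmas.
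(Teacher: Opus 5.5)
Your proposal is correct and follows the paper's proof essentially step for step: the same $\tau_1,\mu_1,\tau_{2,t}$, privacy read off \Cref{lem:GeneralizedPrivacy}, $\Lambda_t$ obtained from the cancellation $\eps_t\tau_{2,t}\ln(4/\beta_t)=2/C_H$, and a Poisson tail bound on $N_t$. The bookkeeping you flag is resolved in the paper along the lines of your second suggestion: $c_\calG=1/4$ is kept, and the remaining $\beta/2$ is split as $\beta_t/6$ each for Type~I errors, Type~II errors and the event $N_t\ge e\bar\lambda_t$. Keeping $c_\calG=1/4$ is in fact forced, because $\tau_{2,t}$ is fixed by the privacy loss and any smaller $c_\calG$ inflates the $e^{2/C_H}$ factor, so your ``$\beta_t/8$ for $c_\calG$'' option would not reproduce the stated $\Lambda_t$.
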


Note that, as in the case of \Cref{thm:GlobalUtility}, the gap parameter $\gamma$ controls the trade-off between the rejection threshold and the per-step sample complexity. The sample complexity grows as $\ln(t/\beta)$ per step, identical to \Cref{thm:GlobalUtility}; what improves is the gap between the thresholds $\pstar_t$ and $\bar{p}_t$.

\begin{proof}[Proof of \Cref{thm:ExPost}]
The proof follows the same structure as \Cref{thm:GlobalUtility}, with the following parameter changes. In the initialization of \Cref{alg:GTM}, set $1/\tau_1 = \eps_C$ and $\mu_1 = \ln(4/\beta)/\eps_C = \tau_1 \ln(4/\beta)$, so that $\Pr[\eta_1 > \mu_1] = \beta/4$. At each step $t$, set $\tau_{2,t} = 2/(C_H \eps_t\ln(4/\beta_t))$.

\medskip
\noindent \textbf{Privacy.}
The result follows using \Cref{lem:GeneralizedPrivacy} and observing that $1/\tau_1 = \eps_C$ and $1/\tau_1 + 2/\tau_{2,t} = \eps_C + C_H \eps_t\ln(4/\beta_t)$ for any $t \ge 1$.

\medskip
\noindent \textbf{Bounding $\lambda_t$.}
Apply \Cref{lem:NoiseControl} with $c_\calG = 1/4$. Since $e^{-\mu_1/\tau_1} = \beta/4$ and $c_\calG \sum_t \beta_t = \beta/4$, we have $\Pr[\calG^c] \leq \beta/2$. The choice of $\tau_{2,t}$ ensures exact cancellation:
\[
  \eps_t\,\tau_{2,t}\,\ln(4/\beta_t) \;=\; \frac{2}{C_H}, \qquad 
  \Lambda_t \;=\; \exp\!\bigl(\eps_t(\mu_1 + \tau_{2,t}\ln(1/(c_\calG \beta_t)))\bigr) \;=\; \left(\frac{4}{\beta}\right)^{\!\eps_t/\eps_C} \cdot e^{2/C_H}.
\]
\Cref{lem:NoiseControl} (part~(\ref{item:noiseConditionedonG})) shows that conditioned on $\calG$, for all time-steps $t$ and any choice of $\rho_t$, if $s_t = +1$ then $\lambda_t \in [\rho_t,\rho_t\,\Lambda_t]$, and if $s_t = -1$ then $\lambda_t \in [\rho_t/\Lambda_t, \rho_t]$.
Further, part~(\ref{item:haltingRange}) in \Cref{lem:NoiseControl} shows that, when $\eta_1 \leq \mu_1$, then $\lambda_t \geq \rho_t$ when $s_t = +1$ and $\lambda_t \leq \rho_t$ when $s_t = -1$.

\medskip
\noindent \textbf{Accuracy.} We analyze the case depending on $s_t=-1$ or $s_t=+1$ separately. 
When $s_t = +1$, then applying \Cref{lem:ParameterSelection} (case (\ref{item:aboveThreshold})) with target $\pstar_t$, error parameters $(\beta_t/6, \beta_t/6)$, gap $\gamma$, parameter $\Lambda_t$ gives us that there exists an integer $c_t$ such that for $\rho_t = \lamL{c_t}{\beta_t/6}/\pstar_t$, we have
\begin{enumerate}
    \item Halting: if $\lambda_t \geq \rho_t$ and $p_t \geq \pstar_t$, then $\Pr[K_t \leq c_t] \leq \beta_t/6$.
    \item Continuation: if $\lambda_t \leq \rho_t\Lambda_t$ and $p_t \leq \pstar_t/(\gamma\Lambda_t)$, then $\Pr[K_t \geq c_t + 1] \leq \beta_t/6$.
    \item Base rate: $\rho_t = O(\ln(6/\beta_t)/((\gamma-1)^2\pstar_t))$.
\end{enumerate}

Similarly, if $s_t = -1$, then \Cref{lem:ParameterSelection} (case~(\ref{item:belowThreshold})) applied to the negated mechanism with target $1-\pstar_t$, error parameters $(\beta_t/6, \beta_t/6)$, gap $\gamma$, parameter $\Lambda_t$ gives us that there exists an integer $c_t$ such that for $\rho_t = \lamR{c_t}{\beta_t/6}/(1-\pstar_t)$, we have
\begin{enumerate}
    \item Halting: if $\lambda_t \leq \rho_t$ and $1 - p_t \leq 1 - \pstar_t$, then $\Pr[K_t \geq c_t+1] \leq \beta_t/6$.
    \item Continuation: if $\lambda_t \geq \rho_t/\Lambda_t$ and $1 - p_t \geq \gamma\Lambda_t(1 - \pstar_t)$, then $\Pr[K_t \leq c_t] \leq \beta_t/6$.
    \item Base rate: $\rho_t = O(\ln(6/\beta_t)/((\gamma-1)^2(1-\pstar_t)))$.
\end{enumerate}

\medskip
\noindent\textbf{Flipping rule.}
The direct path guarantees continuation at $p_t \leq \pstar_t/(\gamma\Lambda_t)$ and the flipped path at $p_t \leq 1 - \gamma\Lambda_t(1-\pstar_t)$. When $\eps_t$ and $\pstar_t$ are constant, $\Lambda_t$ is time-independent, the flipping rule is the same at every step. Writing $x = \gamma\Lambda_t \geq 1$,
\[
  \frac{\pstar_t}{x} - \bigl(1 - x(1-\pstar_t)\bigr)
  \;=\; \frac{(1-\pstar_t)(x-1)(x - \pstar_t/(1-\pstar_t))}{x} = \frac{(1-\pstar_t)(x-1)}{x}\left(x - \frac{\pstar_t}{(1-\pstar_t)}\right).
\]

Since $x \geq 1$, the sign is determined by $x - \pstar_t/(1-\pstar_t)= \gamma \Lambda_t - \pstar_t/(1-\pstar_t)$. It follows that the direct path has a higher rejection threshold when $\gamma\Lambda_t(1-\pstar_t) \geq \pstar_t$, and the flipped path is better otherwise. The analyst sets $s_t = +1$ when $\gamma\Lambda_t(1-\pstar_t) > \pstar_t$ and $s_t = -1$ otherwise, achieving continuation at $p_t \leq \max(\pstar_t/(\gamma\Lambda_t),\; 1 - \gamma\Lambda_t(1-\pstar_t)) = \bar{p}_t$.

\medskip
\noindent \textbf{Sample complexity.}
Conditioned on $\calG$, $N_t \sim \Po(\lambda_t)$ with $\lambda_t \leq \bar\lambda_t$, where 
\[
\bar \lambda_t = \begin{cases}
    \rho_t\Lambda_t & s_t = +1 \\
    \rho_t & s_t =-1.
\end{cases}
\]

We claim that $\bar\lambda_t \geq \ln(6/\beta_t)$ in both cases. On the direct path, $$\bar\lambda_t \geq \lamL{c_t}{\beta_t/6}/\pstar_t \geq \lamL{c_t}{\beta_t/6} \geq c_t + \ln(6/\beta_t) \geq \ln(6/\beta_t),$$ where the third inequality is from \eqref{eq:lamL_def} and the fourth uses $c_t + 1 > 2\ln(6/\beta_t)$ from \Cref{lem:PoissonOverhead}(2). 

On the flipped path (selected only when $\pstar_t \geq 1/2$), the gap condition of \Cref{lem:PoissonOverhead}(1) gives $\lamR{c_t}{\beta_t/6} \geq \lamL{c_t}{\beta_t/6}/\gamma$, and since $\lamL{c_t}{\beta_t/6} \geq 2\ln(6/\beta_t)$ and $\gamma \leq 2$, we have 
    \[\bar\lambda_t = \lamR{c_t}{\beta_t/6}/(1-\pstar_t) \geq \ln(6/\beta_t)/(1-\pstar_t) \geq \ln(6/\beta_t).\]

Conditioned on $\calG$, $\lambda_t \leq \bar{\lambda}_t$, so by stochastic dominance of $\Po(\lambda_t)$ by $\Po(\bar{\lambda}_t)$, and the upper tail of the multiplicative Poisson Chernoff bound (\Cref{lem:PoissonChernoffMultiplicative}), $\Pr[N_t \geq e\,\bar\lambda_t \mid \calG] \leq e^{-\bar\lambda_t} \leq \beta_t/6$. Substituting the base rates, we get
\[
N_t = \begin{cases}
    O(\rho_t\Lambda_t) = \displaystyle O\left(\frac{\Lambda_t\ln(t/\beta)}{ \pstar_t(\gamma-1)^2}\right) & s_t = +1 \\
    \displaystyle  O\left(\frac{\ln(t/\beta) }{(1-\pstar_t)(\gamma-1)^2} \right) & s_t =-1.
\end{cases}
\]

\medskip
\noindent \textbf{Failure probability accounting.}
We apply a union bound across all points of failure - the probability of the event $\calG$ not holding, the probability of a type I or type II error at any time-step, and the probability of the sample complexity bound failing.
\[
  \underbrace{\Pr[\calG^c]}_{\leq\,\beta/2}
  \;+\; \underbrace{\sum_{t=1}^{\infty} \Pr[\text{Type I Error at }t \mid \calG]}_{\leq\,\beta/6}
  \;+\; \underbrace{\sum_{t=1}^{\infty}
    \Pr[\text{Type II Error at } t \mid \calG]}_{\leq\,\beta/6}
  \;+\; \underbrace{\sum_{t=1}^\infty \Pr[N_t \geq e\,\bar\lambda_t \mid \calG]}_{\leq\,\beta/6}
  \;\leq\; \beta.
\]
\end{proof}
\section{Purification}

In this section, we formally prove a simple purification lemma that allows us to convert approximate DP mechanisms to pure ones with minimal perturbation to their accuracy. As mentioned in the introduction, this technique of using randomized response as post-processing to purify approximate-DP mechanisms with finite co-domains is attributed to folklore by \cite{DBLP:journals/corr/abs-2211-11189}. We give complete proofs for ease of reference.

\begin{lem}[Purification]
\label{lem:ApproxToPure}
Let $\calM : \calX \to \{+1,-1\}$ be an $(\varepsilon_{1}, \delta_{1})$-DP mechanism, $X,X'\in\calX$ a pair of neighboring data sets, and 
\[ p = \Pr[\calM(X) = +1] \qquad p' = \Pr[\calM(X') = +1].\] 
Define the $\phi$-smoothed mechanism $\widetilde{\calM}:\calX \to \{+1,-1\}$ that for any input dataset $X''$ evaluates $a \gets \calM(X'')$ and passes $a$ through a binary symmetric channel with crossover probability $\phi$. That is, we draw an independent bit $b \sim \Ber(\phi)$ and output $-a$ if $b = 1$, and $a$ if $b = 0$. Let 
\[ \widetilde{p} = \Pr[\widetilde{\calM}(X) = +1] \qquad \widetilde{p}' = \Pr[\widetilde{\calM}(X') = +1].\]. The following statements are true:
\begin{enumerate}
    \item $\widetilde{p} = p + \phi(1 - 2p)$, and if $\phi\leq 1/2$, then $\widetilde{p} \geq \widetilde{p}' \Leftrightarrow p \geq p'$.
    \item Assuming $\phi \leq 1/2$, if for some  $c \ge 1$,
        \[ \phi \geq \frac{\delta_{1}}{e^{c\varepsilon_{1}} - 1 + 2\delta_{1}}\]
    then the $\phi$-smoothed mechanism $\widetilde{\calM}$ is $(c\cdot \varepsilon_{1},0)$-DP.
    \item In particular, setting 
    \[\phi=\frac{\delta_1}{e^{\eps_1} - 1 + 2 \delta_1}\] 
    we have that $\widetilde{\calM}$ is $\eps_1$-DP and that $|\widetilde{p} - p|  \leq \delta_1/\eps_1$.
\end{enumerate}
\end{lem}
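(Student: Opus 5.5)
The plan is a direct computation on the two-point output space $\{+1,-1\}$, carried out in the order of the three items.

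\emph{Item 1.} The smoothed mechanism outputs $+1$ either when $\calM$ outputs $+1$ and no flip occurs, or when $\calM$ outputs $-1$ and a flip occurs. Hence
\[
\widetilde p = (1-\phi)p + \phi(1-p) = p + \phi(1-2p).
\]
The map $p \mapsto \phi + (1-2\phi)p$ is affine with slope $1-2\phi \geq 0$ for $\phi \le 1/2$. It is therefore order-preserving, and strictly so when $\phi<1/2$. At $\phi = 1/2$ both sides equal $1/2$, so the stated equivalence holds with a non-strict order. In either case this gives $\widetilde p \ge \widetilde p' \Leftrightarrow p \ge p'$.

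\emph{Item 2.} By symmetry (swap $X,X'$, or swap the roles of the outcomes $\pm1$, which also replaces $p$ by $1-p$), it suffices to show $\widetilde p \le e^{c\eps_1}\widetilde p'$. I would proceed as follows.
\begin{enumerate}
\item Use approximate DP to get $p \le e^{\eps_1}p' + \delta_1 \le e^{c\eps_1}p'+\delta_1$, using $c \ge 1$. Write $E := e^{c\eps_1}$.
\item The target $\phi+(1-2\phi)p \le E(\phi+(1-2\phi)p')$ is implied by
\[
\phi + (1-2\phi)(Ep'+\delta_1) \le E\phi + E(1-2\phi)p',
\]
since $1-2\phi\geq 0$.
\item Cancelling the $p'$ terms reduces this to $(1-2\phi)\delta_1 \le (E-1)\phi$, i.e.\ $\phi \ge \delta_1/(E-1+2\delta_1)$, which is exactly the hypothesis.
\item If $p$ is already at most $Ep'$, the bound is immediate from monotonicity, so the $\min$ with $1$ in the DP constraint causes no trouble.
\end{enumerate}
The $-1$ outcome is handled by the identical computation with $1-p$ and $1-p'$, because the channel is symmetric and the DP inequality also holds for the event $\{-1\}$.

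\emph{Item 3.} Take $c=1$ and $\phi = \delta_1/(e^{\eps_1}-1+2\delta_1)$; then item 2 gives $\eps_1$-DP. Check $\phi \le 1/2$: this is equivalent to $2\delta_1 \le e^{\eps_1}-1+2\delta_1$, which is true. For accuracy, item 1 gives $|\widetilde p - p| = \phi|1-2p| \le \phi$. Finally,
\[
\phi \le \frac{\delta_1}{e^{\eps_1}-1} \le \frac{\delta_1}{\eps_1},
\]
using $e^{x}-1\ge x$.

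The only real care point is the reduction in item 2, specifically making sure the substitution of the DP upper bound for $p$ is valid. That step is legitimate only because the coefficient $1-2\phi$ is nonnegative, which is exactly where $\phi\le1/2$ is used.
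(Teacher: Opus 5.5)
Your proposal is correct and is essentially the paper's argument. The only difference in item 2 is cosmetic: you upgrade $e^{\eps_1}$ to $e^{c\eps_1}$ using $p'\ge 0$ before substituting, whereas the paper keeps $e^{\eps_1}$, rewrites in terms of $\widetilde p'$, and absorbs the slack $(e^{c\eps_1}-e^{\eps_1})\widetilde p'$ at its minimum $\widetilde p'=\phi$. Both routes reach the same condition $\delta_1\le\phi(e^{c\eps_1}-1+2\delta_1)$, and you additionally treat the $-1$ outcome and verify $\phi\le 1/2$ in item 3, which the paper leaves implicit.

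One correction to your item 1: at $\phi=1/2$ you get $\widetilde p=\widetilde p'=1/2$ regardless of $p,p'$. So the implication $\widetilde p\ge\widetilde p'\Rightarrow p\ge p'$ fails whenever $p<p'$, and your claim that the equivalence still holds there is wrong. This defect is in the statement itself (the paper's proof glosses over it too), so the equivalence should be claimed only for $\phi<1/2$.
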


\begin{proof}
\begin{enumerate}
    \item By the definition of $\widetilde{\calM}$, the output probabilities are convex combinations:
\begin{align}
    \begin{split}
        \widetilde{p} &= (1-\phi)p + \phi(1-p) = (1-2\phi)p + \phi.
    \end{split}
\label{eq:convex_combination_probabilities}
\end{align}
    The mapping $p\mapsto\widetilde{p}$ is affine. If $\phi\leq 1/2$, then the coefficient of $p$ is non-negative, and this mapping is order preserving, so $\widetilde{p} \geq \widetilde{p}' \Leftrightarrow p \geq p'$.
    \item Since $\calM$ is $(\eps_1,\delta_1)$-DP, $p \leq e^{\eps_1} p' + \delta_{1}$. Since $\phi \leq 1/2$, we are guaranteed that $1-2\phi\geq0$, and it follows that
    \begin{align*}
        \widetilde{p} &\leq (1-2\phi)(e^{\eps_1} p' + \delta_1) + \phi \\
        &= e^{\eps_1} \big((1-2\phi)p'\big) + \delta_1(1-2\phi) + \phi.
    \end{align*}
    Using the relation $(1-2\phi)p' = \widetilde{p}' - \phi$, we rewrite this entirely in terms of the smoothed probabilities:
    \begin{align*}
        \widetilde{p} &\leq e^{\eps_1} (\widetilde{p}' - \phi) + \delta_1(1-2\phi) + \phi \\
        &= e^{\eps_1}\widetilde{p}' + \delta_1(1-2\phi) - \phi(e^{\eps_1} - 1).
    \end{align*}
    To ensure strict $c\eps_1$-indistinguishability for any $c \ge 1$, we require $\widetilde{p} \leq e^{c\eps_1} \widetilde{p}'$. Thus, it is sufficient that the residual terms satisfy
    $$ \delta_1(1-2\phi) - \phi(e^{\eps_1} - 1) \leq (e^{c\eps_1} - e^{\eps_1})\widetilde{p}'. $$
    Since $c \geq 1$, the right-hand side is non-negative and monotonically increasing with respect to $\widetilde{p}'$. Therefore, it suffices for the inequality to hold at the minimum possible valid value, which is $\widetilde{p}' = \phi$.
    \begin{align*}
    \delta_1(1-2\phi) - \phi(e^{\eps_1} - 1) &\leq (e^{c\eps_1} - e^{\eps_1})\phi \\
    \Leftrightarrow \delta_1 - 2\phi\delta_1 - \phi e^{\eps_1} + \phi &\leq \phi e^{c\eps_1} - \phi e^{\eps_1} \\
    \Leftrightarrow \delta_1 &\leq \phi(e^{c\eps_1} - 1 + 2\delta_1).
    \end{align*}
    Rearranging for $\phi$ yields the generalized bound.
    \item Substituting $c=1$ gives us the relevant expression for $\phi$. The bound on the perturbation follows from a direct calculation:
        \[ |\tilde{p} - p| = |\phi(1-2p)| \leq \frac{\delta_1}{e^{\eps_1} - 1 + 2 \delta_1} \leq \delta_1/\eps_1.\]
\end{enumerate}
\end{proof}

\begin{cor}[Approximate DP inputs]\label{cor:approxDP}
Suppose each mechanism $\calM_t$ is $(\eps_t, \delta_t)$-DP with $\delta_t \geq 0$. For each $t$, apply \Cref{lem:ApproxToPure} with $c=1$ and crossover $\phi_t = \delta_t/(e^{\eps_t} - 1 + 2\delta_t)$ to obtain a pure $\eps_t$-DP mechanism $\widetilde{\calM}_t$. Run \Cref{alg:GTM} on the purified stream with targets $(\pstar_t)_{t\geq1}$. Then \Cref{thm:GlobalUtility,thm:ExPost} apply to the purified stream, and their guarantees transfer to the original success probabilities $p_t$ as follows. Write $\phi_t < \delta_t/\eps_t$ for the purification slack at step $t$.

\begin{enumerate}
\item \textbf{Halting} (\Cref{thm:GlobalUtility,thm:ExPost}). The algorithm halts at every step $t$ where $p_t \geq \pstar_t + \phi_t$.

\item \textbf{Continuation} (\Cref{thm:GlobalUtility,thm:ExPost}). The algorithm continues past every step $t$ where $p_t \leq \bar{p}_t - \phi_t$, where $\bar{p}_t$ is the rejection threshold from the relevant theorem.

\item \textbf{Privacy and sample complexity} are unchanged.
\end{enumerate}

Since $\phi_t \leq \delta_t/\eps_t$, the perturbation to the acceptance and rejection thresholds $\pstar_t$ and $\pbar_t$ is asymptotically negligible and all bounds match their pure-DP counterparts up to lower-order terms.
\end{cor}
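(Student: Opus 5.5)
The proof is a reduction to the pure-DP guarantees already established, applied to the purified stream $(\widetilde{\calM}_t)_{t\geq1}$. For each $t$, \Cref{lem:ApproxToPure}(3) with crossover $\phi_t=\delta_t/(e^{\eps_t}-1+2\delta_t)$ shows that $\widetilde{\calM}_t$ is pure $\eps_t$-DP. Since $e^{\eps_t}-1\geq\eps_t$, we have $\phi_t\leq\delta_t/\eps_t$. We may also assume $\phi_t\leq 1/2$; otherwise $\delta_t/\eps_t>1/2$ and the claimed threshold shifts are vacuous.

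Next we check that $\widetilde{\calM}_t$ is a legitimate input to \Cref{alg:GTM}. The map $X\mapsto\widetilde{\calM}_t(X)$ is a post-processing of $\calM_t$ using fresh independent coins. Each evaluation inside the \GTM is therefore an i.i.d.\ draw of $\widetilde{\calM}_t(X_t)$, with success probability $\widetilde{p}_t=(1-2\phi_t)p_t+\phi_t$. Adaptivity is unaffected, because the analyst's choices still depend only on prior outputs.

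\textbf{Privacy.} \Cref{lem:GeneralizedPrivacy}, via \Cref{lem:PointwiseShifts}, needs only that each input be pure $\eps_t$-DP in the sense $e^{-\eps_t}\widetilde{p}'_t\leq\widetilde{p}_t\leq e^{\eps_t}\widetilde{p}'_t$. The purified mechanism satisfies this, so the $\eps$-DP guarantee of \Cref{thm:GlobalUtility} and the ex-post guarantee of \Cref{thm:ExPost} hold verbatim.

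\textbf{Sample complexity.} It is unchanged because $N_t\sim\Po(\lambda_t)$ and $\lambda_t$ does not depend on the mechanism at all. Each evaluation of $\widetilde{\calM}_t$ costs exactly one evaluation of $\calM_t$.

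\textbf{Accuracy transfer.} With probability $1-\beta$, the theorems applied to $\widetilde{p}_t$ give two guarantees: the mechanism halts whenever $\widetilde{p}_t\geq\pstar_t$, and it continues whenever $\widetilde{p}_t\leq\bar{p}_t$. By \Cref{lem:ApproxToPure}(1), $\widetilde{p}_t-p_t=\phi_t(1-2p_t)\in[-\phi_t,\phi_t]$, so $|\widetilde{p}_t-p_t|\leq\phi_t$.
\begin{itemize}
\item If $p_t\geq\pstar_t+\phi_t$, then $\widetilde{p}_t\geq p_t-\phi_t\geq\pstar_t$, so the mechanism halts.
\item If $p_t\leq\bar{p}_t-\phi_t$, then $\widetilde{p}_t\leq\bar{p}_t$, so the mechanism continues.
\end{itemize}
This works for both the direct and flipped variants. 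The flipped variant feeds $-\widetilde{\calM}_t$ with success probability $1-\widetilde{p}_t$, and the same shift bound holds for it.

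The only point I expect to need care is the claim that nothing else changes. The flipping rule and the parameters $c_t,\rho_t$ depend only on $\pstar_t,\Lambda_t,\beta$, not on the mechanism, so they are identical. The failure event is the same union bound, and the $\phi_t$ shift is deterministic given $p_t$, so it consumes no extra failure probability.
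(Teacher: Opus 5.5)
Your proposal is correct and follows the paper's proof. Both purify, apply the pure-DP theorems to $\widetilde{p}_t$, transfer the halting and continuation conditions via $|\widetilde{p}_t - p_t| \le \phi_t$ and the triangle inequality, and note that sample complexity is unchanged because each purified draw costs one evaluation of $\calM_t$; your remarks on privacy, the flipped variant and failure-probability accounting are correct elaborations. The one quibble is the aside about assuming $\phi_t \le 1/2$: it holds automatically since $e^{\eps_t}-1+2\delta_t \ge 2\delta_t$, and your justification (that otherwise the shifts are vacuous) is not right in general, so replace that sentence with this observation.
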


\begin{proof}
By \Cref{lem:ApproxToPure}(3), the purified success probability satisfies $|\widetilde{p}_t - p_t| \leq \phi_t \leq \delta_t/\eps_t$. Items~1 and~2 follow from the triangle inequality: $p_t \geq \pstar_t + \phi_t$ implies $\widetilde{p}_t \geq p_t - \phi_t \geq \pstar_t$, and $p_t \leq \bar{p}_t - \phi_t$ implies $\widetilde{p}_t \leq p_t + \phi_t \leq \bar{p}_t$. Sample complexity is unchanged since each purified evaluation requires exactly one evaluation of $\calM_t$ plus one independent coin flip.
\end{proof}
\section{Lower Bounds}\label{sec:lower_bounds}

We establish lower bounds for the generalized private tester by reduction from the classical Above Threshold problem~\cite[Chapter 3]{dwork2014algorithmic}. We first prove a per-step lower bound for Above Threshold, then lift it to a multi-step bound by an averaging argument. We finally transfer it to the generalized testing problem via the Laplace mechanism. 

\medskip
\noindent\textbf{Standard threshold test.}
Let $\eps>0, \delta \ge 0$ and let $\calX$ denote the data universe with a fixed symmetric adjacency relation. Let $X \in \calX$ be the private input of the testing mechanism.
A \emph{(standard) threshold test} receives a sensitivity-$1$ query $f : \calX \to \mathbb{R}$, a public threshold $\tau \in \mathbb{R}$, and outputs $a \in \{-1, +1\}$.
We consider a sequence of such queries $f_1, f_2, \ldots$ presented in an online fashion, where $a_t = +1$ denotes halting and $a_t = -1$ denotes continuation.
We say an $\eps$-differentially private threshold test has \emph{global accuracy} $(\alpha_+, \alpha_-, \beta)$ if, with probability at least $1 - \beta$ over the algorithm's randomness, for all $t\geq1$:
\[
    f_t(X) \geq \tau + \alpha_+ \implies a_t = +1, \qquad \qquad
    f_t(X) \leq \tau - \alpha_- \implies a_t = -1.
\]

We first show that for two datasets $X_0$ and $X_1$ such that the probability of $M(X_0)=1$ is clearly less than 1/2 and the probability of $M(X_1)=1$ is clearly above 1/2, then $X_0$ and $X_1$ must have ``large enough'' distance $d$ in the adjacency relation.

\begin{defn}
    \begin{enumerate}
        \item Let $M : \calX \to \{-1, +1\}$ be an $(\eps, \delta)$-differentially private mechanism. 
        \item Let $X_0 = Y_0, Y_1, \ldots, Y_d = X_1$ be a chain of neighboring datasets with $Y_j$ and $Y_{j+1}$ neighboring for each $j$. 
        \item Define $h_j := \Pr[M(Y_j) = +1]$, $\delta' := \delta/(e^{\eps}-1)$, $u_j := h_j + \delta'$, and $v_j := (1 - h_j) + \delta'$. Note $u_j + v_j = 1 + 2\delta'$ for all $j$, and $\delta + \delta' = e^\eps\,\delta'$.
    \end{enumerate}    
\end{defn}

\begin{lem}[Acceptance probability propagation]\label{lem:acceptance_propagation}
For all $j \in \{0, \ldots, d-1\}$:
\begin{enumerate}
\item $e^{-\eps}\, u_j \leq u_{j+1} \leq e^{\eps}\,u_j$.
\item $e^{-\eps}\, v_j \leq \,v_{j+1} \leq e^{\eps}\, v_j$.
\end{enumerate}
\end{lem}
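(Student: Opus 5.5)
The plan is to derive each inequality directly from the $(\eps,\delta)$-DP guarantee applied to the neighboring pair $(Y_j, Y_{j+1})$, used in both directions since adjacency is symmetric. Then we absorb the additive $\delta$ into the shift $\delta'$ using the identity $\delta + \delta' = e^{\eps}\delta'$. This identity holds because $\delta' = \delta/(e^\eps - 1)$ gives $e^\eps\delta' - \delta' = \delta$.

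For item (1), fix $j$ and apply the DP guarantee to the output $+1$ with $Y_{j+1}$ as the first dataset and $Y_j$ as the second. This gives $h_{j+1} \le e^{\eps} h_j + \delta$. Adding $\delta'$ to both sides yields
\[
u_{j+1} = h_{j+1} + \delta' \le e^{\eps} h_j + \delta + \delta' = e^{\eps} h_j + e^{\eps}\delta' = e^{\eps} u_j .
\]
Exchanging the roles of $Y_j$ and $Y_{j+1}$ gives $u_j \le e^{\eps} u_{j+1}$, which is the left inequality $e^{-\eps} u_j \le u_{j+1}$.

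For item (2), we repeat the same computation for the output $-1$. Here $\Pr[M(Y_j) = -1] = 1 - h_j$, and the DP guarantee gives $1 - h_{j+1} \le e^{\eps}(1 - h_j) + \delta$. Adding $\delta'$ and using the same identity gives $v_{j+1} \le e^{\eps} v_j$. Symmetry of adjacency then supplies the reverse bound.

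There is no real obstacle. The only point requiring care is the algebraic identity $\delta + \delta' = e^{\eps}\delta'$, which is what makes the shifted quantities $u_j, v_j$ satisfy a pure multiplicative bound. We also need $\eps > 0$ so that $\delta'$ is well defined.
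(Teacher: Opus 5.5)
Your proposal is correct and follows the paper's proof essentially verbatim: apply the $(\eps,\delta)$-DP inequality for each output value to the neighboring pair $(Y_j, Y_{j+1})$, add $\delta'$ to both sides, and use $\delta + \delta' = e^{\eps}\delta'$ to obtain a pure multiplicative bound. Swapping the roles of $Y_j$ and $Y_{j+1}$ then gives the reverse inequalities. The only difference is cosmetic: for item~2 you derive $v_{j+1} \le e^{\eps} v_j$ first, while the paper derives $v_j \le e^{\eps} v_{j+1}$ first, and each then gets the other direction by symmetry.
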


\begin{proof}
\emph{Item 1.} Since $Y_{j}$ and $Y_{j+1}$ are neighboring, and $M$ is $(\eps,\delta)$-DP, $h_{j+1} \leq e^\eps\,h_{j} + \delta$, so $u_{j+1} = h_{j+1} + \delta' \leq e^\eps\,h_{j} + \delta + \delta' = e^\eps(h_{j} + \delta') = e^\eps\,u_{j}$. The other bound follows by flipping the role of $Y_j$ and $Y_{j+1}$.

\emph{Item 2.} Similarly, by reasoning about the $-1$ output, we have that $(1 - h_j) \leq e^\eps(1 - h_{j+1}) + \delta$, so $v_j = (1-h_j) + \delta' \leq e^\eps(1-h_{j+1}) + \delta + \delta' = e^\eps((1-h_{j+1}) + \delta') = e^\eps\,v_{j+1}$. As before, the other bound follows by flipping the role of $Y_j$ and $Y_{j+1}$.
\end{proof}

\begin{lem}\label{lem:chain_LB}
If $\Pr[M(X_0) = +1] \leq \beta^{\tI}$ and $\Pr[M(X_1) = +1] \geq 1 - \beta^{\tII}$ for some $\beta^{\tI}, \beta^{\tII} \geq 0$ with $\beta^{\tI} + \delta' < 1/2$ and $\beta^{\tII} + \delta' < 1/2$, then
\begin{align}\label{eq:chain_LB}
    d \;\geq\; \frac{1}{\eps}\left(\log\frac{1}{2(\beta^{\tI} + \delta')} + \log\frac{1}{2(\beta^{\tII} + \delta')}\right) - 1.
\end{align}
In particular, when $\delta = 0$ this reduces to $d \geq \frac{1}{\eps}(\log\frac{1}{2\beta^{\tI}} + \log\frac{1}{2\beta^{\tII}}) - 1$.
\end{lem}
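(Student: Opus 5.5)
The plan is to run the two multiplicative propagation bounds of \Cref{lem:acceptance_propagation} along the chain in opposite directions, and then combine them through the conservation identity $u_j + v_j = 1 + 2\delta'$. The quantity $u_j$ starts small at $X_0$ and must become large, while $v_j$ ends small at $X_1$. Each can change by a factor of at most $e^{\eps}$ per step. Both must therefore travel from a value of order $\beta + \delta'$ up to a value of order $1/2$, and the two journeys share a common midpoint index. That shared midpoint is what lets the two logarithms add.

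First, I record the endpoint values. From the hypotheses, $u_0 = h_0 + \delta' \leq \beta^{\tI} + \delta'$ and $v_d = (1-h_d) + \delta' \leq \beta^{\tII} + \delta'$, and both of these are $< 1/2$. Next, I choose the crossing index $k := \min\{ j : u_j \geq 1/2\}$. This index exists and satisfies $k \geq 1$: we have $u_d = 1 + 2\delta' - v_d > 1/2$, and $u_0 < 1/2$. By minimality, $u_{k-1} < 1/2$. The identity $u + v = 1 + 2\delta'$ then gives $v_{k-1} > 1/2 + 2\delta' \geq 1/2$.

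Iterating item~1 of \Cref{lem:acceptance_propagation} from $0$ to $k$ gives
\[ \tfrac12 \leq u_k \leq e^{k\eps} u_0 \leq e^{k\eps}(\beta^{\tI}+\delta'), \]
so $k \geq \frac{1}{\eps}\log\frac{1}{2(\beta^{\tI}+\delta')}$. Iterating item~2 from $k-1$ to $d$ gives
\[ \tfrac12 < v_{k-1} \leq e^{(d-k+1)\eps} v_d \leq e^{(d-k+1)\eps}(\beta^{\tII}+\delta'), \]
so $d-k+1 \geq \frac{1}{\eps}\log\frac{1}{2(\beta^{\tII}+\delta')}$. Adding the two inequalities and subtracting $1$ yields \eqref{eq:chain_LB}. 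Setting $\delta = 0$ makes $\delta' = 0$, which gives the stated special case.

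The only delicate point is the bookkeeping at the crossing. The two chains use the adjacent indices $k$ and $k-1$, and this overlap is exactly what produces the $-1$ in the bound. I also need to check strictness carefully so that the crossing index lies in $\{1,\dots,d\}$; this follows from $u_0 < 1/2 < u_d$, which uses both hypotheses $\beta^{\tI}+\delta' < 1/2$ and $\beta^{\tII}+\delta' < 1/2$. Everything else is direct iteration of \Cref{lem:acceptance_propagation}.
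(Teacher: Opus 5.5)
Your proof is correct and is essentially the paper's argument: the paper picks $k$ with $u_k \le 1/2 \le u_{k+1}$, runs the $u$-chain from $0$ up to $k+1$, and runs the $v$-chain from $k$ to $d$. This is your argument with the crossing index shifted by one, and your explicit choice $k = \min\{j : u_j \ge 1/2\}$ makes the existence of the crossing slightly more transparent than the paper's appeal to the multiplicative bound.
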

\begin{proof}
By assumption, $u_0 \leq \beta^{\tI} + \delta' < 1/2$.
Since $\beta^{\tII} + \delta' < 1/2$, we have $u_d = h_d + \delta' \geq (1 - \beta^{\tII}) + \delta' > 1/2$.
Since $u_{j+1} \leq e^{\eps}\,u_j$, there exists $k \in \{0, \ldots, d-1\}$ with $u_k \leq 1/2 \leq u_{k+1}$.
From $u_k + v_k = 1 + 2\delta'$, it follows that $v_k \geq 1/2 + 2\delta' \geq 1/2$.

Applying $u_{j+1} \leq e^{\eps}\,u_j$ from $j = 0$ to $j = k$ gives
\[
    \frac{1}{2} \;\leq\; u_{k+1} \;\leq\; e^{(k+1)\eps}\,u_0 \;\leq\; e^{(k+1)\eps}(\beta^{\tI} + \delta'),
\]
so $k + 1 \geq \frac{1}{\eps}\ln\frac{1}{2(\beta^{\tI} + \delta')}$.

By assumption, $v_d \leq \beta^{\tII} + \delta'$.
Applying $v_j \leq e^{\eps}\,v_{j+1}$ from $j = k$ to $j = d-1$ gives
\[
    \frac{1}{2} \;\leq\; v_k \;\leq\; e^{(d-k)\eps}\,v_d \;\leq\; e^{(d-k)\eps}(\beta^{\tII} + \delta'),
\]
so $d - k \geq \frac{1}{\eps}\ln\frac{1}{2(\beta^{\tII} + \delta')}$.
Adding the two gives $d + 1 = (k+1) + (d-k) \geq \frac{1}{\eps}(\log\frac{1}{2(\beta^{\tI} + \delta')} + \log\frac{1}{2(\beta^{\tII} + \delta')})$.
\end{proof}

We next show a lower bound on any $(\eps, \delta)$-differentially private standard threshold test
\begin{thm}[Lower bound for standard private threshold testers]\label{thm:AT_global_LB}
Let $\calA$ be an $(\eps, \delta)$-differentially private standard threshold test with global accuracy $(\alpha_+, \alpha_-, \beta)$ over $T$ steps.
Define $\delta' := \delta/(e^{\eps} - 1)$.
If $\beta + \delta' < 1/2$, then
\begin{align}\label{eq:AT_global_LB}
    \alpha_+ + \alpha_- \;\geq\; \frac{1}{\eps}\left(\log\frac{T}{2(\beta + T\delta')} + \log\frac{1}{2(\beta + \delta')}\right) - 2.
\end{align}
\end{thm}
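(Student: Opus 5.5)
The proof goes by reduction to \Cref{lem:chain_LB}. Take the data universe to be $\mathbb{Z}$, with $x,y$ neighboring iff $|x-y|=1$, and build a single binary $(\eps,\delta)$-DP mechanism from $\calA$. The plan is to find one step $k$ at which $\calA$ halts with probability only about $\beta/T$ on a ``low'' dataset, but with probability at least $1-\beta$ on a ``high'' dataset. The chain lemma then says the two datasets are far apart, and we arrange for their distance to be about $\alpha_++\alpha_-$.

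\textbf{Step 1 (averaging over steps).} Run $\calA$ on the all-constant query sequence $f_j\equiv \tau-\alpha_-$ for $j=1,\dots,T$. This sequence is data-independent and has sensitivity $0\le 1$. Let $q_j$ be the probability that $\calA$ halts exactly at step $j$. The halting events are disjoint, and by global accuracy every query is at most $\tau-\alpha_-$, so any halt is an error. Hence $\sum_{j\le T} q_j\le\beta$, and some $k\le T$ has $q_k\le\beta/T$.

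\textbf{Step 2 (planted query).} Now define the modified sequence $f_j\equiv\tau-\alpha_-$ for $j\ne k$ and $f_k(x)=x$, which again has sensitivity $1$. Set $X_0=\tau-\alpha_-$, and let $X_1$ be the least point of the form $X_0+d$ with $X_1\ge\tau+\alpha_+$. Then $d=\lceil\alpha_++\alpha_-\rceil\le\alpha_++\alpha_-+1$ if the lattice is aligned at $X_0$, or, with the integer-rounding convention, $d\le \alpha_++\alpha_-+2-1$; this is the source of the additive $-2$. Let $M(x)=+1$ iff $\calA$ halts at step $k$ on input $x$.
\begin{itemize}
\item $M$ is $(\eps,\delta)$-DP, because it is post-processing of the transcript.
\item On $X_0$ every query value equals $\tau-\alpha_-$, so the view coincides with the baseline run and $\Pr[M(X_0)=+1]=q_k\le\beta/T$.
\item On $X_1$, with probability at least $1-\beta$ there is no error. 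In that event all steps $j<k$ continue, since those queries are low, and step $k$ halts, since $f_k(X_1)\ge\tau+\alpha_+$. So $\Pr[M(X_1)=+1]\ge 1-\beta$.
\end{itemize}

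\textbf{Step 3 (apply the chain bound).} Apply \Cref{lem:chain_LB} with $\beta^{\tI}=\beta/T$ and $\beta^{\tII}=\beta$. Its hypotheses hold because $\beta/T+\delta'\le\beta+\delta'<1/2$. It gives
\[
d\ \ge\ \frac1\eps\Big(\log\tfrac{T}{2(\beta+T\delta')}+\log\tfrac{1}{2(\beta+\delta')}\Big)-1,
\]
using $\beta/T+\delta'=(\beta+T\delta')/T$. Combining with $d\le\alpha_++\alpha_-+1$ yields \eqref{eq:AT_global_LB}.

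The main obstacle is Step~2's careful bookkeeping. First, $k$ must be chosen from the baseline run before the planted query is introduced. Second, one must confirm that the transcript up to step $k$ on $X_0$ is identical in distribution to the baseline run, which holds because the queries agree pointwise at $X_0$. Third, $d$ must be bounded by $\alpha_++\alpha_-+O(1)$ so that it matches the stated constant $-2$.
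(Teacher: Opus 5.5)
Your overall architecture matches the paper's: average the halting times on an all-low instance to find a step $k$ with halting probability at most $\beta/T$, plant a high query at step $k$, and apply \Cref{lem:chain_LB} with $\beta^{\tI}=\beta/T$ and $\beta^{\tII}=\beta$. Your Step~3 arithmetic is also correct.

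\textbf{The gap.} Step~2 claims that on $X_0$ ``the view coincides with the baseline run.'' A standard threshold test receives the query \emph{functions} $f_j$, not just the values $f_j(X)$. Replacing the constant query $f_k\equiv\tau-\alpha_-$ by $f_k(x)=x$ therefore changes the input to $\calA$, even though the two functions agree at $X_0$. The prefix through step $k-1$ is unchanged, but the conditional halting probability at step $k$ need not be. So $\Pr[M(X_0)=+1]=q_k$ is unjustified.

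\textbf{Why it matters.} This is not a technicality. Consider the following $\calA$:
\begin{itemize}
\item It answers constant (data-independent) queries deterministically and correctly, at no privacy cost.
\item Using a data-independent coin of bias $\beta/2$, it halts at the first non-constant query.
\item Otherwise it runs AboveThreshold with failure budget $\beta/2$ over the non-constant queries.
\end{itemize}
This mixture is still $(\eps,\delta)$-DP and globally accurate, with $\alpha_\pm=O(\tfrac1\eps\log(T/\beta))$. In your baseline $q_k=0$ for every $k$, yet on your modified instance $\Pr[M(X_0)=+1]\ge\beta/2$. Your construction lets the algorithm recognize the planted step as the unique non-constant query. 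So the argument only yields $\beta^{\tI}\approx\beta$, and the $\log T$ term is lost.

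\textbf{The fix.} Commit to the whole query sequence before averaging, using disjoint supports, as the paper does:
\begin{itemize}
\item Take $\calX=(\tau-\alpha_-+\mathbb{Z})^T$, where neighbors differ by $1$ in a single coordinate, and set $f_j(x)=x_j$.
\item Average halting times on $X_0=(\tau-\alpha_-,\dots,\tau-\alpha_-)$ under \emph{this} fixed sequence. This gives $k$ with halting probability at step $k$ at most $\beta/T$.
\item Raise only coordinate $k$ by $d=\lceil\alpha_++\alpha_-\rceil\le\alpha_++\alpha_-+1$.
\end{itemize}
Now $M$ is defined by the same query sequence on both datasets, and all other queries stay at $\tau-\alpha_-$ on $X_1$. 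Your Step~3 then goes through verbatim. This choice of universe also removes the muddled lattice-alignment bookkeeping around $d$.
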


\begin{proof}
Fix queries $f_1, \ldots, f_T$ of sensitivity $1$ with disjoint support in the dataset, i.e. the entries of $X$ that affect $f_t$ are disjoint from those affecting $f_{t'}$ for $t \neq t'$. Recall that $\tau$ is the public threshold.
Let $X_0$ be a dataset with $f_t(X_0) = \tau - \alpha_-$ for all $t$.
For each $t \in [T]$, write $H_t (X)$ for the event that the algorithm when run on a dataset $X\in\calX$ reaches step $t$ and halts there, i.e., $a_{[t]} = (-1, \ldots, -1, +1)$.

When $\calA$ is run on $X_0$, global accuracy gives $\sum_{t=1}^T \Pr[H_t(X_0)] \leq \beta$, so by averaging there exists $t^* \in [T]$ with $\Pr[H_{t^*}(X_0)] \leq \beta/T$.

Let $X_1$ be the dataset obtained from $X_0$ by modifying only the entries in the support of $f_{t^*}$ so that $f_{t^*}(X_1) = \tau + \alpha_+$.
Since $f_{t^*}$ has sensitivity $1$, the datasets $X_0$ and $X_1$ are at distance $\lceil \alpha_+ + \alpha_- \rceil$ in the adjacency relation.
By the disjointness of the support, $f_t(X_1) = \tau - \alpha_-$ for all $t \neq t^*$.
When $\calA$ is run on $X_1$, global accuracy requires that no error occurs at any step with probability at least $1 - \beta$; since the only step with $f_t \geq \tau + \alpha_+$ is $t = t^*$, this implies $\Pr[H_{t^*}(X_1)] \geq 1 - \beta$.

Define the mechanism $M(X) := +1$ if $H_{t^*}(X)$ occurs when $\calA$ is run with dataset $X$ on the query sequence $f_1, \ldots, f_T$, and $M(X) := -1$ otherwise.
This mechanism is $(\eps, \delta)$-DP by post-processing.
Applying \Cref{lem:chain_LB} to $M$ with $\beta^{\tI} = \beta/T$, $\beta^{\tII} = \beta$, and $d = \lceil \alpha_+ + \alpha_- \rceil \leq \alpha_+ + \alpha_- + 1$ gives
\[
    \alpha_+ + \alpha_-
    \;\geq\; \frac{1}{\eps}\left(\log\frac{T}{2(\beta + T\delta')} + \log\frac{1}{2(\beta + \delta')}\right) - 2. \qedhere
\]
\end{proof}

\begin{cor}[Lower bound for generalized private testers]\label{cor:GTM_LB}
Fix a target threshold $\pstar \in (0,1)$, a constant input privacy parameter $\eps_1 > 0$, and an output privacy guarantee $(\eps, \delta)$.
Let $\calA$ be a generalized private tester (\Cref{def:tester}) for $\eps_1$-DP input mechanisms with global accuracy~$\beta \leq 1/2$ over $T$ steps.
Define $\delta' := \delta/(e^{\eps} - 1)$, suppose $\beta + \delta' < 1/2$, and let $\bar{p}$ denote the background threshold of the tester.
Define
\[
    \Lambda_0 \;:=\; e^{-2\eps_1} \cdot \left(\frac{T}{2(\beta + T\delta')} \cdot \frac{1}{2(\beta + \delta')}\right)^{\!\eps_1/\eps}.
\]
Then:
\begin{enumerate}
    \item If $\pstar, \bar{p} \leq 1/2$, then $\pstar/\bar{p} \geq \Lambda_0$.
    \item If $\pstar, \bar{p} \geq 1/2$, then $(1-\bar{p})/(1-\pstar) \geq \Lambda_0$.
    \item If $\bar{p} < 1/2 < \pstar$, then $1/(4\bar{p}(1-\pstar)) \geq \Lambda_0$.
\end{enumerate}
These bounds imply a simplified unified description in terms of the \emph{odds ratios} of $\pstar$ and $\pbar$ unconditionally.
\[ \frac{\pstar/(1-\pstar)}{\bar{p}/(1-\bar{p})} \;\geq\; \Lambda_0. \]
\end{cor}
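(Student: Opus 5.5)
The plan is to reduce the standard private threshold testing problem to generalized private testing using the Laplace mechanism, and then apply \Cref{thm:AT_global_LB}. Suppose $\calA$ is a generalized private tester with rejection threshold $\bar p$. Given sensitivity-$1$ queries $f_1,\dots,f_T$ and a public threshold $\tau$, define the binary input mechanisms
\[
\calM_t(X) := \operatorname{sign}\big(f_t(X) - \tau + \nu_t\big), \qquad \nu_t \sim \Lap(1/\eps_1),
\]
and use the fixed dataset $X_t = X$ at every step. Each $\calM_t$ is $\eps_1$-DP, because it post-processes the Laplace mechanism. Writing $x = f_t(X) - \tau$, its success probability is the Laplace CDF
\[
p_t = \pi(x) := \begin{cases} \tfrac12 e^{\eps_1 x}, & x < 0, \\ 1 - \tfrac12 e^{-\eps_1 x}, & x \geq 0, \end{cases}
\]
which is continuous and strictly increasing.

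Running $\calA$ on the streams $(\calM_t, X)$ gives a standard threshold test. The tester only sees i.i.d.\ samples of $\calM_t(X)$, which are a randomized function of $X$. By \Cref{def:tester}, applied with the constant sequence $X_j = X$ against the constant sequence $X'_j = X'$, the test is $(\eps,\delta)$-DP with respect to $X$.

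For accuracy, set $x^* := \pi^{-1}(\pstar)$ and $\bar x := \pi^{-1}(\bar p)$, and take $\alpha_+ := x^*$ and $\alpha_- := -\bar x$. If $f_t(X) \geq \tau + \alpha_+$, then $p_t \geq \pstar$ and $\calA$ halts. If $f_t(X) \leq \tau - \alpha_-$, then $p_t \leq \bar p$ and $\calA$ continues. Both hold simultaneously for all $t$ with probability $1-\beta$, which is exactly global accuracy $(\alpha_+,\alpha_-,\beta)$. \Cref{thm:AT_global_LB} then yields
\[
x^* - \bar x \;\geq\; \frac1\eps\, \mathcal L - 2, \qquad \mathcal L := \ln\!\Big(\frac{T}{2(\beta + T\delta')}\cdot\frac{1}{2(\beta+\delta')}\Big).
\]
Multiplying by $\eps_1$ and exponentiating gives $e^{\eps_1(x^* - \bar x)} \geq \Lambda_0$.

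It then remains to invert $\pi$ case by case.
\begin{enumerate}
\item If $\pstar, \bar p \leq 1/2$, then $\eps_1 x^* = \ln(2\pstar)$ and $\eps_1 \bar x = \ln(2\bar p)$, so $e^{\eps_1(x^*-\bar x)} = \pstar/\bar p$.
\item If both are at least $1/2$, the symmetric formula $\eps_1 x = -\ln(2(1-p))$ gives $e^{\eps_1(x^*-\bar x)} = (1-\bar p)/(1-\pstar)$.
\item In the mixed case $\bar p < 1/2 < \pstar$, we get $\eps_1(x^* - \bar x) = -\ln(2(1-\pstar)) - \ln(2\bar p)$, so $e^{\eps_1(x^*-\bar x)} = 1/(4\bar p(1-\pstar))$.
\end{enumerate}

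For the unified odds-ratio statement, check in each case that the displayed quantity is at most the odds ratio of $\pstar$ to $\bar p$. In case 1 the extra factor is $(1-\bar p)/(1-\pstar) \geq 1$ since $\bar p \leq \pstar$. Case 2 is symmetric. In case 3 we have $\pstar \geq 1/2$ and $1-\bar p \geq 1/2$, so $\pstar(1-\bar p) \geq 1/4$.

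I expect the main obstacle to be the hypotheses of \Cref{thm:AT_global_LB}. Its proof needs queries $f_t$ with disjoint supports that can take arbitrary real values $\tau - \alpha_-$ and $\tau + \alpha_+$, together with a path of $\lceil \alpha_+ + \alpha_-\rceil$ neighbors between them. I would simply instantiate the dataset universe as such a product of real-valued coordinates with unit-step adjacency; this is allowed since the lower bound only needs one hard instance. The constant $-2$ and the $e^{-2\eps_1}$ factor already account for the ceiling.

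Two smaller points need checking. First, if $\alpha_+$ or $\alpha_-$ is negative, the inequality still goes through because only the sum $\alpha_+ + \alpha_-$ enters the chain argument. Second, the strictness of the tester's halting and continuation conditions must match the non-strict inequalities in the definition of global accuracy.
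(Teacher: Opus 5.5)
Your proposal is correct and is essentially the paper's proof: the same reduction via the Laplace mechanism to \Cref{thm:AT_global_LB}, the same case-by-case inversion of the Laplace CDF, and the same odds-ratio comparison. The only cosmetic difference is that the paper shifts the threshold so that $\alpha_+ = 0$ and $\alpha_- = q(\pstar) - q(\bar p)$, whereas you keep the Laplace threshold at $\tau$ and let $\alpha_+ = x^*$ be possibly negative. This is harmless, since only $\alpha_+ + \alpha_-$ enters the chain argument, and you correctly carry the factor $1/\eps$ from the theorem, where the paper's displayed inequality writes $1/\eps_1$.
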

\begin{proof}
Fix sensitivity-$1$ queries $f_1, \ldots, f_T$ with disjoint support, a dataset $X$, and a parameter $\tau_L \in \mathbb{R}$.
For each $t$, define $\calM_t$ by the mechanism that when given input $X$, evaluates $f_t(X)$, draws $\nu\sim \Lap(1/\eps_1)$, and outputs $+1$ if $f_t(X) + \nu \geq \tau_L$, and $-1$ otherwise.
Each $\calM_t$ is $\eps_1$-DP, and disjoint support ensures that changing the entries affecting $f_{t^*}$ does not alter $\calM_t$ for $t \neq t^*$. 
The success probability $p_t = \Pr[\calM_t(X) = +1]$ satisfies
\[
    p_t \;=\;
    \begin{cases}
        \tfrac{1}{2}\,e^{-\eps_1(\tau_L - f_t(X))} & \text{if } f_t(X) \leq \tau_L, \\[4pt]
        1 - \tfrac{1}{2}\,e^{-\eps_1(f_t(X) - \tau_L)} & \text{if } f_t(X) > \tau_L.
    \end{cases}
\]
Note that $p_t \le 1/2$ iff $f_t(X) \le \tau_L$.
For $p \in (0,1)$, define the quantile function $Q(p) := \tau_L + q(p)$ where
\[
    q(p) \;:=\;
    \begin{cases}
        \frac{1}{\eps_1}\log(2p) & \text{if } p \leq 1/2, \\[4pt]
        \frac{1}{\eps_1}\log\frac{1}{2(1-p)} & \text{if } p > 1/2.
    \end{cases}
\]
It follows that $p_t = p$ exactly when $f_t(X) = Q(p)$.
Note that $q(p)$ is continuous, strictly increasing, and satisfies $q(1/2) = 0$.

Note that $f_t(X) = Q(\pstar)$ gives $p_t = \pstar$
and $f_t(X) = Q(\bar{p})$ gives $p_t = \bar{p}$.
With $\alpha_+ = 0$ and $\alpha_- = Q(\pstar) - Q(\bar{p}) = q(\pstar) - q(\bar{p})$, feeding the mechanisms $\calM_t$ into the generalized tester $\calA$ yields an $(\eps, \delta)$-DP (standard) threshold test with global accuracy $\beta$ over $T$ steps.
Applying \Cref{thm:AT_global_LB}:
\begin{align}\label{eq:GTM_LB_q}
    q(\pstar) - q(\bar{p})  = \alpha_+ + \alpha_- \;\geq\; \frac{1}{\eps_1}\left(\log\frac{T}{2(\beta + T\delta')} + \log\frac{1}{2(\beta + \delta')}\right) - 2.
\end{align}
Multiplying both sides by $\eps_1$ and exponentiating, the right-hand side becomes $\Lambda_0$.
We now evaluate $\eps_1(q(\pstar) - q(\bar{p}))$ in each case. Note that by the definition of the generalized private testing  the case $\pstar < \bar{p}$ cannot happen.

\medskip
\noindent\emph{Case 1: $\pstar, \bar{p} \leq 1/2$.}
\[
    \eps_1(q(\pstar) - q(\bar{p})) \;=\; \log(2\pstar) - \log(2\bar{p}) \;=\; \log\frac{\pstar}{\bar{p}}.
\]
Therefore $\pstar/\bar{p} \geq \Lambda_0$.

\medskip
\noindent\emph{Case 2: $\pstar, \bar{p} \geq 1/2$.}
\[
    \eps_1(q(\pstar) - q(\bar{p})) \;=\; \log\frac{1}{2(1-\pstar)} - \log\frac{1}{2(1-\bar{p})} \;=\; \log\frac{1-\bar{p}}{1-\pstar}.
\]
Therefore $(1-\bar{p})/(1-\pstar) \geq \Lambda_0$.

\medskip
\noindent\emph{Case 3: $\bar{p} < 1/2 < \pstar$.}
\[
    \eps_1(q(\pstar) - q(\bar{p})) \;=\; \log\frac{1}{2(1-\pstar)} - \log(2\bar{p}) \;=\; \log\frac{1}{4\bar{p}(1-\pstar)}.
\]
Therefore $1/(4\bar{p}(1-\pstar)) \geq \Lambda_0$.

\medskip
\noindent\emph{Unified bound via the odds ratio.}
In all three cases, the per-case bound implies a lower bound on the odds ratio $\frac{\pstar/(1-\pstar)}{\bar{p}/(1-\bar{p})}$. In Case 1, since $\pstar \geq \bar{p}$ and both are at most $1/2$, we have $\frac{1-\bar{p}}{1-\pstar} \geq 1$, so
\[
  \frac{\pstar/(1-\pstar)}{\bar{p}/(1-\bar{p})} \;=\; \frac{\pstar}{\bar{p}} \cdot \frac{1-\bar{p}}{1-\pstar} \;\geq\; \frac{\pstar}{\bar{p}} \;\geq\; \Lambda_0.
\]
In Case 2, since $\pstar \geq \bar{p}$ and both are at least $1/2$, we have $\frac{\pstar}{\bar{p}} \geq 1$, so
\[
  \frac{\pstar/(1-\pstar)}{\bar{p}/(1-\bar{p})} \;=\; \frac{\pstar}{\bar{p}} \cdot \frac{1-\bar{p}}{1-\pstar} \;\geq\; \frac{1-\bar{p}}{1-\pstar} \;\geq\; \Lambda_0.
\]
In Case 3, since $\pstar > 1/2$ and $\bar{p} < 1/2$, we have $\pstar(1-\bar{p}) > 1/4$, so
\[
  \frac{\pstar/(1-\pstar)}{\bar{p}/(1-\bar{p})} \;=\; \frac{\pstar(1-\bar{p})}{\bar{p}(1-\pstar)} \;>\; \frac{1}{4\bar{p}(1-\pstar)} \;\geq\; \Lambda_0.
\]
Therefore, in all cases,
\begin{align}\label{eq:odds_ratio_LB}
  \frac{\pstar/(1-\pstar)}{\bar{p}/(1-\bar{p})} \;\geq\; \Lambda_0.
\end{align}
\end{proof}

\subsection{Sample complexity lower bounds}

We now turn to lower bounds on the number of evaluations that any generalized private tester must perform.

\begin{thm}[Expected sample complexity lower bound]\label{thm:sample_complexity_LB}
Fix $\pstar \in (0, 1/2]$, a constant input privacy parameter $\eps_1 > 0$, and output privacy parameters $(\eps, \delta)$ with $\delta' := \delta/(e^\eps - 1)$. Let $\calA$ be a generalized private tester (\Cref{def:tester}) with per-step Type~I and Type~II error probabilities $\beta_t^{\tI}$ and $\beta_t^{\tII}$ at step $t$, with $\beta_t^{\tI} \leq 1/4$ and $\beta_t^{\tII} + \delta' < 1/2$.

Then there exists an input instance on which the expected number of evaluations of $\calM_t$ at step $t$ satisfies
\[
  \Ex[N_t] \;\geq\; \frac{e^{-\eps_1}}{4\,\pstar\,(2(\beta_t^{\tII} + \delta'))^{\eps_1/\eps}}.
\]
When $\delta = 0$, this simplifies to $\Ex[N_t] \geq \frac{e^{-\eps_1}}{4\,\pstar\,(2\beta_t^{\tII})^{\eps_1/\eps}}$.
\end{thm}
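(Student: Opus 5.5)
We will prove the bound with a hard instance that has two neighbors. The key point is that, at step $t$, the tester only sees i.i.d.\ samples of $\calM_t(X_t)$. So if $p_t$ is very small, the tester sees $-1$ on every evaluation with probability about $\Ex[(1-p_t)^{N_t}]$, and in that case it cannot tell $p_t$ apart from $0$. On the other hand, privacy forces the halting probability on the "$p_t$ small" instance to be at least a fixed fraction, since a chain argument of the type in \Cref{lem:chain_LB} links it to the instance with $p_t=\pstar$. These two facts pull against each other and give the lower bound on $\Ex[N_t]$.

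\textbf{Construction.} We use a chain of neighboring datasets $Y_0,\dots,Y_d$. We take Laplace-threshold mechanisms as in the proof of \Cref{cor:GTM_LB}, applied to a sensitivity-$1$ query, with mechanism $\calM_t$ chosen so that $p_t(Y_0)=\pstar$. Moving one step along the chain lowers $f_t$ by $1$, so $p_t(Y_j)=\pstar e^{-j\eps_1}$ for $\pstar\le 1/2$. We make all earlier steps trivially rejecting, for example by using the constant mechanism $-1$ with $p=0\le\bar p$, so we can concentrate on step $t$.

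\textbf{Step 1 (privacy).} On $Y_0$ the tester halts at step $t$ with probability at least $1-\beta_t^{\tII}$. Let $M$ be the post-processed bit "halt at $t$", and apply \Cref{lem:acceptance_propagation} to $v_j=1-h_j+\delta'$ in the direction of increasing $j$. This gives $v_j\le e^{j\eps}(\beta_t^{\tII}+\delta')$. Choose $j^*=\lfloor \tfrac1\eps\ln\tfrac{1}{2(\beta_t^{\tII}+\delta')}\rfloor$. Then $v_{j^*}\le 1/2$, so $h_{j^*}\ge 1/2+\delta'\ge 1/2$. Hence on $Y_{j^*}$ the tester halts with probability at least $1/2$, while $p_t(Y_{j^*})=\pstar e^{-j^*\eps_1}\ge \pstar e^{-\eps_1}\,(2(\beta_t^{\tII}+\delta'))^{\eps_1/\eps}=:p_\circ$.

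\textbf{Step 2 (information).} Compare $Y_{j^*}$ with an instance where $\calM_t\equiv -1$, so $p_t=0$. On that instance halting is a Type~I error (since $0\le\bar p$), so it has probability at most $\beta_t^{\tI}\le 1/4$. Couple the two runs: the tester's randomness and the step-$t$ count $N_t$ are the same in both, because the tester cannot see anything before it draws samples, and any adaptivity only affects the later samples. The runs differ only on the event $E$ that some evaluation returns $+1$ on $Y_{j^*}$. Then $1/2-1/4\le \Pr[E]\le \Ex[N_t]\,p_t(Y_{j^*})$ by a union bound over evaluations. Rearranging gives $\Ex[N_t]\ge 1/(4p_t(Y_{j^*}))$.

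\textbf{Combining and the main obstacle.} Combining Steps 1 and 2 ties $\Ex[N_t]$ to $p_t(Y_{j^*})$, and this is where the obstacle lies. Step 2 needs an upper bound on $p_t(Y_{j^*})$ to produce the claimed expression, while Step 1 as stated produces a lower bound. So we must choose $j^*$ as the smallest index at which the halting probability reaches $1/2$. Then $j^*\ge \tfrac1\eps\ln\tfrac{1}{2(\beta_t^{\tII}+\delta')}-1$, which follows from the propagation bound for $v$ read in the other direction as in \Cref{lem:chain_LB}. We then take the instance at index $j^*$, or $j^*-1$ to handle the floor, which costs the factor $e^{-\eps_1}$. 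This gives $p_t\le \pstar e^{\eps_1}(2(\beta_t^{\tII}+\delta'))^{\eps_1/\eps}$ together with halting probability $\ge 1/2$. Substituting into Step 2 gives the stated bound. The remaining care is in the coupling with adaptive, randomized $N_t$, where the bound $\Pr[E]\le p\,\Ex[N_t]$ follows from Wald's identity.
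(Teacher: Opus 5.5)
Your route is the paper's: a Laplace chain with $p_t^{(j)}=\pstar e^{-j\eps_1}$, and forward propagation of $v_j$ to get halting probability at least $1/2$ at $j^*=\lfloor L/\eps\rfloor$, where $L=\ln\frac{1}{2(\beta_t^{\tII}+\delta')}$. You then compare with a near-null instance so that halting with no $+1$ sample costs at most $\beta_t^{\tI}$, and use Wald's identity to turn $\Pr[K_t\ge 1]\ge 1/4$ into $\Ex[N_t]\ge 1/(4p)$. Two points need fixing.

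\emph{The main obstacle is not there, and your fix is wrong as stated.} The floor gives $j^*>L/\eps-1$. Hence $p_t(Y_{j^*})=\pstar e^{-j^*\eps_1}<\pstar e^{\eps_1}(2(\beta_t^{\tII}+\delta'))^{\eps_1/\eps}$, at the same index where Step~1 already gives $h_{j^*}\ge 1/2$. Substituting into Step~2 finishes the proof, and this is exactly the paper's last line. Your lower bound on $p_t(Y_{j^*})$ is simply irrelevant.

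By contrast, the smallest index at which the halting probability reaches $1/2$ is $j=0$, since $h_0\ge 1-\beta_t^{\tII}>1/2$. That choice only yields $\Ex[N_t]\ge 1/(4\pstar)$. What you would want is the largest index with $h_j\ge 1/2$, or the first index with $h_j<1/2$ minus one. Its lower bound comes from the same forward propagation $v_j\le e^{j\eps}v_0$ as in Step~1, not from reading the chain in the other direction as in \Cref{lem:chain_LB}.

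\emph{The comparison instance in Step~2 changes the mechanism.} You replace $\calM_t$ by the constant mechanism $-1$. But the mechanism is public, so the tester's execution path on an all-$(-1)$ sample sequence may depend on it. Your claim that the two runs coincide off $E$ is then unjustified.

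The paper keeps the Laplace mechanism fixed and only moves the dataset far down the chain, to some $p'\le\bar p_t$. It then uses that $\Pr[a_t=s_t,K_t=0\mid p]=\Ex_R[(1-p)^{n_\bot(R)}\mathbf{1}[a_\bot(R)=s_t]]$ is non-increasing in $p$, so this joint probability at $p_{j^*}$ is at most $\beta_t^{\tI}$. Equivalently, you can use a monotone coupling (sample $i$ is $+1$ iff $U_i<p$ versus $U_i<p'$) together with Wald's identity. This gives $\Pr[a_t=s_t\mid p]\le\beta_t^{\tI}+p\,\Ex[N_t\mid p]$, which is all your argument needs.
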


\begin{proof}
We use the Laplace mechanism construction from the proof of \Cref{cor:GTM_LB}. Fix sensitivity-$1$ queries $f_1, \ldots, f_T$ with disjoint support and a threshold $\tau_L$, and define $\calM_t(X) = \sign(f_t(X) + \Lap(1/\eps_1) - \tau_L)$. As in that proof, construct a chain of datasets $X^{(0)}, X^{(1)}, \ldots, X^{(k)}$ differing only in the support of $f_t$, with $X^{(j)}$ and $X^{(j+1)}$ neighboring, $f_t(X^{(j)}) = f_t(X^{(0)}) - j$, and $f_{t'}(X^{(j)}) = f_{t'}(X^{(0)})$ for all $t' \neq t$. Choosing $f_t(X^{(0)}) = \tau_L + \frac{1}{\eps_1}\ln(2\pstar)$ gives $p_t^{(j)} = \pstar\,e^{-j\eps_1} \leq 1/2$ for all $j \geq 0$. Write $h_j := \Pr[a_t = s_t \mid X_t = X^{(j)}]$ and $v_j := (1-h_j) + \delta'$.

\medskip\noindent\textbf{Step 1: propagation of acceptance probability.}
The mapping $X_t \mapsto a_t$ is $(\eps, \delta)$-DP by the definition of a generalized private tester (\Cref{def:tester}), and post-processing. The Type~II accuracy guarantee gives $h_0 \geq 1 - \beta_t^{\tII}$, so $v_0 \leq \beta_t^{\tII} + \delta'$. Set $k = \lfloor \frac{1}{\eps}\ln\frac{1}{2(\beta_t^{\tII}+\delta')}\rfloor$, so that $e^{k\eps} \leq \frac{1}{2(\beta_t^{\tII}+\delta')}$. Iterating Item~2 of \Cref{lem:acceptance_propagation} gives $v_k \leq e^{k\eps}\,v_0 \leq e^{k\eps}(\beta_t^{\tII}+\delta') \leq 1/2$. It follows that $h_k = 1 + \delta' - v_k \geq 1 + \delta' - 1/2 \geq 1/2$.

\medskip\noindent\textbf{Step 2: bounding the joint probability of halting with no positive evaluations.}
Let $K_t := \sum_{i=1}^{N_t} \mathbf{1}[y_{t,i} = +1]$ denote the number of $+1$ outcomes among the $N_t$ evaluations. Let $R$ denote all internal randomness of the algorithm at step $t$. Conditioned on $R = r$ and the event that all evaluation outcomes are $-1$, the algorithm follows a deterministic execution path: it performs some number of evaluations $n_\bot(r)$ and produces an output $a_\bot(r) \in \{-s_t, s_t\}$, both determined by $r$ alone. We can write
\begin{align*}
  \Pr[a_t = s_t,\, K_t = 0 \mid p_t,\, R = r] &\;=\; \Pr[ K_t = 0 \mid p_t,\, R = r] \cdot \Pr[a_t = s_t\mid p_t,\,R = r,\, K_t = 0]
\end{align*}
Since $\Pr[ K_t = 0 \mid p_t,\, R = r] = (1 - p_t)^{n_\bot(r)}$ and $\Pr[a_t = s_t\mid p_t,\,R = r,\, K_t = 0] = \mathbf{1}[a_\bot(r) = s_t]$, by the law of total probability over $R$,
\begin{align*}
  \Pr[a_t = s_t,\, K_t = 0 \mid p_t] &\;=\; \Ex_R\!\left[(1 - p_t)^{n_\bot(R)}\,\mathbf{1}[a_\bot(R) = s_t]\right].
\end{align*}
Since $(1-p_t)^{n_\bot(R)}$ is non-increasing in $p_t$ for each $R$, the joint probability is non-increasing in $p_t$. In the Laplace construction, $p'$ can be made arbitrarily small (and in particular smaller than $\bar{p}_t$) by taking $f_t(X)$ sufficiently small. For any such $p' \leq \bar{p}_t$, the Type~I accuracy guarantee gives $\Pr[a_t = s_t \mid p'] \leq \beta_t^{\tI}$. Therefore:
\begin{align}\label{eq:joint_bound}
  \Pr[a_t = s_t,\, K_t = 0 \mid p_k] \;\leq\; \Pr[a_t = s_t,\, K_t = 0 \mid p'] \;\leq\; \Pr[a_t = s_t \mid p'] \;\leq\; \beta_t^{\tI}.
\end{align}

\medskip\noindent\textbf{Step 3: lower bound on $\Ex[K_t \mid p_k]$.}
Decomposing the acceptance probability at $p_t = p_k$:
\begin{align*}
  \frac{1}{2} \;\leq\; h_k
  &\;=\; \Pr[a_t = s_t,\, K_t = 0 \mid p_k] + \Pr[a_t = s_t,\, K_t \geq 1 \mid p_k] \\
  &\;\leq\; \beta_t^{\tI} + \Pr[K_t \geq 1 \mid p_k].
\end{align*}
Therefore $\Pr[K_t \geq 1 \mid p_k] \geq 1/2 - \beta_t^{\tI} \geq 1/4$, and since $K_t$ is a non-negative integer-valued random variable:
\[
  \Ex[K_t \mid p_k] \;\geq\; \Pr[K_t \geq 1 \mid p_k] \;\geq\; \frac{1}{4}.
\]

\medskip\noindent\textbf{Step 4: from $\Ex[K_t]$ to $\Ex[N_t]$.}
Conditioned on the algorithm's internal randomness $R=r$, the evaluations of $\calM_t(X_t)$ are i.i.d.\ $\Ber(p_k)$, and $N_t$ is a stopping time with respect to the filtration generated by the evaluation outcomes. By Wald's equation (see, e.g., \cite[Theorem~12.3]{DBLP:books/daglib/0012859}),
\[
  \Ex[K_t \mid p_k,\,R = r] \;=\; p_k \cdot \Ex[N_t \mid p_k,\,R = r].
\]
Taking expectations over $R$:
\[
  \Ex[K_t \mid p_k] \;=\; p_k \cdot \Ex[N_t \mid p_k].
\]
Combining with Step 3:
\[
  \Ex[N_t \mid p_k] \;=\; \frac{\Ex[K_t \mid p_k]}{p_k} \;\geq\; \frac{1}{4\,p_k}.
\]
Since $k \geq \frac{1}{\eps}\ln\frac{1}{2(\beta_t^{\tII}+\delta')} - 1$:
\[
  p_k \;=\; \pstar\,e^{-k\eps_1} \;\leq\; \pstar\,e^{\eps_1}\,(2(\beta_t^{\tII}+\delta'))^{\eps_1/\eps},
\]
and the result follows.
\end{proof}

\begin{remark}[Comparison with the amortized upper bound]
With $\delta = 0$ and global accuracy $\beta$, the Type~II error satisfies $\beta_t^{\tII} \leq \beta$ at every step, so \Cref{thm:sample_complexity_LB} gives $\Ex[N_t] = \Omega(1/(\pstar\,\beta^{\eps_1/\eps}))$ per step. The amortized sample complexity upper bound (\Cref{lem:sample_complexity}), for $\sigma \geq 2$) gives $N_{\leq T}/T = O(\frac{\sigma}{\sigma-1}\cdot\frac{(1/\beta)^{\eps_1/(\theta\eps)}\,\ln(T/\beta)}{\pstar\,(\gamma-1)^2})$. Both scale as $(1/\beta)^{\Theta(\eps_1/\eps)}/\pstar$, matching up to the $\ln(T/\beta)/(\gamma-1)^2$ factor and constants in the exponent on $1/\beta$.
\end{remark}
\section{Reduction from Optimization under Continual Observation to the Batch Setting}
\label{sec:batchtocontinual}

In this section we present the meta-algorithm \BtoCO that converts any batch differentially private optimization algorithm into one that operates under continual observation. We recall the problem setting formally in \Cref{sec:co-setup}, some technical preliminaries in \Cref{sec:co-prelims}, and the algorithm \BtoCO in \Cref{sec:co-algorithm}. We prove its privacy and accuracy guarantee in \Cref{sec:co-privacy} and \Cref{sec:co-accuracy}, and then show in \Cref{sec:co-applications} how to construct various CO algorithms for problems of interest using \BtoCO. 

\subsection{Problem Setup}\label{sec:co-setup}

\medskip \noindent \textbf{Maximization problems.}
Let $\calX$ denote the data universe. An (insertion-only) \emph{data stream} is a sequence of datasets $X_1, X_2, \ldots \in \calX$ such that $X_{t+1}$ is obtained from $X_t$ by inserting a single data element at time $t+1$. Let $\calY$ denote a set of candidate solutions. A maximization problem is specified by a function $f : \calX \times \calY \to \R_{\geq 0}$. We write $\OPT(X) := \max_{Y\in\calY} f(X,Y)$. We consider problems $f$ that have \emph{bounded sensitivity}, i.e. for all neighboring datasets $X, X'$ (differing in one element) and all $Y \in \calY$, 
\[ |f(X,Y) - f(X',Y)| \leq 1.\]

\medskip \noindent \textbf{Private maximization under continual observation.} We study optimization in the continual observation setting for bounded sensitivity maximization problems $f$ that in addition satisfy \emph{data monotonicity}, i.e. for all $Y \in \calY$ and all $t \geq 1$, $f(X_t, Y) \leq f(X_{t+1}, Y)$. We write $\OPT_t := \max_{Y \in \calY} f(X_t, Y)$ for the optimum at time $t$. Data monotonicity implies that $\OPT_t$ is non-decreasing. By bounded sensitivity, $\OPT_{t+1} \leq \OPT_t + 1$.

The goal is to produce, at every time $t$, a solution $Y_t \in \calY$ such that, with probability at least $1 - \beta$ simultaneously for all $t \geq 1$,
\[
f(X_t, Y_t) \geq \Phi \cdot \OPT_t - E_{\mathrm{total}}(t),
\]
where $\Phi \in (0,1]$ is a constant multiplicative approximation factor and $E_{\mathrm{total}}(t) \geq 0$ is additive error incurred at time $t$. The output sequence $(Y_t)$ must satisfy $(\eps, \delta)$-differential privacy, where $\eps$ and $\delta$ are prescribed global budgets. Here two streams $(X_t)_{t\geq1}$ and $(X'_t)_{t\geq1}$ are considered neighboring if for all $t$, $X_t \sim X'_t$. 

A multiplicative approximation is a standard relaxation introduced for dealing with optimization problems $f$ that are NP-hard, and additive error is a consequence of working in the private setting. Prior work \cite{gupta2010differentially, DBLP:conf/icml/ChaturvediNN23} proves lower bounds showing that for problems like submodular maximization additive error is unavoidable. 

\medskip \noindent \textbf{Batch DP algorithm guarantee.}
We will construct algorithms for private maximization under continual observation by starting with a DP algorithm that works in the static i.e. \emph{batch} setting. A batch DP algorithm $A$ takes as input a dataset $X \in \calX$, a privacy parameter $\eps > 0$, an approximate-DP parameter $\delta \geq 0$, and a failure probability $\beta_A \in (0,1)$, and returns a solution $\hat{Y} \in \calY$. The algorithm $A$ is $(\eps, \delta)$-DP. We consider two kinds of accuracy guarantees:
\begin{itemize}
\item \emph{High-probability:} There exists some $\alpha\in (0,1]$ and some function $E_A$ such that with probability at least $1-\beta_A$, $f(X, A(X)) \geq \alpha \cdot \OPT(X) - E_A(X, \eps, \delta, \beta_A)$.
\item \emph{In-expectation:} $\E[f(X, A(X))] \geq \alpha \cdot \OPT(X) - E_A(X, \eps, \delta)$.
\end{itemize}
We write $E_A(t)$ as shorthand for $E_A$ evaluated at the parameters in effect at time $t$.

\subsection{Preliminaries}\label{sec:co-prelims} \medskip \noindent \textbf{1. Converting in-expectation guarantees to bounded error probability guarantees.} We use the following lemma to reduce reasoning about algorithms with in-expectation guarantees, to reasoning about algorithms with constant bounded error probabilities.

\begin{lem}[In-expectation to bounded-error conversion]\label{lem:exp-to-prob}
Let $f : \calX \times \calY \to \R_{\geq 0}$ and let $A$ be a randomized algorithm
satisfying $\E[f(X, A(X))] \geq \alpha \cdot \OPT(X) - E$ for some $\alpha \in (0,1]$
and $E \geq 0$.
Then for any $c \in (0,1)$, writing $\alpha' := (1-c)\alpha$
and $\beta_c := 1 - c\alpha$,
\[
  \Pr\!\bigl[f(X, A(X)) \geq \alpha' \cdot \OPT(X) - E\bigr]
  \;\geq\; 1 - \beta_c.
\]
\end{lem}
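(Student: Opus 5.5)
We prove \Cref{lem:exp-to-prob} by a reverse-Markov argument applied to the nonnegative random variable $V := f(X, A(X))$. The only extra fact needed is an upper bound on $V$: by definition of $\OPT$, $V \leq \OPT(X)$ surely, since $A(X) \in \calY$. Write $O := \OPT(X)$ and define the bad event $\calB := \{V < \alpha' O - E\}$. The plan is to bound $\E[V]$ from above by splitting on $\calB$, compare this with the assumed lower bound $\alpha O - E$, and solve for $\Pr[\calB]$.

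First, dispose of degenerate cases. If $\alpha' O - E \leq 0$, then since $V \geq 0$ the event $\calB$ is empty and the claim is trivial. So assume $\alpha' O - E > 0$, which in particular gives $O > 0$. Writing $q := \Pr[\calB]$, the split gives
\[ \E[V] \leq q(\alpha' O - E) + (1-q)\,O . \]
Combining with the hypothesis $\E[V] \geq \alpha O - E$ yields
\[ \alpha O - E \leq O - q\bigl((1-\alpha')O + E\bigr). \]
Rearranging,
\[ q \leq \frac{(1-\alpha)O + E}{(1-\alpha')O + E}. \]

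It remains to check that this ratio is at most $\beta_c = 1 - c\alpha$, i.e.\ that
\[ 1 - q \geq \frac{(\alpha - \alpha')O}{(1-\alpha')O + E} = \frac{c\alpha O}{(1-\alpha')O+E}. \]
Here is the step I expect to be the main obstacle. The denominator $(1-\alpha')O + E$ can exceed $O$ when $E$ is large, so the naive bound $c\alpha$ needs care. I would first try to use the standing assumption $\alpha' O > E$ to control the $E$ term. This gives
\[ (1-\alpha')O + E < (1-\alpha')O + \alpha' O = O, \]
so the ratio is at least $c\alpha O / O = c\alpha$, and hence $q \leq 1 - c\alpha = \beta_c$, as required.

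Thus the whole proof is a one-line reverse Markov inequality. The only subtlety is using the nontriviality condition $E < \alpha' O$ to absorb the additive error into the denominator.
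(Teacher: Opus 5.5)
Your proof is correct and is essentially the paper's argument: the same reverse-Markov split of $\E[f(X,A(X))]$ at $\tau = \alpha'\OPT(X) - E$ using $f(X,A(X)) \leq \OPT(X)$, and the same rearrangement to the lower bound $c\alpha\OPT(X)/((1-\alpha')\OPT(X)+E)$ on the success probability. You then use $E < \alpha'\OPT(X)$ (from $\tau > 0$) to bound the denominator by $\OPT(X)$, exactly as the paper does.
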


\begin{proof}
Set $\tau := \alpha' \cdot \OPT(X) - E$.
If $\tau \leq 0$, the bound holds since $f \geq 0$.
For $\tau>0$, we can write
\begin{align*}
  \alpha \cdot \OPT(X) - E
  &\;\leq\; \E[f(X,A(X))]\\
  &\;\leq\;
  \tau\,\Pr[f(X,A(X)) < \tau] + \OPT(X)\,\Pr[f(X,A(X)) \geq \tau] \\
  &\;=\;
  \tau + (\OPT(X) - \tau)\,\Pr[f(X,A(X)) \geq \tau].
\end{align*}
In the above, we use that $f(X,A(X)) \in [0, \OPT(X)]$ unconditionally. Rearranging:
\begin{align*}
  \Pr[f(X,A(X)) \geq \tau]
  &\;\geq\;
  \frac{\alpha \cdot \OPT(X) - E - \tau}{\OPT(X) - \tau} \\
  &\;=\;
  \frac{\alpha \cdot \OPT(X) - E - \alpha' \cdot \OPT(X) + E}
    {\OPT(X) - \alpha' \cdot \OPT(X) + E} \\
  &\;=\;
  \frac{c\alpha \cdot \OPT(X)}{(1 - \alpha') \cdot \OPT(X) + E}.
\end{align*}
Since $\tau > 0$ implies $E < \alpha' \cdot \OPT(X)$,
\[
  \frac{c\alpha \cdot \OPT(X)}{(1 - \alpha') \cdot \OPT(X) + E}
  \;\geq\;
  c\alpha
  \;=\; 1 - \beta_c.
  \qedhere
\]
\end{proof}

\medskip \noindent \textbf{2. Generalized Private Selection via Random Stopping \citep{DBLP:conf/stoc/0001T19}.} We recall one of the variants of the algorithm introduced by \cite{DBLP:conf/stoc/0001T19} for Generalized Private Selection, that we refer to as \TSelect for convenience. We refer the reader to the arxiv version of their paper \citep{liu2018private} for the statements and proofs that we refer to here.

\TSelect takes as input query access to an $(\eps_1,\delta_1)$-DP mechanism $\calQ$ that for datasets $X\in\calX$ outputs pairs of elements and scores $(Y,v)\in\calY\times\R$. It is also given a threshold $\tau\in\R$, and parameters $\xi\in(0,1]$ and $\eps_0\in[0,1]$. It computes a number of iterations $T$ that is the maximum number of times it will query $A$. It then proceeds to iterate over a counter $j\in[T]$, and draw values $(Y_j,v_j) \gets \calQ(X_t)$. It checks whether $v_j \geq \tau$, i.e. whether the score beats the given threshold, and outputs $(Y_j,v_j)$ if so. If this check fails, then it performs a randomized check by drawing $b\sim \Ber(\xi)$. If $b=+1$, then this check is considered to have failed, and \TSelect halts immediately, outputting $-1$ indicating failure. If $b=-1$, then this check has passed and the algorithm continues to iterate over $j$. If all $T$ iterations result in no successful threshold test, then the algorithm outputs $-1$.

\TSelect can be used as a success probability amplification algorithm. When it is known that the probability of $\calQ$ outputting a value $(Y,v)$ such that $v$ beats $\tau$ with probability at least $p_1$, then by setting $\xi$ to scale roughly as $\beta p_1$, we are guaranteed that $\calQ$ will output a value that beats the threshold $\tau$ with probability $1-\beta$. Further, the privacy cost of this procedure is roughly $(O(\eps_1),O(\delta_1/\xi))$, in other words the privacy cost of success probability amplification is borne almost entirely by the additive privacy loss parameter in the approximate DP setting, and just the constant factor overhead in the multiplicative privacy loss parameter in the pure DP setting. Further, the sample complexity can in fact be bounded in terms of a Geometric distribution with stopping probability at least $p_1$.

\begin{algorithm}[t]
\caption{\TSelect (Algorithm 1, Thresholding with a known threshold $\tau$ in the arXiv version of \cite{liu2018private})}
\label{alg:threshold-select}
\SetAlgoLined
\KwIn{Dataset $X\in\calX$, Mechanism $Q: \calX \to \calY \times \R$;
  threshold $\tau \in \R$;
  $\xi \in (0,1]$, parameter $\eps_0\in[0,1]$.}
\KwOut{A sample $(Y, v) \in \calY \times \R$, or $-1$.}
\BlankLine
$T \gets \lceil\max\{(1/\xi)\ln(2/\eps_0),\; 1+1/(e\xi)\}\rceil$\;
\For{$j = 1, \ldots, T$}{
  $(Y_j, v_j) \gets Q(X)$\;
  \lIf{$v_j \geq \tau$}{\Return $(Y_j, v_j)$}
  Draw $b\sim\Ber(\xi)$, if $b = +1$, \Return $-1$\;
}
\Return $-1$\;
\end{algorithm}

\begin{lem}[Thresholded Private Selection {\cite[Theorem~3.1]{liu2018private}}]\label{lem:threshold-select}
Let $Q : \calX \to \calY\times\R$ be an $(\eps_1,\delta_1)$-DP mechanism.
Fix $\xi \in (0,1]$, $\eps_0 \in (0,1]$, and
$T \geq \max\!\bigl\{\frac{1}{\xi}\ln\frac{2}{\eps_0},\; 1 + \frac{1}{e\xi}\bigr\}$.

\begin{enumerate}
\item \textbf{Privacy.}
The output of \Cref{alg:threshold-select} is
$\bigl(2\eps_1 + \eps_0,\;\; 3e^{2\eps_1 + \eps_0} \cdot \delta_1/\xi\bigr)$-differentially private.
When $\delta_1 = 0$, this simplifies to $(2\eps_1 + \eps_0, 0)$-DP.

\item \textbf{Failure probability.}
Let $p_1 := \Pr_{(Y,v) \sim Q(X)}[v \geq \tau]$. Then
\[
  \Pr[\emph{output} = -1] \;\leq\; \frac{(1-p_1)(1 + \eps_0/2)}{p_1}\,\xi.
\]

\item \textbf{Sample complexity.} The number of draws from $Q$ is at most $T$. Let $p_1 := \Pr_{(Y,v) \sim Q(X)}[v \geq \tau]$. The number of draws is stochastically dominated by a $\Geo(p_1(1-\xi) + \xi)$-distributed random variable.
\end{enumerate}
\end{lem}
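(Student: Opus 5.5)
The argument splits along the three items of \Cref{lem:threshold-select}. Throughout we condition on the dataset $X$, write $p_1 = \Pr[v\geq\tau]$ for $(Y,v)\sim Q(X)$, and write $p_1'$ for the same quantity on a neighbouring $X'$. Each iteration of \Cref{alg:threshold-select} does one of three things: it succeeds, which happens with probability $p_1$; it aborts via the coin, with probability $(1-p_1)\xi$; or it continues, with probability $(1-p_1)(1-\xi)$.

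\textbf{Sample complexity.} Item~3 is immediate from this description. Iteration $j$ is the last iteration exactly when it either succeeds or aborts. These are independent trials, each stopping with probability $p_1+(1-p_1)\xi = p_1(1-\xi)+\xi$. The number of draws is therefore the minimum of $T$ and a $\Geo(p_1(1-\xi)+\xi)$ variable, which gives both claims.

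\textbf{Failure probability.} For Item~2, the output is $-1$ in two cases. The first is an abort at some iteration $j\le T$, which has probability
\[\sum_{j\ge1}\bigl((1-p_1)(1-\xi)\bigr)^{j-1}(1-p_1)\xi \;\le\; \frac{(1-p_1)\xi}{p_1+\xi-p_1\xi}\;\le\; \frac{(1-p_1)\xi}{p_1}.\]
The second is exhausting all $T$ iterations, which has probability at most $(1-\xi)^T\le e^{-\xi T}\le \eps_0/2$ by the choice of $T$. To reach the stated multiplicative form $(1+\eps_0/2)$, I would show $(1-\xi)^T \le \tfrac{\eps_0}{2}\cdot\tfrac{(1-p_1)\xi}{p_1}$ whenever this term matters. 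If that fails, I would instead use the bound $\Pr[\text{abort}]\le \xi/(p_1+\xi)$ together with the condition $T\ge 1+1/(e\xi)$ to absorb the exhaustion term. This matching of constants is routine but fiddly.

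\textbf{Privacy.} Item~1 is the main obstacle. I would follow Liu--Talwar's argument. Fix an output $(Y,v)$ with $v\ge\tau$, and write $q$ for its probability under $Q(X)$ and $q'$ for its probability under $Q(X')$. Its probability of being returned is
\[q\sum_{j=1}^T\bigl((1-p_1)(1-\xi)\bigr)^{j-1}.\]
Since $q\le e^{\eps_1}q'+\delta_1$ and $(1-p_1)\le e^{\eps_1}(1-p_1')+\delta_1$, the question reduces to comparing the geometric sums $S(x)=\sum_{j<T}((1-\xi)x)^j$ at $x=1-p_1$ and $x=1-p_1'$. I would show $S(1-p_1)\le e^{\eps_1+\eps_0}S(1-p_1')$ by a case analysis, using $1-x$ to lower-bound $1/S(x)$ and treating the truncation at $T$ as contributing at most the $\eps_0$ slack. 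This is where the condition $T\ge(1/\xi)\ln(2/\eps_0)$ enters: it ensures the truncated sum is within a factor $1+\eps_0$ of the untruncated sum $1/(1-(1-\xi)x)$. The untruncated ratio is $\frac{\xi+p_1'(1-\xi)}{\xi+p_1(1-\xi)}\le e^{\eps_1}$ up to $\delta_1$ terms, and combining this with the factor from $q$ gives $2\eps_1+\eps_0$. The $-1$ output is handled the same way by complementation.

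The $\delta_1$ terms accumulate across the sum. I would bound them by summing $\delta_1$ times the expected number of iterations, which is at most $1/\xi$. This should yield the stated additive term $3e^{2\eps_1+\eps_0}\delta_1/\xi$. The pure case $\delta_1=0$ then follows directly.
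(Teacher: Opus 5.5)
Your Item~3 argument is the paper's own: each iteration stops independently with probability $p_1(1-\xi)+\xi$, so the number of draws is $\min(T,G)$ with $G\sim\Geo(p_1(1-\xi)+\xi)$. For Items~1 and~2 the paper gives no argument of its own; it cites parts (b), (c), (e) of Liu--Talwar's Theorem~3.1. A self-contained reproof is therefore a legitimate alternative route. However, your sketch has a genuine gap exactly where the second condition $T\ge 1+1/(e\xi)$ must be used, and the gap affects both Item~2 and the $-1$ output in Item~1.

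In Item~2 you bound the exhaustion probability by $(1-\xi)^T\le \eps_0/2$. This discards the factor $(1-p_1)^T$: the exact value is $r^T$ with $r=(1-p_1)(1-\xi)$. The target $\frac{(1-p_1)(1+\eps_0/2)\xi}{p_1}$ tends to $0$ as $p_1\to1$, so no $p_1$-independent term can be absorbed into it. In particular, your proposed inequality $(1-\xi)^T\le\frac{\eps_0}{2}\cdot\frac{(1-p_1)\xi}{p_1}$ is false for $p_1$ near $1$ (when $\xi<1$). The fallback via $\xi/(p_1+\xi)$ has the same defect, since that bound does not vanish as $p_1\to 1$. The missing step is
\[
r^T\le (1-p_1)^T\,\frac{\eps_0}{2}\quad\text{and}\quad p_1(1-p_1)^{T-1}\le \max_{p\in[0,1]}p(1-p)^{T-1}=\frac{(1-1/T)^T}{T-1}\le\frac{1}{e(T-1)}\le\xi .
\]
Together these give $r^T\le\frac{\eps_0}{2}\cdot\frac{(1-p_1)\xi}{p_1}$, and adding the abort term $\le (1-p_1)\xi/p_1$ yields Item~2.

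The same estimate is what the privacy bound for the output $-1$ needs. That case cannot be handled ``by complementation'', because $a\le e^{\eps}b$ says nothing about $1-a$ versus $1-b$. Instead, write
\[
\Pr[-1]=\frac{(1-p_1)\xi+p_1r^T}{p_1+(1-p_1)\xi}.
\]
The main term $\frac{(1-p_1)\xi}{p_1+(1-p_1)\xi}$ changes by at most a factor $e^{2\eps_1}$ between neighbours, using both $1-p_1\le e^{\eps_1}(1-p_1')$ and $p_1'\le e^{\eps_1}p_1$. By the display above, the exhaustion term is at most $\eps_0/2$ times the main term. The ratio is therefore at most $(1+\eps_0/2)e^{2\eps_1}\le e^{2\eps_1+\eps_0}$.

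Your treatment of outputs with $v\ge\tau$ is fine in the pure case. Note, though, that comparing the geometric sums uses $p_1'\le e^{\eps_1}p_1$, not the bound on $1-p_1$ you quote.

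Finally, charging ``$\delta_1$ times the expected number of iterations'' is only a heuristic in your direct computation, because $\delta_1$ enters the denominator $\xi+(1-\xi)p_1$ non-additively. To make it rigorous you must either carry the algebra through (the claim is trivial once $\delta_1\ge\xi/3$), or, for example, replace $Q(X),Q(X')$ by pure-$\eps_1$-indistinguishable distributions within total variation $\delta_1$ and charge $\delta_1$ per draw against the at most $1/\xi$ expected draws.
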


\begin{proof}
Items~1 and~2 are parts~(b), (c), and~(e) of \cite[Theorem~3.1]{liu2018private} respectively.
Item~3 follows directly from the proof of part (d) of \cite[Theorem~3.1]{liu2018private}; in each iteration, the algorithm makes an evaluation of $\calQ$ that succeeds with probability at least $p_1$. If this evaluation fails, then the algorithm halts with probability $\xi$. These two Bernoulli draws are independent of each other. Therefore, the probability of halting in any one iteration is exactly $p_1(1-\xi) + \xi$. It follows that the total number of evaluations is dominated by $\Geo(p_1(1-\xi) + \xi)$.
\end{proof}

\medskip \noindent \textbf{3. Simplified GTM.} We briefly recall a simplified version of \Cref{thm:ExPost} that we will refer to in our reduction from optimization in the CO setting to the batch setting.

\begin{cor}[GTM with ExPost parameters]\label{cor:GTM-ExPost}
Fix a threshold probability $\pstar \in (0,1)$, a gap parameter $\gamma \in (1,2]$,
and suppose $\beta_{\GTM} \leq 1/e$ and $C_1, C_H > 0$.
There exists a GTM instance (\Cref{alg:GTM}) such that
if $\eps_t \leq \eps/(C_1 \ln(4/(\omega_t\beta_{\GTM})))$
for all $t \geq 1$, then the following hold
with $\bar{p} := \pstar/(\gamma\Lambda)$,
where
\[
  \Lambda \;:=\; e^{1/C_1 + 2/C_H}
\]
is a universal constant (independent of $t$ and $\beta_{\GTM}$).

\begin{enumerate}
\item \textbf{Privacy.}
The GTM transcript is $(1 + C_H/C_1)\eps$-differentially private.

\item \textbf{Accuracy.}
With probability at least $1 - \beta_{\GTM}$,
simultaneously for all $t \geq 1$:
\begin{enumerate}
\item the algorithm halts at every step $t$
  where $p_t \geq \pstar + \delta_t/\eps_t$, and
\item the algorithm continues past every step $t$
  where $p_t \leq \bar{p} - \delta_t/\eps_t$.
\end{enumerate}
When $\delta_t = 0$, the thresholds simplify to $\pstar$ and $\bar{p}$.

\item \textbf{Sample complexity.}
With probability at least $1 - \beta_{\GTM}$,
the number of evaluations of $\calM_t$ at step $t$ is at most
$O\!\left(\ln(t/\beta_{\GTM})/(\pstar(\gamma-1)^2)\right)$.
\end{enumerate}
\end{cor}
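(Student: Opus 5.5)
The plan is to instantiate \Cref{thm:ExPost} with $\beta = \beta_{\GTM}$ and a specific choice of $\eps_C$. Then convert its ex-post privacy loss into a uniform pure DP guarantee using the hypothesis on $\eps_t$. Finally, transfer the guarantees to approximate-DP inputs via \Cref{cor:approxDP}.

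\textbf{Parameter choice.} Set $\eps_C := C_1\eps$ in the initialization: $1/\tau_1 = C_1\eps$ and $\mu_1 = \tau_1\ln(4/\beta_{\GTM})$. Keep the per-step scale $\tau_{2,t} = 2/(C_H\eps_t\ln(4/\beta_t))$ from the proof of \Cref{thm:ExPost}. With $\beta_t = \omega_t\beta_{\GTM}$, the hypothesis reads $\eps_t\ln(4/\beta_t) \le \eps/C_1$.

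\textbf{Privacy.} By \Cref{lem:GeneralizedPrivacy}:
\begin{itemize}
\item continuation transcripts cost $1/\tau_1 = C_1\eps$;
\item halt-at-$t$ transcripts cost $C_1\eps + C_H\eps_t\ln(4/\beta_t) \le C_1\eps + C_H\eps/C_1$.
\end{itemize}
This does not directly match the stated $(1+C_H/C_1)\eps$. So I would instead take $\eps_C := \eps$, i.e.\ $1/\tau_1 = \eps$. Then both cases are bounded by $\eps + C_H\eps/C_1 = (1+C_H/C_1)\eps$, uniformly in $t$. Summing over transcripts as in the last paragraph of \Cref{lem:GeneralizedPrivacy} gives pure DP.

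\textbf{Accuracy.} The $\Lambda_t$ of \Cref{thm:ExPost} is $(4/\beta_{\GTM})^{\eps_t/\eps}e^{2/C_H}$. Since $\omega_t \le 1$, we have $\ln(4/\beta_{\GTM}) \le \ln(4/\beta_t)$. Hence
\[
\eps_t\ln(4/\beta_{\GTM})/\eps \le 1/C_1, \qquad \Lambda_t \le e^{1/C_1+2/C_H} = \Lambda.
\]
Monotonicity of the rejection threshold in $\Lambda_t$ then gives continuation whenever $p_t \le \pstar/(\gamma\Lambda)$. Concretely, \Cref{lem:ParameterSelection} is applied with the constant $\Lambda$ in place of $\Lambda_t$, which is legitimate because $\lambda_t \le \rho_t\Lambda_t \le \rho_t\Lambda$ on $\calG$.

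I use the direct variant $s_t = +1$ throughout, since the corollary states only $\bar p = \pstar/(\gamma\Lambda)$. The halting guarantee is unchanged. The $\delta_t/\eps_t$ shifts come from purifying each input via \Cref{lem:ApproxToPure} and \Cref{cor:approxDP}, which leave privacy and sample complexity unchanged.

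\textbf{Sample complexity.} Item~3 of \Cref{thm:ExPost} on the direct path gives $O(\Lambda_t\ln(t/\beta_{\GTM})/(\pstar(\gamma-1)^2))$. Since $\Lambda_t \le \Lambda$ is a constant depending only on $C_1, C_H$, it is absorbed into the $O(\cdot)$.

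\textbf{Main obstacle.} The step needing care is the failure-probability accounting. \Cref{thm:ExPost} already unions over $\calG^c$, the per-step type I and II errors, and the sample-count tails, for a total of $\beta_{\GTM}$. One needs to check that this single event delivers both the accuracy claim and the sample-complexity claim, which it does. Second, one must confirm the purification slack keeps the inputs $\eps_t$-DP with the same $\eps_t$, so the hypothesis on $\eps_t$ still applies.
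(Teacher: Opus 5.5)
Your proposal is correct and follows the paper's proof. Like the paper, you instantiate \Cref{thm:ExPost} with $\eps_C=\eps$, $\beta=\beta_{\GTM}$ and $s_t=+1$, and use the hypothesis $\eps_t\ln(4/\beta_t)\le\eps/C_1$ (with $\beta_t=\omega_t\beta_{\GTM}\le\beta_{\GTM}$) to get both the uniform halting cost $(1+C_H/C_1)\eps$ and $\Lambda_t\le e^{1/C_1+2/C_H}$; you then absorb the constant $\Lambda$ into the sample-complexity bound and take the $\delta_t/\eps_t$ shifts from \Cref{cor:approxDP}, while your initial attempt with $\eps_C=C_1\eps$ can simply be dropped, as you noticed.
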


\begin{proof}
Apply \Cref{thm:ExPost} with $\eps_C = \eps$, $\beta = \beta_{\GTM}$, and $s_t = +1$ for all $t$. The hypothesis on $\eps_t$ gives $\Lambda_t = (4/\beta_{\GTM})^{\eps_t/\eps_C}\cdot e^{2/C_H} \leq (4/\beta_{\GTM})^{1/(C_1\ln(4/(\omega_t\beta_{\GTM})))}\cdot e^{2/C_H} \leq e^{1/C_1 + 2/C_H}$.
It follows that $\bar{p}_t = \pstar/(\gamma\Lambda_t) \geq \pstar/(\gamma\Lambda)$ so it follows that $\bar{p} : = \pstar/(\gamma\Lambda)$ suffices for the rejection threshold. 
Items~1 and~3 follow directly from \Cref{thm:ExPost}; the sample complexity uses $\Lambda_t \leq \Lambda = O(1)$.
The $\delta_t/\eps_t$ perturbation in Item~2 follows from \Cref{cor:approxDP}.
\end{proof}

\subsection{Algorithm description}\label{sec:co-algorithm} We can now introduce \BtoCO (\Cref{alg:batch-to-co}) that takes as input black-box access to a mechanism $A$ that solves a maximization problem $f:\calX\times\calY\to\R_{\geq0}$ in the batch DP setting, and constructs a mechanism that solves the same problem in the CO setting. We will show in \Cref{thm:batch-to-co} that \BtoCO fulfills the same sort privacy guarantee as $A$, i.e. pure or approximate. Further, it fulfills an error bound for the dataset $X_t$ that scales with the error incurred by $A$ on $X_t$ when initialized with privacy parameters $\eps_t = \Omega(\eps/\ln^2 (t/\beta))$, $\delta_t = \tilde{\Omega}(\delta/\ln^2 t)$, plus some slow scaling polylogarithmic error terms. Additionally, with probability $1-\beta$, for all $t\geq1$, the per-round sample complexity, i.e. evaluations of $A$, is at most $\tilde{O}(\ln (t/\beta))$.

\medskip
\noindent\textbf{Inputs and Outputs.} A run of \BtoCO proceeds as follows. It is given as input black-box access to a batch DP mechanism $A:\calX\to\calY$ to which it can feed the input dataset $X_t$ at any point $t$ in the input data stream, and privacy parameters $\eps_t$ and $\delta_t$. Every time it evaluates $A$, it is guaranteed that the evaluation is $(\eps_t,\delta_t)$-DP. It will output a sequence of values $(Y_t)_{t\geq 1} \in \calY^*$.

\medskip
\noindent\textbf{Checkpoints.} \Cref{alg:batch-to-co} defines a sequence of increasing time-steps called \emph{checkpoints}, denoted $(t(\ell))_{\ell\geq1} \in\N^*$ over the course of the stream. Checkpoints are time-steps at which \Cref{alg:batch-to-co} will update its output solution; at time-steps which are not checkpoints, the solution output will be the solution output at the previous checkpoint. $t(1) := 1$, i.e. the first time-step is always the first checkpoint, and at the very beginning of the stream, with the dataset $X_1$, we make a call to the batch algorithm $A$ to generate the very first output $Y_1$. All subsequent outputs will be determined as explained below.

\medskip
\noindent\textbf{Calls to \GTM.} After having identified the $(\ell-1)$-th checkpoint, \Cref{alg:batch-to-co} identifies the $\ell$-th checkpoint via an invocation of \GTM (\Cref{alg:GTM}), with the parameter setting indicated by \Cref{cor:GTM-ExPost}. We refer to this instance as $\GTM_\ell$. It defines a \emph{threshold} $h_\ell = (1 + \kappa)^\ell$, and conducts the threshold test:
    \[ 1(f(X_t,\hat{Y}) + \Lap(2/\eps_t) - h_{\ell} \geq 0), \qquad \hat{Y}\gets A(X_t,\eps_t/2,\delta_t). \]
In other words, it checks whether the utility obtained from a value $\hat{Y}$ drawn from the batch algorithm $A$ for the dataset $X_t$ beats the $\ell$-th threshold $h_\ell$. Since $A$ is randomized, this is a probabilistic check, and by adding a draw from the Laplace distribution $\Lap(2/\eps_t)$, we will have the promise that $f(X_t,\hat{Y}) + \Lap(2/\eps_t)$ is $(\eps_t,\delta_t)$-DP. It follows that we can define a single evaluation of this probabilistic test as the $t$-th input $\calM_t(X_t)$ to $\GTM_\ell$. Since neighboring data streams $(X_t)_{t\geq1}$ and $(X'_t)_{t\geq1}$ fulfill the promise that there is at most one time-step $t$ such that the new elements $x_t$ and $x'_t$ inserted at that time-step are unequal, it follows that $X_t$ and $X'_t$ are neighboring at all time-steps.

It follows from the privacy guarantee of $\GTM_\ell$ that this invocation fulfills a pure privacy guarantee. We set the output privacy parameter to be a value $\eps_\ell$ such that $\sum_{\ell=2}^\infty \eps_\ell < \eps$, and the acceptance threshold $\pstar = \Theta(1)$, i.e. some constant to be determined later. It follows from \Cref{cor:GTM-ExPost}, that by setting $\eps_t = \Theta(\eps_{\ell(t)}/\ln (t/\beta))$, we are guaranteed that all invocations $(\GTM_\ell)_{\ell\geq2}$ are $\sum_{\ell\geq2}\eps(\ell)$-DP, and it will suffice to set $\eps(\ell) = \Theta(\eps/\ell \ln^2\ell)$. This holds even for approximate DP $A$ because of the internal purification by \GTM (see \Cref{cor:GTM-ExPost}); the additive privacy loss parameters affect only the accuracy of the threshold tests. 

With high probability, $\GTM_\ell$ will halt if the test passes with probability at least $\pstar + \delta_t/\eps_t$, and it will not halt if $p_t \leq \bar{p} -  \delta_t/\eps_t= \pstar/\gamma\Lambda -  \delta_t/\eps_t$, for some user-defined constants $\gamma$ and $\Lambda$. Here $p_t$ is the probability of the $t$-th threshold test passing, i.e.
    \[ p_t \,:=\, \Pr[f(X_t,A(X_t,\eps_t/2,\delta_t)) + \Lap(2/\eps_t) \geq (1+\kappa)^\ell]. \]
To absorb the $\delta_t/\eps_t$ terms, we introduce a constraint on $\delta_t$ to be at most $\eps_t\cdot \bar{p}/2$.

Modulo the additive error introduced by the Laplace noise, this implies that if $\GTM_\ell$ outputs $a_t = -1$, then $f(X_t,\hat{Y}) + \Lap(2/\eps_t)$ cannot be greater than $(1+\kappa)^\ell$ with probability greater than $\pstar + \delta_t/\eps_t$. Similarly, if $a_t = +1$, then $f(X_t,\hat{Y}) + \Lap(2/\eps_t)$ cannot be smaller than $(1+\kappa)^\ell$ with probability greater than $1 - \bar{p} + \delta_t/\eps_t$. Since the $\delta_t/\eps_t$ terms are negligible, this guarantee promises that if $\GTM_\ell$ indicates continuation, then it is safe to repeat the previous checkpoint's solution whilst ensuring accuracy competitive with what the batch algorithm $A$ would achieve on the dataset $X_t$ with privacy parameter $(\eps_t,\delta_t)$. Further, since $a_t = +1$ only when $p_t \geq \bar{p} - \delta_t/\eps_t$, and since the thresholds $(1+\kappa)^\ell$ increase exponentially with $\ell$ whereas $f(X_t,\cdot)$ can only increase at most linearly with $t$ (due to $1$-sensitivity), it will follow that with high probability, only some $O(\ln (t/\beta))$-many checkpoints are defined by time-step $t$. Since checkpoints are declared at a low rate, the privacy parameters passed to the active instance of $\GTM_\ell$ scale slowly, and the invocations of the batch algorithm also get slowly-decaying privacy-loss parameters $(\eps_t,\delta_t)$ as $t$ increases. Putting everything together, we are guaranteed high accuracy in the threshold tests.

\medskip
\noindent\textbf{Calls to \TSelect.} When a new checkpoint is declared, we know that the previous checkpoint's solution is no longer competitive with $(1+\kappa)^\ell$, and a new solution must be generated. If we make a single invocation to $A$ and set $Y_t$ to equal its output, we are only guaranteed that it beats $(1+\kappa)^\ell$ with probability $\bar{p}-\delta_t/\eps_t$. To amplify the success probability, we make a call to the generalized private selection algorithm \TSelect (\Cref{alg:threshold-select}) to generate a new solution for release, with threshold set equal to $(1+\kappa)^\ell$. We will refer to this invocation of \TSelect as $\TSelect_\ell$. It will follow that as long as $\TSelect_\ell$ makes at least $\approx 1/\bar{p}$ many evaluations, at least one draw will exceed the updated threshold $(1+\kappa)^\ell$, and this solution is selected and output by \Cref{alg:threshold-select}. Again, there is some error introduced by Laplace noise added to privatize the score of each draw, but this again turns out to give us a lower order error term.

\medskip
\noindent\textbf{\BtoCO guarantees.} To sum up, the privacy guarantee of \BtoCO follows by basic composition over the very first call to the batch algorithm $A$, and the subsequent invocations of $(\GTM_\ell)_{\ell\geq2}$ and $(\TSelect_\ell)_{\ell\geq2}$. The accuracy guarantee of \BtoCO follows from the accuracy of the tests conducted by the invocations of \GTM, and the quality of the solutions generated by the invocations of \TSelect. The total sample complexity of \BtoCO turns out to be essentially $O(\ln (t/\beta))$, due to the sample complexity of the ex-post parameter setting with which GTM is used, and the relatively mild stopping probability $\xi = \tilde\Theta(\bar{p})$ with which \TSelect is invoked; since $\bar{p} = \Omega(1)$, this implies a relatively small number of calls until the threshold check passes and a solution may be output.

The outline of the rest of this section is as follows. In \Cref{def:mechanisms} and \Cref{lem:mechanism-properties} we define the mechanisms constructed from $A$ that we pass to the instances of \GTM and \TSelect, and prove accuracy and privacy guarantees for them. \Cref{lem:co-privacy} proves that \BtoCO is $(\eps,\delta)$-DP. \Cref{lem:translation} shows that when $\GTM_\ell$ outputs $a_t = -1$, then the quality of values generated by $A$ cannot be much more than $(1+\kappa)^\ell$, and that if $a_t = +1$, then $\OPT_t$ cannot be much less than $(1+\kappa)^\ell$. \Cref{lem:checkpoint-count} uses the latter guarantee to show that the number of checkpoints scales logarithmically in the length of the stream. \Cref{lem:checkpoint-quality} shows that when a checkpoint is reached, and a value is generated via $\TSelect_\ell$, then its quality is not too much less than $(1+\kappa)^\ell$. Finally, \cref{thm:batch-to-co} puts everything together to formally prove the privacy, accuracy, and sample complexity guarantees of \BtoCO, and \Cref{cor:batch-to-co-hp} and \Cref{cor:batch-to-co-exp} show how to derive concrete accuracy bounds when the batch algorithm $A$ is known to satisfy an accuracy guarantee with bounded error probability $\beta_A$, or just an in-expectation accuracy guarantee, respectively.

We collect the parameters and derived quantities used
in \BtoCO and its analysis.
The pseudocode appears in \Cref{alg:batch-to-co};
the quantities below are referenced throughout
Sections~\ref{sec:co-privacy} and \ref{sec:co-accuracy}.

\medskip
\noindent\textbf{Input parameters.}
\begin{itemize}
\item $\eps < 1$, $\delta \in [0,1)$: global privacy budget.
\item $A$: batch DP algorithm, $(\eps,\delta)$-DP for any given $(\eps,\delta)$.
\item $\pstar \in (0,1/2]$: target halting threshold for the GTM.
  In the high-probability setting, $\pstar = (1-\beta_A)/2$
  where $\beta_A$ is the failure probability of $A$;
  in the in-expectation setting with slack $c$, $\pstar = c\alpha/2$.
\item $\kappa > 0$: parameterizes the scaling of the geometric sequence of thresholds $h_i = (1+\kappa)^i$.
\item $\gamma \in (1,2]$: parameter in calls to \GTM that controls the trade-off between the rejection threshold and per-step sample complexity.
\item $\beta \in (0,1/e)$: global failure probability.
\end{itemize}

\medskip
\noindent\textbf{Weight sequence.}
We fix a sequence $(\omega_i)_{i \geq 1}$ with $\sum_{i=1}^\infty \omega_i \leq 1$.
We use $\omega_k = C_\omega/(k\ln^2(k+1))$, for an appropriate normalization constant $C_\omega$.

\medskip
\noindent\textbf{Constants.}
The following constants are set to satisfy the privacy
and accuracy constraints established in
\Cref{lem:co-privacy} and \Cref{lem:checkpoint-quality}:
\begin{itemize}
\item $C_1 > 0, C_H > 0$: control the ex-post privacy decay rate in \Cref{cor:GTM-ExPost}.The per-step privacy parameter within GTM instance $\ell$ is $\eps_t = \eps(\ell)/(C_1 \ln(4/(\omega_t\beta)))$.
\item $C_\eps = C_1/(C_1 + C_H + 3) < 1$: scaling parameter for the per-checkpoint privacy budget, so that $\eps(\ell) = C_\eps \cdot \omega_\ell \cdot \eps$ (see per-checkpoint quantities below). Chosen so that the combined privacy cost of the $\ell$-th GTM instance and the corresponding \TSelect call is at most $\omega_\ell\,\eps$ (see \Cref{lem:co-privacy}).
\item $C_\beta \leq 1/3$: scaling parameter for the per-instance failure probability, so that $\beta_\ell = C_\beta \cdot \omega_\ell \cdot \beta$ (see per-checkpoint quantities below). The constraint $C_\beta \leq 1/3$ ensures that \TSelect succeeds with sufficiently high probability at each checkpoint (see \Cref{lem:checkpoint-quality}).
\item $C_\delta = 1 + 3e^3/C_\beta$: scaling parameter for the per-instance $\delta$ budget, ensuring that the total additive privacy loss across all checkpoints is at most $\delta$ (see \Cref{lem:co-privacy}).
\end{itemize}

\medskip
\noindent\textbf{Derived quantities from the GTM.}
From \Cref{cor:GTM-ExPost}:
\begin{itemize}
\item $\Lambda = e^{1/C_1 + 2/C_H}$. Depends only on $C_1$ and $C_H$; independent of $t$, $\beta$, and the data.
\item $\bar{p} = \pstar/(\gamma\Lambda)$: the rejection threshold. The GTM continues past step $t$ whenever $p_t \leq \bar{p} - \delta_t/\eps_t$. Since $\pstar \leq 1/2$ and $\gamma\Lambda > 1$, we have $\bar{p} < 1/2$.
\end{itemize}

\medskip
\noindent\textbf{Per-checkpoint quantities.}
For each GTM instance $\ell \geq 2$:
\begin{itemize}
\item $\eps(\ell) = C_\eps \cdot \omega_\ell \cdot \eps$: the privacy budget allocated to instance $\ell$.
\item $\beta_\ell = C_\beta \cdot \omega_\ell \cdot \beta$: the failure probability allocated to instance $\ell$. By definition of $\omega_\ell$, $\sum_{\ell \geq 2} \beta_\ell \leq C_\beta\,\beta$.
\item $\xi_\ell = \bar{p} \cdot \beta_\ell$: the stopping probability parameter for \TSelect at checkpoint $\ell$.
\end{itemize}

\medskip
\noindent\textbf{Per-step quantities.}
For GTM instance $\ell$ at time $t$:
\begin{itemize}
\item $\eps(\ell,t) = \eps(\ell)/(C_1 \cdot \ln(4/(\omega_t\beta)))$: the privacy parameter for evaluations of the test and scored mechanisms at time $t$. The decay in $t$ is required by \Cref{cor:GTM-ExPost} to keep $\Lambda_t \leq \Lambda$.
\item $\delta(\ell,t) = \min(\bar{p}\,\omega_\ell^2\,\beta\,\delta/C_\delta,\; (\bar{p}/2)\cdot\eps(\ell,t))$: the approximate-DP parameter at time $t$. The first term in the min-expression ensures the total additive privacy loss across checkpoints is at most $\delta$; the second ensures $\delta_t/\eps_t \leq \bar{p}/2$, which absorbs the $\delta_t/\eps_t$ perturbation in the GTM accuracy guarantee.
\end{itemize}

\medskip
\noindent\textbf{Batch algorithm parameters at time $t$.}
Suppose there are $\ell-1$ checkpoints before time-step $t$. The batch algorithm $A$ is called with privacy parameters
$\eps_t/2 = \eps(\ell,t)/2$ and $\delta_t = \delta(\ell,t)$.
Substituting, the effective batch privacy parameter is
\[
  \eps_{\mathrm{batch}}(t) = \frac{C_\eps\,\omega_{\ell}\,\eps}{2C_1\,\ln(4/(\omega_t\beta))},
\]
and $\delta_{\mathrm{batch}}(t) = \delta(\ell,t)
\leq \bar{p}\,\omega_\ell^2\,\beta\,\delta/C_\delta
= \Theta(\omega_\ell^2\,\beta\,\delta)$.
We write $E_A(t)$ as shorthand for
$E_A(X_t,\, \eps_{\mathrm{batch}}(t),\, \delta_{\mathrm{batch}}(t))$.

\begin{algorithm}[t]
\caption{Batch-to-Continual-Observation Meta-Algorithm}
\label{alg:batch-to-co}
\SetAlgoLined

\KwIn{%
  Output privacy parameters $\eps<1$, $\delta\in [0,1)$; batch-DP algorithm $A$;
  target threshold $\pstar \in (0,1/2]$;
  parameters $\kappa > 0$, $\gamma \in (1,2]$;
  failure probability $\beta \in (0,1/e)$.}

\KwOut{At each time $t$, a solution $Y_t \in \calY$.}

Define for all $i, t \geq1$:\;
\Indp
    $h_i := (1+\kappa)^{i}$\;
    $\eps(i) := C_\eps \cdot \omega_i \cdot \eps$\;
    $\beta_i := C_\beta \cdot \omega_i \cdot \beta$\;
    $\xi_i := \bar{p} \cdot \beta_i$\;
    $\eps(i,t) := \eps(i) \,/\, \bigl(C_1 \cdot \ln(4/(\omega_t\beta))\bigr)$\;
    $\delta(i,t) := \min(\bar{p} \cdot \omega_i^2 \cdot \beta \cdot \delta/C_\delta,\, (\bar{p}/2)\cdot \eps(i,t))$\;
\Indm
$\eps_1 \gets \eps(1,1)$\;
$\delta_1 \gets \delta(1,1)$\;
$Y_{\mathrm{cur}} \gets A(X_1,\, \eps_1,\, \delta_1)$\;
$Y_1 \gets Y_{\mathrm{cur}}$\;
$\ell \gets 2$\;
$\GTM_\ell.\mathrm{Init}(\eps(\ell),\, \pstar,\, \beta_\ell)$\;

\For{$t = 2, 3, \ldots$}{

  $\eps_t \gets \eps(\ell,t)$\;
  $\delta_t \gets \delta(\ell,t)$\;
  $a_t \gets \GTM_\ell.\mathrm{Step}(\calM_t^{h_\ell},\, X_t,\, \eps_t,\, \delta_t)$\;
  \lIf{$a_t = -1$}{$Y_t \gets Y_{\mathrm{cur}}$}
  \Else{
    $\hat{Y} \gets \TSelect(\calQ_t,\; h_\ell,\; \xi_\ell,\; \min(\eps_t,1))$\;
    \lIf{$\hat{Y} \neq -1$}{$Y_{\mathrm{cur}} \gets \hat{Y}$}
    $Y_t \gets Y_{\mathrm{cur}}$\;
    $\ell \gets \ell + 1$\;
    $\GTM_\ell.\mathrm{Init}(\eps(\ell),\, \pstar,\, \beta_\ell)$\;
  }
}
\end{algorithm}

\begin{defn}[Test and scored mechanisms]\label{def:mechanisms}
Given a batch DP algorithm $A$, privacy parameters $\eps_t > 0$, $\delta_t \in [0,1)$, and a threshold $h > 0$, define:
\begin{enumerate}
    \item The \emph{test mechanism} $\calM_t^h : \calX \to \{-1,+1\}$: on input $X_t$, draw $\hat{Y} \gets A(X_t,\, \eps_t/2,\, \delta_{t})$
and return
\[
  \calM_t^h(X_t) \;:=\; \sign\!\bigl(f(X_t, \hat{Y}) + \Lap(2/\eps_t) - h\bigr).
\]

\item The \emph{scored mechanism}
$\calQ_t : \calX \to \calY \times \R$:
on input $X_t$, draw $\hat{Y} \gets A(X_t,\, \eps_t/2,\, \delta_t)$
and return
\[
  \calQ_t(X_t) \;:=\; \bigl(\hat{Y},\;\; f(X_t, \hat{Y}) + \Lap(2/\eps_t)\bigr).
\]
\end{enumerate}
\end{defn}

\begin{lem}[Properties of the test and scored mechanisms]\label{lem:mechanism-properties}
Let $\calM_t^h$ and $\calQ_t$ be as in \Cref{def:mechanisms}. The following statements hold:

\begin{enumerate}
\item \textbf{Privacy.}
Both $\calM_t^h$ and $\calQ_t$ are $(\eps_t, \delta_t)$-differentially private.

\item \textbf{Test success probability.}
Let $p_t := \Pr[\calM_t^h(X_t) = +1]$.
\begin{enumerate}
\item For any $v \geq h$ and $\beta_A\in (0,1)$ such that $\Pr[f(X_t, \hat{Y}) \geq v] \geq 1 - \beta_A$ for $\hat{Y}\gets A(X_t,\eps_t/2,\delta_t)$,
  \[
    p_t \;\geq\;
    (1-\beta_A)\!\left(1 - \tfrac{1}{2}\exp\!\left(
      -\tfrac{\eps_t}{2}(v - h)\right)\right).
  \]
\item If $h \geq \OPT_t$, then
  \[
    p_t \;\leq\;
    \tfrac{1}{2}\exp\!\left(
      -\tfrac{\eps_t}{2}(h - \OPT_t)\right).
  \]
\end{enumerate}

\item \textbf{Score accuracy.}
Let $(\hat{Y}, v) \sim \calQ_t(X_t)$. Then for any $\beta' \in (0,1)$,
\[
  \Pr\!\left[\bigl|v - f(X_t, \hat{Y})\bigr|
    > \frac{2}{\eps_t}\ln\frac{1}{\beta'}\right]
  \;=\; \beta'.
\]
\end{enumerate}
\end{lem}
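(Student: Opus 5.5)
Item~1 is a composition-plus-post-processing argument. The map $X_t \mapsto \hat{Y} \gets A(X_t,\eps_t/2,\delta_t)$ is $(\eps_t/2,\delta_t)$-DP by hypothesis on $A$. Conditional on $\hat{Y}$, the release of $f(X_t,\hat{Y}) + \Lap(2/\eps_t)$ is $(\eps_t/2)$-DP by the Laplace mechanism, because $f(\cdot,Y)$ has sensitivity $1$ for every fixed $Y$. Adaptive basic composition then shows that the pair $(\hat{Y}, f(X_t,\hat{Y})+\Lap(2/\eps_t))$ is $(\eps_t,\delta_t)$-DP, and this pair is exactly $\calQ_t(X_t)$. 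Since $\calM_t^h$ is the post-processing $v\mapsto \sign(v-h)$ of the score coordinate of $\calQ_t$, it is also $(\eps_t,\delta_t)$-DP.

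For item~2, I will write $\nu\sim\Lap(2/\eps_t)$, drawn independently of $\hat{Y}$, and condition on $\hat{Y}$.

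For (a), on the event $\{f(X_t,\hat{Y}) \ge v\}$, which has probability at least $1-\beta_A$, the test passes whenever $\nu \ge h - f(X_t,\hat{Y})$. This is implied by $\nu \ge -(v-h)$, where $v-h\ge 0$. By the Laplace tail $\Pr[\nu < -s] = \tfrac12 e^{-s\eps_t/2}$ for $s\ge0$, this conditional probability is at least $1-\tfrac12\exp(-\tfrac{\eps_t}{2}(v-h))$. Dropping the complementary event, where the contribution is nonnegative, gives the product bound.

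For (b), we always have $f(X_t,\hat{Y})\le \OPT_t$. The test therefore passes only if $\nu \ge h-\OPT_t \ge 0$, and the same tail bound gives probability at most $\tfrac12\exp(-\tfrac{\eps_t}{2}(h-\OPT_t))$ uniformly in $\hat{Y}$. Averaging over $\hat{Y}$ finishes the argument.

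Item~3 is immediate. We have $v - f(X_t,\hat{Y}) = \nu$, and for Laplace noise of scale $b=2/\eps_t$ we have $\Pr[|\nu|>s]=e^{-s/b}$. Substituting $s=\tfrac{2}{\eps_t}\ln(1/\beta')$ gives exactly $\beta'$.

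No step is a real obstacle. The only points that need care are the direction of the inequalities in (a), namely that $v\ge h$ makes the threshold on $\nu$ nonpositive, and the requirement that $\nu$ is independent of $A$'s randomness so that conditioning on $\hat{Y}$ is legitimate.
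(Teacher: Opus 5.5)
Your proposal is correct and follows essentially the same route as the paper: composition of $A$ with the Laplace mechanism for Item~1, conditioning on $\hat{Y}$ with independent Laplace noise and applying the one-sided Laplace tail for Item~2, and the two-sided Laplace tail for Item~3. Your explicit remark that $\calM_t^h$ is a post-processing of $\calQ_t$ is a slightly cleaner way to get the test mechanism's privacy than the paper's direct composition for both mechanisms, but it amounts to the same argument.
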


\begin{proof}
\textbf{Item 1.}
Each mechanism composes two operations on $X_t$: (i)~$A(X_t, \eps_t/2, \delta_t)$, which is $(\eps_t/2, \delta_t)$-DP; and (ii)~$f(X_t, \hat{Y}) + \Lap(2/\eps_t)$, which for fixed $\hat{Y}$ is $\eps_t/2$-DP since $X_t \mapsto f(X_t, \hat{Y})$ has sensitivity $1$.
By basic composition, both mechanisms are $(\eps_t, \delta_t)$-DP.

\medskip
\noindent\textbf{Item 2.}
Write $V := f(X_t, \hat{Y})$ and $L \sim \Lap(2/\eps_t)$ independent of $\hat{Y}$.
The test returns $+1$ if and only if $V + L \geq h$.

For part~(a): with probability at least $1 - \beta_A$, $V \geq v \geq h$.
Conditioned on this event, fixing a draw of $V$, 
\[ \Pr[V + L \geq h | V] = 1 - \tfrac{1}{2}\exp({-\eps_t(V-h)/2}) \geq 1 - \tfrac{1}{2}\exp({-\eps_t(v-h)/2}).\]
Taking expectation over $V$, and dropping the conditioning on the event of probability $\geq 1-\beta_A$, the stated bound follows.

For part~(b): since $f(X_t, \hat{Y}) \leq \OPT_t$ unconditionally, $V \leq \OPT_t \leq h$. 
For any draw of $V$, 
\[ \Pr[V + L \geq h | V] = \tfrac{1}{2}\exp({-\eps_t(h - V)/2})
\leq \tfrac{1}{2}\exp({-\eps_t(h - \OPT_t)/2}).\]
Taking expectation over $V$ gives the stated bound.

\medskip
\noindent\textbf{Item 3.}
The score is $v = f(X_t, \hat{Y}) + L$
where $L \sim \Lap(2/\eps_t)$ is independent of $\hat{Y}$.
The claim follows from the Laplace tail bound:
$\Pr[|L| > s] = \exp(-\eps_t s/2)$ for all $s \geq 0$.
\end{proof}

\subsection{Privacy}\label{sec:co-privacy}

\begin{lem}\label{lem:co-privacy}
If $\beta<1/e$, $\eps<1$, $C_\eps \leq C_1/(C_1 + C_H + 3)$, and $C_\delta \geq 1 + 3e^3/C_\beta$, then \Cref{alg:batch-to-co} satisfies $(\eps, \delta)$-differential privacy.
When $\delta = 0$, the algorithm is $\eps$-DP.
\end{lem}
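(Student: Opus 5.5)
The plan is to decompose the transcript of \Cref{alg:batch-to-co} into three kinds of components and account for each by composition. The components are (i) the initial call $A(X_1,\eps(1,1),\delta(1,1))$, (ii) the sequence of \GTM instances $(\GTM_\ell)_{\ell\geq2}$, and (iii) the \TSelect invocations $(\TSelect_\ell)_{\ell\geq2}$.

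The algorithm is adaptive: which instance is active at time $t$, and its inputs, depend only on the previously released outputs. We can therefore view the whole run as an adaptive composition of mechanisms indexed by $\ell$. The $\ell$-th mechanism is $\GTM_\ell$ run until it halts, followed by $\TSelect_\ell$ at its halting time. Neighboring streams satisfy $X_t\sim X'_t$ for every $t$, so \Cref{lem:GeneralizedPrivacy} applies to each instance. Basic composition over a countable adaptive sequence then gives total privacy $(\sum_\ell \eps_\ell^{\mathrm{tot}},\sum_\ell\delta_\ell^{\mathrm{tot}})$. Since $\sum\omega_\ell\leq1$, it suffices to show that the $\ell$-th block costs at most $(\omega_\ell\eps,\omega_\ell\delta)$ and that the initial call fits in the leftover budget.

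\textbf{Cost of $\GTM_\ell$.} By \Cref{lem:mechanism-properties}(1), each $\calM_t^{h_\ell}$ is $(\eps(\ell,t),\delta(\ell,t))$-DP. \GTM purifies it internally (\Cref{cor:approxDP}), so the purified test is pure $\eps(\ell,t)$-DP. Since $\eps(\ell,t)=\eps(\ell)/(C_1\ln(4/(\omega_t\beta)))$, the hypothesis of \Cref{cor:GTM-ExPost} holds with $\eps$ there equal to $\eps(\ell)$. The corollary gives pure $(1+C_H/C_1)\eps(\ell)$-DP, with no $\delta$ cost.

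\textbf{Cost of $\TSelect_\ell$.} By \Cref{lem:threshold-select}(1), with $\calQ_t$ being $(\eps_t,\delta_t)$-DP and $\eps_0=\min(\eps_t,1)=\eps_t$ (valid since $\eps<1$), the call costs $(3\eps_t,\,3e^{3\eps_t}\delta_t/\xi_\ell)$. We have $\eps_t\leq \eps(\ell)/C_1$ because $\ln(4/(\omega_t\beta))\geq1$. We also have $\delta_t\leq\bar p\,\omega_\ell^2\beta\delta/C_\delta$ and $\xi_\ell=\bar p C_\beta\omega_\ell\beta$. Together these bound the $\delta$-cost by $3e^3\omega_\ell\delta/(C_\beta C_\delta)\leq \omega_\ell\delta$, using $C_\delta\geq 3e^3/C_\beta$.

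\textbf{Summing per checkpoint.} The pure part for index $\ell$ is at most $\eps(\ell)(1+C_H/C_1+3/C_1)=C_\eps\omega_\ell\eps\,(C_1+C_H+3)/C_1\leq\omega_\ell\eps$, by the hypothesis on $C_\eps$. The initial call costs $\eps(1,1)\leq C_\eps\omega_1\eps\leq\omega_1\eps$ and $\delta(1,1)\leq\omega_1\delta$, so index $1$ also fits within $\omega_1$. Summing over $\ell\geq1$ gives $(\eps,\delta)$. When $\delta=0$, every $\delta(\ell,t)=0$, so the result is pure $\eps$-DP.

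\textbf{Main obstacle.} The step I expect to be delicate is justifying the composition formally. $\TSelect_\ell$ is invoked at a data-dependent time (the halting time of $\GTM_\ell$), and $\GTM_\ell$ is run over an unbounded stream. I would handle this by conditioning on the released prefix. The halting time is a function of public outputs, and \Cref{lem:GeneralizedPrivacy} bounds the likelihood ratio of every transcript prefix. Given the halting time, $\TSelect_\ell$ is a fixed mechanism on $X_t$ versus $X'_t$. Multiplying the likelihood ratios along the block and then across blocks yields the adaptive basic composition bound, including in the countable case. The $\delta$ terms add along this chain in the usual way.
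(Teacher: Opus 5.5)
Your proposal is correct and follows the paper's proof essentially step for step. It uses the same decomposition into the initial call to $A$, the instances $\GTM_\ell$ (pure $(1+C_H/C_1)\eps(\ell)$-DP by \Cref{cor:GTM-ExPost}, with purification absorbing all $\delta$), and the calls $\TSelect_\ell$ (cost $(3\eps(\ell)/C_1,\;3e^3\omega_\ell\delta/(C_\beta C_\delta))$ by \Cref{lem:threshold-select} with $\eps_0=\eps_t\le 1$), summed against $\sum_\ell\omega_\ell\le 1$; your prefix-conditioning remark only spells out the adaptive basic composition the paper invokes without comment.

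Two minor points. First, the initial call gives $\eps(1,1)\le C_\eps\omega_1\eps/C_1$, not $C_\eps\omega_1\eps$, because the factor $1/(C_1\ln(4/(\omega_1\beta)))$ need not be at most $1$; the hypothesis on $C_\eps$ still yields $\eps(1,1)\le\omega_1\eps$. Second, exactly as in the paper, your check that \Cref{cor:GTM-ExPost} applies uses $\ln(4/(\omega_t\beta))$, although $\GTM_\ell$ is initialized with failure probability $\beta_\ell<\beta$. Strictly, the $C_H$ part of the halting cost therefore picks up the factor $\ln(4/(\omega_t\beta_\ell))/\ln(4/(\omega_t\beta))>1$, which could break the stated constant $C_\eps\le C_1/(C_1+C_H+3)$. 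This is fixed by using $\beta_\ell$ in place of $\beta$ in the definition of $\eps(\ell,t)$.
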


\begin{proof}
The observable output is determined by:
(a)~the initialization call to $A$ at time-step $t=1$;
(b)~the GTM transcript within each instance;
(c)~the \TSelect output at each checkpoint; and
(d)~continuation outputs $Y_t = Y_{\mathrm{cur}}$,
which are post-processing of (a) and (c).
By basic composition, the total privacy cost is
the sum of the costs of (a)--(c).

\medskip
\noindent\textbf{Privacy loss per GTM Instance.}
By \Cref{cor:GTM-ExPost}, the GTM transcript for instance $\ell$ is $(1 + C_H/C_1)\eps(\ell)$-DP with $\delta = 0$, since purification is handled internally.
Substituting $\eps(\ell) = C_\eps \omega_\ell \eps$, we get that each invocation is pure DP with privacy parameter
\begin{equation}\label{eq:gtm-eps-cost}
  (1 + C_H/C_1)\,C_\eps\,\omega_\ell\,\eps.
\end{equation}

\medskip
\noindent\textbf{Privacy loss per \TSelect instance.} 
At checkpoint $\ell$, for $t = t(\ell)$, \TSelect (\Cref{alg:threshold-select}) draws from $\calQ_t$, which is $(\eps_t, \delta_t)$-DP by \Cref{lem:mechanism-properties}.
By \Cref{lem:threshold-select}(1) with $\eps_1 = \eps_t$, $\eps_0 = \min(\eps_t, 1)$, $\delta_1 = \delta_t$, and coin-flip probability $\xi_\ell$, the output is
\[
  \bigl(2\eps_t + \eps_0,\;\; 3e^{2\eps_t + \eps_0} \cdot \delta_t / \xi_\ell\bigr)\text{-DP}.
\]
Since $\eps_t \leq 1$, we have $2\eps_t + \eps_0 \leq 3\eps_t$ and $3e^{2\eps_t + \eps_0} \leq 3e^3$.

For the $\eps$-term, we note the checkpoint occurs at the halting time $t = t(\ell)$, where $\eps_{t(\ell)} = \eps(\ell,t) = \eps(\ell)/(C_1 \ln(4/(\omega_{t(\ell)}\beta)))$.
Since $\ln(4/(\omega_{t(\ell)}\beta)) \geq \ln(1/\beta) \geq 1$ (assuming $\beta \leq 1/e$), we get that the $\eps$-term is at most
\begin{equation}\label{eq:checkpoint-eps-cost}
  \frac{3 \eps(\ell)}{C_1} \;=\; \frac{3\,C_\eps\,\omega_\ell\,\eps}{C_1}.
\end{equation}
For the $\delta$-term, we have that by definition $\delta_t \leq \bar{p}\,\omega_\ell^2\,\beta\,\delta/C_\delta$, and $\xi_\ell = \bar{p}\,C_\beta\,\omega_\ell\,\beta$. It follows that the $\delta$-term is at most 
\begin{equation}\label{eq:checkpoint-delta-cost}
  3e^3 \cdot \frac{\delta_t}{\xi_\ell}
  \;=\; 3e^3 \cdot \frac{\bar{p}\,\omega_\ell^2\,\beta\,\delta / C_\delta}
    {\bar{p}\,C_\beta\,\omega_\ell\,\beta}
  \;=\; \frac{3e^3}{C_\beta\,C_\delta}\,\omega_\ell\,\delta.
\end{equation}

\medskip
\noindent\textbf{Privacy loss for first checkpoint.}
The call $A(X_1, \eps_1, \delta_1)$
costs $(\eps_1, \delta_1)$, where $\eps_1 = \eps(1,1)$ is at most
\begin{equation}\label{eq:init-eps-cost}
  C_\eps\,\omega_1\,\eps/(C_1\ln(4/(\omega_1\beta))) \leq C_\eps\,\omega_1\,\eps/C_1.
\end{equation}
Similarly $\delta_1 = \delta(1,1)$ which is at most
\begin{equation}\label{eq:init-delta-cost}
  \bar{p}\,\omega_1^2\,\beta\,\delta/C_\delta
\leq \omega_1\,\delta/C_\delta,
\end{equation}
using $\bar{p}\,\omega_1\,\beta \leq 1$. If $A$ is pure DP and $\delta$ has been set equal to $0$, then this equals $0$.

\medskip
\noindent\textbf{Total multiplicative privacy loss parameter.}
Summing~\eqref{eq:gtm-eps-cost},~\eqref{eq:checkpoint-eps-cost} over $\ell \geq 2$,
and adding~\eqref{eq:init-eps-cost}, the net multiplicative privacy loss parameter is at most
\begin{align*}
  \frac{C_\eps\,\omega_1\,\eps}{C_1}
  + \sum_{\ell=2}^{\infty}
    C_\eps\,\omega_\ell\,\eps\!\left(1 + \frac{C_H}{C_1} + \frac{3}{C_1}\right) \;\leq\; C_\eps\,\eps\!\left(1 + \frac{C_H}{C_1} +  \frac{3}{C_1}\right).
\end{align*}
This is at most $\eps$ when $C_\eps \leq C_1/(C_1 + C_H + 3)$.

\medskip
\noindent\textbf{Total additive privacy loss parameter.}
GTM instances are pure DP, so by summing~\eqref{eq:checkpoint-delta-cost} over $\ell \geq 2$
and adding~\eqref{eq:init-delta-cost}, we have that the additive privacy loss parameter is at most
\begin{align*}
  \frac{\omega_1\,\delta}{C_\delta}
  + \sum_{\ell=2}^{\infty} \frac{3e^3}{C_\beta\,C_\delta}\,\omega_\ell\,\delta \;\leq\;
  \frac{\delta}{C_\delta}\!\left(1 + \frac{3e^3}{C_\beta}\right).
\end{align*}
This is at most $\delta$ when $C_\delta \geq 1 + 3e^3/C_\beta$, and this is true even when $\delta = 0$, as mentioned above.
\end{proof}

\subsection{Accuracy}\label{sec:co-accuracy}

We establish the accuracy of \Cref{alg:batch-to-co}.
The following lemma translates the outcome of the GTM at each step into bounds on $\OPT_t$ relative to the threshold $h$.

\begin{lem}[Threshold test accuracy]\label{lem:translation}
Let $\pstar \in (0,1/2]$
and let $\bar{p} := \pstar/(\gamma\Lambda)$
as in \Cref{cor:GTM-ExPost}.
With probability at least $1 - \beta_\ell$ over the randomness
of the $\ell$-th invocation of \Cref{alg:GTM}:

\begin{enumerate}
\item \textbf{If $a_t = -1$ (continuation),} then
for any $v$ satisfying
$\Pr[f(X_t, \hat{Y}) \geq v] \geq 1 - \beta_A$ for some $\beta_A \leq 1 - 2\pstar$,
\[
  v \;<\; h + \frac{2}{\eps_t}\ln\frac{1}{\bar{p}}.
\]
\item \textbf{If $a_t = +1$ (halting),} then
\[
  \OPT_t + \frac{2}{\eps_t}\ln\frac{1}{\bar{p}} \;>\; h.
\]
\end{enumerate}
\end{lem}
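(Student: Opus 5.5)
We condition on the good event of \Cref{cor:GTM-ExPost} for the $\ell$-th instance, which has probability at least $1-\beta_\ell$ (we run that instance with failure probability $\beta_{\GTM}=\beta_\ell$). On this event the instance behaves correctly at every step. If $p_t \geq \pstar + \delta_t/\eps_t$ it halts, and if $p_t \leq \bar{p} - \delta_t/\eps_t$ it continues. The parameter choice $\delta_t \leq (\bar{p}/2)\eps_t$ gives $\delta_t/\eps_t \leq \bar{p}/2$. So halting is guaranteed once $p_t \geq \pstar + \bar{p}/2$, and continuation once $p_t \leq \bar{p}/2$. Both items are then proved by contraposition: assuming the conclusion fails, we use \Cref{lem:mechanism-properties}(2) to place $p_t$ in the regime where the good event forces the opposite action.

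\emph{Item 1 (continuation).} Suppose $a_t=-1$ but some $v$ with $\Pr[f(X_t,\hat Y)\ge v]\ge 1-\beta_A$ satisfies $v \geq h + \tfrac{2}{\eps_t}\ln(1/\bar{p})$. Then $v\ge h$, and \Cref{lem:mechanism-properties}(2a) gives
\[ p_t \geq (1-\beta_A)\bigl(1 - \tfrac12 \bar{p}\bigr). \]
From $\beta_A \le 1-2\pstar$ we get $1-\beta_A \ge 2\pstar$, hence $p_t \ge 2\pstar(1-\bar p/2) = 2\pstar - \pstar\bar p$. It remains to check that this is at least $\pstar + \bar p/2$, i.e.\ $\pstar(1-\bar p) \ge \bar p/2$. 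This holds because $\bar p = \pstar/(\gamma\Lambda)\le \pstar/2$ and $\bar p<1/2$, which give $\pstar(1-\bar p)\ge \pstar/2 \ge \bar p$. The good event then forces halting, a contradiction.

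\emph{Item 2 (halting).} Suppose $a_t=+1$ but $h \geq \OPT_t + \tfrac{2}{\eps_t}\ln(1/\bar p)$. Then $h\ge\OPT_t$, and \Cref{lem:mechanism-properties}(2b) gives $p_t \le \tfrac12 \bar p$, which is at most $\bar p - \delta_t/\eps_t$. The good event then forces continuation, a contradiction.

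The only delicate step is the constant bookkeeping in Item 1. We must check that the $\delta_t/\eps_t$ slack, the Laplace factor $1-\bar p/2$, and the loss $1-\beta_A\ge 2\pstar$ still leave $p_t$ above $\pstar+\delta_t/\eps_t$. This uses $\gamma\Lambda\ge 2$, or at least $\gamma\Lambda>1$ together with $\pstar\le 1/2$. If $\gamma\Lambda$ is only slightly above $1$, we would instead use the precise form $\bar p < \pstar$ and $\pstar\le1/2$ to show $2\pstar-\pstar\bar p \ge \pstar+\bar p/2$, which rearranges to $\pstar \ge \bar p(\pstar+1/2)$. This holds since $\bar p\le \pstar$ and $\pstar + 1/2\le 1$.
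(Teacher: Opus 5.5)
Your proof is correct and follows the paper's argument. You condition on the good event of \Cref{cor:GTM-ExPost} for instance $\ell$, absorb the slack via $\delta_t/\eps_t\le\bar p/2$, and combine with \Cref{lem:mechanism-properties}(2a) and (2b); the only cosmetic difference is that you argue by contraposition at the boundary value $\tfrac{2}{\eps_t}\ln(1/\bar p)$, whereas the paper solves the two resulting inequalities directly for $v-h$ and $h-\OPT_t$.

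One remark: the step $\bar p\le\pstar/2$ in your Item~1 needs $\gamma\Lambda\ge 2$, which is not guaranteed (only $\gamma\Lambda>1$). The justification that should stand is the fallback $\pstar\ge\bar p(\pstar+1/2)$ from your last paragraph, which uses only $\bar p\le\pstar$ and $\pstar\le 1/2$.
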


\begin{proof}
By \Cref{cor:GTM-ExPost}, with $\beta_{\GTM} = \beta_\ell$,
with probability at least $1 - \beta_\ell$
the halting and continuation guarantees hold simultaneously
for all $t$.
We condition on this event.
By the definition of $\delta(i,t)$ in \Cref{alg:batch-to-co},
$\delta_t/\eps_t \leq \bar{p}/2$.

\medskip
\noindent\textbf{Item 1.}
If $a_t = -1$, then by \Cref{cor:GTM-ExPost}(2a)
and $\delta_t/\eps_t \leq \bar{p}/2$,
we have $p_t < \pstar + \bar{p}/2$.
On the other hand, if $v \geq h$,
\Cref{lem:mechanism-properties}(2a) gives
\[
  p_t \;\geq\;
  (1-\beta_A)\!\left(1 - \tfrac{1}{2}
    \exp\!\left(-\tfrac{\eps_t}{2}
      (v - h)\right)\right).
\]
For both bounds to hold simultaneously,
since $(1-\beta_A) \geq 2\pstar$ and $1 - \frac{1}{2}e^{-x}$ is increasing in $x$,
setting $x' := v - h$ it must be the case that
\begin{align*}
    2\pstar\left(1 - \frac{1}{2}e^{-\eps_tx' /2}\right) &< \pstar + \bar{p}/2 \\
    \Leftrightarrow 1 - \frac{1}{2}e^{-\eps_tx' /2} &< \frac{1}{2} + \frac{\bar{p}}{4\pstar} \\
    \Leftrightarrow e^{-\eps_tx'/2} &> 1 - \frac{\bar{p}}{2\pstar} \quad \geq \bar{p},
\end{align*}
where in the above we use that $\bar{p}\leq \pstar \leq 1/2$. Taking logarithms, $x' < (2/\eps_t)\ln(1/\bar{p})$.
If $v < h$, the bound holds trivially.

\medskip
\noindent\textbf{Item 2.}
By the contrapositive of \Cref{cor:GTM-ExPost}(2b):
if $a_t = +1$, then $p_t > \bar{p} - \delta_t/\eps_t \geq \bar{p}/2$.
If $h < \OPT_t$, the conclusion holds trivially since $(2/\eps_t)\ln(1/\bar{p}) > 0$. If $h \geq \OPT_t$, then by \Cref{lem:mechanism-properties}(2b),
$p_t \leq \tfrac{1}{2}\exp(-\tfrac{\eps_t}{2}(h - \OPT_t))$.
Combining with $p_t > \bar{p}/2$
and taking logarithms:
\[
  h - \OPT_t \;<\; \frac{2}{\eps_t}\ln\frac{1}{\bar{p}}.
  \qedhere
\]
\end{proof}

\begin{lem}[Checkpoint count]\label{lem:checkpoint-count}
Under the hypotheses of \Cref{lem:translation},
with probability at least $1 - \sum_\ell \beta_\ell \geq 1 - \beta$
(by a union bound over the accuracy guarantees
of \Cref{cor:GTM-ExPost} across all instances),
the number of checkpoints $\ell(t)$ declared by time $t$ satisfies
\[
  \ell(t) \;=\; O\!\left(\frac{1}{\ln(1+\kappa)}\left(
    \ln\frac{t}{\eps}
    + \ln\ln\frac{1}{\beta}
    + \ln\ln\frac{1}{\bar{p}}\right)\right).
\]
\end{lem}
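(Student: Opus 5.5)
We condition throughout on the event $\calE := \bigcap_{\ell\geq2}\calE_\ell$, where $\calE_\ell$ is the good event of \Cref{cor:GTM-ExPost} for the instance $\GTM_\ell$. A union bound gives $\Pr[\calE^c]\le\sum_\ell\beta_\ell\le C_\beta\beta\le\beta$. On $\calE$, \Cref{lem:translation}(2) applies at every halting time. Fix a time $t$ with $L:=\ell(t)$ checkpoints declared by time $t$, and suppose $L\ge 2$. The checkpoint $L$ was declared at some time $t(L)\le t$, namely the halting step of instance $\GTM_L$, which uses threshold $h_L=(1+\kappa)^L$. Item~2 of \Cref{lem:translation} then yields
\[
(1+\kappa)^L \;<\; \OPT_{t(L)} + \frac{2}{\eps_{t(L)}}\ln\frac{1}{\bar p}.
\]
The plan is to upper bound the right-hand side by a quantity polynomial in $t$, $1/\eps$, $\ln(1/\beta)$ and $\ln(1/\bar p)$, and possibly also depending on $L$ itself. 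Taking logarithms and dividing by $\ln(1+\kappa)$ will then give the claim.

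\textbf{Bounding $\OPT$.} By bounded sensitivity and insertion-only streams, $\OPT_{s+1}\le\OPT_s+1$. Hence $\OPT_{t(L)}\le\OPT_1+t$. Note that $\OPT_1$ is an instance-dependent constant, which the statement implicitly absorbs (for instance, if $f(\emptyset,\cdot)=0$ then $\OPT_1\le1$). I would simply state $\OPT_{t(L)}\le t+\OPT_1$ and treat $\OPT_1$ as $O(1)$ or fold it into the $O(\cdot)$.

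\textbf{Bounding the noise term.} By the per-step parameter, $\eps_{t(L)}=\eps(L,t(L))=C_\eps\omega_L\eps/(C_1\ln(4/(\omega_{t(L)}\beta)))$. Since $1/\omega_k=O(k\ln^2(k+1))$, we get
\[
\frac{2}{\eps_{t(L)}}\ln\frac1{\bar p}=O\!\left(\frac{L\ln^2(L+1)\,\ln(t\ln^2(t+1)/\beta)}{\eps}\ln\frac1{\bar p}\right).
\]
Combining the two bounds,
\[
L\ln(1+\kappa)\le \ln\!\Big(O\big(t+\tfrac{1}{\eps}L\ln^2(L+1)\ln(t/\beta)\ln\tfrac1{\bar p}\big)\Big).
\]

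\textbf{Main obstacle: the self-referential dependence on $L$.} Note that $L$ appears on the right-hand side. Its contribution there is only $O(\ln L)$, so the inequality has the form $L\ln(1+\kappa)\le O(\ln L)+\Phi$, where
\[
\Phi:=\ln t+\ln\tfrac1\eps+\ln\ln\tfrac t\beta+\ln\ln\tfrac1{\bar p}+O(1).
\]
I would resolve this with the standard absorption step. If $L\ln(1+\kappa)\le 2C\ln L+2\Phi$, then either $L\ln(1+\kappa)\le 4\Phi$, or $L\ln(1+\kappa)\le 4C\ln L$; the latter forces $L=O(\kappa^{-1}\ln(1/\kappa))$, i.e.\ $L=O(\ln(1/\ln(1+\kappa))/\ln(1+\kappa))$. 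This extra term needs care. For $\kappa$ bounded below it is a constant. Otherwise I would absorb it by noting that $\ln\ln(1/\beta)\ge0$ and that $\ln(t/\eps)\ge\ln(1/\eps)>0$ with $\eps<1$. If this does not fully absorb it, I would accept a lower-order additive $O(\ln(1/\kappa)/\ln(1+\kappa))$ term, which I expect the paper treats as part of the hidden constant for fixed $\kappa$.

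Finally, using $\ln\ln(t/\beta)\le\ln\ln t+\ln\ln(1/\beta)+O(1)$ and $\ln\ln t\le\ln t$ yields the stated bound on $\ell(t)$. The case $L\le1$ is trivial.
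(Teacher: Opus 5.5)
Your proposal is correct and is essentially the paper's proof: you apply Item~2 of \Cref{lem:translation} at the halting step of the instance that declared the last checkpoint, bound $\OPT_{t(\ell)}$ by $t$ via sensitivity and $1/\eps_{t(\ell)}$ by $O(\ell^3\ln(t/\beta)/\eps)$, take logarithms, and absorb the resulting $3\ln\ell$ term, which the paper simply declares ``eventually dominated''; you are also more careful than the paper about $\OPT_1$ and about this self-reference. Your residual $O(\ln(1/\kappa)/\ln(1+\kappa))$ term can in fact be eliminated: at most one checkpoint is declared per step, so $\ell(t)\le t$, and hence $\ln\ell(t)\le\ln t$ is absorbed directly into the $\ln(t/\eps)$ term with an absolute constant.
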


\begin{proof}
Suppose instance $\ell$ halts at step $t(\ell)$. By \Cref{lem:translation}~Item~2,
\[
  h_\ell \;<\; \OPT_{t(\ell)} + \frac{2}{\eps_{t(\ell)}}\ln\frac{1}{\bar{p}}.
\]
By bounded sensitivity, $\OPT_{t(\ell)} \leq t(\ell)$. Substituting $\eps_{t(\ell)} = C_\eps \omega_\ell \eps / (C_1 \ln( 4/(\omega_{t(\ell)}\beta)))$,
\[
  \frac{2}{\eps_{t(\ell)}}\ln\frac{1}{\bar{p}}
  \;=\; \frac{2C_1}{C_\eps\,\omega_\ell\,\eps}\cdot\ln\frac{4}{\omega_{t(\ell)}\beta}\cdot \ln\frac{1}{\bar{p}}.
\]
We can write $\omega_\ell = C_\omega/(\ell\ln^2 (\ell+1) )$,
$\ell\ln^2 (\ell+1) \leq \ell^3$ for $\ell \geq 1$,
and $t(\ell) \leq t$. It follows that
\[
  (1+\kappa)^{\ell}
  \;=\; h_\ell
  \;<\; t + C_3 \cdot \frac{\ell^3 \ln(t/\beta)}{\eps} \cdot \ln\frac{1}{\bar{p}}
\]
for an absolute constant $C_3 > 0$.
Taking logarithms:
\[
  \ell\ln(1+\kappa)
  \;<\; \ln\!\left(t + C_3 \cdot \frac{\ell^3\ln(t/\beta)}{\eps}\cdot\ln\frac{1}{\bar{p}}\right).
\]
Since $3\ln\ell$ is eventually dominated by the linear growth of $\ell\ln(1+\kappa)$,
\[
  \ell \;=\; O\!\left(\frac{1}{\ln(1+\kappa)}\left(
    \ln\frac{t}{\eps}
    + \ln\ln\frac{1}{\beta}
    + \ln\ln\frac{1}{\bar{p}}\right)\right).
  \qedhere
\]
\end{proof}

\begin{lem}[Checkpoint quality]\label{lem:checkpoint-quality}
Under the hypotheses of \Cref{lem:translation}, suppose instance $\ell$ halts at step $t(\ell)$ and \TSelect (\Cref{alg:threshold-select}) is run with the scored mechanism $\calQ_{t(\ell)}$ (\Cref{def:mechanisms}), stopping probability parameter $\xi_\ell$, and threshold $h_\ell$.
Then with probability at least $1 - 3\omega_\ell\beta$,
\[
  f(X_{t(\ell)},\, Y_{\mathrm{cur}})
  \;\geq\;
  h_\ell
    - \frac{4C_1\ln(4/(\omega_{t(\ell)}\beta))}{C_\eps\,\omega_\ell\,\eps}\ln\frac{1}{\omega_\ell\beta}
    - \frac{2C_1\ln(4/(\omega_{t(\ell)}\beta))}{C_\eps\,\omega_\ell\,\eps}\ln\frac{4}{\bar{p}}.
\]
\end{lem}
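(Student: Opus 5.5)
We write $t = t(\ell)$ and $\eps_t = \eps(\ell,t)$. The plan is to combine three ingredients, each failing with probability at most $\omega_\ell\beta$, and then take a union bound:
\begin{itemize}
\item[(i)] a lower bound on the per-draw success probability $p_1 := \Pr_{(Y,v)\sim\calQ_t(X_t)}[v \ge h_\ell]$, inherited from the GTM halting event;
\item[(ii)] the failure bound of \Cref{lem:threshold-select}(2), which makes \TSelect return a sample with high probability;
\item[(iii)] a Laplace tail bound from \Cref{lem:mechanism-properties}(3), which shows the returned sample's true utility is close to its noisy score.
\end{itemize}

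\textbf{Step 1: lower-bounding $p_1$.} Condition on the accuracy event of $\GTM_\ell$ from \Cref{lem:translation}, which has probability at least $1-\beta_\ell \ge 1-\omega_\ell\beta$. On this event, halting at $t$ means $p_t > \bar p - \delta_t/\eps_t \ge \bar p/2$. Here $p_t$ is the probability that one draw of $\calM_t^{h_\ell}$ returns $+1$. This event is exactly that $f(X_t,\hat Y)+\Lap(2/\eps_t) \ge h_\ell$, which is the event that a draw of $\calQ_t$ has score at least $h_\ell$. Since $\calM_t^{h_\ell}$ and $\calQ_t$ use the same $A$ and the same noise distribution, $p_1 = p_t \ge \bar p/2$.

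\textbf{Step 2: \TSelect returns a sample.} Apply \Cref{lem:threshold-select}(2) with $\xi_\ell = \bar p\,\beta_\ell$ and $\eps_0 \le 1$:
\[
\Pr[\text{output}=-1] \le \frac{(3/2)\,\xi_\ell}{p_1} \le 3\beta_\ell \le \omega_\ell\beta,
\]
using $C_\beta \le 1/3$. On the complementary event, $Y_{\mathrm{cur}}$ is the returned $\hat Y$, and its score satisfies $v \ge h_\ell$.

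\textbf{Step 3: from noisy score to true utility.} This is the main obstacle. The returned sample is selected because its noise was favourable, so we cannot apply a single Laplace tail bound to it directly. The plan is to union bound the tail event over all draws \TSelect makes. By \Cref{lem:mechanism-properties}(3), each draw has $|v - f(X_t,\hat Y)| > (2/\eps_t)\ln(1/\beta')$ with probability $\beta'$.

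There are two ways to control the number of draws. The deterministic cap from \Cref{alg:threshold-select} is
\[
T_{\max} = O\big((1/\xi_\ell)\ln(2/\eps_0)\big),
\]
which introduces an unwanted $\ln(1/\eps_t)$ factor. The better route uses \Cref{lem:threshold-select}(3): the number of draws is dominated by $\Geo(p_1)$ with $p_1 \ge \bar p/2$. So with probability $1-\omega_\ell\beta$ the number of draws is at most $(2/\bar p)\ln(1/(\omega_\ell\beta))$. We set $\beta' = \omega_\ell\beta\,\bar p/(4\ln(1/(\omega_\ell\beta)))$, so the union bound over this many draws costs $O(\omega_\ell\beta)$. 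The deviation is then at most
\[
\frac{2}{\eps_t}\Big(\ln\tfrac{1}{\omega_\ell\beta} + \ln\tfrac{4}{\bar p} + \ln\ln\tfrac{1}{\omega_\ell\beta}\Big).
\]
The $\ln\ln$ term is absorbed by doubling the coefficient of $\ln(1/(\omega_\ell\beta))$, which explains the factor $4$ in front of that term in the statement.

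Substituting $1/\eps_t = C_1\ln(4/(\omega_{t}\beta))/(C_\eps\omega_\ell\eps)$ gives the two error terms in the statement. Summing the three failure events gives at most $3\omega_\ell\beta$; the constants are tuned slightly so that the geometric-tail event and the union-bound event share one $\omega_\ell\beta$ budget.
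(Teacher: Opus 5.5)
Your proposal is correct and follows essentially the same route as the paper. Both arguments run the same chain: lower-bound the per-draw success probability by $\bar p/2$ via the GTM halting guarantee, use geometric domination to cap the number of \TSelect draws at about $(2/\bar p)\ln(1/(\omega_\ell\beta))$, and bound the failure of \TSelect by $3C_\beta\omega_\ell\beta\le\omega_\ell\beta$. Both then union-bound the Laplace noise over the first $k_{\max}$ draws, absorbing the $\ln\ln$ term by doubling the coefficient of $\ln(1/(\omega_\ell\beta))$.

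The only difference is bookkeeping. The paper treats the GTM accuracy event as a hypothesis, charged separately in \Cref{thm:batch-to-co}, and spends one $\omega_\ell\beta$ each on the draw count, \TSelect success, and the Laplace union bound. You instead fold the GTM event into the $3\omega_\ell\beta$ budget. That still fits without further tuning: $\beta_\ell + 3\beta_\ell \le \tfrac{4}{3}\omega_\ell\beta$, and your choice of $\beta'$ makes the union bound cost only about $\tfrac{5}{8}\omega_\ell\beta$.
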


\begin{proof}
Write $t = t(\ell)$ throughout.
We show that three events each hold
with probability at least $1 - \omega_\ell\beta$.

\medskip
\noindent\textbf{Bound on $K$}
Since instance $\ell$ halted, by \Cref{cor:GTM-ExPost}(2b) and $\delta_t/\eps_t \leq \bar{p}/2$, the success probability of each draw satisfies $p_t > \bar{p} - \delta_t/\eps_t \geq \bar{p}/2$.
By \Cref{lem:threshold-select}(3), the number of draws $K$ is stochastically dominated by $\Geo((\bar{p}/2)(1-\xi_{\ell})+\xi_\ell)$. Since $(\bar{p}/2)(1-\xi_{\ell})+\xi_\ell \geq \bar{p}/2$, it follows that it is also stochastically dominated by $\Geo(\bar{p}/2)$.
Setting $k_{\max} = \lceil (2/\bar{p})\ln(1/(\omega_\ell\beta)) \rceil$:
\[
  \Pr[K > k_{\max}]
  \;\leq\; (1 - \bar{p}/2)^{k_{\max}}
  \;\leq\; e^{-(\bar{p}/2)k_{\max}}
  \;\leq\; \omega_\ell\beta.
\]

\medskip
\noindent\textbf{Success of \TSelect.}
By \Cref{lem:threshold-select}(2)
with coin-flip probability $\xi_\ell = \bar{p}\,C_\beta\,\omega_\ell\,\beta$,
using $1/p_t \leq 2/\bar{p}$, $(1-p_t) \leq 1$,
and $(1 + \eps_0/2) \leq 3/2$:
\[
  \Pr[\text{output} = -1]
  \;\leq\; \frac{3\xi_\ell}{\bar{p}}
  \;=\; 3C_\beta\,\omega_\ell\,\beta
  \;\leq\; \omega_\ell\beta,
\]
where the last inequality holds for $C_\beta \leq 1/3$.
When \TSelect succeeds,
it returns $(\hat{Y}, v)$ with $v \geq h_\ell$.

\medskip
\noindent\textbf{Uniform bound on Laplace noise terms.}
By \Cref{lem:mechanism-properties}(3) applied to each of the first
$k_{\max}$ draws with $\beta' = \omega_\ell\beta/k_{\max}$,
a union bound gives:
\[
  \Pr\!\left[\exists\, j \in [k_{\max}] :
    |v_j - f(X_t, \hat{Y}_j)| > \frac{2}{\eps_t}\ln\frac{k_{\max}}{\omega_\ell\beta}
  \right]
  \;\leq\; \omega_\ell\beta.
\]

\medskip
\noindent\textbf{Combining.}
By a union bound, all three events hold simultaneously
with probability at least $1 - 3\omega_\ell\beta$.
On this event:
$K \leq k_{\max}$ (Step~1),
so the returned draw has index $j \leq k_{\max}$;
its score satisfies $v \geq h_\ell$ (Step~2);
and its Laplace noise satisfies
$|v - f(X_t, \hat{Y})| \leq (2/\eps_t)\ln(k_{\max}/(\omega_\ell\beta))$ (Step~3).
Therefore:
\[
  f(X_t, Y_{\mathrm{cur}})
  \;\geq\; v - |v - f(X_t, \hat{Y})|
  \;\geq\; h_\ell - \frac{2}{\eps_t}\ln\frac{k_{\max}}{\omega_\ell\beta}.
\]
We simplify the additive loss.
Since $k_{\max} \leq 1 + (2/\bar{p})\ln(1/(\omega_\ell\beta))
\leq (4/\bar{p})\ln(1/(\omega_\ell\beta))$
for $\omega_\ell\beta \leq 1/e$:
\[
  \ln\frac{k_{\max}}{\omega_\ell\beta}
  \;\leq\; \ln\frac{4}{\bar{p}}
    + \ln\ln\frac{1}{\omega_\ell\beta}
    + \ln\frac{1}{\omega_\ell\beta}
  \;\leq\; 2\ln\frac{1}{\omega_\ell\beta}
    + \ln\frac{4}{\bar{p}}.
\]
Substituting $\eps_t = C_\eps\omega_\ell\eps/(C_1\ln(4/(\omega_t\beta)))$:
\[
  f(X_t, Y_{\mathrm{cur}})
  \;\geq\; h_\ell
    - \frac{4C_1\ln(4/(\omega_t\beta))}{C_\eps\,\omega_\ell\,\eps}\ln\frac{1}{\omega_\ell\beta}
    - \frac{2C_1\ln(4/(\omega_t\beta))}{C_\eps\,\omega_\ell\,\eps}\ln\frac{4}{\bar{p}}.
  \qedhere
\]
\end{proof}

\begin{thm}[Batch-to-Continual-Observation reduction]\label{thm:batch-to-co}
Let $f$ be a bounded-sensitivity, data-monotone maximization problem,
and let $A$ be a batch $(\eps,\delta)$-DP algorithm.
Fix geometric spacing $\kappa > 0$, gap parameter $\gamma \in (1,2]$,
threshold $\pstar \in (0,1/2]$, failure probability $\beta \in (0,1/e)$, and $\eps < 1$.
Run \Cref{alg:batch-to-co}.
For each time $t$, define
\[
  V_A(X_t) \;:=\;
  \sup\!\left\{v : \Pr\!\left[f(X_t,\, A(X_t,\, \eps_t/2,\,
    \delta_t)) \geq v\right] \geq 2\pstar\right\},
\]
where $\eps_t$ and $\delta_t$ are the privacy parameters in effect at time $t$.
Then:
\begin{enumerate}
\item \textbf{Privacy.}
The output sequence $(Y_t)_{t \geq 1}$ is $(\eps, \delta)$-DP.
When $\delta = 0$, the output is $\eps$-DP.

\item \textbf{Accuracy.}
With probability at least $1 - O(\beta)$,
simultaneously for all $t \geq 1$:
\[
  f(X_t, Y_t) \;\geq\;
  \frac{V_A(X_t)}{1+\kappa}
  - O\!\left(
    \frac{\ln(t/\beta)}{\omega_{\ell(t)}\,\eps}
    \cdot \left(\ln\frac{1}{\omega_{\ell(t)}\,\beta}
      + \ln\frac{1}{\bar{p}}\right)
  \right),
\]
where $\bar{p} = \pstar/(\gamma\Lambda)$
and $\ell(t) = O((\ln(t/\eps)
  + \ln\ln(1/\beta)
  + \ln\ln(1/\bar{p}))/\ln(1+\kappa))$.

\item \textbf{Sample complexity.}
At each step $t$, the GTM makes
$O(\ln(t/\beta)/(\pstar(\gamma-1)^2))$ calls to $A$.
At each checkpoint $\ell$, \TSelect makes
at most $O(\ln(1/(\omega_\ell\beta))/\bar{p})$ calls
with probability at least $1 - \omega_\ell\beta$.
\end{enumerate}
\end{thm}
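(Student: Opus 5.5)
Privacy is immediate from \Cref{lem:co-privacy}: the constants $C_\eps = C_1/(C_1+C_H+3)$ and $C_\delta = 1+3e^3/C_\beta$ fixed in the parameter list satisfy its hypotheses, and $\beta<1/e$, $\eps<1$ are assumed. So the work is in accuracy and sample complexity. I will condition on a single good event $\calE$. It is the intersection of three families of events: the GTM accuracy and sample-complexity events for every instance $\ell\geq 2$ (from \Cref{cor:GTM-ExPost} with $\beta_{\GTM}=\beta_\ell$); the three events of \Cref{lem:checkpoint-quality} at every checkpoint; and the initial call $A(X_1)$ being good in the same sense. Since $\sum_\ell\beta_\ell\le C_\beta\beta$ and $\sum_\ell 3\omega_\ell\beta\le 3\beta$, a union bound gives $\Pr[\calE^c]=O(\beta)$. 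On $\calE$, \Cref{lem:checkpoint-count} bounds $\ell(t)$ as stated.

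For accuracy, fix $t$ and let $\ell=\ell(t)$ be the active instance, so the last checkpoint was $t(\ell-1)\le t$. The argument is a sandwich between a lower bound on the stored solution and an upper bound on $V_A(X_t)$. For the lower bound, \Cref{lem:checkpoint-quality} gives $f(X_{t(\ell-1)},Y_{\mathrm{cur}})\ge h_{\ell-1}-E_{\ell-1}$, and data monotonicity transfers this to $f(X_t,Y_t)\ge h_{\ell-1}-E_{\ell-1}$. For the upper bound, the GTM must not have halted at $t$, or if $t$ is itself a checkpoint we use the step just before it. Apply \Cref{lem:translation}(1) with $v$ slightly below $V_A(X_t)$; this is admissible because $\beta_A:=1-2\pstar\le 1-2\pstar$. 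It yields $V_A(X_t)<h_\ell+\tfrac{2}{\eps_t}\ln(1/\bar p)=(1+\kappa)h_{\ell-1}+\tfrac{2}{\eps_t}\ln(1/\bar p)$. Combining the two bounds, $f(X_t,Y_t)\ge V_A(X_t)/(1+\kappa)-O(\eps_t^{-1}\ln(1/\bar p))-E_{\ell-1}$. Substituting $\eps_t=C_\eps\omega_\ell\eps/(C_1\ln(4/(\omega_t\beta)))$ and $\ln(1/\omega_t)=O(\ln t)$ puts this error in the form $O(\ln(t/\beta)/(\omega_{\ell}\eps)\cdot(\ln\frac1{\omega_\ell\beta}+\ln\frac1{\bar p}))$. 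The term $E_{\ell-1}$ involves $\omega_{\ell-1}\ge\omega_\ell$ and $\ln(4/(\omega_{t(\ell-1)}\beta))\le\ln(4/(\omega_t\beta))$ up to constants, so it is absorbed into the same expression.

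I expect the main obstacle to be the edge cases of this bookkeeping rather than the core inequality. Three need separate treatment. First, the step where the halt occurs: $Y_t$ is freshly produced by \TSelect there, and $V_A(X_t)\le$ a bound obtained from the previous step via sensitivity ($V_A$ changes by at most $O(1)$ plus the threshold slack). Second, the case where \TSelect returns $-1$, which is excluded on $\calE$. Third, the first epoch $\ell=2$ before any halt, where $Y_{\mathrm{cur}}=A(X_1)$ and the lower bound $h_1-E_1$ must come from the initial call. If the initial call is awkward, I would try making $h_1$ small so that $h_1-E$ is vacuous, which forces the additive error to dominate. A further subtlety is that $\eps_t$ and $V_A$ are defined with the parameters of the current instance. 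I would handle the transition by applying \Cref{lem:translation} within the same instance $\ell$ throughout.

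For sample complexity, the GTM bound is \Cref{cor:GTM-ExPost}(3) verbatim. For \TSelect, the halting guarantee gives $p_t>\bar p/2$, and the $\Geo(\bar p/2)$ domination computed in \Cref{lem:checkpoint-quality} gives at most $k_{\max}=O(\ln(1/(\omega_\ell\beta))/\bar p)$ draws with probability $1-\omega_\ell\beta$.
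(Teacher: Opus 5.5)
Your main line is the paper's proof: privacy from \Cref{lem:co-privacy}, then a union bound over the GTM events and the three \TSelect events per checkpoint. The accuracy sandwich is also the same. Checkpoint quality plus data monotonicity bounds $f(X_t,Y_t)$ from below, and \Cref{lem:translation}(1) for the active instance bounds $V_A(X_t)$ from above. You then use \Cref{lem:checkpoint-count} for $\ell(t)$, and \Cref{cor:GTM-ExPost}(3) and $k_{\max}$ for sample complexity. Indexing by the active instance rather than the last checkpoint only changes $\omega$ by a constant factor.

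Two of your edge-case steps fail as written. First, the initial call $A(X_1)$ cannot go into a $1-O(\beta)$ good event. A single black-box call is only guaranteed good with constant probability (the definition of $V_A$ gives $2\pstar$), and your union bound silently drops this failure probability. You do not need the event. In the first epoch the paper uses $f(X_t,Y_t)\ge 0$ together with $V_A(X_t)<(1+\kappa)^2+\frac{2}{\eps_t}\ln\frac{1}{\bar p}$, which costs only an additive $(1+\kappa)$ plus the usual slack. That is your fallback, and it should be the argument.

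Second, at a checkpoint time $t=t(\ell)$, the claim that $V_A$ moves by $O(1)$ from $t-1$ to $t$ ``via sensitivity'' is unjustified.
\begin{itemize}
\item $V_A(X_t)$ is a $2\pstar$-quantile of $f(X_t,A(X_t,\eps_t/2,\delta_t))$ with $\eps_t=\eps(\ell,t)$. Continuation at $t-1$ controls the quantile of $A$ run on $X_{t-1}$ at a different parameter $\eps(\ell,t-1)$.
\item $1$-sensitivity of $f$ only compares $f(X_t,Y)$ with $f(X_{t-1},Y)$ for a fixed $Y$.
\item DP of $A$, even at equal parameters, only gives the quantile at level $e^{-\eps_t/2}(2\pstar-\delta_t)<2\pstar$, which \Cref{lem:translation} does not cover.
\item If $t$ is the first step of instance $\ell$, there is no continuation bound against $h_\ell$ at all.
\end{itemize}
For a black-box $A$ whose quality can vary arbitrarily with its privacy parameter, no such transfer exists. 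The paper's proof does not close this case either. Its Step~1 applies \Cref{lem:translation}(1) to instance $\ell(t)+1$ at step $t$, but that instance has not been run when $t$ is a checkpoint. So the case needs a modified statement or an extra assumption on $A$.

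Under the hypotheses of \Cref{cor:batch-to-co-hp}, your idea does go through if you apply sensitivity to $\OPT$ rather than to $V_A$. Let $s<t$ be the last continuation step. Then $\alpha\OPT_t-E_A(t)\le \alpha\OPT_s+(t-s)-E_A(s)\le V_A(X_s)+(t-s)$, provided $E_A(t)\ge E_A(s)$ (for example, $E_A$ nonincreasing in the privacy parameters, which only decrease over time). Here $V_A(X_s)<h_\ell+\frac{2}{\eps_s}\ln\frac{1}{\bar p}$, since the threshold active at $s$ is at most $h_\ell$. Also $t-s$ is at most the number of checkpoints, so the extra $O(\ell(t))$ is absorbed.
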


\begin{proof}
Item~1 is \Cref{lem:co-privacy}.

\medskip
\noindent\textbf{Probability accounting (Item~2).}
We condition on the intersection of the following events:
\begin{itemize}
\item The accuracy guarantee of \Cref{cor:GTM-ExPost}
  holds for every instance.
  By a union bound: probability
  $\geq 1 - \sum_i \beta_i \geq 1 - C_\beta\,\beta$.
\item The checkpoint quality guarantee of \Cref{lem:checkpoint-quality}
  holds at every checkpoint.
  By a union bound: probability
  $\geq 1 - 3\sum_i\omega_i\,\beta \geq 1 - 3\beta$.
\end{itemize}
Total failure $\leq (C_\beta + 3)\beta = O(\beta)$.

\medskip
\noindent\textbf{Between-checkpoint argument.}
Fix $t \geq 2$.
Let $\ell(t)$ denote the index of the most recent checkpoint at or before time $t$,
with $\ell(t) = 1$ if no checkpoint has been declared
(corresponding to the initialization at $t=1$).
The currently active GTM instance is $\ell(t)+1$.

\emph{Case $\ell(t) = 1$ (no checkpoint declared).}
Instance $2$ has not halted at step $t$.
By \Cref{lem:translation}~Item~1
with $v = V_A(X_t)$ and threshold $h_2 = (1+\kappa)^2$:
\[
  V_A(X_t) \;<\; (1+\kappa)^2 + \frac{2}{\eps_t}\ln\frac{1}{\bar{p}}.
\]
Since $f(X_t, Y_t) \geq 0$ by non-negativity:
\[
  f(X_t, Y_t) \;\geq\; \frac{V_A(X_t)}{1+\kappa}
  - (1+\kappa) - \frac{2}{(1+\kappa)\eps_t}\ln\frac{1}{\bar{p}}.
\]
Substituting $\eps_t = C_\eps\omega_2\eps/(C_1\ln(4/(\omega_t\beta)))$,
the additive loss is
$O(\ln(t/\beta)\ln(1/\bar{p})/(\omega_2\eps) + 1)$,
which is dominated by the general bound.

\emph{Case $\ell(t) \geq 2$.}

\emph{Step 1: lower bound on $h_{\ell(t)}$.}
Instance $\ell(t)+1$ has not halted at step $t$.
By \Cref{lem:translation}~Item~1
with $v = V_A(X_t)$
and threshold $h_{\ell(t)+1} = (1+\kappa)h_{\ell(t)}$:
\[
  V_A(X_t) \;<\; (1+\kappa)\,h_{\ell(t)} + \frac{2}{\eps_t}\ln\frac{1}{\bar{p}},
\]
so
\begin{equation}\label{eq:h-lower}
  h_{\ell(t)} \;>\;
  \frac{V_A(X_t)}{1+\kappa}
  - \frac{2}{(1+\kappa)\eps_t}\ln\frac{1}{\bar{p}}.
\end{equation}

\emph{Step 2: checkpoint quality.}
By \Cref{lem:checkpoint-quality} at checkpoint $\ell(t)$:
\[
  f(X_{t(\ell(t))}, Y_{\mathrm{cur}})
  \;\geq\;
  h_{\ell(t)}
  - \frac{4C_1\ln(4/(\omega_{t(\ell(t))}\beta))}{C_\eps\,\omega_{\ell(t)}\,\eps}
    \ln\frac{1}{\omega_{\ell(t)}\beta}
  - \frac{2C_1\ln(4/(\omega_{t(\ell(t))}\beta))}{C_\eps\,\omega_{\ell(t)}\,\eps}
    \ln\frac{4}{\bar{p}}.
\]

\emph{Step 3: monotonicity.}
Data monotonicity gives
$f(X_t, Y_t) \geq f(X_{t(\ell(t))}, Y_{\mathrm{cur}})$.
Since $t(\ell(t)) \leq t$,
$\ln(4/(\omega_{t(\ell(t))}\beta)) = O(\ln(t/\beta))$.

\emph{Combining.}
Substituting~\eqref{eq:h-lower} into Step~2
and applying Step~3.
The continuation correction
$(2/((1+\kappa)\eps_t))\ln(1/\bar{p})$
uses $\eps_t = C_\eps\omega_{\ell(t)+1}\eps/(C_1\ln(4/(\omega_t\beta)))$,
giving $O(\ln(t/\beta)\ln(1/\bar{p})/(\omega_{\ell(t)+1}\eps))$,
which is absorbed since $\omega_{\ell(t)+1} = \Theta(\omega_{\ell(t)})$.
The total additive loss is
\[
  O\!\left(
    \frac{\ln(t/\beta)}{\omega_{\ell(t)}\,\eps}
    \cdot \left(\ln\frac{1}{\omega_{\ell(t)}\,\beta}
      + \ln\frac{1}{\bar{p}}\right)
  \right).
\]

\medskip
\noindent\textbf{Checkpoint count.}
By \Cref{lem:checkpoint-count},
$\ell(t) = O((\ln(t/\eps) + \ln\ln(1/\beta)
  + \ln\ln(1/\bar{p}))/\ln(1+\kappa))$.

\medskip
\noindent\textbf{Sample complexity (Item~3).}
The per-step GTM cost is \Cref{cor:GTM-ExPost}(3):
$O(\ln(t/\beta)/(\pstar(\gamma-1)^2))$ calls per step.
At each checkpoint, by \Cref{lem:checkpoint-quality},
\TSelect makes at most
$k_{\max} = O(\ln(1/(\omega_{\ell}\beta))/\bar{p})$ calls
with probability at least $1 - \omega_{\ell}\beta$.
\end{proof}

\begin{cor}[High-probability batch guarantee]\label{cor:batch-to-co-hp}
Under the hypotheses of \Cref{thm:batch-to-co},
suppose additionally that $A$ satisfies
$f(X, A(X)) \geq \alpha \cdot \OPT(X) - E_A$
with probability at least $1 - \beta_A$ for some $\beta_A < 1$.
Set $\pstar = (1-\beta_A)/2$.
Then with probability at least $1 - O(\beta)$,
simultaneously for all $t \geq 1$:
\[
  f(X_t, Y_t) \;\geq\;
  \frac{\alpha \cdot \OPT_t - E_A(t)}{1+\kappa}
  - O\!\left(
    \frac{\ln(t/\beta)}{\omega_{\ell(t)}\,\eps}
    \cdot \left(\ln\frac{1}{\omega_{\ell(t)}\,\beta}
      + \ln\frac{1}{\bar{p}}\right)
  \right),
\]
where $E_A(t)$ is evaluated at
$\eps_{\mathrm{batch}} = \Theta(\omega_{\ell(t)}\eps/\ln(t/\beta))$
and $\delta_{\mathrm{batch}} = \Theta(\omega_{\ell(t)}^2\beta\delta)$, and $\bar{p} = (1-\beta_A)/(2\gamma\Lambda)$.
\end{cor}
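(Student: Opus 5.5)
The plan is to derive the corollary directly from \Cref{thm:batch-to-co}, Item~2, by lower bounding $V_A(X_t)$ with the batch guarantee. With $\pstar = (1-\beta_A)/2$ we have $2\pstar = 1-\beta_A$. The hypothesis on $A$, applied at the parameters $(\eps_t/2,\delta_t)$ in effect at time $t$, says that $v_0 := \alpha\cdot\OPT_t - E_A(t)$ satisfies $\Pr[f(X_t,A(X_t,\eps_t/2,\delta_t)) \geq v_0] \geq 1-\beta_A = 2\pstar$. So $v_0$ belongs to the set whose supremum defines $V_A(X_t)$, and therefore $V_A(X_t) \geq \alpha\cdot\OPT_t - E_A(t)$.

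We also check the hypotheses of \Cref{thm:batch-to-co}. We need $\pstar\in(0,1/2]$, which holds because $\beta_A<1$ gives $\pstar>0$, and $\beta_A\geq 0$ gives $\pstar\leq 1/2$. The condition $\beta_A \leq 1-2\pstar$ used inside \Cref{lem:translation} holds with equality. Substituting the lower bound on $V_A(X_t)$ into Item~2 of \Cref{thm:batch-to-co} gives the displayed inequality, with the same $O(\cdot)$ additive term and the same failure probability $1-O(\beta)$. Finally, $\bar{p} = \pstar/(\gamma\Lambda) = (1-\beta_A)/(2\gamma\Lambda)$.

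What remains is to identify the parameters at which $E_A(t)$ is evaluated. At time $t$ the active GTM instance is $\ell(t)+1$, so
\[
\eps_t/2 = \frac{C_\eps\,\omega_{\ell(t)+1}\,\eps}{2C_1\ln(4/(\omega_t\beta))}.
\]
Since $\omega_{\ell+1} = \Theta(\omega_\ell)$ and $\ln(4/(\omega_t\beta)) = \Theta(\ln(t/\beta))$ (using $1/\omega_t$ polylogarithmic in $t$), this is $\Theta(\omega_{\ell(t)}\eps/\ln(t/\beta))$.

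For $\delta_t$, the definition gives $\delta_t = \delta(\ell,t) \leq \bar{p}\,\omega_\ell^2\beta\delta/C_\delta = O(\omega_{\ell(t)}^2\beta\delta)$, since $\bar{p}$ and $C_\delta$ are constants. The only delicate point is that $\delta(\ell,t)$ is a minimum with $(\bar p/2)\eps(\ell,t)$. I would therefore state the $\Theta$ as the regime where the first term is active, or read it as an upper bound, noting that $E_A$ is typically monotone in $\delta$. This bookkeeping, rather than any real inequality, is the main thing to handle with care.
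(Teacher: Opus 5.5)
Your proposal is correct and matches the paper's proof. With $2\pstar = 1-\beta_A$, the batch guarantee puts $\alpha\OPT_t - E_A(t)$ in the set whose supremum is $V_A(X_t)$, and substituting into Item~2 of \Cref{thm:batch-to-co} gives the claim.

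Your parameter bookkeeping goes beyond the paper, which only records $\delta(\ell,t) \leq \bar{p}\,\omega_\ell^2\beta\delta/C_\delta$. Two small corrections apply to it. First, $1/\omega_t$ is $t$ times a polylogarithmic factor, not polylogarithmic; that is what gives $\ln(4/(\omega_t\beta)) = \Theta(\ln(t/\beta))$. Second, of your two readings of $\delta_{\mathrm{batch}}$, keep the ``first term of the minimum is active'' one. $E_A(t)$ is always evaluated at the actual $\delta(\ell,t)$, and for a $\ln(1/\delta)$-type error term, replacing $\delta$ by an upper bound gives a smaller $E_A$, so monotonicity cannot justify that substitution.
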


\begin{proof}
Since $\Pr[f(X_t, A(X_t)) \geq \alpha\OPT_t - E_A(t)]
\geq 1 - \beta_A = 2\pstar$,
we have $V_A(X_t) \geq \alpha\OPT_t - E_A(t)$.
The result follows from \Cref{thm:batch-to-co}(2).
\end{proof}

\begin{cor}[In-expectation batch guarantee]\label{cor:batch-to-co-exp}
Under the hypotheses of \Cref{thm:batch-to-co},
suppose additionally that
$\E[f(X, A(X))] \geq \alpha \cdot \OPT(X) - E_A$.
Fix a slack $c \in (0,1)$
and set $\pstar = c\alpha/2$.
Then with probability at least $1 - O(\beta)$,
simultaneously for all $t \geq 1$:
\[
  f(X_t, Y_t) \;\geq\;
  \frac{(1-c)\alpha \cdot \OPT_t - E_A(t)}{1+\kappa}
  - O\!\left(
    \frac{\ln(t/\beta)}{\omega_{\ell(t)}\,\eps}
    \cdot \left(\ln\frac{1}{\omega_{\ell(t)}\,\beta}
      + \ln\frac{1}{\bar{p}}\right)
  \right),
\]
where $\bar{p} = c\alpha/(2\gamma\Lambda)$
and $E_A(t)$ is evaluated at
$\eps_{\mathrm{batch}} = \Theta(\omega_{\ell(t)}\eps/\ln(t/\beta))$
and $\delta_{\mathrm{batch}} = \Theta(\omega_{\ell(t)}^2\beta\delta)$.
\end{cor}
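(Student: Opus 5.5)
The plan is to reduce this statement to \Cref{thm:batch-to-co}, which already gives the privacy claim and a high-probability accuracy bound in terms of $V_A(X_t)$. What remains is to lower-bound $V_A(X_t)$ by $(1-c)\alpha\,\OPT_t - E_A(t)$, using only the in-expectation guarantee. The tool for this conversion is \Cref{lem:exp-to-prob}.

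\textbf{Step 1.} Fix $t$. The batch algorithm is run as $A(X_t,\eps_t/2,\delta_t)$, i.e.\ at privacy parameters $\eps_{\mathrm{batch}}(t),\delta_{\mathrm{batch}}(t)$. Its in-expectation guarantee at these parameters reads
\[
\E[f(X_t,A(X_t))] \geq \alpha\,\OPT_t - E_A(t).
\]
Since $f\geq0$ and $f(X_t,\cdot)\leq \OPT_t$, the hypotheses of \Cref{lem:exp-to-prob} hold. Applying it with slack $c$ gives
\[
\Pr\bigl[f(X_t,A(X_t)) \geq (1-c)\alpha\,\OPT_t - E_A(t)\bigr] \;\geq\; 1-\beta_c \;=\; c\alpha.
\]
With the choice $\pstar = c\alpha/2$ we have $c\alpha = 2\pstar$. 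The value $v=(1-c)\alpha\OPT_t - E_A(t)$ is therefore admissible in the supremum defining $V_A(X_t)$, so $V_A(X_t)\geq (1-c)\alpha\,\OPT_t - E_A(t)$. We also need $\pstar\in(0,1/2]$, as \Cref{thm:batch-to-co} requires; this holds because $c\alpha\leq 1$.

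\textbf{Step 2.} Substitute this bound into \Cref{thm:batch-to-co}(2), which gives the displayed inequality with probability $1-O(\beta)$ simultaneously for all $t$. No extra failure probability is incurred, because Step 1 is a deterministic statement about the distribution of $A$'s output, not about a realized draw. The rejection threshold becomes $\bar p = \pstar/(\gamma\Lambda) = c\alpha/(2\gamma\Lambda)$.

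\textbf{Step 3.} Identify the parameters at which $E_A(t)$ is evaluated, using the per-step quantities of \Cref{alg:batch-to-co}. We have $\eps_{\mathrm{batch}}(t)=C_\eps\omega_{\ell}\eps/(2C_1\ln(4/(\omega_t\beta)))$, where $\ln(4/(\omega_t\beta))=\Theta(\ln(t/\beta))$ by the bounds on $\omega_t$ in \Cref{def:Notation}. We also have $\delta_{\mathrm{batch}}(t)\leq \bar p\,\omega_\ell^2\beta\delta/C_\delta$, which is $\Theta(\omega_\ell^2\beta\delta)$ since $\bar p$ is a constant.

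The main point to check is the index in Step 3. The instance active at time $t$ is $\ell(t)+1$, not $\ell(t)$, so the batch parameters actually carry $\omega_{\ell(t)+1}$. This is absorbed into the $\Theta$ because $\omega_{\ell+1}=\Theta(\omega_\ell)$. The same remark is already used in the proof of \Cref{thm:batch-to-co}.
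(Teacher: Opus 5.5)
Your proposal is correct and follows the paper's proof exactly. You apply \Cref{lem:exp-to-prob} with slack $c$ to get $\Pr[f(X_t,A(X_t)) \geq (1-c)\alpha\OPT_t - E_A(t)] \geq c\alpha = 2\pstar$, deduce $V_A(X_t) \geq (1-c)\alpha\OPT_t - E_A(t)$, and substitute into \Cref{thm:batch-to-co}(2); your extra checks (that $\pstar \in (0,1/2]$, and the $\omega_{\ell(t)+1} = \Theta(\omega_{\ell(t)})$ index shift in the batch parameters) are consistent with the paper and slightly more careful than its one-line argument.
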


\begin{proof}
By \Cref{lem:exp-to-prob} with slack $c$,
$\Pr[f(X_t, A(X_t)) \geq (1-c)\alpha\OPT_t - E_A(t)]
\geq c\alpha = 2\pstar$.
Thus $V_A(X_t) \geq (1-c)\alpha\OPT_t - E_A(t)$.
The result follows from \Cref{thm:batch-to-co}(2).
\end{proof}   
\section{Applications}
\label{sec:co-applications}

Our batch-to-continual-observation reduction (\Cref{thm:batch-to-co}) has many applications. In this section, we cover some of them in detail.

\subsection{Submodular maximization}

\renewcommand{\arraystretch}{1.4}
\begin{table}[t]
\footnotesize
\caption{Application of \Cref{thm:batch-to-co} to batch DP submodular maximization algorithms. $\OPT$: batch optimum; $\OPT_t$: CO optimum at time $t$; $k$: cardinality or matroid rank constraint; $\kappa,\eta > 0$: user-defined parameters; $\beta$: failure probability. Monotone/non-monotone refers to argument-monotonicity; our framework additionally requires data-monotonicity. For in-expectation batch guarantees, \Cref{cor:batch-to-co-exp} is applied with small constant $c > 0$.}
\label{tab:comparison_2}
\centering
    \begin{tabular}{@{}L{3.2cm} L{4.8cm} L{5.2cm}@{}}
    \toprule
    \textbf{Work, Setting} & \textbf{Batch Guarantee} & \textbf{CO Guarantee (Ours)} \\
    \midrule
    \cite{mitrovic2017differentially}\newline
    Monotone,
        $(\eps,\delta)$-DP, $k$-Card. 
        & $(1 {-} \tfrac{1}{e})\OPT - \tilde{O}\bigl(\tfrac{k^{3/2}}{\eps}\ln m \ln \tfrac{1}{\delta} \bigr)$ 
        & $\tfrac{1 - 1/e}{1+\kappa}\,\OPT_t - \tilde{O}\bigl(\tfrac{k^{3/2}}{\kappa \eps} \ln^3 t \ln m \ln \tfrac{1}{\delta} \bigr)$ \\ 
    \addlinespace
    \cite{DBLP:conf/icml/RafieyY20}\newline
    Monotone,
        $(\eps,\delta)$-DP, $k$-Card.
        & $(1 {-} \tfrac{1}{e})\OPT - \tilde{O}\bigl(\tfrac{k^{7}}{\eps^3} m \ln m\bigr)$ 
        & $\tfrac{1 - 1/e}{1+\kappa}\,\OPT_t - \tilde{O}\bigl(\tfrac{k^{7}}{\kappa \eps^3} m \ln m \ln^6 t\bigr)$ \\
    \addlinespace
    \cite{DBLP:conf/aaai/ChaturvediNZ21}\newline
    Decomp.\ non-mon.\ $(\eps,\delta)$-DP, $k$-Card.
        & $(\tfrac{1}{e} {-} \eta) \OPT - \tilde{O}\bigl(\tfrac{k}{\eta\eps}\ln m \ln\tfrac{1}{\delta}\bigr)$
        & $\tfrac{1/e - \eta}{1+\kappa}\,\OPT_t - \tilde{O}\bigl(\tfrac{k}{\eta\kappa \eps}\ln^3 \tfrac{t}{\beta}\ln m \ln\tfrac{1}{\delta}\bigr)$
        \\
    Decomp.\ non-mon.\ $(\eps,\delta)$-DP, $k$-Matroid
        & $(\tfrac{1}{e} {-} \eta) \OPT - \tilde{O}\bigl(\tfrac{k}{\eta\eps}\ln m \ln\tfrac{1}{\delta}\bigr)$
        & $\tfrac{1/e - \eta}{1+\kappa}\,\OPT_t - \tilde{O}\bigl(\tfrac{k}{\eta\kappa \eps}\ln^3 \tfrac{t}{\beta}\ln m \ln\tfrac{1}{\delta}\bigr)$
        \\
    \addlinespace
    \cite{ghazi2024individualized}\newline
    Decomp.\ monotone $\eps$-DP, $k$-Card., $\eta {\in} (0,1)$  
        &  $(1{-}\tfrac{1}{e}{-}\eta)\OPT - O\bigl(\tfrac{k\ln m/\beta}{\eps}\bigr)$ 
        & $\tfrac{1-1/e-\eta}{1+\kappa}\,\OPT_t - \tilde{O}\bigl(\tfrac{k}{\kappa \eps} \ln^3 \tfrac{t}{\beta} \ln \tfrac{m}{\beta}\bigr)$\\
    Decomp.\ monotone $\eps$-DP, $k$-Matroid, $\eta {\in} (0,1)$  
        &  $(1{-}\tfrac{1}{e}{-}\eta)\OPT - O\bigl(\tfrac{k\ln m/\eta \beta}{\eps}\bigr)$ 
        & $\tfrac{1-1/e-\eta}{1+\kappa}\,\OPT_t - \tilde{O}\bigl(\tfrac{k}{\kappa \eps} \ln^3 \tfrac{t}{\beta} \ln \tfrac{m}{\eta \beta}\bigr)$ \\
    \bottomrule
    \end{tabular}
\end{table}

For a natural number $m\in \mathbb N$, we let $\universe=\{1,2,\cdots, m\}$ be the universe. A set function $f: 2^\mathcal U \mapsto \real$ is \textit{submodular} if
\[
\forall S \subseteq T \subseteq \universe \quad \text{and} \quad v \in \universe\backslash T, \quad f(S\cup\{v\})-f(S)\geq f(T \cup \{v\})-f(T).
\]

In the problem of \textit{submodular maximization}, we are given query access to a submodular set function $f: 2^\mathcal U \mapsto \real$. The goal is to find a subset $T\subseteq \universe$ that maximizes $f(T)$ under certain constraints, such as:
\begin{itemize}
    \item Cardinality constraints. The goal is to find $T$ that maximizes $f(T)$ under the constraint that $|T|\leq c$. 
    \item Matroid constraints. Given a rank-$c$ matroid $(\mathcal{U}, \mathcal I)$, we want $T$ to be an independent set of the matroid (i.e., $T \in \mathcal I$)
    \item Knapsack constraints. There is a cost $c_u\in \mathbb{R}_{>0}$ associated with each element $u\in\mathcal{U}$, and one is only allowed to pick sets $S$ such that $\sum_{u\in S} c_u \leq 1$.
    \item $p$-extendible system constraints. If $\calI$ is a collection of subsets of $\calU$, the pair $(\mathcal{U},\mathcal{I})$ is a $p$-extendible system if for all $S\subset T\in\mathcal{I}$, and $u\in\mathcal{U}$ such that $S\cup\{u\} \in\mathcal{I}$, there is some set $V\subset T\backslash S$ of at most $p$ elements such that $(T\backslash V)\cup \{u\}\in\mathcal{I}$.
\end{itemize}

Submodular maximization is also studied under various restrictions on the objective function itself, such as:
\begin{itemize}
    \item Monotone, or argument-monotone. A submodular function $f$ is monotone if whenever $S \subseteq T$, then $f(S) \leq f(T)$. Note that this is distinct from the data monotonicity used in this paper to construct our black-box reduction, which is why we call this condition \emph{argument monotonicity} to avoid any ambiguity. 
    \item Decomposable. Also known as the combinatorial public projects problem, here the function $f$ can be written as 
    \begin{align*}
        f(\cdot) = \sum_{i=1}^m f_i (\cdot)
    \end{align*}
    for some $f_i : \mathcal{U}\to[0,1]$. Note that here since the co-domain is non-negative, decomposability unconditionally implies data-monotonicity.
    \item Bounded curvature. For a monotone submodular function $f$, its total curvature $\kappa_f$ is given by the expression
    \begin{align*}
        \kappa_f &= 1 - \min_{u\in\mathcal{U}} \frac{f(\mathcal{U}) - f(\mathcal{U}\backslash\{u\})}{f(u) - f(\emptyset)}.
    \end{align*}
    For objective functions with known curvature, we can typically give a better approximation guarantee than in the general case.
\end{itemize}

\paragraph{Private submodular maximization in the batch setting.} Motivated by the problem of \emph{private feature selection}, \cite{mitrovic2017differentially} were the first to initiate a systematic study of the optimization of submodular functions under the constraint of differential privacy, introducing privacy-preserving algorithms for both monotone and non-monotone objectives, under cardinality, matroid, and $p$-extendible systems constraints. Previously the work \cite{gupta2010differentially} had considered a special instance of private submodular maximization under $k$-cardinality constraints for \emph{decomposable} objectives, achieving a $(1-1/e)$-approximation and an additive error which is $O((k/\varepsilon)\log m)$ under approximate DP. This approximation factor is known to be optimal in the non-private setting under standard hardness assumptions and \cite{gupta2010differentially} showed that a $\Omega(k/\varepsilon)$ factor is indeed necessary under the constraints of pure-DP. A similar linear dependence of the additive error on $k$ was shown also to hold in the approximate DP setting by \cite{DBLP:conf/icml/ChaturvediNN23}. All of these works require a bounded sensitivity of the objective function in terms of the input data set. Formally, the private submodular maximization problem is stated as follows:

\begin{defn}[Private submodular maximization]
    We are given a correspondence $X\to f_X(\cdot)$ between datasets $X\in\mathbb{N}^{\mathcal{X}}$ and submodular functions $f_X$, with the property of \emph{bounded sensitivity}; i.e. there is some public value $\Delta f$ such that given the set of admissible inputs $\mathcal{C}\subset 2^{\mathcal{Y}}$, it is the case that
    \[ \max_{X\sim X', Y\in \mathcal{C}} |f_X(Y) - f_{X'}(Y)| \leq \Delta f. \]
    The private submodular maximization problem asks for a solution to the submodular maximization problem defined by the objective $f_X$ for a given data set $X$, and the constraint set $\mathcal{C}$. The mapping from the data set $X$ to the output solution $Y^*$ must be differentially private for the given choice of privacy parameters.
\end{defn}

There has since been a long line of work in the DP algorithms literature studying variants of private submodular maximization. We see that since the bounded sensitivity constraint is inherent to this problem, our result \Cref{thm:batch-to-co} is applicable to each work where the data-monotone condition is also fulfilled. In some cases, like non-negative decomposable objectives in both monotone and non-monotone cases, the data-monotone condition is always fulfilled --- note that even if a submodular function $f_X$ is non-monotone, the data monotone condition requires only that for all $X \subset X'$, $f_X \leq f_{X'}$. In the decomposable setting we have $f_X = \sum_{x\in X}f_{\{x\}}$, and as long as $f_{\{x\}}\geq 0$, data-monotonicity is immediate. In other cases where we only have $\Delta f$-sensitivity, data monotonicity is an extrinsic condition not necessarily true of all objectives. 

To the best of our knowledge, there are no prior works for private submodular maximization in the continual observation setting. With our framework, we are able to apply each work in the batch setting to give the first guarantee in the same setting in the continual observation setting, with the additional potentially extrinsic constraint of the objective being data-monotone.

We construct \Cref{tab:comparison_2} by applying \Cref{thm:batch-to-co} to each prior work. The batch approximation factor $\alpha$ is preserved up to the geometric spacing factor $(1+\kappa)$. For algorithms with in-expectation guarantees, \Cref{cor:batch-to-co-exp} converts the in-expectation guarantee to a high-probability one via \Cref{lem:exp-to-prob}, incurring only a $(1-c)$ factor in the multiplicative guarantee for an arbitrarily small user-chosen $c > 0$, at the cost of an $O(\ln(1/c))$ additive term. In the table, we absorb this loss into the existing slack parameter $\eta$ or suppress it as a lower-order term.
 
\subsection{Densest subgraph}
\label{sec:dsg}

Given a  graph $G = (V,E)$ and a node weight function, $c:V \mapsto \real_+$,  a subset $S \subseteq V$, let $$\rho(S) = \frac{|E(S)|}{\sum_{v\in S} c(v)},$$ where $E(S)$ is the set of edges induced by $S \subseteq V$, i.e., with both end points in $S$. The goal of the densest subgraph problem is to find 
    \[
    S^* := \argmax_{S \subseteq V} \rho(S).
    \]
The above definition is the general form of densest subgraph problem (known as {\em weighted densest subgraph}). The unweighted densest subgraph has all node weights $O(1)$, in which case, $\sum_{v\in S}c(v) = O(|S|)$ and the objective becomes
\[
S^* := \argmax_{S \subseteq V} \frac{|E(S)|}{|S|}.
\]

It is easy to see that in both cases, under edge-insertions the objective is monotonically increasing for every candidate solution $S$, and so \Cref{thm:batch-to-co} applies as is. Further, in the weighted setting, the only works considered assume that $c(v)\geq1$, so we also immediately have that the objective of interest is $1$-sensitive.

For graph problems, differential privacy is usually formulated in terms of node-neighboring and edge-neighboring graphs, which is to say that neighboring input graphs $G = (V,E)$ and $G' = (V',E')$ can vary in the presence or absence of at most $1$ node in the node-neighboring setting, and in the presence or absence of $1$ edge in the edge-neighboring setting. The latter is a weaker notion of privacy but is in practice much easier to work with --- we will consider only this notion of privacy. In the CO setting, we consider two streams to consist of edges of the graphs. Two streams are considered to be neighboring if they differ in one time epoch, where an edge is present or absent. 

\paragraph{Prior work in the batch setting.}

The work~\cite{dhulipala2022differential} solves the unweighted densest subgraph problem for any given constant $\eta>0$ with multiplicative approximation $1/(1+\eta)$ and additive error $O(\tfrac{1}{\varepsilon}\log^4 n)$ in the batch setting in an $\varepsilon$-DP manner. This algorithm succeeds with high probability, which they specify to mean with probability $1 - 1/n^c$ for any given constant $c$.

The work \cite{dinitz2025almost} solves the weighted densest subgraph problem in the approximate-DP setting with no multiplicative loss at all ($\alpha = 1$) and additive error $O\paren{\tfrac{1}{\varepsilon}\sqrt{\paren{\log\paren{\tfrac{n}{\delta}}\log\paren{n}}}}$. This bound holds with high probability, which tracking the proof of the relevant result can be set to be $1 - 1/n^c$ for any constant $c>0$. They assume that $c(v) \geq 1$, a standard assumption from the non-privacy literature. 

\paragraph{Implications for CO}
In the unweighted setting, we apply \Cref{thm:batch-to-co} Item~2 to the high-probability guarantee of~\cite{dhulipala2022differential}. Since the batch algorithm achieves a $1/(1+\eta)$-approximation, our reduction gives a $1/((1+\eta)(1+\kappa))$-approximation under CO. For any desired constant multiplicative loss, we can set $\eta$ and $\kappa$ to be small constants. The additive error is
    \[ \tilde{O}\left(\frac{1}{\kappa\varepsilon}\log^2 t \cdot \log^4 t\right) + \tilde{O}\left(\frac{1}{\kappa\varepsilon}\log^{3} t \right)  = \tilde{O} \left(\frac{1}{\kappa\varepsilon} \log^{6+a} t\right). \]
In the above, the $\tilde{O}$ notation hides $\log 1/\varepsilon$ terms and $a$ is an arbitrarily small constant.

In this unweighted setting, the state of the art result is not the one achieved by our framework, but rather by \cite{epasto2024sublinear}, who give an $\varepsilon$-DP algorithm that achieves a $\frac{1}{1+\kappa}$ multiplicative approximation with additive error $O(\tfrac{1}{\kappa^4 \varepsilon} \log^5 n)$ with high probability for all time-steps. The gap between our $\log^{6+a} t$ and their $\log^5 n$ reflects the cost of the black-box approach.

In the weighted setting, to the best of our knowledge, there is no prior work in the CO setting. Here, on applying \Cref{thm:batch-to-co} Item~2 to the batch DP result of \cite{dinitz2025almost} (which has $\alpha = 1$), we get that there is an $(\varepsilon,\delta)$-DP algorithm that with high probability achieves a $\tfrac{1}{1+\kappa}$-approximation and additive error
    \[ \tilde{O}\left(\frac{1}{\varepsilon\kappa} \log^{2} t \sqrt{\log\tfrac{t}{\delta}\log t} \right) + \tilde{O}\left(\frac{1}{\varepsilon\kappa} \log^{3} t \right) = \tilde{O} \left(\frac{1}{\varepsilon\kappa} \log^{3} t \sqrt{\log\frac{1}{\delta}}\right). \]

\subsection{Densest \texorpdfstring{$k$}{k}-subgraph}

We also note the recent work of Khayatian, Vullikanti and Konar~\cite{khayatian2025differentially} who give the first $(\eps,\delta)$-edge DP algorithms for D$k$S.

Given an unweighted undirected graph $G = (V,E)$ with $|V| = n$ and $|E| = m$, and a target size $k \in [n]$, the Densest-$k$-Subgraph (D$k$S) problem seeks a subset $\calS \subseteq V$ of exactly $k$ vertices maximizing the induced edge density $d(\calS) := |\calE_\calS|/\binom{k}{2}$, where $\calE_\calS = \{(u,v) \in E : u,v \in \calS\}$. We write $d_k^* := \max_{|\calS|=k} d(\calS)$ for the optimum. As in the densest subgraph setting, the induced edge count has sensitivity $1$ under edge-DP and satisfies data monotonicity under edge insertions, so \Cref{thm:batch-to-co} applies.

D$k$S is computationally intractable: achieving an $O(n^{1/(\log\log n)^c})$-approximation is hard under standard complexity-theoretic assumptions~\cite{DBLP:conf/stoc/Manurangsi17}. As a result, no worst-case multiplicative approximation ratio $\alpha$ is available, and \Cref{cor:batch-to-co-hp} and \Cref{cor:batch-to-co-exp} do not apply. However, \Cref{thm:batch-to-co} and its data-dependent guarantee via $V_A(X_t)$ imply that the performance of the CO construction based on their algorithm ought to inherit a performance similar to the performance of their algorithm executed at time step $t$ with privacy parameter $\eps/\ln^2 (t/\beta)$, and some additional additive polylogarithmic overhead scaled by $1/\kappa$.

As the performance of their algorithm is measured entirely empirically, there is no corresponding theoretical result that we can state. We conclude this section by observing that our reduction applies in this problem setting, that our analysis indicates non-trivial performance guarantees, but leave it as an open question for future work to test whether this actually pans out in practice.

\subsection{Max-Cut}
Given a  graph $G=(V,E)$ and an edge weight function $w : E \to \real_+$, the MAX-CUT problem asks for a subset $S\subseteq V$ that maximizes the cut value:
    \[ \max_{S \subseteq V} {\sum_{\substack{e=(u,v) \in E \\ u \in S,\, v \in V\backslash S}}} w(e). \]
It is easy to see that for every candidate solution $S$, the objective is monotonically increasing for insertion-only edge streams. The sensitivity of this problem is $\max_{e\in E} w(e)$, so when a bound on this expression is known the objective function can be rescaled and \Cref{thm:batch-to-co} can be applied.

Chandra et al.~\cite{chandra2024differentially} gave an $\varepsilon$-differentially private algorithm $A$ that  $\left(1,\beta, E_A \right)$-solves MAX-CUT, where 
    \[ E_A(X_t,\varepsilon,0,\beta):= O\left(  \frac{1}{\varepsilon} \paren{n \log n + \log\paren{\frac{1}{\beta}}}   \right). \]
Note that the algorithm of Chandra et al.~\cite{chandra2024differentially} is not efficient: the time and space consumption depend on the complexity of finding the exact max-cut on a noisy graph with non-negative edge weights. However, since our framework applies to this problem, our reduction implies that if there exists an efficient algorithm for DP MAX-CUT in the batch setting, then we would immediately have an efficient algorithm in the CO setting as well, with an accuracy guarantee derived from \Cref{thm:batch-to-co}.
\section{Accuracy guarantees in prior work}
\label{sec:PriorAccuracy}

In this section we recall the formal accuracy guarantees in prior work, and compare them at a technical level with the guarantees we derive for the Generalized Thresholding Mechanism.

\subsection{Prior work}

We first give a brief review of the above threshold mechanism in the standard setting, and the techniques of two works most related to our works, i.e., \cite{DBLP:conf/stoc/0001T19} and \cite{cohen2023generalized}.

\medskip
\noindent\textbf{Review of problem description.} Recall that in \textit{standard private testing} there is a sequence of private real numbered values $(v_t)_{t\geq1}$ and a threshold $\tau\in \R_{+}$. A tester processes the stream in an online fashion and outputs $a_t=-1$ until it receives $v_t \geq \tau$; at this point it outputs $a_t=+1$ and halts. However, the output of the tester should be $\varepsilon$-DP where $v$ and $v'$ are considered neighboring streams if $\forall t,\,|v_t - v'_t|\leq 1$. The \emph{Above Threshold} mechanism \cite{dwork2014algorithmic} solves this problem with the \emph{accuracy guarantee} that with probability $1-\beta$ for all $t\geq1$, $a_t = +1 \Rightarrow v_t \geq \tau - O((1/\eps)(\log (t/\beta)))$, and $a_t = -1 \Rightarrow v_t \leq \tau + O((1/\eps)(\log (t/\beta)))$.

For the ease of the readers, we recall the problem of \emph{generalized private testing}. In the generalized private testing, the values $(v_t)_{t\geq 1}$ are replaced with a sequence of mechanisms, datasets and privacy parameters $\calM_t:\calX \to \{+1,-1\}$, $X_t\in\calX$, and $\eps_t\geq 0$, and the threshold $\tau$ is replaced by a target success probability $\pstar \in (0,1)$. For simplicity, we assume here $\eps_t = \eps_1$ for all $t$. We define $p_t = \Pr[\calM_t(X_t) = +1]$ and $p'_t = \Pr[\calM_t(X'_t) = +1]$, where $X'_t$ is a dataset neighboring $X_t$. Two streams of mechanism and dataset pairs $(\calM_t,X_t)$ and $(\calM_t,X'_t)$ are considered neighboring if $p_t \in (e^{-\eps_t}p'_t,e^{\eps_t}p'_t)$ and $(1-p_t)\in (e^{-\eps_t}(1-p'_t),e^{\eps_t}(1-p'_t)$. We call this the \emph{stability guarantee of $p_t$}. The analyst should process the stream and stop at the first  $(\calM_t,X_t)$ for which $p_t \geq \pstar$. Just as in the standard setting, the goal is to suffer a small loss in accuracy of these threshold tests, which is to say that if the test fails, then $p_t$ should not be more than $\pstar$, and if it passes, then $p_t$ should not be much less than $\pstar$. 

\medskip
\noindent\textbf{Technical overview of \cite{DBLP:conf/stoc/0001T19}.} Liu and Talwar approached this problem by initially assuming access to an oracle which gives the analyst the exact value of $p_t$. Then, they observed that the \emph{log-odds ratio} of the input probability, i.e. $\Phi(p_t) = \log (p_t/(1-p_t))$ is $2\eps$-Lipschitz. In this simplified setting, they can reduce the generalized private testing problem to the standard setting, and they show essentially that with this reduction, with probability $1-\beta$ for $t\geq 1$, if $p_t \leq \bar{p}_t$ then $a_t = -1$, where $\bar{p}_t = (\beta^2/t)^{12\eps_1/\eps} \pstar$; for $p_t\in (\bar{p}_t,\pstar)$, there is no guarantee. We see that the logarithmic additive error in accuracy in the standard setting gets amplified to a multiplicative polynomial gap between $\pstar$ and $\bar{p}$ in the generalized setting.

However, in generalized private testing, the analyst does not have access to $p_t$. To circumvent this issue, the generalized Above Threshold mechanism of \cite{DBLP:conf/stoc/0001T19} draws $N_t$ many samples from $\calM_t(X_t)$, and use a coupling argument to show that the estimates $\widehat{p}_t$ of $p_t$ fulfill a weak Lipschitz property. As discussed in the introduction, the downsides of this approach are that (a) we can only guarantee approximate DP outputs while requiring the inputs to be pure DP, (b) the exponent of $\beta^2/T$ is $12$ and (c) the sample complexity has a large $(T/\beta)^{12\eps_1/\eps}$ factor.

\medskip
\noindent\textbf{Technical overview of \cite{cohen2023generalized}}
 Cohen, Lyu, Nelson, Sarl{\'o}s and Stemmer revisited this problem, and gave a unified framework for generalized private selection and generalized private testing. Their Test procedure for generalized private testing draws a value $p^\dagger$, with cumulative density function, $\Pr[p^\dagger \leq x] \leq x^{\theta_C}$ for all $x\in [0,1]$ and some fixed ${\theta_C} \geq 0$ chosen by the analyst. Then, for every mechanism and dataset $\calM_t$ and $X_t$, they draw a value $b_t \sim \Ber(p^\dagger)$. If $b_t = -1$, then they output $-1$ without evaluating $\calM_t(X_t)$. If $b_1 = +1$, then they output whatever they get from the evaluation of $\calM_t(X_t)$. They showed that, if for all $t\geq1$, $\calM_t$ is $(\eps_1,\delta_t)$-DP, then their mechanism achieves $((2 + {\theta_C})\eps_1,\sum_i \delta_i)$-DP. In particular, for a target output probability of $\eps > 2\eps_1$, this implies that picking ${\theta_C} = \eps/\eps_1 - 2$ suffices for $(\eps,\sum_t \delta_t)$-DP.

\cite{cohen2023generalized} introduce two new key technical insights which make this possible: (1) a coupling argument along the lines of the sparse vector technique that only requires an injective mapping, not a bijective one, and (2) the observation that one can lean on the inherent privacy guarantee of each mechanism evaluation to avoid adding more noise for each test. 

The significance of the first insight is that a one-sided perturbation that only systematically reduces the probability of a test passing suffices for proving the uniform privacy loss bound; this means that less noise is needed for achieving privacy, and this improves the accuracy of tests. Along with the second insight, this implies much less additional is being added to achieve a similar privacy loss guarantee, compared to \cite{DBLP:conf/stoc/0001T19}. At a high level, the ${\theta_C} \eps_1$ term in the privacy loss is incurred by the coupling argument, and the $2\eps_1$ term is incurred by the last mechanism evaluation that leads to a halt.

\cite{cohen2023generalized} do not give accuracy guarantee along the line of \cite{DBLP:conf/stoc/0001T19}. In this section, we analyse the accuracy of their mechanism along the lines of \cite{DBLP:conf/stoc/0001T19}.
We also derive a lower bound demonstrating that with this algorithm, one cannot achieve a rejection threshold greater than $\pstar \beta/T$.
The highlight is that the coefficient of $\eps_1/\eps$ is significantly better, but an additional $\beta/T$ factor is incurred in the rejection threshold.

\subsection{The accuracy guarantee of \texorpdfstring{\cite{DBLP:conf/stoc/0001T19}}{Liu--Talwar (2019)}}

We recall the algorithm and formal guarantee for the reader's reference. We start with main steps in the ExtendedAboveThreshold mechanism presented in ~\cite{liu2018private} (the arxiv version of \cite{DBLP:conf/stoc/0001T19}). The mechanism gets target threshold $\pstar \in (0,1)$, a failure probability $\beta \in (0,1)$, a privacy parameter $\delta \in (0,1)$, an auxiliary parameter $\eps_0 \in (0,1)$, a stream length $T > 1$, and input privacy parameter $\eps_1 > 0$. 
Let $\eps > 0$ denote the output privacy parameter ($\eps_3$ in the notation of \cite{DBLP:conf/stoc/0001T19}).  
The ExtendedAboveThreshold algorithm (\cite[Algorithm~4]{DBLP:conf/stoc/0001T19}) proceeds as follows. 
\begin{enumerate}
    \item At initialization, it draws shared noise $\nu \sim \Lap(4(\eps_1+\eps_0)/\eps)$. 
    \item Let $\sigma_{\mathrm{LT}} := 12(\eps_1 + \eps_0)/\eps$ for the exponent that governs the accuracy--privacy tradeoff. Define
    \begin{align}
    \label{eq:parametersforLT19}
    \begin{split}
        C &:= 2(e^{\eps_0+\eps_1} + 1 + e^{\eps_0/2}) < 21, \qquad
        \Delta := \frac{C\ln(8T/\delta)}{\eps_0(e^{\eps_0+\eps_1} - 1)} = O\!\left(\frac{1}{\eps_0^2}\ln\frac{T}{\delta}\right), \\
        N &:= \frac{e^{\eps_0}\,\Delta}{\min\{\pstar, 1-\pstar\}} \cdot \left(\frac{T+1}{\beta}\right)^{\!\sigma_{\mathrm{LT}}}, \qquad
        \Phi_\Delta(x) := \frac{Nx + \Delta}{N(1-x) + \Delta}.
    \end{split}
    \end{align}
    \item At each step $t = 1, \ldots, T$, 
    \begin{enumerate}
        \item draws $N$ i.i.d.\ evaluations of $\calM_t(X_t)$; 
        \item computes the empirical success probability $\widehat{p}_t := \frac{1}{N}\sum_{j=1}^N \mathbf{1}\{y_{t,j} = +1\}$; 
        \item draws per-step noise $\xi_t \sim \Lap(8(\eps_1+\eps_0)/\eps)$
        \item if
        \[
        e^{\xi_t} \cdot \Phi_\Delta(\widehat{p}_t) \;\geq\; e^{\nu} \cdot \Phi_\Delta(\pstar).
        \]
        then output $a_t = +1$ and halt.  
        Otherwise output $a_t = -1$ and continues.
    \end{enumerate}
\end{enumerate}

We now restate their result in our notation:

\begin{thm}[Theorem 4.4 in \cite{DBLP:conf/stoc/0001T19} restated in our notation]\label{thm:LiuTalwar}
Fix a target threshold $\pstar \in (0,1)$, a failure probability $\beta \in (0,1)$, a privacy parameter $\delta \in (0,1)$, an auxiliary parameter $\eps_0 \in (0,1)$, a stream length $T > 1$, and input privacy parameter $\eps_1 > 0$. Let $\eps > 0$ denote the output privacy parameter ($\eps_3$ in the notation of \cite{DBLP:conf/stoc/0001T19}).  Let $\sigma_{LT}, C, \Delta, N$, and $\Phi_\Delta(x)$ be as defined in \cref{eq:parametersforLT19}.

Suppose that each $\calM_t$ is $\eps_1$-DP. Then the following hold:
\begin{enumerate}
\item \textbf{Privacy.}
The mechanism is $(\eps, \delta)$-DP.

\item \textbf{Not stopping too early.} Conditional on the algorithm halting at step $R$, with probability at least $1 - \beta - \delta/2$:
\[
  p_R \;\geq\; \frac{1}{2}\,e^{-\eps_0}\left(\frac{\beta}{R+1}\right)^{\!\sigma_{\mathrm{LT}}}\pstar.
\]

\item \textbf{Not stopping too late.} Conditional on the algorithm halting at step $R$, with probability at least $1 - \beta - \delta/2$: no earlier step $i < R$ satisfies
\[
  \Phi_\Delta(p_i) \;\geq\; e^{\eps_0}\left(\frac{R+1}{\beta}\right)^{\!\sigma_{\mathrm{LT}}} \Phi_\Delta(\pstar).
\]
Further, (\cite[Theorem~4.2(c), ``moreover'']{DBLP:conf/stoc/0001T19}), with probability $1-\beta - \delta/2$, $p_R$ is bounded from below:
\[
  p_R \;\geq\; 1 - \left(1 + \frac{e^{\eps_0}}{2(1-\pstar)}\left(\frac{R+1}{\beta}\right)^{\!\sigma_{\mathrm{LT}}}\right)^{\!-1}.
\]

\item \textbf{Halting.} If at some step $t$ (conditional on not having halted earlier) the success probability satisfies
\[
  \Phi_\Delta(p_t) \;\geq\; e^{\eps_0}\left(\frac{1}{\beta}\right)^{\!\sigma_{\mathrm{LT}}} \Phi_\Delta(\pstar),
\]
then the algorithm halts at step $t$ with probability at least $1 - \beta - \delta/(4T)$. 
\emph{Typographical note:} The statement of \cite[Theorem~4.2(d)]{DBLP:conf/stoc/0001T19} writes $\Pr[a_t = +1 | \ldots] \leq \beta + \delta/(4T)$, but the proof establishes the complementary bound $\Pr[a_t \neq +1 | \ldots] \leq \beta + \delta/(4T)$, as stated above.

\item \textbf{Sample complexity.}
The number of evaluations of $\calM_t$ at each step is
\[
  N \;=\; O\!\left(\frac{1}{\eps_0^2\,\min\{\pstar,1-\pstar\}} \cdot \left(\frac{T}{\beta}\right)^{\!\sigma_{\mathrm{LT}}} \cdot \ln\frac{T}{\delta}\right).
\]
\end{enumerate}
\end{thm}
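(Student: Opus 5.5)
The plan is to obtain each item by translating the corresponding clause of \cite[Theorem~4.2 and Theorem~4.4]{DBLP:conf/stoc/0001T19} (arXiv version \cite{liu2018private}) into our notation; nothing is re-derived from scratch. First I fix the parameter dictionary. Their output privacy parameter $\eps_3$ is our $\eps$. Their shared and per-step noise scales $4(\eps_1+\eps_0)/\eps$ and $8(\eps_1+\eps_0)/\eps$ are the Laplace scales in the ExtendedAboveThreshold description above. The constants $C,\Delta,N$ and the smoothed odds map $\Phi_\Delta$ are exactly those in \eqref{eq:parametersforLT19}. The exponent $\sigma_{\mathrm{LT}}=12(\eps_1+\eps_0)/\eps$ is obtained by collecting the noise-scale factors that appear in their tail bounds. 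I will also check that their neighboring relation on $(\calM_t,X_t)$ streams, the stability of $p_t$ and of $1-p_t$, is implied by each $\calM_t$ being $\eps_1$-DP on adjacent datasets. This makes their hypotheses hold in our setting.

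With the dictionary fixed, the items follow as below.
\begin{enumerate}
\item Item~1 (privacy) is their Theorem~4.4, with $\delta$ as given.
\item Item~2 (not stopping too early) is their Theorem~4.2(b). I will verify that their bound, stated in terms of $\Phi_\Delta(p_R)$ relative to $\Phi_\Delta(\pstar)$, converts to the stated multiplicative bound on $p_R$. This uses $\Phi_\Delta(x)\le 2x/(1-x)\cdot(\dots)$-type estimates valid because $\Delta\le N\min\{\pstar,1-\pstar\}$; that inequality is what produces the factor $\tfrac12 e^{-\eps_0}$.
\item Item~3 (not stopping too late) is their Theorem~4.2(c). The ``moreover'' lower bound on $p_R$ comes from inverting $\Phi_\Delta$: $\Phi_\Delta(x)\ge y$ is equivalent to $x\ge \bigl(y(N+\Delta)-\Delta\bigr)/\bigl(N(1+y)\bigr)$. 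One then lower bounds this quantity using $\Phi_\Delta(\pstar)\ge \pstar/(2(1-\pstar))$ up to the $e^{\eps_0}$ slack.
\item Item~4 (halting) is their Theorem~4.2(d), together with the typographical correction noted in the statement. To justify that correction I will point to their proof, which upper bounds the probability of the test failing by the sum of a Laplace tail ($\beta$) and the empirical-estimate failure ($\delta/(4T)$).
\item Item~5 (sample complexity) follows directly from the definition of $N$. Substituting $\Delta=O(\eps_0^{-2}\ln(T/\delta))$, using $C<21$ and $e^{\eps_0+\eps_1}-1\ge\eps_0$, and replacing $(T+1)/\beta$ by $O(T/\beta)$ gives the stated bound, with the constant absorbing $2^{\sigma_{\mathrm{LT}}}$ for bounded $\sigma_{\mathrm{LT}}$.
\end{enumerate}

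I expect the main obstacle to be the $\Phi_\Delta$ inversions in Items~2 and~3. The additive $\Delta$ terms in the numerator and denominator must be controlled so that they cost only the stated constants ($\tfrac12$ and $e^{\eps_0}$), rather than an extra $\Delta/N$ additive loss. To handle this, I will first try using $N\pstar\ge e^{\eps_0}\Delta(T/\beta)^{\sigma_{\mathrm{LT}}}$, which makes $\Delta$ negligible relative to $Nx$ at every $x$ in the relevant range.
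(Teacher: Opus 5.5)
Citing \cite{DBLP:conf/stoc/0001T19} clause by clause and translating notation is all the paper does here, so your overall route matches it. The step that fails is your derivation of the ``moreover'' clause of Item~3.

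You want a lower bound on $p_R$ by inverting $\Phi_\Delta(x)\ge y$ with $y=e^{\eps_0}\bigl(\tfrac{R+1}{\beta}\bigr)^{\sigma_{\mathrm{LT}}}\Phi_\Delta(\pstar)$. Nothing supplies $\Phi_\Delta(p_R)\ge y$: the not-stopping-too-late clause gives the reverse inequality $\Phi_\Delta(p_i)<y$, and only for the earlier steps $i<R$. Worse, the clause is false as a lower bound on $p_R$. Set $X:=\tfrac{e^{\eps_0}}{2(1-\pstar)}\bigl(\tfrac{R+1}{\beta}\bigr)^{\sigma_{\mathrm{LT}}}$, so $X>\tfrac{1}{2(1-\pstar)}$. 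Then $1-(1+X)^{-1}=\tfrac{X}{1+X}>\tfrac{1}{3-2\pstar}\ge\pstar$ whenever $\pstar\le 1/2$. Now take $p_t=\pstar\le 1/2$ for every $t$. Since $\Phi_\Delta>0$ and $\xi_1-\nu$ has full support, the mechanism halts at step~$1$ with positive probability. Conditional on that event, $p_R=\pstar$ surely, so the claimed event has conditional probability $0$, not at least $1-\beta-\delta/2$.

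What inverting the correct inequality gives is an \emph{upper} bound on the earlier success probabilities: $p_i<\bigl(y(N+\Delta)-\Delta\bigr)/\bigl(N(1+y)\bigr)$ for all $i<R$. To put this in the form $1-(1+X')^{-1}$ you need an \emph{upper} bound on $\Phi_\Delta(\pstar)$, for instance $\Phi_\Delta(\pstar)\le 1/(1-\pstar)$, which follows from $\Delta\le N(1-\pstar)$. The lower bound $\Phi_\Delta(\pstar)\ge\pstar/(2(1-\pstar))$ you plan to use points the wrong way. So this clause must be cited as Liu--Talwar's statement about the steps $i<R$, with the constant in $X'$ taken from their paper rather than reconstructed from the restatement.

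Your remedy for the $\Delta$ terms in Item~2 also fails near the end of the range. For $\pstar\le 1/2$, the definition of $N$ gives $\Delta/N=a\pstar$ with $a:=e^{-\eps_0}\bigl(\tfrac{\beta}{T+1}\bigr)^{\sigma_{\mathrm{LT}}}$. So at $R=T$ and $x=\tfrac12 a\pstar$ you have exactly $Nx=\Delta/2$: $\Delta$ is comparable to $Nx$, not negligible. Concretely, the solution of $\Phi_\Delta(x)=a\,\Phi_\Delta(\pstar)$ is $x_0=\frac{a\pstar(a+\pstar+a^2\pstar)}{1-\pstar+a\pstar(2+a)}$. For $\pstar=0.1$ and $a=0.01$ this is about a quarter of $\tfrac12 a\pstar$. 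Hence an inequality $\Phi_\Delta(p_R)\ge e^{-\eps_0}\bigl(\tfrac{\beta}{R+1}\bigr)^{\sigma_{\mathrm{LT}}}\Phi_\Delta(\pstar)$ does not by itself yield the stated $\tfrac12 e^{-\eps_0}$ bound near $R=T$. Quote the bound on $p_R$ in the form Liu--Talwar state it rather than re-deriving it by inverting $\Phi_\Delta$.
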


Note that the halting guarantee is \emph{not} a guarantee that the algorithm halts whenever $p_t \geq \pstar$; the condition requires $p_t$ to be substantially above $\pstar$. For large $N$, $\Phi_\Delta(x) \to x/(1-x)$, so the condition is approximately
\[
  \frac{p_t}{1-p_t} \;\geq\; e^{\eps_0}\left(\frac{1}{\beta}\right)^{\!\sigma_{\mathrm{LT}}} \cdot \frac{\pstar}{1-\pstar}.
\]

\subsection{Accuracy guarantee for the Test procedure of Cohen et al. \texorpdfstring{\cite{cohen2023generalized}}{Cohen et al. (2023)}} \label{sec:cohen2023generalized}

\begin{algorithm}[t]
\caption{Generalized Private Testing via \cite{cohen2023generalized}}
\label{alg:CohenTesting}
\SetAlgoLined
\SetKwProg{Fn}{Function}{:}{}

\KwIn{Stream of $\eps_1$-DP mechanisms $(\calM_t)_{t\geq1}$ with $p_t := \Pr[\calM_t(X_t) = +1]$; target $\pstar \in (0,1)$; failure probability $\beta \in (0,1)$; parameter ${\theta_C} = \eps/\eps_1 - 2 > 0$.}

\BlankLine

Draw $p^\dagger$ from $[0,1]$ with $\Pr[p^\dagger \leq x] = x^{\theta_C}$\;

\BlankLine

\Fn{$\FTest(\calM_t)$}{
  Draw $r \sim \Ber(p^\dagger)$\;
  \lIf{$r = 0$}{\Return $-1$}
  \Return $\calM_t(X_t)$\;
}

\BlankLine

\For{$t \geq 1$}{
  $N_t \gets \lceil \ln(2/\beta) / ((\beta/2)^{1/{\theta_C}}\,\pstar) \rceil$\;
  $\mathit{halted} \gets \mathit{false}$\;
  \For{$j = 1, \ldots, N_t$}{
    \If{$\FTest(\calM_t) = +1$}{
      \textbf{Output} $a_t = +1$\;
      $\mathit{halted} \gets \mathit{true}$\;
      \textbf{break}\;
    }
  }
  \lIf{$\mathit{halted}$}{\textbf{halt}}
  \lElse{\textbf{output} $a_t = -1$ and continue to step $t+1$}
}
\end{algorithm}

\cite{cohen2023generalized} introduce an elegant and lightweight procedure for achieving an SVT-style privacy loss for generalized private testing. They are able to apply this framework to improve the standard SVT, as well as error bounds for the private multiplicative weight update method~\cite{hardt2010multiplicative} and bounds for adaptive data analysis~\cite{dwork2015generalization}. They introduce the Test subroutine in \Cref{alg:CohenTesting}, and show the following result:

\begin{thm}\emph{\cite[Theorem~2]{cohen2023generalized}.}\label{thm:cohen} Let there be a stream of $(\eps_1,\delta_t)$-DP mechanisms $(\calM_t)_{t\geq1}$, and parameter ${\theta_C} = \eps/\eps_1 - 2 > 0$.
The \FTest procedure of \cite[Algorithm~1]{cohen2023generalized} (reproduced here as the \FTest procedure in \Cref{alg:CohenTesting}) takes as input the given stream and halts when it has output $+1$ $c \geq 1$ times . The collection of outputs satisfies $((2c+{\theta_C})\eps_1, \sum_{i=1}^T \delta_t )$-DP for all $\delta\in(0,2^{-\sqrt{c}})$.
\end{thm}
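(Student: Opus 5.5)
We would prove the bound by conditioning on the shared dropping probability $p^\dagger$ and changing variables in the resulting one-dimensional integral, in the same spirit as the coupling in \Cref{lem:GeneralizedPrivacy}; for simplicity we first treat the pure case $\delta_t = 0$.

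\textbf{Pure case.} Fix neighboring $X,X'$ and a transcript of calls to \FTest in which exactly $c$ calls return $+1$, the last of them being the halting output. Conditional on $p^\dagger = q$, each call on a mechanism with success probability $p_i$ returns $+1$ with probability $q p_i$ and $-1$ with probability $1 - q p_i$, independently. Hence the transcript probability is
\[
\int_0^1 \Big(\prod_{i \in S_+} q p_i\Big)\Big(\prod_{i \in S_-} (1 - q p_i)\Big)\,\theta_C q^{\theta_C - 1}\,\mathsf{d}q ,
\]
where $S_+$ indexes the $c$ positive calls and $S_-$ the negative ones. We substitute $u = e^{-\eps_1} q$ and use $e^{-\eps_1}p_i' \le p_i \le e^{\eps_1} p_i'$ to compare each factor with its counterpart on $X'$ at parameter $u$:
\begin{enumerate}
\item For each negative call, $q p_i \ge u p_i'$, so $1 - q p_i \le 1 - u p_i'$; these factors cost nothing.
\item For each positive call, $q p_i \le e^{2\eps_1} u p_i'$; the $c$ positive calls contribute $e^{2c\eps_1}$ in total.
\item The density and Jacobian give $\theta_C q^{\theta_C - 1}\,\mathsf{d}q = e^{\theta_C \eps_1}\,\theta_C u^{\theta_C - 1}\,\mathsf{d}u$.
\end{enumerate}
The new range $u \in [0, e^{-\eps_1}]$ is contained in $[0,1]$, and the integrand is nonnegative, so we may enlarge it back to $[0,1]$. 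This yields the factor $e^{(2c + \theta_C)\eps_1}$ for every transcript. Summing over transcripts, including the never-halting ones, which have even fewer positive factors, gives pure $(2c+\theta_C)\eps_1$-DP.

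\textbf{Approximate case.} For $(\eps_1,\delta_t)$-DP inputs, we would use the standard fact that a binary $(\eps_1,\delta_t)$-DP pair $(p_t, p_t')$ can be written as a $\delta_t$-mixture with a pair satisfying the pure ratio bounds. Concretely, we would couple $\calM_t(X)$ with a pure surrogate that differs from it only on a bad event of probability at most $\delta_t$, run the pure argument on the surrogate, and charge the bad events additively.

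\textbf{Main obstacle.} Each mechanism $\calM_t$ is evaluated many times, up to $N_t$, so a naive union bound would charge $N_t\delta_t$ rather than $\delta_t$. The resolution we would try first is to bound the bad event per mechanism rather than per call. For this we would argue that the deviation only matters on the positive calls that survive dropping: there are at most $c$ of them in total, and the dropping probability $p^\dagger$ is small with high probability. We expect this accounting to be the step that needs the restriction $\delta \in (0, 2^{-\sqrt{c}})$: it is where one trades a small extra multiplicative factor for keeping the additive term at $\sum_t \delta_t$, as in the original analysis of \cite{cohen2023generalized}. We would follow their accounting for that step rather than rederive it.
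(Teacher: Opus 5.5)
The paper does not prove this statement (it is imported from \cite{cohen2023generalized}), so your argument has to stand on its own. Your pure case does. Substituting $q=e^{\eps_1}u$ and using $e^{-\eps_1}p_i'\le p_i\le e^{\eps_1}p_i'$ gives exactly the factors $1$, $e^{2c\eps_1}$ and $e^{\theta_C\eps_1}$; this is a multiplicative analogue of the shift argument in \Cref{lem:GeneralizedPrivacy}.

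\textbf{The gap: the approximate case.} There you defer to the original analysis, and you mislocate the difficulty.
\begin{enumerate}
\item In the statement, the stream $(\calM_t)$ is the sequence of \FTest calls, each evaluated once. The sum $\sum_t\delta_t$ runs over calls; \Cref{rem:CohenApproxDP} notes that $N_t$ calls on the same mechanism cost $N_t\delta_t$.
\item The per-mechanism bound you aim for is false. Let $\calM(X)=+1$ with probability $\delta$ and $\calM(X')=-1$ always; this is $(0,\delta)$-DP. With $c=1$ and $N$ calls, $\Pr[\text{halt}\mid X]=\Ex[1-(1-p^\dagger\delta)^N]\approx N\delta\,\theta_C/(\theta_C+1)$, while $\Pr[\text{halt}\mid X']=0$.
\item Your heuristic breaks for two reasons. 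A run has at most $c$ positives, but the call at which a $\delta$-deviation creates one is not fixed in advance, so its probability sums over all undropped calls. Also, $p^\dagger$ is the probability of \emph{evaluating} and has constant mean $\theta_C/(\theta_C+1)$, so its size cannot rescue the union bound.
\item The condition $\delta\in(0,2^{-\sqrt c})$ plays no role, since that $\delta$ does not appear in the claimed guarantee.
\end{enumerate}

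\textbf{What is missing is a short per-call hybrid.}
\begin{enumerate}
\item For each call, given the common prefix, clamp $p_i$ into $[e^{-\eps_1}p_i',\,e^{\eps_1}p_i']$. Since $p_i\le e^{\eps_1}p_i'+\delta_i$ and $p_i'\le e^{\eps_1}p_i+\delta_i$, the clamped value $\tilde p_i$ satisfies $|\tilde p_i-p_i|\le\delta_i$. It also satisfies exactly the two ratio bounds your pure argument uses.
\item Conditional on $p^\dagger=q$, couple the real and surrogate calls one at a time. They first disagree at call $i$ with probability at most $q|p_i-\tilde p_i|\le\delta_i$, so the two transcript laws are within $\sum_i\delta_i$ in total variation.
\item Applying your pure bound to the $\tilde p_i$ gives $\Pr[E\mid X]\le e^{(2c+\theta_C)\eps_1}\Pr[E\mid X']+\sum_i\delta_i$ for every event $E$, which is the claim.
\end{enumerate}
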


This theorem provides an adaptive privacy guarantee: it applies to any adaptive sequence of calls to $\FTest$, where the choice of mechanism at each call may depend on prior outputs, as long as the sequence terminates after receiving $c$ positive responses. 

In their work, Cohen et al.~\cite{cohen2023generalized} do not give an accuracy guarantee as introduced by \cite{DBLP:conf/stoc/0001T19} for the generalized private testing problem. We derive an accuracy guarantee for their Test procedure along the lines of the notion of accuracy introduced by \cite{DBLP:conf/stoc/0001T19}.

At initialization, the mechanism draws a shared pass probability $p^\dagger$ from $[0,1]$ with $\Pr[p^\dagger \leq x] = x^{\theta_C}$ for all $x \in [0,1]$. At each step $t$, the mechanism makes up to $N_t$ independent calls to $\FTest(\calM_t)$, where each call draws $r \sim \Ber(p^\dagger)$ and returns $\calM_t(X_t)$ if $r = 1$ and $-1$ if $r = 0$. The mechanism halts at step $t$ if any call returns $+1$, and continues to the next step if all $N_t$ calls return $-1$.

We see that with probability $p^\dagger$, the \FTest procedure does not evaluate the mechanism $\calM_t$ at all. To avoid the possibility of a missed halt, we test every mechanism some $N_t$-many times to ensure that if $p_t \geq \pstar$, then we would encounter a $+1$ with probability $1-\beta$. We formalize this guarantee in \Cref{alg:CohenTesting} and the following result.

\begin{prop}[Accuracy of the mechanism of \cite{cohen2023generalized} for generalized private testing]\label{prop:CohenAccuracy}
Fix a target threshold $\pstar \in (0,1)$, a failure probability $\beta \in (0,1)$, an input privacy parameter $\eps_1 > 0$, and a private dataset $X$. Suppose $\eps > 2\eps_1$, and define ${\theta_C} := \eps/\eps_1 - 2 > 0$. When \Cref{alg:CohenTesting} is run on an input sequence of mechanisms $\calM_t(X)$, the following properties hold.
\begin{enumerate}
\item \textbf{Privacy.}
The mechanism is $\eps$-differentially private.

\item \textbf{Accuracy.}
With probability at least $1 - 2\beta$, simultaneously for all $t \geq 1$:
\begin{enumerate}
    \item \emph{Halting:} If $p_t \geq \pstar$, then $a_t = +1$.
    \item \emph{Continuation:} If $p_t \leq \bar{p}_t$, then $a_t = -1$, where
    \[
        \bar{p}_t \;=\;  \frac{\pstar \beta_t}{2 \ln(2/\beta) } \paren{\frac{\beta}{2} - \beta}^{1/\theta_C}.
    \]
\end{enumerate}

\item \textbf{Sample complexity.}
The number of evaluations of $\calM_t$ at each step is at most $N_t = O\!\left(\frac{\ln(2/\beta)}{\pstar(\beta/2)^{1/{\theta_C}}}\right)$.
\end{enumerate}
\end{prop}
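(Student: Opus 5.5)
Privacy follows directly from \Cref{thm:cohen} with $c=1$ and $\delta_t=0$. The procedure halts after the first $+1$ returned by any call to \FTest. The looping structure (up to $N_t$ calls per step) is an adaptive sequence of calls to \FTest, possibly with repeated mechanisms. Hence the transcript is $((2+\theta_C)\eps_1)$-DP, and since $\theta_C=\eps/\eps_1-2$ this equals $\eps$. The sample complexity bound is immediate from the definition of $N_t$ in \Cref{alg:CohenTesting}.

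For accuracy, I would first control the shared pass probability. Since $\Pr[p^\dagger\le x]=x^{\theta_C}$, setting $x=(\beta/2)^{1/\theta_C}$ gives $p^\dagger\ge(\beta/2)^{1/\theta_C}$ with probability $1-\beta/2$. I condition on this event.

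For halting, fix a step with $p_t\ge\pstar$. Each call returns $+1$ independently with probability $p^\dagger p_t\ge(\beta/2)^{1/\theta_C}\pstar$. With $N_t\ge\ln(2/\beta)/((\beta/2)^{1/\theta_C}\pstar)$ calls, all fail with probability at most $e^{-\ln(2/\beta)}=\beta/2$. The event "missed halt" only matters at the first such step, because the algorithm halts there. Alternatively, I would use per-step budgets $\beta_t$ and a union bound. Either way, the total Type II failure is at most $\beta/2+\beta/2$.

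For continuation (Type I), the failure probability at step $t$ is at most $N_t\,p^\dagger p_t\le N_t p_t$, using the union bound and $p^\dagger\le 1$. Requiring this to be at most $\beta_t$ gives
\[
p_t\le \frac{\beta_t}{N_t}\approx\frac{\pstar\beta_t(\beta/2)^{1/\theta_C}}{\ln(2/\beta)}.
\]
The ceiling in $N_t$ costs a factor of at most $2$. Summing over $t$ with $\sum_t\beta_t=\beta$ gives total Type I failure at most $\beta$.

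Combining the two parts, the total failure probability is at most $2\beta$. The main obstacle is bookkeeping: matching the exact expression for $\bar p_t$ in the statement. That expression contains the odd factor $(\beta/2-\beta)^{1/\theta_C}$, which appears to be a typo for $(\beta/2)^{1/\theta_C}$. I would prove the bound with $\bar p_t=\pstar\beta_t(\beta/2)^{1/\theta_C}/(2\ln(2/\beta))$, and note that the $1/2$ absorbs the ceiling.
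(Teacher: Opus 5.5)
Your skeleton matches the paper's: privacy from \Cref{thm:cohen} with $c=1$, a quantile event for $p^\dagger$, Type~II errors only at the first reached step with $p_t\ge\pstar$ via $(1-q)^{N_t}\le e^{-N_tq}\le\beta/2$, and Type~I errors via $1-(1-x)^{N}\le Nx$ summed against $(\beta_t)$. You differ in the accounting for $p^\dagger$, and this is exactly what the garbled factor in the statement encodes. The paper conditions on the two-sided event $p^\dagger\in\bigl((\beta/2)^{1/\theta_C},(1-\beta/2)^{1/\theta_C}\bigr)$, at cost $\beta$. It then uses the upper quantile in the Type~I bound and allots only $\beta/2$ to Type~I errors. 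Its final expression is
\[
\frac{\pstar\beta_t(\beta/2)^{1/\theta_C}}{2\ln(2/\beta)(1-\beta/2)^{1/\theta_C}}=\frac{\pstar\beta_t}{2\ln(2/\beta)}\Bigl(\frac{\beta}{2-\beta}\Bigr)^{1/\theta_C},
\]
so the typo is $\frac{\beta}{2}-\beta$ for $\frac{\beta}{2-\beta}$, not for $\frac{\beta}{2}$.

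Your one-sided accounting (lower tail at cost $\beta/2$, then $p^\dagger\le 1$, and the full $\beta$ to Type~I) is valid. However, it certifies a threshold smaller by the factor $(1-\beta/2)^{1/\theta_C}\le e^{-\beta/(2\theta_C)}$. This factor is bounded away from $1$ once $\theta_C\lesssim\beta$ (i.e.\ $\eps$ near $2\eps_1$) and tends to $0$ as $\theta_C/\beta\to 0$, because then even the upper $\beta/2$-quantile of $p^\dagger$ is far below $1$.

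In exchange, your handling of the ceiling is the rigorous one. With $x:=\ln(2/\beta)/((\beta/2)^{1/\theta_C}\pstar)>\ln 2>1/2$, you get $N_t=\lceil x\rceil\le 2x$; you should state why $x\ge 1/2$. The paper's last display only shows that its working threshold $\beta_t/(2N_t(1-\beta/2)^{1/\theta_C})$ is \emph{at most} the stated expression, which is the wrong direction for a rejection threshold. Combining the paper's two-sided event with your ceiling bound gives the intended form rigorously, with $4$ in place of $2$ in the denominator, still within total failure $2\beta$.

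Separately, drop the ``alternatively, per-step budgets $\beta_t$'' remark for Type~II. The algorithm fixes $N_t$ using $\ln(2/\beta)$, so each step's miss probability is only bounded by $\beta/2$, not $\beta_t$. A union bound over all steps with $p_t\ge\pstar$ therefore does not close. The first-step argument, which the paper also uses, is the one that works.
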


\begin{proof}
Each call to $\FTest(\calM_t)$ returns $+1$ with probability $p^\dagger \cdot p_t$, where $p^\dagger \in [0,1]$ with $\Pr[p^\dagger \leq x] = x^{\theta_C}$: the Bernoulli trial passes with probability $p^\dagger$, and conditioned on passing, $\calM_t(X_t)$ returns $+1$ with probability $p_t$. At step $t$, the mechanism makes calls sequentially and halts globally on the first $+1$ response, or outputs $a_t = -1$ after $N_t$ consecutive $-1$ responses. Since the calls are independent, the probability of observing at least one $+1$ among $N_t$ calls is $1 - (1 - p^\dagger\,p_t)^{N_t}$, regardless of whether we stop at the first $+1$ or run all $N_t$ calls.

\medskip
\noindent\textbf{Privacy.}
\Cref{alg:CohenTesting} produces an adaptive sequence of $\FTest$ calls (the choice of $\calM_t$ at each step may depend on prior outputs $a_1, \ldots, a_{t-1}$) that terminates after at most $c = 1$ positive response. Therefore, by \Cref{thm:cohen}, the full transcript is $\eps$-DP for $\eps=(2+{\theta_C})\eps_1$.

\medskip
\noindent\textbf{Noise control.} Define the event $\calG := \{p^\dagger \in ((\beta/2)^{1/{\theta_C}},(1-\beta/2)^{1/{\theta_C}})\}$. Since $\Pr[p^\dagger \leq x] = x^{\theta_C}$, we have $\Pr[\calG^c] = \beta$.

\medskip
\noindent\textbf{Halting.}
Conditioned on $\calG$, suppose $p_t \geq \pstar$ at some step $t$. Each of the $N_t$ independent calls returns $+1$ with probability $p^\dagger \cdot p_t \geq (\beta/2)^{1/{\theta_C}} \pstar$. The probability that all $N_t$ calls return $-1$ is
\[
  (1 - p^\dagger \cdot p_t)^{N_t} \;\leq\; (1 - (\beta/2)^{1/{\theta_C}}\pstar)^{N_t} \;\leq\; \exp(-N_t (\beta/2)^{1/{\theta_C}}\pstar) \;\leq\; \frac{\beta}{2},
\]
where the last inequality uses $N_t \geq \ln(2/\beta)/((\beta/2)^{1/{\theta_C}}\pstar)$. Since we only need to detect the first instance of $p_t \geq\pstar$, the probability of a Type~II error at any step is at most $\beta/2$, conditioned on $\calG$.

\medskip
\noindent\textbf{Continuation.}
Suppose $p_t \leq \bar{p}_t$ at step $t$. Since $x \mapsto 1 - (1-x)^{N_t}$ is non-decreasing on $[0,1]$ and $p^\dagger \leq (1-\beta/2)^{1/{\theta_C}}$, the probability that at least one of the $N_t$ calls returns $+1$ satisfies,
\[
  1 - (1 - p^\dagger \cdot p_t)^{N_t} \;\leq\; p^\dagger\,N_t\,p_t \;\leq\; (1-\beta/2)^{1/{\theta_C}}\,N_t\,\bar{p}_t,
\]
where the first inequality uses $1 - (1-x)^N \leq Nx$ for $x \in [0,1]$. Summing over all $t\geq1$, we require
\[
    \sum_{t\geq 1} \left(1-{\frac{\beta}{ 2}}\right)^{1/{\theta_C}}\,N_t\,\bar{p}_t \;\leq\; {\frac{\beta}{2}}.
\]

Defining $\bar{p}_t = \tfrac{\beta_t}{ 2N_t (1-\beta/2)^{1/{\theta_C}}}$, since $\sum_{t\geq1}\beta_t/2 = \beta/2$, this bound holds. Setting the value of $N_t$, 
\[
  \bar{p}_t \;=\; \frac{\beta_t }{2 \left\lceil \frac{\ln(2/\beta)}{ (\pstar(\beta/2)^{1/{\theta_C}})} \right\rceil \left(1-\frac{\beta}{2} \right)^{1/{\theta_C}}} \;\leq\; \frac{\pstar \beta_t(\beta/2)^{1/{\theta_C}}}{2 \ln(2/\beta) (1-\beta/2)^{1/{\theta_C}}} . 
\]

\medskip
\noindent\textbf{Failure probability.}
The total failure probability is at most
\[
  \underbrace{\Pr[\calG^c]}_{\leq\,\beta} \;+\; \underbrace{\Pr[\text{Type~II error} \mid \calG]}_{\leq\,\beta/2} \;+\; \underbrace{\Pr[\text{Type~I error}\mid \calG]}_{\leq\,\beta/2} \;\leq\; 2\beta.
\]
This completes the proof of \Cref{prop:CohenAccuracy}.
\end{proof}

\begin{remark}[Approximate DP inputs]\label{rem:CohenApproxDP}
\Cref{prop:CohenAccuracy} assumes pure $\eps_1$-DP input mechanisms. When the input mechanisms are $(\eps_1, \delta)$-DP, \cite[Theorem~2]{cohen2023generalized} gives $((2+{\theta_C})\eps_1, \sum_i \delta_i)$-DP, where the sum is over all calls to $\FTest$. Since \Cref{alg:CohenTesting} makes as many as $\sum_{t=1}^T N_t$ calls over $T$ inputs, this yields a total $\delta$-cost of up to $\sum_{t=1}^T N_t \delta_t$, which scales linearly in both the number of trials per step and the stream length. \cite{cohen2023generalized} also provide a R\'enyi DP analysis~\cite[Theorem~2]{cohen2023generalized} to obtain tighter approximate DP guarantees via conversion, at the cost of a more involved privacy accounting.

Both the linear $\delta$-accumulation and the need for R\'enyi DP analysis can be avoided by applying the purification technique of \Cref{lem:ApproxToPure}. Given $(\eps_1, \delta)$-DP mechanisms, each evaluation is passed through a binary symmetric channel with crossover probability $\phi = \delta/(e^{\eps_1} - 1 + 2\delta) \leq \delta/\eps_1$, producing a pure $\eps_1$-DP mechanism. The acceptance and rejection thresholds are then perturbed by at most $\delta/\eps_1$. This approach applies to the mechanism of Cohen et al. \cite{cohen2023generalized} as well as to our \Cref{alg:GTM}, and yields pure DP guarantees for the output in both cases, with no $\delta$-accumulation across calls.
\end{remark}

\begin{lem}[False positive accumulation under shared randomness]\label{lem:shared-noise-limit}
Let $M \geq 1$, $q_0 \in (0,1]$, and $q > 0$. Let $\Omega$ be a probability space and $E \subseteq \Omega$ an event with $\Pr[E] \geq q_0$. Suppose that for every outcome $\omega \in E$, conditioned on $\omega$, at least $M$ conditionally independent trials are observed, each succeeding with conditional probability at least $q$. Then the probability that at least one trial succeeds is at least
\[
    q_0\left(1 - e^{-Mq}\right).
\]
In particular, if this probability is at most $\beta$ for some $\beta < q_0$, then
\[
    q \;\leq\; \frac{1}{M}\ln\frac{1}{1 - \beta/q_0}.
\]
For $\beta \leq q_0/2$, this gives $q \leq 2\beta/(M q_0)$.
\end{lem}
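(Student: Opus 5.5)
The plan is to condition on an outcome $\omega\in E$ and bound the conditional probability that no trial succeeds. After that we average over $E$ and invert the resulting inequality.

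\textbf{Step 1 (conditional bound).} Fix $\omega\in E$. Conditioned on $\omega$, we observe $M_\omega\ge M$ conditionally independent trials, and trial $i$ succeeds with probability $q_i\ge q$. The probability that all of them fail is
\[
\prod_{i}(1-q_i)\;\le\;(1-q)^{M}\;\le\;e^{-Mq}.
\]
Here we use $1-x\le e^{-x}$ and note that extra trials can only decrease the product. If $q>1$ the hypothesis is vacuous or the product is zero, so the bound still holds. Hence the conditional probability of at least one success is at least $1-e^{-Mq}$.

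\textbf{Step 2 (averaging).} By the law of total probability, restricted to $E$ and dropping the nonnegative contribution of $E^c$,
\[
\Pr[\text{some success}]\;\ge\;\int_E\bigl(1-e^{-Mq}\bigr)\,d\Pr(\omega)\;=\;\Pr[E]\bigl(1-e^{-Mq}\bigr)\;\ge\;q_0\bigl(1-e^{-Mq}\bigr).
\]
If the trial count and success probabilities are only measurable functions of $\omega$, the integrand is a measurable lower bound, so this causes no trouble.

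\textbf{Step 3 (inversion).} Suppose the success probability is at most $\beta$ with $\beta<q_0$. Then $q_0(1-e^{-Mq})\le\beta$, which gives $e^{-Mq}\ge 1-\beta/q_0>0$. Taking logarithms yields $q\le \frac1M\ln\frac{1}{1-\beta/q_0}$.

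For the last claim, set $x=\beta/q_0\le1/2$ and use $\ln\frac1{1-x}\le 2x$ on $[0,1/2]$. This inequality holds because $-\ln(1-x)$ is convex, vanishes at $0$, and at $x=1/2$ equals $\ln 2\le 1$, so it lies below the chord $2x$. This gives $q\le 2\beta/(Mq_0)$.

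No step is a real obstacle. The only care needed is the measure-theoretic phrasing of the conditional independence in Step 2, and I would handle it by writing the success probability as $\mathbb{E}[\Pr[\text{success}\mid\omega]]$.
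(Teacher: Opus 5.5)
Your proof is correct and follows the paper's argument essentially step for step. You bound the conditional all-fail probability on $E$ by $(1-q)^M\le e^{-Mq}$, average over $E$ to get the lower bound $q_0(1-e^{-Mq})$, and invert, with the convexity justification of $\ln\frac{1}{1-x}\le 2x$ on $[0,1/2]$ and the remark on the degenerate case $q>1$ being small details the paper leaves implicit.
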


\begin{proof}
For every $\omega \in E$, the probability that all $M$ trials fail is at most $(1-q)^M \leq e^{-Mq}$. Averaging over $E$, $\Pr[\text{all fail} \mid E] \leq e^{-Mq}$, so $\Pr[\text{at least one succeeds} \mid E] \geq 1 - e^{-Mq}$. Marginalizing, the unconditional probability of at least one success is at least $q_0(1 - e^{-Mq})$. Setting this at most $\beta$ and rearranging,
    \[ 1 - e^{-Mq} \leq \frac{\beta}{q_0} \;\Rightarrow\; Mq \leq \log \frac{1}{1-\beta/q_0}.\]
Dividing both sides by M in the latter inequality yields the stated bound. For $\beta \leq q_0/2$, we have $\beta/q_0 \leq 1/2$, and the standard inequality $\ln(1/(1-x)) \leq 2x$ for $x \in [0, 1/2]$ gives $q \leq 2\beta/(Mq_0)$.
\end{proof}

\begin{cor}[Error lower bound of \cref{alg:CohenTesting}]\label{cor:CohenAccuracyLimit}
Fix $\beta \in (0,1/4)$, $T \geq 1$, $\eps_1 > 0$, and $\eps > 2\eps_1$. Write $\theta_C = \eps/\eps_1 - 2 > 0$ for the parameter of the mechanism of \cite{cohen2023generalized}. Consider their testing mechanism applied to $T$ mechanisms, each $\eps_1$-DP with range $\{-1,+1\}$, success probabilities $p_t$, and evaluated $N_t$ times at step $t$.

If there exists $\bar{p} > 0$ such that for every instance with $p_t \leq \bar{p}$ for all $t \in [T]$, the probability that the algorithm halts is at most $\beta$, then
\[
    \bar{p} \;\leq\; \frac{4\beta \cdot 2^{1/\theta_C}}{\sum_{t=1}^T N_t}.
\]
In particular, for any fixed $\theta_C > 0$, $\bar{p} = O\!\left(\beta / \sum_{t=1}^T N_t\right)$.
\end{cor}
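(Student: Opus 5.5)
The plan is to apply \Cref{lem:shared-noise-limit} with the event $E$ chosen as a range of large values of the shared pass probability $p^\dagger$. Consider the instance in which every $p_t = \bar{p}$ for $t\in[T]$. To keep the trials independent, take $\calM_t$ to be a data-independent coin with bias $\bar{p}$; this is trivially $\eps_1$-DP. This instance satisfies the hypothesis, so by assumption the algorithm halts with probability at most $\beta$.

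Condition on $p^\dagger = x$. Before any halt, the mechanism makes $\FTest$ calls in sequence, up to $N_t$ calls at step $t$. Each call succeeds independently with probability $x\bar{p}$, and the run halts exactly when some call among these $M := \sum_{t=1}^T N_t$ potential trials succeeds. So we regard all $M$ calls as conditionally independent trials with success probability $q = x\bar{p}$. Stopping at the first success does not change the probability that at least one succeeds, as observed in the proof of \Cref{prop:CohenAccuracy}.

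Choose $E := \{p^\dagger \geq 2^{-1/\theta_C}\}$. Since $\Pr[p^\dagger\leq x]=x^{\theta_C}$, we get $\Pr[E] = 1 - 1/2 = 1/2 =: q_0$. On $E$, each trial succeeds with probability at least $q := 2^{-1/\theta_C}\bar{p}$. The hypothesis $\beta < 1/4 = q_0/2$ gives $\beta \leq q_0/2$. The final clause of \Cref{lem:shared-noise-limit} then yields $2^{-1/\theta_C}\bar{p} \leq 2\beta/(M\cdot \tfrac12) = 4\beta/M$. Rearranging gives $\bar{p} \leq 4\beta\, 2^{1/\theta_C}/\sum_t N_t$. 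The $O(\cdot)$ statement follows immediately for fixed $\theta_C$.

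The only delicate point is making sure the trials are genuinely conditionally independent with a uniform success lower bound on all of $E$. This requires handling adaptivity: the number of trials realized depends on earlier outcomes. I would deal with it by coupling to the full sequence of $M$ independent Bernoulli draws, where halting is equivalent to at least one success. Fixing the instance to data-independent coins removes any dependence on the dataset.
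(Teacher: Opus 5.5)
Your proposal is correct and follows the paper's proof: the same event $E=\{p^\dagger \geq 2^{-1/\theta_C}\}$ with $q_0 = 1/2$, and the same application of \Cref{lem:shared-noise-limit} with $M=\sum_{t=1}^T N_t$ and $q = 2^{-1/\theta_C}\bar{p}$. The paper leaves two points implicit that you make explicit. You fix the worst-case instance $p_t=\bar{p}$ (via data-independent coins), and you couple to the full sequence of $M$ Bernoulli trials to handle stopping at the first $+1$.
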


\begin{proof}
The mechanism draws $p^\dagger \in [0,1]$ with $\Pr[p^\dagger \leq x] = x^{\theta_C}$. Define $E = \{p^\dagger \geq (1/2)^{1/\theta_C}\}$, which has probability $q_0 = 1 - ((1/2)^{1/\theta_C})^{\theta_C} = 1/2$. At each step $t$, each of the $N_t$ calls to $\FTest$ independently draws $r \sim \Ber(p^\dagger)$ and evaluates $\calM_t (X)$ if $r = 1$. For any fixed $p^\dagger \in E$, the $\FTest$ calls are conditionally independent of each other, each producing $+1$ with conditional probability $p^\dagger \cdot p_t \geq (1/2)^{1/\theta_C} \cdot \bar{p}$. The algorithm halts on any $+1$. Apply \Cref{lem:shared-noise-limit} with $M = \sum_{t=1}^T N_t$, $q = (1/2)^{1/\theta_C}\,\bar{p}$, and $q_0 = 1/2$. Since $\beta < 1/4 = q_0/2$,
\[
    (1/2)^{1/\theta_C}\,\bar{p} \;\leq\; \frac{4\beta}{M},
\]
giving the claimed bound on $\bar p$.
\end{proof}

\subsection{Technical Overview of \texorpdfstring{\cite{ghaziprivate}}{Ghazi et al. (2025)}}
\label{sec:ghaziprivate}

Ghazi, Kamath, Knop, Kumar, Manurangsi, and Zhang~\cite{ghaziprivate} introduce two mechanisms which are relevant to our setting. They work in the generalized private selection setting and consider a stream of mechanisms generating outputs with scores in some arbitrary domain, and their objective is to find an output which achieves a user-defined threshold. 

Their first mechanism is their generalized AboveThreshold mechanism with random dropping \cite[Algorithm 2]{ghaziprivate}. It is similar to the \FTest procedure of \cite{cohen2023generalized}, with a dropping probability $p^\dagger_G = e^{-\eps_i \cdot k}$, where $k\sim \Geo(e^{-\eps'})$. For every mechanism in the input stream they first draw a value $y\sim \Ber(p^\dagger_G)$, and only query the mechanism if this check passes. Then, if the score of the item output by the mechanism beats the given threshold, they output this item, along with the index of the mechanism that generated this output, in the clear. They show that this mechanism fulfills an $(2\eps_i + \eps')$-ex post privacy guarantee.

Their guarantee is more general than that of \cite{cohen2023generalized} by allowing for varying input privacy parameter $\eps_i$, and they also show that one can output the element that achieves the score which passes the threshold test. However, the core idea is similar - each access to $\calM_t$ is mediated by a dropping probability, and in fact the distribution from which this probability is drawn is very similar (we explain this next). Further, the privacy loss guarantee is similar; for $\eps_t = \eps_1$, their guarantee is exactly the $2\eps_i + \eps$ guarantee achieved by \cite{cohen2023generalized}.

To reason about the dropping probability distribution, we see that for $k\sim \Geo(e^{-\eps'})$ and any $j\in\N$, $\Pr[k\geq j] = e^{-\eps' j}$. It follows that the $1-\beta$-probability upper bound on $k$ is $\approx \tfrac{1}{\eps'} \log 1/\beta$. It follows that the $(1-x)$ probability lower bound on the dropping probability $p^\dagger_G$ equals $\exp(-\eps_i \cdot \tfrac{1}{\eps'} \log 1/x) \approx \beta^{\eps_i /\eps'}$. In other words, $\Pr[p^\dagger < x] \approx x^{\eps_i/\eps'}$. This is essentially the same distribution from which the dropping probability $p^\dagger$ of \cite{cohen2023generalized} is drawn, and it follows that similar upper and lower bounds in terms of accuracy for generalized private testing follow. 

\cite{ghaziprivate} introduce another algorithm called the Hyperparameter Tuning Mechanism with Random Dropping (Algorithm 3 in their paper); we recall it here for reference in our notation for ease of comprehension. The algorithm incurs output privacy loss $\eps = 2\eps_i + \eps'$; we write $\theta_G = \eps - 2\eps_1  = \eps'> 0$ to match notation with our discussion.

\begin{algorithm}[t]
\caption{Maximum Selection with Random Dropping~\cite{ghaziprivate}}
\label{alg:GhaziSelection}
\SetAlgoLined
\KwIn{Mechanisms $\calM_t : \calX \to \{-1,+1\}$, each $\eps_1$-DP, for $t \in [T]$; dataset $X$.}
\KwOut{$(o, t) \in \{-1,+1\} \times [T]$ or $\perp$.}
\BlankLine
$S \gets \{\perp\}$\;
Draw $k \sim \mathrm{Geo}(e^{-\theta_G})$\;
\For{$t = 1, \ldots, T$}{
  Draw $y_t \sim \Ber(e^{-\eps_1 \cdot k})$\tcp*{random drop}
  \If{$y_t = 1$}{
    $o \gets \calM_t(X)$\;
    $S \gets S \cup \{(o, t)\}$\;
  }
}
\Return element of $S$ selected by a data-independent rule\;
\end{algorithm}

They prove that their algorithm achieves the following accuracy guarantee:

\begin{thm}[\cite{ghaziprivate}, restated]\label{thm:GhaziAccuracy}
Let $\alpha, \beta \in (0,1)$, $\eps' > 0$, and let $\calM_1, \ldots, \calM_d$ be mechanisms where $\calM_i$ is $\eps_i$-DP with range $\calO$. Define
\[
  T_i \;:=\; \left\lceil \frac{1}{\alpha} \left(\frac{2}{\beta}\right)^{\!\eps_i/\eps'} \ln\!\left(\frac{2}{\beta}\right) \right\rceil.
\]
Run \Cref{alg:GhaziSelection} with $\calE = \Geo(e^{-\eps'})$, repeating each $\calM_i$ for $T_i$ times in the input sequence. Then:
\begin{enumerate}
\item \textbf{Privacy.} The mechanism is ex-post $\tilde{\eps}$-DP with $\tilde{\eps}(o,i) = 2\eps_i + \eps'$ and $\tilde{\eps}(\perp) = 0$.
\item \textbf{Accuracy.} If there exists $o^* \in \calO$ and $i^* \in [d]$ such that $\Pr[\calM_{i^*}(D) \geq o^*] \geq \alpha$, then the algorithm outputs an element at least as large as $(o^*, 0)$ with probability at least $1 - \beta$.
\item \textbf{Sample complexity.} The number of evaluations of $\calM_i$ is at most $T_i = O\!\left(\frac{1}{\alpha}\left(\frac{2}{\beta}\right)^{\!\eps_i/\eps'} \ln\!\left(\frac{2}{\beta}\right)\right)$.
\end{enumerate}
\end{thm}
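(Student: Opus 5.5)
The statement just before this point is \Cref{thm:GhaziAccuracy}, a restatement of the guarantee of \cite{ghaziprivate} for \Cref{alg:GhaziSelection}. I would treat the three items separately. Privacy (item 1) I would take directly from the ex-post analysis of \cite{ghaziprivate}. I would only sketch why it holds, by analogy with \Cref{lem:GeneralizedPrivacy}. The geometric variable $k$ plays the role of $\eta_1$. A neighbouring dataset changes each $\Pr[\calM_i(D)=o]$ by at most $e^{\eps_i}$. Shifting $k\mapsto k+1$ multiplies each drop-survival probability $e^{-\eps_i k}$ by $e^{-\eps_i}$, which compensates the change on all non-selected coordinates. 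That shift costs a factor $e^{\eps'}$ in the geometric mass. The selected output pays an extra $e^{2\eps_i}$, since it is both not dropped and evaluated. This gives $\tilde\eps(o,i)=2\eps_i+\eps'$. The $\perp$ case costs $0$ because the coupling is one-sided: every mechanism is dropped in the limit $k\to\infty$.

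For accuracy (item 2) I would condition on a good event for $k$. Since $\Pr[k\ge j]=e^{-\eps' j}$, with probability at least $1-\beta/2$ we have $k\le \frac{1}{\eps'}\ln(2/\beta)$. On this event the keep probability satisfies
\[ e^{-\eps_{i^*}k}\ \ge\ (\beta/2)^{\eps_{i^*}/\eps'}. \]
Conditioned on $k$, each of the $T_{i^*}$ copies of $\calM_{i^*}$ independently survives dropping and outputs something $\ge o^*$. Each does so with probability at least $\alpha(\beta/2)^{\eps_{i^*}/\eps'}$, since the drop coin $y_t$ and the evaluation are independent given $k$. The probability that no copy succeeds is therefore at most
\[ \exp\!\big(-T_{i^*}\alpha(\beta/2)^{\eps_{i^*}/\eps'}\big)\ \le\ \beta/2 \]
by the choice of $T_{i^*}$, using $(1-x)^n\le e^{-nx}$. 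If some copy succeeds, that pair lies in $S$. The data-independent selection rule is taken to be ``maximum score'', so it returns an element at least as large as $(o^*,0)$. A union bound over the two failure events gives $1-\beta$.

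Sample complexity (item 3) is immediate, since each $\calM_i$ appears exactly $T_i$ times and is evaluated at most once per appearance. The only delicate point is item 1. In particular, making the one-sided coupling rigorous with a discrete $k$ at the boundary $k=0$ requires that shifting $k$ downward is never needed. I would simply cite \cite{ghaziprivate} for it rather than reprove it.
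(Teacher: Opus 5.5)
Your proposal is correct. The paper gives no proof of this theorem; it is imported verbatim from \cite{ghaziprivate}. So for item~1 you do what the paper does (cite), and for items~2 and~3 you give a self-contained verification the paper omits.

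Your accuracy argument is sound. You condition on $k\le \frac{1}{\eps'}\ln(2/\beta)$, which fails with probability at most $\beta/2$ since $\Pr[k>x]\le e^{-\eps' x}$. On that event each of the $T_{i^*}$ copies independently succeeds with probability at least $\alpha(\beta/2)^{\eps_{i^*}/\eps'}$, and the choice of $T_{i^*}$ drives the all-fail probability below $\beta/2$. This makes rigorous the paper's own heuristic in \Cref{sec:ghaziprivate} that the keep probability $e^{-\eps_i k}$ is at least about $\beta^{\eps_i/\eps'}$ with high probability. You are also right to fix the selection rule to be the maximum. \Cref{alg:GhaziSelection} only says ``data-independent rule'', and \Cref{cor:GhaziAccuracyLimit} uses uniform random selection, under which item~2 would fail. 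Item~2 therefore implicitly presupposes the max rule.

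Two small corrections to your privacy sketch.
\begin{itemize}
\item The boundary at $k=0$ is not delicate. The coupling maps $k\mapsto k+1$ from $D$ to $D'$, so its image is $\{k\ge 1\}$. Enlarging back to $\{k\ge 0\}$ only increases the $D'$-side probability, exactly as in the $u\ge 1\to u\ge 0$ step of \Cref{lem:GeneralizedPrivacy}.
\item The reason $\tilde\eps(\perp)=0$ is not a limiting argument in $k$, and not the one-sidedness of the coupling. Under the max rule, $\perp$ is returned if and only if every copy is dropped. That event depends only on $k$ and the drop coins, so its probability is identical on $D$ and $D'$ and no shift of $k$ is needed. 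If you did apply the coupling, you would pay $e^{\eps'}$ rather than $1$.
\end{itemize}
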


This guarantee is very compelling, because unlike our upper and lower bounds for generalized private testing, it does not incur poor scaling with the length of the stream. However, we find that when we analyse their mechanism in terms of \emph{testing}, as opposed to for \emph{selection}, it incurs a similar accuracy loss of $\beta/T$. We formalize this via a toy example. Consider a setting where there are some $T$ mechanisms; $G$ of them are \emph{good}, and have a probability $\pstar$ of generating good $+1$ outputs, and the remaining $T-G$ of them are \emph{bad}, and only generate good $+1$ outputs with probability $\bar{p}_t$. If we want a $1-\beta$ probability guarantee that the mechanism whose index is output by the selection mechanism is not bad, then for any fixed choice of $\theta_G$, we will require $\pbar \approx \Theta(\beta/T)$ if $G = o(T)$.

\begin{cor}[Error lower bound for \Cref{alg:GhaziSelection} when used for testing]\label{cor:GhaziAccuracyLimit}
Fix $\beta \in (0,1/2)$, $T \geq 2$, $\pstar \in (0,1)$, $\eps_1 > 0$, and $\eps > 2\eps_1$. Write $\theta_G = \eps - 2\eps_1$ and suppose $\beta \leq (1-e^{-\theta_G})/2$. Consider \Cref{alg:GhaziSelection} applied to $T$ mechanisms with range $\{-1,+1\}$, of which $G \in [T-2]$ have success probability $p_t \geq \pstar$ and the remaining $T-G$ have success probability $p_t \leq \bar{p}$. Suppose the selection rule is uniform random among all $(+1, t)$ in $S$.

If the probability that the algorithm returns $(+1,t)$ with $p_t \leq \bar{p}$ is at most $\beta$, then
\[
    \bar{p} \;\leq\; \frac{2\beta(1 + G\pstar)}{(T-G)(1 - e^{-\theta_G})}.
\]
In particular, for any fixed $\eps > 2\eps_1$, $\bar{p} = O(\beta(1 + G\pstar)/(T-G))$.
\end{cor}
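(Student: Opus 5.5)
We mirror the proof of \Cref{cor:CohenAccuracyLimit}. First isolate a high-probability event on the shared randomness under which every mechanism is queried with non-negligible probability. Then use the uniform selection rule to lower bound the probability that a bad index is returned. Finally invert the resulting inequality.

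\emph{Good event.} Let $E := \{k = 0\}$. Since $k \sim \Geo(e^{-\theta_G})$ with $\Pr[k \geq j] = e^{-\theta_G j}$, we have $\Pr[E] = 1 - e^{-\theta_G} =: q_0$. On $E$ the drop probability is $e^{-\eps_1 k} = 1$, so every mechanism is evaluated exactly once and the evaluations are independent. (If the paper's convention starts $\Geo$ at $1$, I would use $E = \{k = 1\}$ instead, which only changes constants. I will carry $q_0 = 1-e^{-\theta_G}$, matching the stated bound.)

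\emph{Conditional probability of returning a bad index.} Fix the worst case in which every bad mechanism has $p_t = \bar p$ and every good one has $p_t = \pstar$; monotonicity makes this the hard instance for the bound. Conditioned on $E$, let $B$ be the number of bad indices producing $+1$, so $B \sim \mathrm{Bin}(T-G, \bar p)$, and let $H$ be the number of good indices producing $+1$, so $\Ex[H] \leq G\pstar$, with $B$ and $H$ independent. The uniform selection rule returns a bad index with probability $\Ex[B/(B+H)\,\mathbf 1\{B \geq 1\}]$. I would lower bound this via
\[
\Ex\!\left[\frac{B}{B+H}\right] \;\geq\; \Ex\!\left[\frac{\mathbf 1\{B\ge1\}}{1+H}\right] \;=\; \Pr[B\ge1]\,\Ex\!\left[\frac{1}{1+H}\right] \;\geq\; \frac{\Pr[B\ge 1]}{1+G\pstar}.
\]
The first inequality holds because $B/(B+H) \geq 1/(1+H)$ whenever $B \geq 1$. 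The equality uses independence of $B$ and $H$. The last step is Jensen's inequality applied to the convex map $x \mapsto 1/(1+x)$.

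\emph{Inverting the bound.} Multiplying by $q_0$, the hypothesis gives $\beta \geq q_0 \Pr[B \geq 1]/(1+G\pstar)$, that is, $\Pr[B \geq 1] \leq \beta(1+G\pstar)/q_0$. We have $\Pr[B \geq 1] = 1-(1-\bar p)^{T-G} \geq 1-e^{-(T-G)\bar p}$. Applying \Cref{lem:shared-noise-limit}-style inversion with $x := \beta(1+G\pstar)/q_0$ gives $(T-G)\bar p \leq \ln(1/(1-x)) \leq 2x$ whenever $x \leq 1/2$, which yields the claimed bound.

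\emph{Main obstacle.} The difficulty is that $x$ need not be at most $1/2$: the hypothesis $\beta \leq q_0/2$ only controls $\beta/q_0$ and does not absorb the factor $(1+G\pstar)$. I would handle this by splitting into two cases. If $x \leq 1/2$, the computation above applies. If $x > 1/2$, the claimed bound reads $\bar p \leq 2x/(T-G)$ with $2x > 1$. To finish this case I would argue directly that $\bar p(T-G) \leq 1$ can be assumed, or otherwise use the cruder inequality $\Pr[B \geq 1] \geq \min(1, (T-G)\bar p)/2$, which is valid since $1-e^{-y} \geq \min(y,1)/2$. The cruder inequality gives $(T-G)\bar p \leq 2x$ in all cases: when $(T-G)\bar p \leq 1$ it reads $(T-G)\bar p/2 \leq x$; when $(T-G)\bar p > 1$ it reads $1/2 \leq x$, which forces $x > 1/2$, so a separate check of that case is still needed. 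To avoid this friction I would use the cruder inequality uniformly and handle the residual $x > 1/2$ case with the same separate argument.
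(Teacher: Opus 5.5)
There is a genuine gap, and it is the case you defer. The inequality $B/(B+H)\ge \mathbf{1}\{B\ge 1\}/(1+H)$ discards how many bad indices fire, so the only constraint you extract is $\Pr[B\ge 1]\le x$ with $x=\beta(1+G\pstar)/(1-e^{-\theta_G})$. Since $\Pr[B\ge 1]\le 1$ always, this constraint is vacuous once $x\ge 1$. That happens as soon as $G\pstar\ge (1-e^{-\theta_G})/\beta-1$, for instance with $\beta/(1-e^{-\theta_G})=1/10$ and $G\pstar=10$.

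In that regime the claimed bound $\bar p\le 2x/(T-G)$ is still far from trivial when $T-G\gg x$. Yet nothing you have derived rules out $(T-G)\bar p\gg x$. You also cannot ``assume'' $(T-G)\bar p\le 1$: instances satisfying the hypothesis can have $(T-G)\bar p$ of order $x$, hence larger than $1$. So the separate argument for $x>1/2$ is not a formality, and no bound routed through $\Pr[B\ge 1]$ can supply it. A symptom of this is that your argument never uses the hypothesis $\beta\le (1-e^{-\theta_G})/2$; as you note, it does not imply $x\le 1/2$.

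The missing idea is a lower bound on the bad-selection probability that grows linearly in $(T-G)\bar p$. Write
\[
B/(B+H)=\sum_{i\ \mathrm{bad}} b_i/(1+N_{-i}),
\]
where $b_i$ indicates that bad index $i$ outputs $+1$, and $N_{-i}$ counts the $+1$'s among the other $T-1$ indices, bad and good. Given $k=0$, $b_i$ and $N_{-i}$ are independent, so Jensen gives
\[
\Ex\bigl[b_i/(1+N_{-i})\bigr]=\bar p\,\Ex\bigl[1/(1+N_{-i})\bigr]\ge \bar p/\bigl(1+(T-G-1)\bar p+G\pstar\bigr).
\]
Summing over the $T-G$ bad indices, the hypothesis yields $(1-e^{-\theta_G})(T-G)\bar p\le \beta\bigl(1+(T-G-1)\bar p+G\pstar\bigr)$. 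Move the $\beta(T-G)\bar p$ term to the left and use $1-e^{-\theta_G}-\beta\ge (1-e^{-\theta_G})/2$, which is exactly where the hypothesis on $\beta$ enters. This gives the stated bound in every regime, with no case split.

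This is the paper's argument. Your conditioning on $k=0$ and your choice of the extremal instance are the same as in the paper.
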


\begin{proof}
The algorithm draws $k \sim \mathrm{Geo}(e^{-\theta_G})$. Conditioned on $k = 0$, which has probability $1 - e^{-\theta_G}$, every Bernoulli gate passes, so all $T$ mechanisms are evaluated independently. Fix a bad configuration $t_0$ with $p_{t_0} \leq \bar{p}$. Under uniform random selection among all $(+1,t)$ in $S$,
\[
    \Pr[t_0 \text{ selected} \mid k = 0] \;=\; \bar{p} \cdot \Ex\!\left[\frac{1}{1 + N_{-t_0}}\right],
\]
where $N_{-t_0}$ is the number of $+1$'s from the remaining $T-1$ configurations, independent of $t_0$'s output. Since $x \mapsto 1/(1+x)$ is convex, Jensen's inequality gives
\[
    \Ex\!\left[\frac{1}{1 + N_{-t_0}}\right] \;\geq\; \frac{1}{1 + \Ex[N_{-t_0}]} \;=\; \frac{1}{1 + (T-G-1)\bar{p} + G\pstar}.
\]
The events $\{t_0 \text{ selected}\}$ are disjoint over distinct bad configurations $t_0$, so summing over all $T - G$ bad configurations,
\[
    \Pr[\text{bad selected} \mid k = 0] \;\geq\; \frac{(T-G)\bar{p}}{1 + (T-G-1)\bar{p} + G\pstar}.
\]
Incorporating $\Pr[k = 0] = 1 - e^{-\theta_G}$ and setting the result at most $\beta$:
\[
    (1 - e^{-\theta_G}) \cdot \frac{(T-G)\bar{p}}{1 + (T-G-1)\bar{p} + G\pstar} \;\leq\; \beta.
\]
Rearranging: $(1-e^{-\theta_G})(T-G)\bar{p} \leq \beta(1 + (T-G-1)\bar{p} + G\pstar) = \beta + \beta(T-G)\bar{p} - \beta\bar{p} + \beta G\pstar$. Moving the $\beta(T-G)\bar{p}$ term to the left:
\[
    (1-e^{-\theta_G}-\beta)(T-G)\bar{p} \;\leq\; \beta(1 + G\pstar - \bar{p}) \;\leq\; \beta(1 + G\pstar).
\]
Since $\beta \leq (1 - e^{-\theta_G})/2$, the factor $1 - e^{-\theta_G} - \beta \geq (1-e^{-\theta_G})/2 > 0$, giving
\[
    \bar{p} \;\leq\; \frac{2\beta(1 + G\pstar)}{(T-G)(1-e^{-\theta_G})}.
\]
Since $1-e^{-\theta_G}$ is a positive constant depending only on the privacy parameters, $\bar{p} = O(\beta(1+G\pstar)/(T-G))$ for any fixed $\eps > 2\eps_1$.
\end{proof}

\section*{Acknowledgements}
    \noindent\begin{minipage}[t]{\dimexpr\linewidth-5.5cm\relax}%
    \vspace{0pt}%
    \indent This research was supported by the European Research Council (ERC) under the European Union's Horizon 2020 research and innovation programme (Grant agreement No.\ 101019564), and the Austrian Science Fund (FWF) under grant DOI 10.55776/Z422.
    \end{minipage}
    \begin{minipage}[t]{5cm}%
    \vspace{0pt}%
    \includegraphics[width=5cm]{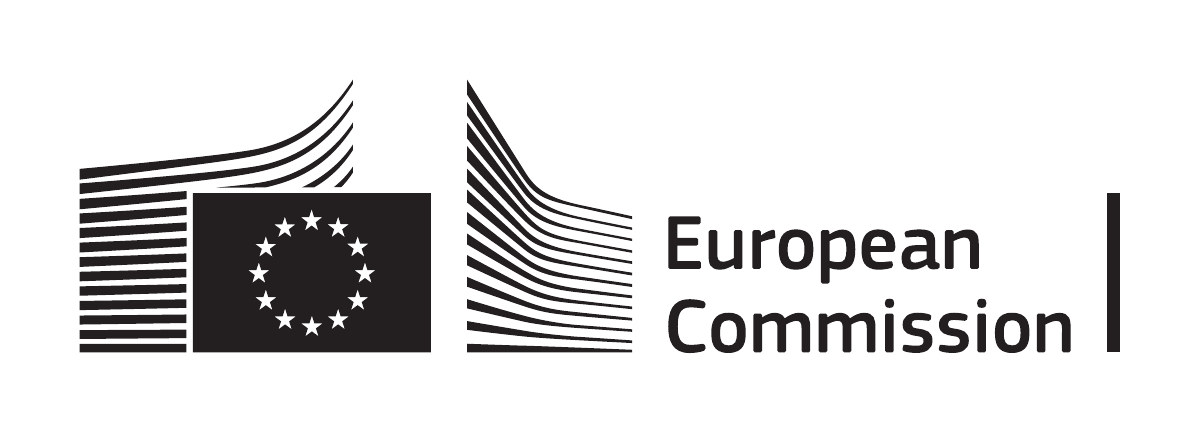}%
    \end{minipage}

  \medskip
  \noindent For open access purposes, the authors have applied a CC BY public copyright license to any author-accepted manuscript version arising from this submission. Views and opinions expressed are however those of the authors only and do not necessarily reflect those of the European Union or the European Research Council Executive Agency. Neither the European Union nor the granting authority can be held responsible for them.

  \medskip
  \noindent Anamay Chaturvedi was supported by an ISTA-Fellowship.

  \medskip
  \noindent Jalaj Upadhyay was supported by NSF CNS 2433628, Google Seed Fund grant, Google Research Scholar Award, Dean Research Seed Fund, and Decanal Research Grant. A part of the work was done while Jalaj was visiting ISTA.
  
\bibliographystyle{alpha}
\bibliography{biblio.bib}
\appendix

\section{Concentration Inequalities}

\begin{lem}[Poisson Chernoff bounds]\label{lem:PoissonChernoffBounds}
Let $M \sim \Po(\lambda)$ with $\lambda > 0$. For any $u \ge 0$, the following tail bounds hold:
\begin{enumerate}
    \item \textbf{Upper tail:} $\Pr[M \ge \lambda + u] \le \exp\!\left(-\frac{u^2}{2(\lambda+u)}\right)$
    \item \textbf{Lower tail:} $\Pr[M \le \lambda - u] \le \exp\!\left(-\frac{u^2}{2\lambda}\right)$
\end{enumerate}
\end{lem}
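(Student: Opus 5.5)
Both tails follow from the exponential moment method applied to the Poisson moment generating function $\Ex[e^{sM}] = \exp(\lambda(e^s - 1))$, which holds for all $s \in \mathbb{R}$. The plan is to get the classical Chernoff forms in terms of $h(x) := (1+x)\ln(1+x) - x$, and then replace $h$ by elementary rational lower bounds that produce the stated exponents. When $u = 0$ both bounds are trivial, so we assume $u > 0$ throughout.

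\emph{Upper tail.} By Markov's inequality applied to $e^{sM}$ with $s > 0$,
\[
\Pr[M \ge \lambda + u] \le \exp\bigl(\lambda(e^s - 1) - s(\lambda+u)\bigr).
\]
Choosing $s = \ln(1 + u/\lambda)$ makes the exponent equal to $-\lambda\, h(u/\lambda)$. So it remains to show $\lambda h(x) \ge \frac{u^2}{2(\lambda+u)}$ with $x = u/\lambda$, which is the same as
\[
h(x) \ge \frac{x^2}{2(1+x)} \quad \text{for } x \ge 0.
\]
Let $\psi(x) := h(x) - \frac{x^2}{2(1+x)}$. We have $\psi(0) = 0$, and a direct computation gives
\[
\psi'(x) = \ln(1+x) - \frac{x^2 + 2x}{2(1+x)^2}.
\]
This derivative also vanishes at $0$, and
\[
\psi''(x) = \frac{1}{1+x} - \frac{1}{(1+x)^3} \ge 0,
\]
so $\psi'$ is nondecreasing and hence nonnegative, and therefore $\psi \ge 0$ on $[0,\infty)$.

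\emph{Lower tail.} If $u > \lambda$, the event $M \le \lambda - u < 0$ is empty and there is nothing to prove. If $u = \lambda$, the event is $\{M = 0\}$, which has probability $e^{-\lambda} \le e^{-\lambda/2}$. So assume $0 < u < \lambda$. Applying Markov's inequality to $e^{-sM}$ with $s > 0$ gives
\[
\Pr[M \le \lambda - u] \le \exp\bigl(\lambda(e^{-s} - 1) + s(\lambda - u)\bigr).
\]
Choosing $s = -\ln(1 - u/\lambda)$ makes the exponent $-\lambda\, h(-x)$ with $x = u/\lambda \in (0,1)$. It therefore suffices to show
\[
h(-x) = (1-x)\ln(1-x) + x \ge \frac{x^2}{2}.
\]
Set $\chi(x) := h(-x) - x^2/2$. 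Then $\chi(0) = 0$, $\chi'(x) = -\ln(1-x) - x \ge 0$ on $[0,1)$, so $\chi \ge 0$ there. Multiplying by $\lambda$ gives the exponent $-u^2/(2\lambda)$, as claimed.

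The only step that needs any care is the calculus inequality $h(x) \ge x^2/(2(1+x))$ for the upper tail, and the second-derivative argument above settles it. Everything else is the standard Chernoff optimization together with the boundary cases $u \ge \lambda$ in the lower tail.
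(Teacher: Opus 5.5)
Your proposal is correct and follows the paper's proof essentially step for step: the same Poisson MGF, the same optimizing parameter, and the same second-derivative argument for $h(x)\ge x^2/(2(1+x))$ in the upper tail. For the lower tail you use a first-derivative monotonicity argument where the paper expands $\ln(1-y)$ as a Taylor series; both are fine. You also treat the boundary case $u=\lambda$ explicitly, which the paper passes over (it writes $u\le\lambda$ but then takes $y=u/\lambda\in[0,1)$).
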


\begin{proof}
Let $M \sim \Po(\lambda)$. We first derive the moment generating function (MGF). Using the probability mass function $\Pr[M = k] = e^{-\lambda}\frac{\lambda^k}{k!}$, for any $t \in \mathbb{R}$:
\[
\Ex[e^{tM}] = \sum_{k=0}^{\infty} e^{tk} e^{-\lambda} \frac{\lambda^k}{k!} = e^{-\lambda} \sum_{k=0}^{\infty} \frac{(\lambda e^t)^k}{k!} = e^{-\lambda} e^{\lambda e^t} = \exp\!\big(\lambda(e^t - 1)\big).
\]

\paragraph{Upper tail.} 
For any $t > 0$, we apply Markov's inequality to the monotonically increasing function $e^{tM}$:
\[
\Pr[M \ge \lambda + u] = \Pr[e^{tM} \ge e^{t(\lambda + u)}] \le \frac{\Ex[e^{tM}]}{e^{t(\lambda + u)}} = \exp\!\big(\lambda(e^t - 1) - t(\lambda + u)\big).
\]
To minimize the right-hand side, we differentiate the exponent with respect to $t$ and set it to zero, which yields $t = \ln(1 + u/\lambda)$. Note that $t > 0$. Substituting this optimal $t$ back into the exponent gives:
\[
\Pr[M \ge \lambda + u] \le \exp\!\left( \lambda \left( \frac{u}{\lambda} - \left(1 + \frac{u}{\lambda}\right)\ln\!\left(1 + \frac{u}{\lambda}\right) \right) \right).
\]
Let $x = u/\lambda \ge 0$, and define $h(x) = (1+x)\ln(1+x) - x$. We claim that $h(x) \ge \frac{x^2}{2(1+x)}$. 
Let $f(x) = h(x) - \frac{x^2}{2(1+x)}$. We have $f(0) = 0$. The first derivative is $f'(x) = \ln(1+x) - \frac{x^2+2x}{2(1+x)^2}$, with $f'(0) = 0$. The second derivative simplifies to $f''(x) = \frac{1}{1+x} - \frac{1}{(1+x)^3} = \frac{(1+x)^2 - 1}{(1+x)^3} = \frac{x^2+2x}{(1+x)^3}$. 
Since $f''(x) \ge 0$ for all $x \ge 0$, $f'(x)$ is non-decreasing, implying $f'(x) \ge 0$ for all $x \ge 0$, which in turn implies $f(x) \ge 0$ for all $x \ge 0$. 
Thus, the inequality $h(x) \ge \frac{x^2}{2(1+x)}$ holds for all $x \ge 0$. Substituting $x = u/\lambda$ into this bound yields:
\[
\frac{u}{\lambda} - \left(1 + \frac{u}{\lambda}\right)\ln\!\left(1 + \frac{u}{\lambda}\right) \le -\frac{(u/\lambda)^2}{2(1 + u/\lambda)} = -\frac{u^2}{2\lambda(\lambda+u)}.
\]
Multiplying by $\lambda$ and exponentiating gives the desired upper bound:
\[
\Pr[M \ge \lambda + u] \le \exp\!\left(-\frac{u^2}{2(\lambda+u)}\right).
\]

\paragraph{Lower tail.}
For the lower tail, let $0 \le u \le \lambda$. (For $u > \lambda$, the probability is exactly $0$ and the bound trivially holds). For any $t < 0$, we apply Markov's inequality, which yields:
\[
\Pr[M \le \lambda - u] = \Pr[e^{tM} \ge e^{t(\lambda - u)}] \le \frac{\Ex[e^{tM}]}{e^{t(\lambda - u)}} = \exp\!\big(\lambda(e^t - 1) - t(\lambda - u)\big).
\]
Minimizing this over $t < 0$ yields the optimal choice $t = \ln(1 - u/\lambda)$. Substituting this into the exponent gives:
\[
\Pr[M \le \lambda - u] \le \exp\!\left( \lambda \left( -\frac{u}{\lambda} - \left(1 - \frac{u}{\lambda}\right)\ln\!\left(1 - \frac{u}{\lambda}\right) \right) \right).
\]
Let $y = u/\lambda \in [0, 1)$, and define $g(y) = -y - (1-y)\ln(1-y)$. Using the Taylor series expansion for $\ln(1-y) = -\sum_{k=1}^\infty \frac{y^k}{k}$, we have:
\[
g(y) = -y + (1-y)\sum_{k=1}^\infty \frac{y^k}{k} = -y + \sum_{k=1}^\infty \frac{y^k}{k} - \sum_{k=1}^\infty \frac{y^{k+1}}{k} = -y + y + \sum_{k=2}^\infty y^k \left( \frac{1}{k} - \frac{1}{k-1} \right) = \sum_{k=2}^\infty \frac{-y^k}{k(k-1)}.
\]
Since every term in the series is strictly negative, we can bound the sum by taking just the first term ($k=2$), yielding $g(y) \le -\frac{y^2}{2}$. 
Substituting $y = u/\lambda$ gives:
\[
-\frac{u}{\lambda} - \left(1 - \frac{u}{\lambda}\right)\ln\!\left(1 - \frac{u}{\lambda}\right) \le -\frac{u^2}{2\lambda^2}.
\]
Multiplying by $\lambda$ and exponentiating gives the desired lower bound:
\[
\Pr[M \le \lambda - u] \le \exp\!\left(-\frac{u^2}{2\lambda}\right).
\]
This completes the proof.
\end{proof}

\begin{lem}[Poisson Chernoff bounds -- Multiplicative Form]\label{lem:PoissonChernoffMultiplicative}
Let $M \sim \Po(\lambda)$ with $\lambda > 0$. For any $c > 0$, the following tail bounds hold:
\begin{enumerate}
    \item \textbf{Upper tail:} If $c > \lambda$, $\Pr[M \ge c] \le \frac{e^{-\lambda}(e\lambda)^c}{c^c}$.
    \item \textbf{Lower tail:} If $c < \lambda$, $\Pr[M \le c] \le \frac{e^{-\lambda}(e\lambda)^c}{c^c}$.
\end{enumerate}
\end{lem}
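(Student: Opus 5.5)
The plan is to run the same Chernoff (moment generating function) argument as in \Cref{lem:PoissonChernoffBounds}, but now to evaluate the Markov bound at the multiplicative optimizer $t=\ln(c/\lambda)$ instead of passing to the quadratic simplification. Throughout we use the moment generating function $\Ex[e^{tM}]=\exp(\lambda(e^t-1))$, valid for all real $t$, which was derived in the proof of \Cref{lem:PoissonChernoffBounds}.

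\emph{Upper tail.} Assume $c>\lambda$. For any $t>0$, Markov's inequality applied to the increasing function $e^{tM}$ gives
\[
\Pr[M\ge c]\le \exp\bigl(\lambda(e^t-1)-tc\bigr).
\]
Differentiating the exponent in $t$ and setting the derivative to zero gives $t=\ln(c/\lambda)$. This value is strictly positive precisely because $c>\lambda$, so it is an admissible choice. Substituting $e^t=c/\lambda$ turns the exponent into $c-\lambda-c\ln(c/\lambda)$. Exponentiating, $e^{c-\lambda}(\lambda/c)^c=e^{-\lambda}(e\lambda)^c/c^c$, which is the stated bound.

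\emph{Lower tail.} Assume $0<c<\lambda$. For any $t<0$, the map $x\mapsto e^{tx}$ is decreasing, so
\[
\Pr[M\le c]=\Pr[e^{tM}\ge e^{tc}]\le \exp\bigl(\lambda(e^t-1)-tc\bigr).
\]
The same stationary point $t=\ln(c/\lambda)$ is now strictly negative, again admissible, and it yields the identical expression $e^{-\lambda}(e\lambda)^c/c^c$.

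No step here is a real obstacle; the only point requiring care is checking the sign of $t$ in each case. Two remarks finish the argument. First, any admissible $t$ already gives a valid upper bound, so the stationary point need not be proved to be a minimizer, although convexity of the exponent in $t$ confirms that it is. Second, the lower-tail statement requires $c>0$, so the degenerate case $c=0$ (where $c^c$ would need the convention $0^0=1$) is excluded and needs no separate treatment.
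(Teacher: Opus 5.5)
Your proof is correct and follows the paper's argument essentially line for line: the Poisson moment generating function, Markov's inequality with $t>0$ (respectively $t<0$), and evaluation at $t=\ln(c/\lambda)$, which is admissible exactly because $c>\lambda$ (respectively $c<\lambda$). Your extra remarks are accurate refinements: any admissible $t$ already yields a valid bound, so you need not prove the stationary point is a minimizer, and the hypothesis $c>0$ avoids the $0^0$ convention.
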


\begin{proof}
We use the moment generating function of the Poisson distribution, which is $\E[e^{tM}] = \exp\!\big(\lambda(e^t - 1)\big)$ for any $t \in \mathbb{R}$.

\paragraph{Upper tail.}
Assume $c > \lambda$. By Markov's inequality, for any $t > 0$, we have:
\[
\Pr[M \ge c] = \Pr[e^{tM} \ge e^{tc}] \le \frac{\E[e^{tM}]}{e^{tc}} = \exp\!\big(\lambda(e^t - 1) - tc\big).
\]
We minimize the right-hand side by differentiating the exponent with respect to $t$ and setting it to zero:
\[
\lambda e^t - c = 0 \implies e^t = \frac{c}{\lambda}.
\]
Since we assumed $c > \lambda$, the optimal parameter is $t = \ln(c/\lambda) > 0$, making it valid for the right-tail Markov application. Substituting $e^t = c/\lambda$ and $t = \ln(c/\lambda)$ into the bound yields:
\[
\Pr[M \ge c] \le \exp\!\left( \lambda\left(\frac{c}{\lambda} - 1\right) - c \ln\left(\frac{c}{\lambda}\right) \right) = \exp\!\left( c - \lambda - c \ln\left(\frac{c}{\lambda}\right) \right).
\]
Using the properties of exponents, we can rewrite this exactly as:
\[
\Pr[M \ge c] \le e^{-\lambda} e^c \left(\frac{\lambda}{c}\right)^c = \frac{e^{-\lambda}(e\lambda)^c}{c^c}.
\]

\paragraph{Lower tail.}
Assume $c < \lambda$. For any $t < 0$, Markov's inequality yields:
\[
\Pr[M \le c] = \Pr[e^{tM} \ge e^{tc}] \le \frac{\E[e^{tM}]}{e^{tc}} = \exp\!\big(\lambda(e^t - 1) - tc\big).
\]
Minimizing this over $t < 0$ yields $t = \ln(c/\lambda)$ and optimal value $e^t = c/\lambda$. Because $c < \lambda$, the parameter $t = \ln(c/\lambda) < 0$, which is valid for the left-tail application. Substituting this $t$ into the bound yields the exact same algebraic sequence:
\[
\Pr[M \le c] \le \exp\!\left( c - \lambda - c \ln\left(\frac{c}{\lambda}\right) \right) = \frac{e^{-\lambda}(e\lambda)^c}{c^c}.
\]
This completes the proof.
\end{proof}

\end{document}